\newtheorem{theorem}{Theorem}
\newcommand{\BashFancyFormatLine}{%
  \def\FancyVerbFormatLine##1{\$\ \,##1}%
}
\newcommand*{\llbrace}{%
  \BeginAccSupp{method=hex,unicode,ActualText=2983}%
    \textnormal{\usefont{OMS}{lmr}{m}{n}\char102}%
    \mathchoice{\mkern-4.05mu}{\mkern-4.05mu}{\mkern-4.3mu}{\mkern-4.8mu}%
    \textnormal{\usefont{OMS}{lmr}{m}{n}\char106}%
  \EndAccSupp{}%
}
\newcommand*{\rrbrace}{%
  \BeginAccSupp{method=hex,unicode,ActualText=2984}%
    \textnormal{\usefont{OMS}{lmr}{m}{n}\char106}%
    \mathchoice{\mkern-4.05mu}{\mkern-4.05mu}{\mkern-4.3mu}{\mkern-4.8mu}%
    \textnormal{\usefont{OMS}{lmr}{m}{n}\char103}%
  \EndAccSupp{}%
}
\newif\ifdraft
\newif\ifextendedversion
\definecolor{myblue}{HTML}{0066CC}
\definecolor{mygreen}{HTML}{009900}
\definecolor{matchagreen}{HTML}{5B7132}
\definecolor{darkblue}{HTML}{004C99}
\definecolor{darkpurple}{HTML}{4C0099}
\newcommand{\shixin}[1]{\ifdraft\noindent{\footnotesize\textcolor{teal}{\fbox{SS} {#1}}}\fi}
\newcommand{\todo}[1]{\ifdraft\noindent{\footnotesize\textcolor{blue}{\fbox{\bf TODO} {#1}}}\fi}
\newcommand{\revision}[1]{#1}
\newcommand{\extver}[1]{\ifextendedversion#1\fi}
\newcommand{\noextver}[1]{\ifextendedversion\else#1\fi}
\newcommand{\bothver}[2]{\ifextendedversion#1\else#2\fi}
\newcommand{\pgheading}[1]{\noindent\textbf{#1.}}
\newcommand{\sys}{\emph{SecSep}\xspace}
\newcommand{\sysisa}{x86-64\xspace}
\newcommand{\systal}{\emph{Octal}\xspace}
\newcommand{\highlight}[1]{\colorbox{blue!20}{#1}}
\newcommand{\highlightmath}[1]{\colorbox{blue!20}{\ensuremath{#1}}}
\newcommand{\reg}[1]{\texttt{#1}}
\algnewcommand\algorithmicswitch{\textbf{switch}}
\algnewcommand\algorithmiccase{\textbf{case}}
\algnewcommand\algorithmicdefault{\textbf{default}}
\newcommand{\hasType}[3]{#1 \vdash #2 : #3}
\newcommand{\constraint}[1]{\highlightmath{\Rrightarrow #1}}
\newcommand{\hasVal}[2]{#1 \vdash #2}
\newcommand{\typeChecked}[2]{#1 \vdash_{\textsc{tal}} #2}
\newcommand{\dynNext}[4]{#1 \vdash #2 \xrightarrow{#4} #3}
\newcommand{\dom}[1]{\mathop{\mathrm{dom}}(#1)}
\newcommand{\pcVar}{\mathit{pc}}
\newcommand{\memOp}[4]{#4(#1,#2,#3)}
\newcommand{\opVar}{\mathit{op}}
\newcommand{\instVar}{\mathit{inst}}
\newcommand{\instr}[1]{\mathop{\mathbf{#1}}}
\newcommand{\inst}[2]{\instr{#1}{#2}}
\newcommand{\opMode}[1]{^{\textcolor{blue}{#1}}}
\newcommand{\spVar}{\mathit{sp}}
\newcommand{\ptrVar}{\mathit{ptr}}
\newcommand{\sigmaOp}{\sigma_\text{op}}
\newcommand{\sigmaFp}{\sigma_{f_p}}
\newcommand{\sigmaBp}{\sigma_{b_p}}
\newcommand{\sigmaCall}{\sigma_\text{call}}
\newcommand{\sigmaRet}{\sigma_\text{ret}}
\newcommand{\sOp}{s_\text{op}}
\newcommand{\tauOp}{\tau_\text{op}}
\newcommand{\sRet}{s_\text{ret}}
\newcommand{\sCalleeStack}{s_\text{calleeStack}}
\newcommand{\deltaOp}[1][]{\delta_\text{op#1}}
\newcommand{\deltaPtr}[1][]{\delta_\text{ptr#1}}
\newcommand{\typeRange}[3]{[#1, #2]_{#3}}
\newcommand{\PType}{\mathcal{P}}
\newcommand{\MType}{\mathcal{M}}
\newcommand{\RType}{\mathcal{R}}
\newcommand{\SType}{\mathcal{S}}
\newcommand{\sval}{s^\text{valid}}
\newcommand{\Compiler}{\mathcal{C}}
\newcommand{\CompilerOp}{\mathcal{C}_\text{op}}
\newcommand{\CompilerCallee}{\mathcal{C}_\text{callee}}
\newcommand{\CompilerPtr}{\mathcal{C}_\text{ptr}}
\newcommand{\transOp}{\texttt{TransOp}\xspace}
\newcommand{\transPtr}{\texttt{TransPtr}\xspace}
\newcommand{\stackPub}{s_\text{stackPub}}
\newcommand{\stackSec}{s_\text{stackSec}}
\newcommand{\otherPub}{s_\text{otherPub}}
\newcommand{\otherSec}{s_\text{otherSec}}
\newcommand{\swcontract}[3]{\llbracket#1\rrbracket_{\text{#2}}^{\text{#3}}}
\newcommand{\hwcontract}[3]{\llbrace #1 \rrbrace_{\text{#2}}^{\text{#3}}}
\newcommand{\typingMovqMR}{\textsc{Typing-Movq-m-r}\xspace}
\newcommand{\typingMovqRM}{\textsc{Typing-Movq-r-m}\xspace}
\newcommand{\typingJne}{\textsc{Typing-Jne}\xspace}
\newcommand{\typingCallq}{\textsc{Typing-Callq}\xspace}
\begin{document}

\title{Securing Cryptographic Software via Typed Assembly Language\extver{ (Extended Version)}}

\author{Shixin Song}
\authornote{Both authors contributed equally to this research.}
\email{shixins@mit.edu}
\orcid{0009-0007-5638-5164}
\affiliation{%
  \institution{Massachusetts Institute of Technology}
  \city{Cambridge}
  \country{United States}
}

\author{Tingzhen Dong}
\authornotemark[1]
\email{rogerdtz@mit.edu}
\orcid{0000-0002-0514-923X}
\affiliation{%
  \institution{Massachusetts Institute of Technology}
  \city{Cambridge}
  \country{United States}
}

\author{Kosi Nwabueze}
\email{kosinw@mit.edu}
\orcid{0009-0002-0770-0344}
\affiliation{%
  \institution{Massachusetts Institute of Technology}
  \city{Cambridge}
  \country{United States}
}

\author{Julian Zanders}
\email{jzanders@mit.edu}
\affiliation{%
  \institution{Massachusetts Institute of Technology}
  \city{Cambridge}
  \country{United States}
}

\author{Andres Erbsen}
\email{andreser@mit.edu}
\orcid{0000-0002-9854-7500}
\affiliation{%
  \institution{Google}
  \city{Cambridge}
  \country{United States}
}

\author{Adam Chlipala}
\email{adamc@csail.mit.edu}
\orcid{0000-0001-7085-9417}
\affiliation{%
  \institution{Massachusetts Institute of Technology}
  \city{Cambridge}
  \country{United States}
}

\author{Mengjia Yan}
\email{mengjiay@mit.edu}
\orcid{0000-0002-6206-9674}
\affiliation{%
  \institution{Massachusetts Institute of Technology}
  \city{Cambridge}
  \country{United States}
}

\renewcommand{\shortauthors}{Song et al.}

\begin{abstract}
Authors of cryptographic software are well aware that their code should not leak secrets through its timing behavior, and, until 2018, they believed that following industry-standard \emph{constant-time} coding guidelines was sufficient.
However, the revelation of the Spectre family of speculative execution attacks injected new complexities.

To block speculative attacks, prior work has proposed annotating the program's source code to mark secret data, with hardware using this information to decide when to speculate (i.e., when only public values are involved) or not (when secrets are in play).
While these solutions are able to track secret information stored on the heap, they suffer from limitations that prevent them from correctly tracking secrets on the stack, at a cost in performance.

This paper introduces \sys{}, a transformation framework that rewrites assembly programs so that they partition secret and public data on the stack. By moving from the source-code level to assembly rewriting, \sys{} is able to address limitations of prior work. 
The key challenge in performing this assembly rewriting stems from the loss of semantic information through the lengthy compilation process. The key innovation of our methodology is a new variant of typed assembly language (TAL), \systal, which allows us to address this challenge.
Assembly rewriting is driven by compile-time inference within \systal.
We apply our technique to cryptographic programs and demonstrate that it enables secure speculation efficiently, incurring a low average overhead of $1.2\%$.
\end{abstract}

\begin{CCSXML}
<ccs2012>
<concept>
<concept_id>10002978</concept_id>
<concept_desc>Security and privacy</concept_desc>
<concept_significance>500</concept_significance>
</concept>
</ccs2012>
\end{CCSXML}

\ccsdesc[500]{Security and privacy}

\keywords{Side-channel attacks and mitigation, Information-flow security}

\maketitle

\section{Introduction}
\label{sec:intro}
Cryptographic software has strong security requirements and is often strengthened to prevent information leakage through timing side channels by adhering to \textit{constant-time} coding,
which forbids secret-dependent values as branch conditions or memory addresses.

However, recent speculative-execution attacks, notably various Spectre attacks~\cite{Kocher2018spectre,kiriansky2018speculative,koruyeh2018spectre,maisuradze2018ret2spec,schwarz2019netspectre}, have invalidated the security guarantees offered by constant-time programming.
Modern processors employ aggressive speculative-execution mechanisms that predict upcoming instructions to be executed and roll back architectural state if the prediction is later found to be incorrect. 
While offering significant performance benefits, such speculative-execution mechanisms introduce a large attack surface, enabling attackers to trigger a program to execute unintended instructions speculatively to access secrets and transmit them via timing side channels.

Recent work~\cite{guarnieri2020spectector} has uncovered multiple vulnerabilities in real-world cryptographic libraries even under constrained speculative-execution models, such as only mispredicting limited types of branches.
\todo{Better to double check}
As modern processors evolve with ever-more-complex speculation mechanisms, we need mitigation solutions that protect broader speculative behaviors.
Practical mitigation needs to navigate the complex trade-offs between security guarantees, performance overhead, and hardware complexity.

Many mitigation solutions~\cite{yu2019speculative,choudhary2021speculative,yan2018invisispec,weisse2019nda,schwarz2020context,daniel2023prospect} share a common philosophy: identify secret data and then delay speculative execution for operations that may transmit such data.
The key research challenge in these approaches lies in how to identify the secret data precisely without incurring high overhead. 
One promising design~\cite{daniel2023prospect} is to augment the hardware with fine-grained taint tracking at the register level and coarse-grained taint tracking at the memory level (e.g., at page or section granularity).
This architecture avoids the prohibitive costs of byte- or word-level tracking while retaining sufficient granularity to enforce secure speculation.

However, this hardware design requires the software to partition secret and public data into distinct memory regions explicitly, so that the hardware can interpret the secrecy status of the data accurately using its coarse-grained taint-tracking capability. Performance and security of the hardware design are contingent on precise annotation of secret data.
Prior projects, ProSpeCT~\cite{daniel2023prospect} and ConTExT~\cite{schwarz2020context}, set out to add this partitioning capability to software through requiring fine-grained source-code annotations. Specifically, these methods require programmers to mark variables in the source code (e.g., C) as either secret or public. This information is then used as follows: For heap data, a customized memory allocator allocates secret and public objects in different memory pools. Stack data is protected by annotating secret and public stack variables manually to relocate them to different regions.

These source-level approaches work well for heap data but less so for the stack.
Critically, they are unable to partition the stack accurately, requiring conservative partitioning and thereby suffering from performance loss.
These limitations are \textit{inherent} to source-level annotation methodologies. 
First, operating at the source level gives no visibility or control over register spills.
Consequently, if a secret register is spilled to the stack, the programmer is forced to mark the whole stack as secret conservatively, leading to overtainting and unnecessary performance overhead.
Second, some approaches relocate stack variables into global memory regions, 
which may compromise functional correctness under concurrency.
Most importantly, source-level transformations heavily rely on strong assumptions about compiler internals.
However, given the complexity and opaqueness of modern compilers, source-level transformation suffers from a significantly enlarged trusted computing base (TCB) and fragile compilation process that is difficult to verify.

\subsection{This Paper}
In this paper, we introduce \sys, an assembly-transformation framework that partitions stack data securely.

We allow key information (usually lost during compilation) to be reintroduced in code through a new variant of typed assembly language (TAL), \systal, which facilitates sound program transformation.
\systal is designed to enable \textit{static} fine-grained taint tracking.

We design the type system by assigning dependent types and taint types to all registers and data objects in memory.
The key idea is to leverage the dependent types to track the value ranges of registers and memory objects, so that we can construct the full picture of their points-to relationships throughout the program.
While precise points-to analysis is infeasible for arbitrary programs, we take advantage of a domain-specific property, that is, cryptographic software is typically written following the constant-time programming discipline, making it amenable to our analysis.
\systal also ensures well-typed programs are memory-safe.

Building upon \systal, we design a program-transformation framework \sys consisting of two important components.
The frontend is a heuristic type-inference algorithm that operates on off-the-shelf x86-64 assembly programs (plus debug tables already produced by Clang, plus a table mapping each function interface to a type in our new type system).
The analysis involves a set of heuristic rules to reason about pointer arithmetic and loop counters.
The outcome of the inference tool is an \systal program where the taint status of every memory operation is identified explicitly.

The backend of our framework is the code-transformation tool that rewrites programs based on their taint types.
It supports real-world cryptographic programs, which involve complex interleaving of secret and public reads/writes and shared pointer-based structures.
After locating memory operations with taint types, they are rewritten depending on their secrecy statuses.

We formally prove the type safety of \systal. We also prove that \sys's transformation separates secret and public data while guaranteeing functional correctness.
We implement a hardware extension that achieves secure speculation with register-level and memory-segment-level taint tracking on the gem5 simulator~\cite{Binkert:2011:gem5,Lowe-Power:2020:gem5-20}.
We evaluate \sys's transformation with the hardware extension using six cryptographic benchmarks~\cite{boringssl} and show that
it enables secure speculation with a negligible overhead of $1.2\%$ on average.

In summary, we make the following contributions:
\begin{itemize}
[leftmargin=*]
    \item We propose \systal, a variant of typed assembly language (TAL) with static fine-grained taint tracking for assembly programs.
    \item We design a program-transformation framework \sys that (1) heuristically infers types for off-the-shelf cryptographic assembly programs and (2) rewrites them to split their secret and public data across coarse-grained memory regions.
    \item We prove soundness of the technique\noextver{~\cite{secsep-proof}}, provide a prototype implementation, and carry out an empirical evaluation. 
\end{itemize}
\pgheading{Availability} Our prototype for \sys is open-sourced at \href{https://github.com/MATCHA-MIT/secsep}{\url{https://github.com/MATCHA-MIT/secsep}}.

\section{Background}
\subsection{Microarchitectural Side-Channel Attacks}
Microarchitectural side-channel attacks exploit transmitter instructions that leave visible side effects on microarchitectural state like caches~\cite{osvik2006cache,yan2019attack,yarom2014flush+,yarom2017cachebleed}, TLBs~\cite{gras2018translation}, branch predictors~\cite{aciiccmez2007predicting,evtyushkin2018branchscope}, and others~\cite{wang2017leaky,xu2015controlled,pessl2016drama,aldaya2019port,andrysco2015subnormal,coppens2009practical,evtyushkin2016covert,grossschadl2009side,gruss2016prefetch,moghimi2019memjam,shin2018unveiling}.
Cryptographic programs prevent these attacks by following the constant-time coding discipline, which avoids executing such transmitter instructions with secret-dependent operands.

However, speculative-execution attacks~\cite{Kocher2018spectre,kiriansky2018speculative,koruyeh2018spectre,maisuradze2018ret2spec,schwarz2019netspectre} exploit the side effects of speculatively executed transmitters to leak the secrets, which are not blocked by the constant-time discipline.

\subsection{Typed Assembly Language}
Conventional assembly language omits most high-level semantic information, making static analysis challenging.
Typed assembly language (TAL)~\cite{morrisett1999system,grossman2000scalable,crary1999talx86,glew1999type,morrisett1998stack} was introduced to regain some of that information. 
We extend past results around memory safety to information-flow tracking to guide program transformation.

\section{Threat Model \& Security Properties}
\label{sec:sw-hw-contract}
We aim to protect cryptographic applications against transient-execution attacks.
Specifically, we assume the software is written following the constant-time coding discipline, in which the program 
avoids using secret-dependent values as branch conditions and memory addresses. 
However, the underlying hardware employs aggressive speculative-execution mechanisms, including prediction on both direct and indirect branches, which can result in transient instruction sequences that violate the constant-time requirements.

Our proposed assembly-rewriting technique is a key component in a software-hardware codesign mitigation.
On the software side, our rewriting tool transforms the constant-time cryptographic programs to separate secret and public data into distinct regions.
On the hardware side, we use an existing Spectre mitigation~\cite{daniel2023prospect}
with a fine-grained taint-tracking mechanism at the register level and coarse-grained taint tracking at the memory level, for a good trade-off between performance, cost, and security.
The hardware uses the taint-tracking information to delay the execution of any potential transmitter instructions, which may leak information via timing side channels, when their operands are tainted.


\revision{We use two observation models $\swcontract{\cdot}{ct}{}$ and $\swcontract{\cdot}{pub}{}$ to constrain our software requirements: constant-time and separating secret/public data.}
Specifically, the notion $\swcontract{\cdot}{}{}$ represents the observation trace of executing a program at the architectural level. Supposing the architectural trace of executing program $P$ is $S_0\xrightarrow{o_1}S_1\xrightarrow{o_2}\dots $, the observation trace is then defined as $\swcontract{P}{}{}(S_0)=o_1o_2\dots$. 

Here, $\swcontract{\cdot}{ct}{}$ records the trace of load/store addresses and branch targets, and $\swcontract{\cdot}{pub}{}$ records a trace of data values stored in the public memory region.
We then define public noninterference, the software property guaranteed by \sys, 
where $\simeq_\text{pub}$ constrains that the two states have equal values in the public memory region\revision{\bothver{ (see Definition~\ref{def:pub-eq})}{~\cite{secsep-proof}}}.
\begin{definition}[Software Public Noninterference]
A program $P$ satisfies \emph{software public noninterference} for a specific public region if for all initial configurations $S$ and $S'$, if $S\simeq_\text{pub} S'$, then $\swcontract{P}{ct}{}(S)=\swcontract{P}{ct}{}(S')$ and $\swcontract{P}{pub}{}(S)=\swcontract{P}{pub}{}(S')$.
\label{def:pub-noninterference}
\end{definition}

We use $\hwcontract{P}{}{}(S)$ to denote the microarchitectural observation trace of running program $P$ on our out-of-order processor with initial state $S$~\cite{schwarz2020context}. 
The hardware must obey the following condition.
\begin{definition}[Hardware Public Noninterference]
A processor satisfies \emph{hardware public noninterference} if for all programs $P$ and all initial states $S$, $S'$, if $\swcontract{P}{pub}{}(S)=\swcontract{P}{pub}{}(S')$ and $\swcontract{P}{ct}{}(S)=\swcontract{P}{ct}{}(S')$, then $\hwcontract{P}{}{}(S)=\hwcontract{P}{}{}(S')$.
\end{definition}

In summary, \sys achieves secure speculation by ensuring that the software component satisfies the public noninterference contract. 
For the hardware component, we refer readers to the ProSpeCT paper~\cite{daniel2023prospect}, which provides formal proof that the taint-tracking hardware mechanism described above satisfies hardware public noninterference.
Together, the software-hardware contract ensures end-to-end security of the overall system, where secret data do not influence microarchitectural side channels, even in the presence of speculative execution.

\section{Motivation and Overview}

\subsection{Limitations of Source-Level Annotation}
\label{sec:src-anno-limitation}

A fundamental limitation of source-level code transformation used by prior works~\cite{daniel2023prospect,schwarz2020context} lies in its heavy reliance on assumptions about compiler internals.
\revision{For example, to relocate secret stack variables, programmers must manually annotate declarations with the section label ``sec.''
Then, the compiler is expected to allocate the variables in that section, which will be marked as secret by the hardware defense.} 
However, this strategy fails to guarantee public noninterference, due to lack of control over register allocation, spilling, and compiler optimizations.

\begin{figure*}[t]
    \centering
\begin{subfigure}{0.3\linewidth}
\centering
    \inputminted[highlightlines={3,4,5}]{c}{code/salsa20-example/salsa20_words_simp.c}
    \caption{Annotated source code}
    \label{fig:salsa20-source}
\end{subfigure}%
\begin{subfigure}{0.3\linewidth}
\centering
    \inputminted{asm}{code/salsa20-example/salsa20_words_orig.s}
    \caption{Original assembly code}
    \label{fig:salsa20-asm}
\end{subfigure}%
\begin{subfigure}{0.3\linewidth}
\centering
    \inputminted{text}{code/salsa20-example/salsa20_words_simp.s}
    \caption{Transformed assembly code}
    \label{fig:salsa20-asm-trans}
\end{subfigure}%
    \caption{Program transformation: source-code annotation v.s. assembly rewriting}
    \label{fig:salsa20}
\end{figure*}

To illustrate how such a strategy can go wrong in practice, we present a case study from a cryptographic function \texttt{salsa20\_words}.
Figure~\ref{fig:salsa20} shows both the annotated C code (Figure~\ref{fig:salsa20-source}) and the corresponding assembly code generated by clang-16 (Figure~\ref{fig:salsa20-asm}).
In the C code, the function takes a pointer \texttt{d} and a secret array \texttt{s[16]} as input.
It then declares a local array \texttt{x[4][4]}, which is used to hold secret data from array \texttt{s} (lines 7-8) and is further used for computation in lines 11-12.

Given that array \texttt{x} holds secret data, a programmer can annotate it with the \revision{section label \texttt{sec},} 
expecting the compiler to allocate it in the secret-marked global region.
However, the generated assembly code shown in Figure~\ref{fig:salsa20-asm} deviates significantly from this expectation.

First, the compiler notices the array size is small enough to be stored in registers and decides to skip memory allocation for \texttt{x} completely.
Specifically, in lines 6-9 of the assembly code, multiple elements inside the secret input array (base pointer \texttt{rsi}) are loaded into distinct registers, with no redirection to a secret region.

Second, we observe that in line 11, a secret register is spilled onto the stack, mixing secret data with many other public stack values.
Since the source-level annotation has no control over register spilling, the programmer is forced to mark the whole stack as secret, resulting in overtainting and serious performance degradation.
For example, according to our experiment on the \texttt{salsa20} application, this conservative approach results in $70\%$ performance degradation.

\subsection{Overview of \sys}
\label{sec:overview-relocate-stack}

We propose \sys, a framework to perform the secret-public memory separation at the assembly level.
By operating after compilation, we have full control over memory layout. 
Our approach addresses the following two challenges.

First, we lack high-level semantic and pointer information.
As high-level semantic information is lost during compilation, we need to recover it to identify which instructions operate on secret data and need to be transformed.
A further complication is use of weakly typed pointers and potential pointer aliasing, which is particularly difficult to resolve without explicit type information.
To deal with this challenge, we design a variant of typed assembly language called \systal and an inference algorithm to deduce type information for off-the-shelf \sysisa programs.

Second, we face the challenge of performing assembly transformation under architectural constraints.
Specifically, may not use extra registers, which would require complex register management and register spilling.
To deal with the challenge, we arrange our memory layout to have the secret region (i.e. secret stack) and the original stack (i.e. public stack) maintain a constant distance ($\delta$) from each other.
As a result, redirecting memory accesses between the stacks only requires pointer offsetting by $\delta$.

To illustrate the effectiveness of our mechanism, we revisit the example in Figure~\ref{fig:salsa20}.
In Figure~\ref{fig:salsa20-asm-trans}, we show the type annotations derived by our inference tool. 
For brevity, we only show the memory-related annotations.
Each memory operand is annotated with a \textit{dependent type} that constrains its memory-access range and a \textit{taint type} indicating secrecy.
For example, in line 6, the array base pointer \texttt{rsi}, which references the secret input, is inferred to access the range of $[s, s+64)$ with the taint type as $1$, indicating secrecy.
In line 11, another stack access is annotated with the access range as $[p-88, p-80)$ and is similarly marked as secret.

Transformation should relocate any pointer with taint type of $1$.
For example, in line 11, the stack offset is incremented by $\delta$ to move the write to the secret stack. 
Additionally, the parent function (not shown) adds $\delta$ to the base pointer \texttt{rsi} before passing it as an argument to the callee, ensuring all the accesses within the callee are redirected to the secret stack.

The following sections now go into detail on the main components of our approach: type system (\autoref{sec:tal}), type inference (\autoref{sec:infer}), and transformation (\autoref{sec:trans}).

\section{\systal}
\label{sec:tal}
We propose \systal, a variant of typed assembly language~\cite{morrisett1999system} that helps reason about information flow statically.
The abstract ISA machine for \systal applies taint tracking on registers and memory at the byte level. This machine tracks the secret flow and does not allow executing instructions that transmit tainted values through side channels. 
For example, it gets stuck when executing load/store with tainted addresses or branches with tainted conditions. 

The goal of \systal's type system is to ensure that a well-typed program and the program generated from it by our transformation are
constant-time, thereby never getting stuck on this abstract machine.
It is challenging to reason statically about the program's taint flow, since high-level abstractions such as pointers and array indices are missing in original \sysisa assembly programs.
\systal enriches programs with types that not only constrain the taint status but also bound the values of registers and memory slots.

Furthermore, \systal splits memory into nonoverlapping slots according to the memory layout of the source program, associating a type to each memory slot. In this work, we only consider assembly programs compiled from constant-time C/C++ programs, so each memory slot contains either a scalar, pointer, or array, the last of which can have lengths not known at compile time, thanks to the use of symbolic descriptions of address ranges.

This design choice offers an additional benefit for information-flow tracking. 
Specifically, in cryptographic programs, although each static instruction may access different memory bytes during dynamic execution, it idiomatically only accesses data within the address range corresponding to a specific data object in the source program.
Therefore, each static instruction in \systal programs has fixed registers/memory slots acting as its taint source and destinations, allowing easy regulation of taint flow statically with types.

In the remainder of this section, we first define \systal syntax in Section~\ref{sec:tal-syntax} and typing rules in Section~\ref{sec:tal-typing-rules}. We also formalize type soundness in Section~\ref{sec:tal-soundness}.

\subsection{\systal Syntax}
\label{sec:tal-syntax}

\begin{figure}
    \centering
\begin{equation*}
\begin{array}{rcll}
    \opVar & \Coloneqq & r \mid i \mid \ell \mid \memOp{r_{b}}{r_{i}}{i_{s}}{i_{d}}\opMode{s, \tau} & \emph{Operand} \\
    \mathit{inst} & \Coloneqq & \inst{movq}{\opVar_1,\opVar_0} \mid \inst{leaq}{\opVar_1,\opVar_0} & \emph{Instruction} \\
    & & \mid \inst{addq}{\opVar_1,\opVar_0} \mid \inst{cmpq}{\opVar_1,\opVar_0} \\
    & & \mid \inst{jne}{\ell}\opMode{\sigma} \mid \inst{jmp}{\ell}\opMode{\sigma} \mid \inst{callq}{\ell}\opMode{\sigma_\mathit{call}, \sigma_\mathit{ret}} \\
    & & \mid \inst{retq}{} \mid \inst{halt}{}
    \\
    I & \Coloneqq & \inst{jmp}{\ell}\opMode{\sigma} \mid \inst{retq}{} \mid \inst{halt}{}  & \emph{Instruction}\\
    & & \mid \instVar;I & \emph{sequence}\\
    F & \Coloneqq & \qty{\ell_1: I_1, \dots, \ell_n: I_n, f_\text{ret}: \inst{retq}{}} & \emph{Function}\\
    P & \Coloneqq & \qty{f_1: F_1, \dots, f_n: F_n} & \emph{Program}  \\
    & \\
    R & \Coloneqq & \qty{r_1: (v_1, t_1), \dots} & \emph{Register file}\\
    M & \Coloneqq & \qty{\mathit{addr}_1: (v_1, t_1) \dots} & \emph{Memory}\\
    S & \Coloneqq & (R, M, \pcVar) & \emph{State}\\
    & \\
    e & \Coloneqq & x \mid v \mid \top \mid e_1\oplus e_2 \mid \ominus e & \emph{Dependent type}\\
    \tau & \Coloneqq & x \mid 0 \mid 1 \mid \tau_1\lor \tau_2 & \emph{Taint type}\\
    \beta & \Coloneqq & (e, \tau) & \emph{Basic type}\\
    \RType & \Coloneqq & \qty{r_1: \beta_1, \dots} & \emph{Register type}\\
    \MType & \Coloneqq & \qty{s_1: (\sval_1, \beta_1), \dots} & \emph{Memory type}\\
    \SType & \Coloneqq & (\Delta, \RType, \MType) & \emph{State type}\\
    
    \Gamma & \Coloneqq & \qty{\ell_1: (\Delta_1, \RType_1, \MType_1), \dots} & \emph{Function type}\\
    \PType & \Coloneqq & \qty{f_1: \Gamma_1, \dots} & \emph{Program type}\\
\end{array}
\end{equation*}
    \caption{\systal syntax}
    \label{fig:tal-syntax}
\end{figure}

\pgheading{Program Syntax}
\systal (selected syntax in Figure~\ref{fig:tal-syntax}) is built based on \sysisa.
We require that each basic block end with $\instr{halt}$, $\instr{retq}$, or an unconditional branch.
\systal also requires that each function $f$ (except for the top-level one) has a basic block $f_\text{ret}$ that only contains one return instruction to serve as the unique exit point for the function, which simplifies the typing rules.

\systal also introduces type annotations on load/store operands, branch instructions, and function calls (highlighted in \textcolor{blue}{blue} in Figure~\ref{fig:tal-syntax}). These annotations help to constrain well-typed programs, which will be detailed in Section~\ref{sec:tal-typing-rules}.

\pgheading{Type Syntax}
As mentioned before, an \systal abstract machine, with its machine state denoted as $S=(R, M, \pcVar)$, applies byte-level taint tracking on the registers $R$ and memory $M$
and gets stuck on insecure operations (e.g., load/store with tainted addresses). 

\systal's program type $\PType$ is a map from function names to function types, and a function type $\Gamma$ is a map from the function's basic-block labels to state types $\SType$, which serve as block preconditions.

A type $\SType$ contains three parts: the type context $\Delta$, register-file type $\RType$, and memory type $\MType$.
Specifically, $\Delta$ is a set of constraints that must be satisfied by type variables in $\RType$ and $\MType$.
Partial map $\RType$ assigns register names to their dependent and taint types.
A well-formed program can only read from registers that appear in $\RType$.
Partial map $\MType$ assigns disjoint memory slots ($s$) each to a region of addresses whose contents are initialized ($\sval$) and a type of data found therein. Each slot corresponds to a data object in the source program or a register spill.
\systal tracks pointers in registers and memory using dependent types, and both memory slots $s$ and valid regions $\sval$ are sets of addresses represented by dependent types. Hence, with the dependent types of load/store addresses, \systal can easily track which memory slot is accessed by each instruction.

\subsection{Typing Rules}
\label{sec:tal-typing-rules}
In \systal, program type-correctness is determined by the type-correctness of each function in the program, in turn determined by the type-correctness of each block in the function.
Intuitively, the state type of a basic block (or more generally, an instruction sequence) ensures that the abstract machine whose state satisfies the type constraints can execute the block (instruction sequence) without getting stuck.
When the machine is about to jump to the other block at a branch instruction, its state should also satisfy the target's state type.
Figures~\ref{fig:tal-inst-typing}-\ref{fig:tal-state-subtype} elaborate with typing rules.

\begin{figure}
    \centering
\begin{mathpar}
\inferrule[Typing-Movq-m-r]{
\hasType{\Delta, \RType, \MType}{\text{load}(\memOp{r_b}{r_i}{i_s}{i_d}\opMode{s, \tau}, 8)}{\beta}\\
\RType'=\RType[r_0\mapsto \beta]\\
\hasType{\PType, \Gamma}{I}{{\revision{(\Delta, \RType', \MType)}}}
}{
\hasType{\PType, \Gamma}{\inst{movq}{\memOp{r_b}{r_i}{i_s}{i_d}\opMode{s, \tau}, r_0};I}{(\Delta, \RType, \MType)}
}\and
\inferrule[Typing-Movq-r-m]{
\hasType{\Delta, \RType, \MType}{\text{store}(\memOp{r_b}{r_i}{i_s}{i_d}\opMode{s, \tau}, 8, \RType[r_1])}{(\sval, \beta)}\\
\MType'=\MType[s\mapsto (\sval, \beta)]\\
\hasType{\PType, \Gamma}{I}{(\Delta, \RType, \MType')}
}{
\hasType{\PType, \Gamma}{\inst{movq}{r_1, \memOp{r_b}{r_i}{i_s}{i_d}\opMode{s, \tau}}; I}{(\Delta, \RType, \MType)}
}\and
\inferrule[Typing-Cmpq-r-r]{
\RType[r_1]=(e_1, \tau_1)\\
\RType[r_0]=(e_0, \tau_0)\\
\hasType{\PType, \Gamma}{I}{(\Delta, \text{setFlag}(\RType, (e_0-e_1, \tau_0\lor \tau_1), \inst{cmpq}{}), \MType)}
}{
\hasType{\PType, \Gamma}{\inst{cmpq}{r_1, r_0};I}{(\Delta, \RType, \MType)}
}\and
\inferrule[Typing-Jne]{
\RType[\texttt{ZF}]=(e=0, 0)\\
\revision{\hasVal{\Delta}{\text{isNonChangeExp}(e)}}\\
\revision{\text{getInputVar}(\dom{\sigma})=\emptyset}\\
\revision{\text{getTaintVar}(\dom{\sigma})=\emptyset}\\
\revision{\forall x\in \dom{\sigma}.\;\sigma(x)\neq \top}\\
\hasType{\PType, \Gamma}{I}{(\Delta\cup \qty{e=0}, \RType, \MType)}\\
\Gamma(\ell)=(\Delta', \RType', \MType')\\
\dom{\MType'}=\dom{\MType}\\
(\Delta \cup \qty{e\neq 0}, \RType, \MType) \sqsubseteq \sigma(\Delta', \RType', \MType')
}{
\hasType{\PType, \Gamma}{\inst{jne}{\ell\opMode{\sigma}};I}{(\Delta, \RType, \MType)}
}\and
\inferrule[Typing-Callq]{
\revision{e=\spVar+c}\\
\RType[r_\texttt{rsp}]=(e, 0)\\
\revision{\MType[e-8, e]=(\emptyset, (\_, 0))}\\
\revision{\forall x, \sigmaCall(x)=e, \text{isPtr}(x).\; \hasVal{\Delta}{\text{isNonChangeExp}(e - \text{getPtr}(e))}}\\
\revision{\forall x\in \dom{\sigmaCall}.\;\sigmaCall(x)\neq \top}\\
\sigma_\text{ret}=\overrightarrow{x_1}\rightarrow\overrightarrow{x_2}\\
\RType_{p_0}=\RType[r_\texttt{rsp}\mapsto (e- 8, 0)]\\
(\Delta, \RType_{p_0}, \MType) \sqsubseteq \sigma_\text{call}(\PType(f)(f))\\
\overrightarrow{x_2}\not\in(\Delta, \RType, \MType)\\
(\Delta_{p_1}, \RType_{p_1}, \MType_{p_1})=(\sigma_\text{call}\cup \sigma_\text{ret})(\PType(f)(f_\text{ret}))\\
\hasType{\PType, \Gamma}{I}{(\Delta\cup \Delta_{p_1}, \RType_{p_1}[r_\texttt{rsp}\mapsto (e, 0)], \text{updateMem}(\MType, \MType_{p_1}))}
}{
\hasType{\PType, \Gamma}{\inst{callq}{f\opMode{\sigma_\text{call},\sigma_\text{ret}}};I}{(\Delta, \RType, \MType)}
}
\end{mathpar}
    \caption{Instruction-sequence typing}
    \label{fig:tal-inst-typing}
\end{figure}

In general, each of the instruction-sequence typing rules is structured as follows.
First, the instruction sequence's state type should provide enough constraints so that the abstract machine can execute the first instruction in the sequence safely without getting stuck.
Second, the rule derives new type constraints for the machine state after executing the first instruction.
It requires that the next instruction sequence to be executed is well-typed with respect to the derived state type.
Some examples illustrate the pattern.

\textbf{\typingMovqMR} constrains a load instruction to be memory-safe and constant-time, via a type annotation tracking taint status of the load data.
It invokes \textsc{Typing-Load} in Figure~\ref{fig:tal-operand-typing},
which requires that the load range fall in the initialized region in $s$ ($s_\text{addr}\subseteq \sval$) for memory safety, and the taint type of data in the slot must satisfy $\tau$.
\systal also requires the load address to be untainted. 

\revision{
\textbf{\typingMovqRM} also constrains store instructions to be memory-safe and constant-time.
It invokes \textsc{Typing-StoreOp-Spill} or \textsc{Typing-StoreOp-Non-Spill} depending on whether the store slot holds a spill or a data object in the source code.
The difference arises from the different lifetimes of the two types of slots.
}

When storing to a spill slot, as shown in \textsc{Typing-StoreOp-Spill}, \systal always derives the type for the next state by overwriting the slot's valid region and type with the store range and store data type.
Even for a partial store, the type system ``forgets'' the type for the data in the part of the slot that is not overwritten by the operand. 

When storing to a nonspill slot, as shown in \textsc{Typing-StoreOp-Non-Spill}, \systal requires the store data's taint status to satisfy the original taint type of the target memory slot. It also updates the valid region and dependent type by combining the store data and the existing data in the slot. For a partial store, \systal may only consider the updated dependent type as $\top$ for simplicity.
Since each nonspill slot corresponds to a data object in the source file, we impose uniformity on the slot's taint type during its lifetime, i.e., the whole function. Then, we can use its taint type as a hint for its target placement during transformation, to change all load/store operands accessing it accordingly.
We do not require the unified taint for register spills since we consider the register spill lifetime ends after the next register spill (or store) to the same slot.

\begin{figure}
    \centering
\begin{mathpar}
\inferrule[Typing-Addr]{
\RType[r_b]=(e_b, \tau_b)\\
\RType[r_i]=(e_i, \tau_i)\\
}{
\hasType{\RType}{\memOp{r_b}{r_i}{i_s}{i_d}}{(
e_b+e_i\times i_s+i_d
, \tau_b \lor \tau_i)}
}\and
\inferrule[Typing-Load]{
\hasType{\RType}{\memOp{r_b}{r_i}{i_s}{i_d}}{(e_\text{addr}, 0)}\\
\revision{\hasVal{\Delta}{\text{isNonChangeExp}(e_\text{addr}-\text{getPtr}(s))}}\\
s_\text{addr}=[e_\text{addr}, e_\text{addr}+c)\\
\MType[s]=(\sval, (e, \tau))\\
\hasVal{\Delta}{s_\text{addr}\subseteq \sval}\\
e'=(\hasVal{\Delta}{s_\text{addr}=\sval})\;?\; e : \top\\
\revision{e'=\top \Rightarrow (\hasVal{\Delta}{\text{isNonChangeExp}(e)})}
}{
\hasType{\Delta, \RType, \MType}{\text{load}(\memOp{r_b}{r_i}{i_s}{i_d}\opMode{s, \tau}, c)}{(e', \tau)}
}\and
\inferrule[Typing-StoreOp-Spill]{
\hasType{\RType}{\memOp{r_b}{r_i}{i_s}{i_d}}{(e_\text{addr}, 0)}\\
\revision{\hasVal{\Delta}{\text{isNonChangeExp}(e_\text{addr}-\text{getPtr}(s))}}\\
s_\text{addr} = [e_\text{addr}, e_\text{addr}+c)\\
s\in\dom{\MType}\\
\text{isSpill}(s)\\
\hasVal{\Delta}{s_\text{addr} \subseteq s}\\
\hasVal{\Delta}{\tau_1 \Rightarrow \tau}\\
}{
\hasType{\Delta, \RType, \MType}{\text{store}(\memOp{r_b}{r_i}{i_s}{i_d}\opMode{s, \tau}, c, (e, \tau_1))}{(s_\text{addr}, (e, \tau))}
}\and
\inferrule[Typing-StoreOp-Non-Spill]{
\hasType{\RType}{\memOp{r_b}{r_i}{i_s}{i_d}}{(e_{addr}, 0)}\\
s_\text{addr} = [e_\text{addr}, e_\text{addr}+c)\\
\revision{\hasVal{\Delta}{\text{isNonChangeExp}(e_\text{addr}-\text{getPtr}(s))}}\\
\MType[s]=(\sval, (e_0, \tau))\\
\neg \text{isSpill}(s)\\
\hasVal{\Delta}{s_\text{addr} \subseteq s}\\
\hasVal{\Delta}{\tau_1 \Rightarrow \tau}\\
e'=(\hasVal{\Delta}{\sval \subseteq s_\text{addr}})\;?\; e_1 : \top\\
\revision{e'=\top \Rightarrow (\hasVal{\Delta}{\text{isNonChangeExp}(e_0)\land \text{isNonChangeExp}(e_1)})}
}{
\hasType{\Delta, \RType, \MType}{\text{store}(\memOp{r_b}{r_i}{i_s}{i_d}\opMode{s, \tau}, c, (e_1, \tau_1))}{(s_\text{addr}\cup \sval, (e', \tau))}
}
\end{mathpar}
    \caption{Memory-operation typing \revision{(note that, via predicate isSpill, we rely on debug tables already generated by Clang)}}
    \todo{Important note: If checking \textsc{State-Subtype} and combining with subtype relation at the call site, one may notice that in each function type, the slot reserved for its callee's stack is also denoted/considered as a ``spill'' slot (or a slot that may contain spill). This is a detailed decision that I have to make to ensure functional correctness. Consider improving the name isSpill later.}
    \label{fig:tal-operand-typing}
\end{figure}

\textbf{\typingJne} specifies the rule for a conditional-branch instruction.
\shixin{removed flag explanation, is it good?}
\systal requires the flag holding the branch condition to be untainted 
so that the program is constant-time.
Furthermore, \systal also tracks whether each dependent type refers to pointer values that might be changed by our transformation (see Section~\ref{sec:trans-strategies} for details).
To guarantee functional correctness of the transformation, \systal requires that the branch condition is independent from these pointer values, denoted as $\text{isNonChangeExp}(e)$.
Then, \systal derives the next state types after executing the branches, including both cases where the branch is taken and not taken.

For the not-taken side, similar to previous cases for non-branch instructions, \systal derives the next state type by adding the negation of the branch condition (i.e., $e=0$) to the type constraints.

For the taken side, \systal derives the next state type by asserting the branch condition (i.e., $e\neq 0$). The primary goal is to ensure that the machine state at the branch instruction is well-formed to jump to the target block. We define the subtype judgment for state types as shown in Figure~\ref{fig:tal-state-subtype}. Intuitively, this judgment ensures that for any machine state $S$ that satisfies a state type $\SType_1$, if $\SType_1$ is a subtype of $\SType_2$, then $S$ must also satisfy $\SType_2$.
\typingJne requires that the state type for the branch's taken side is a subtype of the target block's type $\Gamma(\ell)$.
Note that in this rule, we are checking the subtype relation against $\sigma(\Gamma(\ell))$, where the branch annotation $\sigma$ is a substitution that instantiates type variables in $\Gamma(\ell)$ using expressions over variables in the current block's type context.
We use $\sigma(\cdot)$ as syntax sugar for applying the substitution $\sigma$ to a variety of syntactic objects.
Our type checker implements every entailment check $\Delta \vdash \ldots$ as a call to an SMT solver.
In this rule, \systal also has some extra constraints on $\sigma$ to guarantee type safety and transformation correctness, which will be detailed in \bothver{Appendix~\ref{sec:supp-tal}}{\cite{secsep-proof}}.

\textbf{\typingCallq} specifies type constraints and changes of each step of calling a function.
It first derives the state type $(\Delta, \RType_{p_0}, \MType)$ after pushing the return address, checking that the state type is a subtype of the callee function's first block type $\PType(f)(f)$ with respect to the function call's annotation $\sigma_\text{call}$.
Here, $\sigma_\text{call}$ represents the type-variable substitution between the callee and the caller.

Next, \systal derives the state type after returning from the callee, using the type of the callee's exit block.
There are several details to note.
First, we need to convert the return-state type represented under the callee's type context to the caller's context.
Compared to the callee's first block type $\PType(f)(f)$, its return-state type $\PType(f)(f_\text{ret})$ may introduce new type variables.
The type annotation $\sigma_\text{ret}$ maps these new variables to the caller's context.
Hence, we perform type-variable substitution using both substitutions to represent the return-state type for the caller, i.e., $(\Delta_{p_1}, \RType_{p_1}, \MType_{p_1})=(\sigma_\text{call}\cup \sigma_\text{ret})(\PType(f)(f_\text{ret}))$.
We then add the return state's type constraints $\Delta_{p_1}$ to the next state type's context.
Second, the callee's return-state type only specifies how it updates the memory region covered by its memory type, which is a subset of the memory region covered by the caller's memory type.
On the other hand, according to our typing rules for load and store operations, the memory regions that do not belong to the callee's memory type remain unchanged across the function call. 
Following this philosophy, we apply the callee's changes to memory slots to the parent's memory type to get the final memory type after return ($\text{updateMem}(\MType, \MType_{p_1})$).
We also pop the return address to get the final return-state type.

\begin{figure}
    \centering
\begin{mathpar}
\inferrule[Reg-Subtype]{
\hasVal{\Delta}{e_1=e_2 \lor (\revision{\text{isNonChangeExp}(e_1)} \land e_2=\top)}\\
\hasVal{\Delta}{\tau_1 \Rightarrow \tau_2}\\
}{
\hasVal{\Delta}{(e_1, \tau_1)\sqsubseteq(e_2, \tau_2)}
}\and
\inferrule[Mem-Slot-Subtype]{
\hasVal{\Delta}{s_2\subseteq s_1}\\
\revision{\hasVal{\Delta}{\text{isSpill}(s_2)\Rightarrow \text{isSpill}(s_1)}}\\
\hasVal{\Delta}{e_1=e_2\lor (\revision{\text{isNonChangeExp}(e_1)}\land e_2=\top) \lor s_2=\emptyset}\\
\hasVal{\Delta}{\tau_1=\tau_2 \lor (\text{isSpill}(s_1) \land s_2=\emptyset)}\\
\revision{\text{getPtr}(s_1)=\text{getPtr}(s_2)}
}{
\hasVal{\Delta}{(s_1, (e_1, \tau_1))\sqsubseteq (s_2, (e_2, \tau_2))}
}\and
\inferrule[State-Subtype]{
\hasVal{\Delta_1}{\Delta_2}\\
\forall r\in\dom{\RType_2}.\; \hasVal{\Delta_1}{\RType_1[r]\sqsubseteq \RType_2[r]}\\
\forall s_2\in \dom{\MType_2}.\; \exists s_1.\; \hasVal{\Delta_1}{(s_2\subseteq s_1\land \MType_1[s_1]\sqsubseteq \MType_2[s_2])}
}{
\hasVal{}{(\Delta_1, \RType_1, \MType_1)\sqsubseteq (\Delta_2, \RType_2, \MType_2)}
}
\end{mathpar}
    \caption{State subtyping}
    \label{fig:tal-state-subtype}
\end{figure}

\subsection{Type Soundness}
\label{sec:tal-soundness}

In this section, we justify the type safety of \systal programs, that is, \systal guarantees well-typed programs to be executed on an \systal abstract machine without getting stuck. 
First, we briefly explain well-formedness of \systal abstract machine states (see details in \bothver{Appendix~\ref{sec:tal-machine}}{\cite{secsep-proof}}).
A state $S$ is well-formed, i.e. $\typeChecked{P, \PType}{S}$, if all its registers and memory values satisfy constraints specified by the state type of the instruction sequence to be executed next.

\begin{theorem}[Type Safety]
\label{thm:type-safety}
If $\typeChecked{P, \PType}{S}$, then for some $S'$, $S\rightarrow S'$ and $\typeChecked{P, \PType}{S'}$; or $S$ is a termination state.
\end{theorem}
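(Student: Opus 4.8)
The plan is to prove Theorem~\ref{thm:type-safety} in the standard syntactic style for type-safety results, establishing the two usual lemmas---\emph{progress} and \emph{preservation}---and then combining them. Since the theorem packages both claims into a single statement (``there exists $S'$ with $S \rightarrow S'$ and $\typeChecked{P, \PType}{S'}$, or $S$ is a termination state''), I would first state and prove each half separately, then read the theorem off as their conjunction. The whole argument proceeds by case analysis on the first instruction of the instruction sequence that the program counter $\pcVar$ points to in the well-formed state $S$.

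First I would prove \textbf{progress}: if $\typeChecked{P, \PType}{S}$ and $S$ is not a termination state (i.e.\ the next instruction is not $\instr{halt}$ and we are not at a completed top-level return), then there exists $S'$ with $S \rightarrow S'$. The key is that well-formedness of $S$ means its registers and memory satisfy the state type $(\Delta, \RType, \MType)$ of the instruction sequence to be executed, and each instruction-sequence typing rule in Figure~\ref{fig:tal-inst-typing} was designed precisely so that its premises supply enough constraints for the abstract machine not to get stuck. Concretely, for each instruction form I would verify that the machine can take a step: the abstract machine only gets stuck on insecure or unsafe operations---a load/store with a tainted address, a branch on a tainted condition, or an out-of-bounds memory access---and in each case the corresponding typing rule rules these out. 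For example, \typingMovqMR{} invokes \textsc{Typing-Load}, whose premises force the load address to be untainted ($(e_\text{addr}, 0)$) and the access range to lie inside the initialized region ($\hasVal{\Delta}{s_\text{addr}\subseteq \sval}$); since $S$ satisfies $\Delta$, these entailments hold concretely, so the load can execute. The analogous checks for store, $\instr{cmpq}$, $\instr{jne}$ (untainted flag via $\RType[\texttt{ZF}]=(e=0, 0)$), and $\instr{callq}$ discharge the remaining cases.

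Next I would prove \textbf{preservation}: if $\typeChecked{P, \PType}{S}$ and $S \rightarrow S'$, then $\typeChecked{P, \PType}{S'}$. This again goes by case analysis on the executed instruction, showing that the dynamic update of $(R, M, \pcVar)$ lands in a state satisfying the derived state type that appears in the conclusion's continuation premise $\hasType{\PType, \Gamma}{I}{\dots}$. The straight-line cases (\typingMovqMR, \typingMovqRM, \typingCmpqRR) are bookkeeping: one checks that the dynamically updated register file or memory still models the statically derived $\RType'$ or $\MType'$, relying on the soundness of the operand-typing rules in Figure~\ref{fig:tal-operand-typing} (e.g.\ that the value actually stored has a dependent/taint type matching what \textsc{Typing-StoreOp-Spill} or \textsc{-Non-Spill} records). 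The control-flow cases are where the subtyping machinery of Figure~\ref{fig:tal-state-subtype} does its work: for \typingJne{} the taken branch must satisfy $\sigma(\Gamma(\ell))$, and I would invoke a subtyping-soundness lemma---if a machine state models $\SType_1$ and $\hasVal{\Delta}{\SType_1 \sqsubseteq \SType_2}$ then it models $\SType_2$---to transfer well-formedness to the target block's precondition; the not-taken branch just adds $e=0$ to $\Delta$, which the concrete state already satisfies.

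The main obstacle, and the step deserving the most care, is the \typingCallq{} case of preservation, because it threads state across the callee via two substitutions and reconstructs the caller's post-return type. Here I would need an auxiliary lemma about the substitution semantics---that applying $\sigma_\text{call}$ and $\sigma_\text{ret}$ to a well-formed callee type yields a type that the caller's machine state models---together with a framing argument justifying $\text{updateMem}(\MType, \MType_{p_1})$: memory slots outside the callee's $\MType$ are provably unchanged across the call (a consequence of the load/store rules only ever touching slots in their own $\MType$), so the caller may soundly retain their old types while overlaying the callee's updates. I would prove this by an inner induction on the callee's execution, using the function type $\PType(f)$ as an invariant (an induction that the overall structure of the theorem---phrased as a single step relation---accommodates by treating intra-function execution abstractly via the callee's verified type). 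The side conditions on $\sigma_\text{call}$, $\sigma_\text{ret}$ (non-$\top$ images, pointer non-change constraints) are exactly what make these substitution and framing lemmas go through, so verifying that they are strong enough is the crux of the proof.
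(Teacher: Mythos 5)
Your overall decomposition is sound in spirit: the paper's proof is exactly a per-instruction case analysis in which progress and preservation are fused (for each instruction it first checks that the premises of the dynamic rule hold, then constructs the next state $S_1$ explicitly, then proves $\typeChecked{P, \PType}{S_1}$), and your straight-line, load/store, and $\instr{jne}$ cases — untainted addresses and flags for progress, operand-typing soundness and a subtyping-soundness argument for the taken branch — would go through essentially as written there.

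The gap is in the call/return handling, which you yourself flag as the crux. In this small-step setting, the step taken by $\instr{callq}$ lands at the callee's \emph{entry}, so the \typingCallq{} case of preservation only needs to establish well-formedness of the entry state via the premise $(\Delta, \RType_{p_0}, \MType) \sqsubseteq \sigmaCall(\PType(f)(f))$; the $\text{updateMem}$/framing reasoning you describe is not needed there but in the $\instr{retq}$ case, when the machine actually re-enters the caller. Your proposed ``inner induction on the callee's execution'' to justify framing is not well-founded inside a single-step preservation proof: the callee's execution is unbounded and is precisely what repeated application of the outer theorem describes, so that inner induction is circular as stated. The idea you are missing is the paper's strengthening of the well-formedness judgment itself into an inductive invariant over the call stack (\textsc{Well-Formed-State-Inductive}): a well-formed state records, for every stacked frame, a caller state $(R_p, M_p)$ that was well-formed at call time, the intact return-address slot $\sRet$, and the framing condition $\forall x \notin \text{getDom}(\MType, \sigma_f) \cup \sRet.\; M[x] = M_p[x]$. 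Preservation at \emph{every} step must then re-establish these stacked conditions — e.g., the store case proves the written range lies inside $\text{getDom}(\MType, \sigma_f)$ and is disjoint from $\sRet$ (Lemma~\ref{lemma:no-ret-slot}) — and the $\instr{retq}$ case consumes them, together with substitution-composition lemmas (Lemmas~\ref{lemma:br-substitute} and~\ref{lemma:call-ret-substitute}), to show the post-return state models $\text{updateMem}(\MType_0, \MType_1)$. Without this strengthened invariant (or an equivalent restructuring), a frame-local well-formedness judgment has forgotten everything about the caller by the time $\instr{retq}$ executes, and your preservation proof cannot be completed.
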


\section{Type Inference}
\label{sec:infer}
In this section, we introduce our type-inference algorithm that generates types for assembly programs.
Note that the inference algorithm is heuristic and does not guarantee type correctness. Instead, the correctness is checked separately by applying typing rules introduced in Section~\ref{sec:tal}.
Recalling the definitions from Section~\ref{sec:tal-syntax}, we need to generate state types of basic blocks and type annotations on instructions.
Our type-inference algorithm consists of three parts. First, we introduce unification type variables to represent state types and type annotations.
Second, we plug the type expressions into \systal's typing rules to collect type constraints.
Then, the third step is to solve for arithmetic predicates on type variables, which will be used to enrich the $\Delta$ of each block's state type so that it satisfies the typing constraints.
We iterate a process of learning new typing information and exploring its implications.

\subsection{Type initialization}
We begin type inference by using type-unification variables to represent state types (i.e., $(\Delta, \RType, \MType)$) and type annotations (i.e., load/store's destination-slot taint annotations; type-variable substitutions for branches and calls).
Our goal is to add appropriate constraints on these type variables to the type context $\Delta$ so that the state types and annotations satisfy the typing rules in Section~\ref{sec:tal-typing-rules}.

For register types, we simply assign a unification variable to each register; and for type annotations, we follow a similar strategy.
To initialize memory typing $\MType$, we first need to figure out $\dom{\MType}$ for each function.

\pgheading{Determine Memory Layout}
Core cryptographic routines usually do not allocate memory on the heap dynamically for reasons of performance,
so we consider the following three kinds of memory slots to determine each function's memory layout:
(1) data objects referenced by pointers in the function arguments; (2) local stack referenced by the stack pointer; (3) global variables referenced by global pointers.

We require simple type annotations in C source code, implemented through our custom annotation system, to explain the relationships among function parameters, e.g., one gives the size of an array that another points to \revision{(see example in Figure~\ref{fig:source-code-annotation})}.
These annotations are compiled down to assembly and serve as provided specifications for functions.
However, we must infer type information for other basic blocks within each function.
Other information can be determined with simple compiler support: address ranges of function stack frames using a Clang pass, and locations of global variables indicated directly in assembly code.

\begin{figure}
    \centering
    \inputminted{c}{code/annotation-example/salsa20.c}
    \caption{\revision{Example type annotations in C source code, which means that
    (1) the pointer \texttt{message} points to an array with size \texttt{mlen}, and the whole array is initialized;
    (2) \texttt{mlen} is a public variable whose taint type is $0$.
    }}
    \label{fig:source-code-annotation}
\end{figure}

\subsection{Type-Constraint Generation}
\label{sec:infer-constraint-gen}

In this section, we describe rules to generate constraints on the initialized block-state types. 
We provide several example rules in Figure~\ref{fig:infer-constraints}, where the generated constraints are \highlight{highlighted}.
Given a state type and the corresponding instruction sequence, the constraint-generation rule consists of two parts, following a similar structure to \systal typing rules.

First, a rule generates constraints on the state type so that the current instruction executes safely.
For example, the first two rules in Figure~\ref{fig:infer-constraints} constrain that a load/store operation must access a memory slot from the memory type, and the address must be untainted.
The third rule requires that the branch condition is untainted.

Second, a rule derives the next state type after executing the first instruction in the sequence and generates constraints for the next type.
Note that the state types are initialized using unification variables not constrained by predicates,
so we may not be able to derive the next state type deterministically.
For example, as shown in \textsc{Constraint-Movq-r-m-Unknown}, we cannot determine the target slot of the store operand and thereby are not able to update the memory type correspondingly.
In this case, we use the unmodified memory type to generate constraints for the next instruction sequence.
Note that these heuristic rules cannot generate all proper type constraints. We rely on these partially correct constraints to derive predicates and use the newly solved predicates to improve constraint generation in the next round.

\begin{figure}
    \centering
\begin{mathpar}
\inferrule[Constraint-Movq-m-r-Unknown]{
\hasType{\RType}{\memOp{r_b}{r_i}{i_s}{i_d}}{(e_a, \tau_a)}\\
\RType'=\RType[r_0\mapsto (\textcolor{red}{\top}, \tau)]\\
\hasType{\PType, \Gamma}{I}{(\RType', \MType, \Delta)}
\constraint{C}
}{
\hasType{\PType, \Gamma}{\inst{movq}{\memOp{r_b}{r_i}{i_s}{i_d}\opMode{s, \tau}, r_0};I}{(\Delta, \RType, \MType)}\\
\constraint{[e_a, e_a+8]\subseteq s; s\in \dom{\MType}; \tau_a=0; C}
}\and
\inferrule[Constraint-Movq-r-m-Unknown]{
\hasType{\RType}{\memOp{r_b}{r_i}{i_s}{i_d}}{(e_a, \tau_a)}\\
\hasType{\PType, \Gamma}{I}{(\Delta, \RType, \textcolor{red}{\MType})}
\constraint{C}
}{
\hasType{\PType, \Gamma}{\inst{movq}{r_1, \memOp{r_b}{r_i}{i_s}{i_d}\opMode{s, \tau}}; I}{(\Delta, \RType, \MType)}\\
\constraint{[e_a, e_a+8]\subseteq s; s\in \dom{\MType}; \tau_a=0; C}
}\and
\inferrule[Constraint-Jne]{
\RType[\texttt{ZF}]=(e=0, \tau)\\
\hasType{\PType, \Gamma}{I}{(\Delta\cup \qty{e=0}, \RType, \MType)}
\constraint{C}\\
}{
\hasType{\PType, \Gamma}{\inst{jne}{\ell\opMode{\sigma}};I}{(\Delta, \RType, \MType)}\\
\constraint{
\tau=0; (\Delta \cup \qty{e\neq 0}, \RType, \MType) \sqsubseteq \sigma(\Gamma(l)); C}
}
\end{mathpar}
    \caption{Typing-constraints generation}
    \label{fig:infer-constraints}
\end{figure}

\subsection{Dependent-Type Inference}

In this section, we show how we derive arithmetic predicates of dependent type variables from type constraints. As shown in Figure~\ref{fig:infer-constraints}, we generate two kinds of constraints for dependent types: (1) state-subtype and (2) load/store-address constraints.

\subsubsection{Solving Subtype Constraints}
\systal requires that the state type at each branch should be a subtype of the target block's state type (e.g., ($\Delta\cup \qty{e\neq 0}, \RType, \MType)\sqsubseteq \sigma(\Gamma(\ell))$ in \textsc{Constraint-Jne}).
Intuitively, the subtype relation requires that the range of each register/memory slot's value at the target block, represented by dependent type variables, should be a superset of the range of its value at the branch that jumps to the target block.
By unfolding all subtype constraints, we can get concrete constraints on the range of each dependent type variable. 
We propose a set of inference rules that syntactically applies to the range constraints with certain patterns and solves the predicates of each variable heuristically. 

Our heuristic rules for solving subtype constraints are built upon several common patterns we observed from the range constraints on each type variable.
Recall we use dependent types to reason about load/store operations, so we mainly focus on type variables used for pointer arithmetic.
Luckily, we target type inference for cryptographic programs, whose dependent-type range constraints share simple and intuitive patterns corresponding to two basic code patterns in Figure~\ref{fig:possible-value-code-pattern}.
In both examples, we demonstrate applying our rules to figure out the range for type variable $a$ that represents \reg{rax}'s dependent type at block \texttt{.L0}. We denote the range of $a$ as $S_a$ and derive constraints on $S_a$ by unfolding all state subtype constraints.

\pgheading{Infer set of values}
The first example (Figure~\ref{fig:code-pattern-set}) shows the case where \reg{rax} contains different values when entering basic block \texttt{.L0} from different branches. Specifically, the range of \reg{rax} is $\qty{e_1}$ when jumping from \texttt{.L1} and is $\qty{e_2}$ when jumping from \texttt{.L2}.
The subset constraint and the derived solution can be formulated as follows:
\begin{equation*}
    \left.
    \begin{array}{@{}l@{}}
        S_a \supseteq \qty{e_1} \\
        S_a  \supseteq \qty{e_2} \\      
    \end{array}
    \right\} \Rightarrow S_a = \qty{e_1, e_2}.
\end{equation*}

\pgheading{Infer range of loop counter}
The second example (Figure~\ref{fig:code-pattern-loop}) shows the case where \reg{rax} acts as a loop counter.
$\reg{rax}$ is initialized to $e_0$ when entering the loop body from block \texttt{.L1} and increased by a constant step $c_0$ in each iteration. The loop ends when \reg{rax} is equal to the boundary value $e_n$. Without loss of generality, we discuss the case where $c_0>0$.
According to our constraint-generation rules, when jumping back to the loop head \texttt{.L0}, the state type satisfies $\Delta=\qty{a\in S_a, a+c_0-e_n\neq 0}$ and $\RType=\qty{\reg{rax}:a+c_0}$.
So we can use the set $\qty{a+c_0: a\in S_a\land a+c_0-e_n\neq 0}$ to represent the range of \reg{rax} before jumping back. 
Thus, the subtype constraint and the corresponding heuristic rule can be formulated as follows:
\begin{equation*}
    \left.
    \begin{array}{@{}l@{}}
        S_a \supseteq \qty{e_0} 
        \qquad \qquad c_0>0\\
        S_a \supseteq \qty{a+c_0: a\in S_a\land a+c_0\neq e_n} \\
    \end{array}
    \right\} \Rightarrow S_a = \typeRange{e_0}{e_n-c_0}{c_0}.
\end{equation*}
This rule extracts three key features from the constraints:
\begin{itemize}
    \item $e_0$: the loop counter's base value at the loop's entrance;
    \item $c_0$: the per-iteration step value for the counter;
    \item $e_n$: the loop boundary in the branch condition.
\end{itemize}
Then, the rule heuristically determines that the range of $a$ is $S_a=\typeRange{e_0}{e_n-c_0}{c_0}$.
Here, we use $\typeRange{a}{b}{c}$ to represent a set of values in range $[a, b]$ with stride $c$.
In our implementation, we apply the above strategy to infer a range of loop counters. 

\pgheading{Infer implicit relation between variables}
Another challenge is that assembly programs do not explicitly keep semantic relations between type variables.
However, these relations are crucial to deriving accurate range constraints for variables.
For example, in Figure~\ref{fig:code-pattern-loop}, \reg{rax} and \reg{rbx} are increased consistently during each loop iteration, but the loop condition only constrains the boundary of \reg{rax} when jumping back to the loop header.
By unfolding the subtype constraints, we can only get the following constraints related to $b$: $S_b \supseteq \qty{e_1}$ and $S_b\supseteq \qty{b+c_1: b\in S_b}$, which implies that $S_b$ is infinite, thereby not accurately constraining the range of $b$.

As a solution, we introduce another rule that infers the linear relation between type variables that share similar constraint patterns.
In our example, \reg{rax} and \reg{rbx} are increased synchronously following the same loop structure, so we can use the range of \reg{rax} to constrain the range of \reg{rbx}, as shown in the following formula.
\begin{equation*}
    \left.
    \begin{array}{@{}l@{}}
        S_b \supseteq \qty{e_1} \\
        S_b \supseteq \qty{b+c_1: b\in S_b} \\
        S_a \supseteq \qty{e_0} \\
        S_a \supseteq \qty{a+c_0: a\in S_a\land a+c_0\neq e_n} \\
    \end{array}
    \right\} \Rightarrow 
    \begin{array}{@{}l@{}}
        S_b = \\
        \qty{\frac{(a-e_0) c_1}{c_0}+e_1: a\in S_a}.
    \end{array}
\end{equation*}

\newsavebox{\possiblevaluesetbzero}
\begin{lrbox}{\possiblevaluesetbzero}
\begin{minipage}{5cm}
\begin{minted}[linenos=false,xleftmargin=0pt]{asm}
.L0:
# |$\Delta=\qty{a \in S_a}$|
# |$\RType=\qty{\reg{rax}: a}$|
\end{minted}
\end{minipage}
\end{lrbox}

\newsavebox{\possiblevaluesetbone}
\begin{lrbox}{\possiblevaluesetbone}
\begin{minipage}{5cm}
\begin{minted}[linenos=false,xleftmargin=0pt]{asm}
.L1:
  movq $|$e_1$|, %rax
  jmp .L0
\end{minted}
\end{minipage}
\end{lrbox}

\newsavebox{\possiblevaluesetbtwo}
\begin{lrbox}{\possiblevaluesetbtwo}
\begin{minipage}{5cm}
\begin{minted}[linenos=false,xleftmargin=0pt]{asm}
.L2:
  movq $|$e_2$|, %rax
  jmp .L0
\end{minted}
\end{minipage}
\end{lrbox}

\newsavebox{\possiblevalueloopbzero}
\begin{lrbox}{\possiblevalueloopbzero}
\begin{minipage}{5cm}
\begin{minted}[linenos=false,xleftmargin=0pt]{asm}
.L0:
# |$\Delta=\qty{a \in S_a, b\in S_b}$|
# |$\RType=\qty{\reg{rax}: a, \reg{rbx}: b}$|
  addq $|$c_0$|, %rax
  addq $|$c_1$|, %rbx
  cmpq $|$e_n$|, %rax
  jne .L0
\end{minted}
\end{minipage}
\end{lrbox}

\newsavebox{\possiblevalueloopbone}
\begin{lrbox}{\possiblevalueloopbone}
\begin{minipage}{5cm}
\begin{minted}[linenos=false,xleftmargin=0pt]{asm}
.L1:
  movq $|$e_0$|, %rax
  movq $|$e_1$|, %rbx
  jmp .L0
\end{minted}
\end{minipage}
\end{lrbox}

\begin{figure}
\centering
\begin{subfigure}{0.5\linewidth}
\centering
    \begin{tikzpicture}[
        square/.style={draw=black,minimum width=1cm,minimum height=0.6cm},
        node distance=0.2cm,>={latex},semithick
    ]
        \node[square] (b1) {
            \begin{minipage}{1.9cm}
                \usebox{\possiblevaluesetbone}
            \end{minipage}
        };
        \node[square,right=of b1] (b2) {
            \begin{minipage}{1.9cm}
                \usebox{\possiblevaluesetbtwo}
            \end{minipage}
        };
        \node[square,anchor=north] (b0) at ($0.5*(b1.south)+0.5*(b2.south)+(0,-0.6)$) {
            \begin{minipage}{2.1cm}
                \usebox{\possiblevaluesetbzero}
            \end{minipage}
        };
        \draw[->] (b1.south) -- ++(0,-0.3) -| (b0.north -| b1.east);
        \draw[->] (b2.south) -- ++(0,-0.3) -| (b0.north -| b2.west);
        \draw[opacity=0] (b0.south) -- ++(0,-0.3);
    \end{tikzpicture}
    \caption{Infer set of values}
    \label{fig:code-pattern-set}
\end{subfigure}%
\begin{subfigure}{0.5\linewidth}
\centering
    \begin{tikzpicture}[
        square/.style={draw=black,minimum width=1cm,minimum height=0.6cm},
        node distance=0.2cm,>={latex},semithick
    ]
        \node[square] (b1) {
            \begin{minipage}{2.45cm}
                \usebox{\possiblevalueloopbone}
            \end{minipage}
        };
        \node[square,anchor=north] (b0) at ($(b1.south)+(0,-0.6)$) {
            \begin{minipage}{2.45cm}
                \usebox{\possiblevalueloopbzero}
            \end{minipage}
        };
        \draw[->] (b1) -- (b0);
        \draw[->] (b0.south) -- ++(0,-0.3) -- ($(b0.south east)+(0.3,-0.3)$) -- ($(b0.north east)+(0.3,0.3)$) -| ($(b0.north)+(0.3, 0)$);
    \end{tikzpicture}
    \caption{Infer loop counter}
    \label{fig:code-pattern-loop}
\end{subfigure}
\caption{Examples for dependent-type inference}
\label{fig:possible-value-code-pattern}
\end{figure}
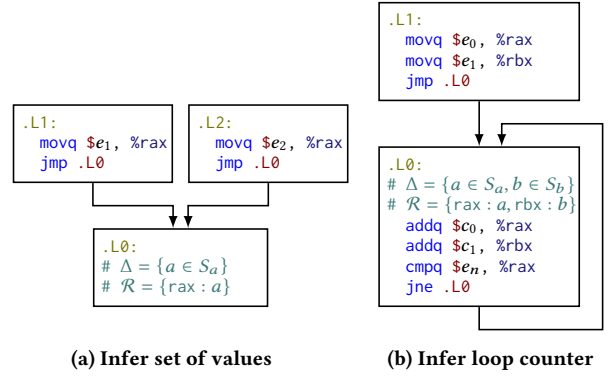

\subsubsection{Solving load/store constraints}
\systal also requires that the dependent type of each load/store address belong to a specific memory slot.
These constraints can be satisfied automatically with the predicates derived from the subtype relation when the load/store address and the memory slot have simple formulas, e.g., shifted from the base pointer by a constant offset.
However, when accessing an array with a variable length or a variable index, we need extra predicates for bounds checks regarding the length/index type variables,
inferred by the following two methods.

\begin{figure}
\centering
\begin{subfigure}{0.4\linewidth}
\centering
    \inputminted{asm}{code/dep-infer-extra/boundary-check.s}
    \caption{Boundary check}
    \label{fig:infer-extra-boundary-check}
\end{subfigure}%
\begin{subfigure}{0.48\linewidth}
\centering
    \inputminted[linenos=false]{asm}{code/dep-infer-extra/boundary-no-check.s}
    \caption{Implicit assumption}
    \label{fig:infer-extra-boundary-no-check}
\end{subfigure}
\caption{Missing predicates to validate memory accesses}
\shixin{Any suggestions for the titles?}
\label{fig:infer-extra}
\end{figure}

\pgheading{Propagate branch conditions}
First, the function may already include proper bounds checks to guarantee memory safety.
For example, as shown in Figure~\ref{fig:infer-extra-boundary-check}, line 10 loads from $[p, p+8]$, and we lack the predicate $n\geq 8$ to validate it.
On the other hand, the program checks the branch condition $n\geq 8$ before jumping to \texttt{.L0}, which implies this missing predicate.
Motivated by this common pattern, we propose the following rule to deduce predicates by propagating branch conditions across basic blocks.
\begin{equation*}
    \left.
    \begin{array}{@{}l@{}}
        \sigma_1(\Delta, \RType, \MType) \sqsupseteq (\qty{\sigma_1(e)}\cup \Delta_1, \RType_1, \MType_1)\\
        \sigma_2(\Delta, \RType, \MType) \sqsupseteq (\qty{\sigma_2(e)}\cup \Delta_2, \RType_2, \MType_2)\\
        \dots \\       
    \end{array}
    \right\} \Rightarrow e \in \Delta
\end{equation*}
Each subtype constraint listed here corresponds to one branch that jumps to the specific block with state type $(\Delta, \RType, \MType)$.
This rule states that for all branches that jump to this block, if a predicate is always satisfied before branching, then it can be added to the block's type context.

\pgheading{Reverse-engineer load/store operations}
However, not all missing predicates can be deduced from branch conditions in the function.
For example, as shown in Figure~\ref{fig:infer-extra-boundary-no-check}, the function takes two inputs: pointer $p$ to an array with $8$ entries and index $k$.
The programmer implicitly assumes that the function is only called with $k\in [0, 7]$ and loads from the $k$th entry of $p$ without performing any boundary checks.
We propose a two-step method to infer these implicit assumptions by reverse-engineering the necessary conditions to validate the memory safety of load/store operations.

First, for each load/store address without any known target memory slot, we heuristically guess which slot it belongs to based on its address pattern. For example, it is expected to belong to a memory slot that shares the same base pointer.

Second, we constrain the load/store operation to fall in the slot by adding the corresponding predicates to the current block's state type.
As required by subtype constraints, this newly generated predicate must also be satisfied by every previous basic block that jumps to the current one. Hence, we apply the following rule to propagate each newly generated predicate to the previous blocks.
\begin{equation*}
    \left.
    \begin{array}{@{}l@{}}
        \sigma_1(\qty{e}\cup \Delta, \RType, \MType) \sqsupseteq (\Delta_1, \RType_1, \MType_1)\\
        \sigma_2(\qty{e}\cup \Delta, \RType, \MType) \sqsupseteq (\Delta_2, \RType_2, \MType_2)\\
        \dots \\       
    \end{array}
    \right\} \Rightarrow 
    \left\{
    \begin{array}{@{}l@{}}
        \sigma_1(e)\in \Delta_1\\
        \sigma_2(e)\in \Delta_2\\
        \dots \\
    \end{array}
    \right.
\end{equation*}

Note that both inference strategies require us to substitute local type variables properly for each basic block, i.e., to know the branch annotation $\sigma$. This type-variable substitution can be built by unifying each register and memory slot's type from the target block's state type and the state type before branching.

\subsection{Valid-Region Inference}
In this section, we explain how to infer valid regions of each basic block's state type, which is constrained by two aspects:
(1) each load instruction can only read from valid regions (\textsc{Typing-Load});
(2) each memory slot's valid region at a branch instruction must be a superset of its valid region at the destination block (\textsc{Mem-Slot-Subtype}).
Our overall inference strategy is to constrain the valid region of each memory slot using (2) and find the most accurate solution that covers the largest valid region to satisfy (1).
Specifically, the second constraint can be derived from subtype constraints. For example, for $\sigma_1(\Delta, \RType, \MType)\sqsupseteq (\Delta_1, \RType_1, \MType_1)$ and memory slot $s$ where $\MType[s]=(\sval, \_)$, $\MType_1[s]=(\sval_1, \_)$, the constraint on $\sval$ is $\sigma_1(\sval)\subseteq \sval_1$.

For a memory slot that is fully initialized at the beginning of the function, its valid region is always equal to its address range.
It is also straightforward to infer the valid region for a memory slot that holds a primitive type of data (e.g., \texttt{int}) or a register spill since the program usually writes to the full slot or leaves the full slot uninitialized.
Hence, its valid region is usually the slot address range or the empty set.
The major challenge is to infer the valid region for an array, where the program writes to part of it at a time, steadily increasing its valid region. We provide heuristic rules to represent the valid region accurately using dependent type variables. 

For example, as shown in Figure~\ref{fig:infer-range-loop}, the program fills an array by looping over all its entries.
We can derive the following constraints on the array's valid region at \texttt{.L0}.
\begin{equation*}
    \left.
    \begin{array}{@{}l@{}}
        \qty[0/a]\sval \subseteq \emptyset \qquad \qquad a\in [0, 63]\\
        \qty[\textcolor{red}{a+1}/\textcolor{blue}{a}](\sval) \subseteq \sval \cup [p+\textcolor{blue}{a}, p+\textcolor{red}{a+1}]\\
    \end{array}
    \right\} \Rightarrow \sval=[p, p+\textcolor{blue}{a}]
\end{equation*}
Our inference algorithm extracts the valid region's boundary from the pattern $\sval\cup [p+a, p+a+1]$. The key insight is that the next array write is always to the next uninitialized slot.

\begin{figure}
    \centering
    \inputminted{asm}{code/range-infer-example/range-infer-loop.s}
    \caption{Infer the valid region of an array}
    \label{fig:infer-range-loop}
\end{figure}

\subsection{Taint-Type Inference}
In this section, we demonstrate how to unify local taint variables at each block with each function's input taint variables and generate necessary predicates to satisfy all constraints.

According to Section~\ref{sec:tal-typing-rules} and Section~\ref{sec:infer-constraint-gen}, \systal constrains taint types via the following three aspects:
\begin{enumerate}
[leftmargin=*]
    \item Load/store addresses and branch conditions are untainted. Denote each of their taint types as $\tau=x_1\lor x_2\lor \dots x_n$. We can rewrite the constraint as $x_1 \Rightarrow 0\land x_2 \Rightarrow 0\land \dots x_n \Rightarrow 0$.
    \item The taint type of the accessed memory slot is equal to the taint annotation of a load/store operand (under some scenarios).
    According to type-constraint generation, both the memory slot's taint type and the load/store operand's taint annotation, denoted as $x_\text{slot}$ and $x_\text{op}$, are only represented by taint variables or constant taint values (instead of complex taint expressions).
    Therefore, the taint constraint can be written as $x_\text{slot}\Rightarrow x_\text{op}\land x_\text{op}\Rightarrow x_\text{slot}$,
    \item If store data is tainted, then the store operand's taint annotation is also tainted.
    Denote the store data's taint type as $x_1\lor x_2\lor \dots \lor x_n$ and the store operand's taint type as $x_\text{op}$.
    We can write the constraint as $x_1\Rightarrow x_\text{op} \land x_2\Rightarrow x_\text{op} \land \dots \land x_n\Rightarrow x_\text{op}$.
\end{enumerate}
In short, the taint constraints can be summarized in the form $E_1\land E_2\land \dots \land E_n$.
Here, each $E_k$ has the form $x_1\Rightarrow x_2$ where $x_1$ and $x_2$ are either taint variables or constant taint values.
This formula clearly constrains the taint flow among all taint variables.

Here is how we derive taint predicates.
For each local taint variable, we can identify its taint source represented by input taint variables. If there is no taint source, we set it to $0$. Otherwise, we set it to the logical OR of all its taint sources.
We can also collect predicates for input taint variables in a similar form $x_1\Rightarrow x_2$ and add them to the state type of the function's input block.

\section{Transformation}
\label{sec:trans}
We define a transformation that takes a well-typed \systal program as input and generates another program that satisfies our software contract, public noninterference (defined in Section~\ref{sec:sw-hw-contract}).
As discussed in Section~\ref{sec:overview-relocate-stack}, the overall strategy of our transformation is to maintain a secret stack that is shifted from the original stack by $\delta$ bytes. If a stack slot contains secrets, we shift its location by $\delta$ to move it onto the secret stack. If a stack slot contains public data, we do not change its location.
Note that our transformation does not affect heap/global variables, while we do use \systal's type system to ensure that they are used properly from an information-flow perspective.
We first present two basic transformation strategies to relocate memory data. Then, we illustrate how we apply these two strategies to transform the program.
We also formally prove that the transformation maintains the original program's functionality while guaranteeing public noninterference.

\subsection{Two Memory-Relocation Strategies}
\label{sec:trans-strategies}
Load/store instructions in \sysisa (and other ISAs) support the following addressing mode: taking a precalculated base pointer and adding an offset to the pointer to derive the target address.
There are thus two basic applicable strategies to transform memory accesses in assembly programs:
\begin{description}
    \item[TransPtr] We can modify the base pointer before it is used in the memory operand so that the memory operand automatically switches to accessing the relocated object.
    \item[TransOp] When we want to shift the target address by a constant offset, we can directly modify the memory operand to add this offset.
\end{description}

However, each strategy has limited applicability.
First, \transOp is a context-insensitive change, which uniformly shifts the memory-access address by the same offset.
As a result, \transOp is only suitable for the case where we statically know how the data object accessed by the instruction should be relocated (e.g., whether to move it to the secret stack).
However, the program may reuse the same instruction to operate on public and secret data in different situations.
For example, the \texttt{memset} function might be called to set either public or secret data objects, where we want to relocate them with different offsets.
Note that on the caller side, we may know more context information such as whether the data object is secret or not.
Hence, we choose \transPtr rather than \transOp to transform those store instructions in \texttt{memset} by modifying the pointer argument passed to \texttt{memset}, so that all the store instructions can automatically access the designated region.
\shixin{I think ``cannot'' is too absolute here, so try to avoid that.
}

On the other hand, when applying \transPtr to shift a base pointer, all load/store operands using the same base pointer will shift their target addresses by the same offset.
In other words, all memory slots referenced by the same base pointer will be relocated together by \transPtr, so it is only suitable for the case where those referenced slots share the same taint type.
For example, a function may access a struct that contains both secret and public fields (slots) through the same base pointer of the struct.
In this case, we use \transOp to avoid relocating the public slots.

One important case worth discussing is about translating memory accesses to slots referenced by the stack pointer.
On the one hand, the stack pointer, stored in \reg{rsp}, is used to reference different memory slots on the function's local stack, including both tainted and untainted ones. So, we should not apply \transPtr to transform the stack pointer.
On the other hand, the program may pass base pointers of stack objects to functions such as \texttt{memset}.
These pointers are stored in registers such as \reg{rdi} according to \sysisa's calling convention, and they are only used to access the corresponding data objects instead of arbitrary slots on the stack.
Hence, although the pointer points to the stack, we can still apply \transPtr as long as all slots within the corresponding object share the same taint.
\shixin{I am mixing plural and singular terms and have no idea about how to uniform them for these two sentence :(((}

\todo{Benefit: TransOp does not introduce extra instruction}
\todo{Add memset example (maybe for camera ready) (two examples for two cases are better than one giant example)}

\subsection{Transformation Details}
\label{sec:tf-details}
\pgheading{Determine transformation strategy}
The first step of our transformation is to decide which transformation strategy to use for each memory access.
We first determine the strategy for each memory slot and transform all memory accesses to that slot with the slot's strategy.
Given a function with input memory type $\MType$, we generate a map $\omega:\dom{\MType} \rightarrow \qty{\transOp, \transPtr}$ that maps each memory slot to its transformation strategy.
\shixin{Consider to change this to ``Given a memory type $\MType$'' since in Appendix $\omega$ depends on each instruction's memory type}

Following the discussion of the pros and cons of \transPtr and \transOp in Section~\ref{sec:trans-strategies}, we propose the the following approach to decide which strategy to use: $\omega(s)=\transPtr$ if and only if (1) $s$ is referenced by a pointer passed through a function argument, and (2) all slots in $\MType$ referenced by the same pointer have the same taint type.%
\footnote{We also require that for slot $s$ where its base pointer is a function argument and $\omega(s)=\transOp$, its taint type is $0$ or $1$.}
We formalize the generation of $\omega$ in \bothver{Appendix~\ref{sec:base-trans-pass}}{\cite{secsep-proof}}.

\pgheading{Transform load/store operands}
\label{tfop:memop}
Next, our transformation uses a pass $\CompilerOp$ to transform all load/store operands that access memory slots with transformation strategy \transOp.
For each memory operand, denoted as $\memOp{r_b}{r_i}{i_s}{i_d}\opMode{s, \tau}$, $s$ is the memory slot accessed by the operand, and $\tau$ is the slot taint type.
If $\omega(s)=\transOp$ and $\tau \neq 0$ (i.e., the slot might be tainted),
$\CompilerOp$ will rewrite the operand to $\memOp{r_b}{r_i}{i_s}{\delta+i_d}$ so that its target address is shifted by $\delta$ and relocated to the secret region.
\shixin{I am not sure whether it's good to use ``secret stack'' or simply ``secret region'' here.}
\shixin{I omit the annotation change since it will be detailed in the appendix.}

For instructions that perform load/store without explicit load/store operands in their ISA representations (e.g., $\instr{pushq}$, $\instr{popq}$), $\CompilerOp$ also synthesizes the transformed behavior accordingly.
Specifically, for simplicity, we will use $\instr{pushsecq}$ and $\instr{popsecq}$ to represent push/pop on the secret stack, which will be synthesized to valid \sysisa instructions in the final transformed program.

\pgheading{Transform pointer arguments}
\label{tfop:ptr-arg}
We define another pass $\CompilerPtr$ to perform \transPtr, which transforms pointer arguments passed to each callee function accordingly so that they use the transformed pointer to access designated regions.
Specifically, for each base pointer argument that references tainted slots with transformation strategy \transPtr, the callee function expects that the pointer is already shifted by $\delta$ and points to the address after relocation.

From the caller function's perspective, if the transformation strategy of the slots referenced by the same pointer is also \transPtr, then the pointer must be already shifted, and we do not need to make any changes.
On the other hand, if the transformation strategy of those slots is \transOp, then the pointer is not transformed yet, so we need to add $\delta$ to the pointer argument when passing it to the callee.
We formalize this transformation in \bothver{Appendix~\ref{sec:base-trans-pass}}{\cite{secsep-proof}}.

\pgheading{Restore callee-saved registers' taint}
\label{tfop:csr}
With $\CompilerOp$ and $\CompilerPtr$, our transformation can ensure that all memory operands accessing secret data on the stack are redirected to accessing the secret stack.
However, recall that \transOp shifts a memory operand's target address to the secret stack as long as the corresponding memory slot is not constantly untainted.
In other words, our transformation may conservatively redirect memory accesses to the secret stack even though they may operate on the public data under some circumstances, causing performance loss.

Specifically, each function usually saves callee-saved registers to its stack if needed, restoring them before returning to the call site.
Since the callee-saved registers can be tainted or untainted depending on the call site, we transform the function to always push them to the secret stack to avoid potential leakage.
For example, in Figure~\ref{fig:trans-callee}, \texttt{fparent\_sec} calls \texttt{fchild} with one callee-saved register \reg{r12} containing secret data, while \texttt{fparent\_pub} calls \texttt{fchild} with \reg{r12} containing a public pointer.
We then transform \texttt{fchild} to save \reg{r12} to the secret stack.
As a result, when returning to \texttt{fparent\_pub} after calling \texttt{fchild}, \reg{r12} is marked as tainted by a processor with coarse-grained memory taint tracking and secure speculation.
Since \reg{r12} is used as the store address on line 7, and the processor delays speculative memory accesses with tainted addresses to avoid leaking secrets, this store will be delayed until the commit stage, hurting performance.

We introduce another pass $\CompilerCallee$ that saves public callee-saved registers to the public stack before each call and retrieves them afterward.
$\CompilerCallee$ guarantees that when running the transformed program on a machine that does coarse-grained taint tracking, the callee-saved registers are never overtainted after function calls.

Extra stack space need not be allocated. As shown in Figure~\ref{fig:trans-callee-parent-pub}, \texttt{fparent\_pub} pushes \reg{r12} to the secret stack slot $s+\delta$ before using it, while the corresponding slot $s$ on the public stack is unused.
\todo{TODO for camera-ready: maybe add a diagram for this}
\shixin{No space for it :(}
Thus, we can use $s$ to save and restore the public value in \reg{r12} before and after calling \texttt{fchild} (highlighted in Figure~\ref{fig:trans-callee-parent-pub}).
\revision{
We formalize $\CompilerCallee$ in \bothver{Appendix~\ref{sec:opt-trans-pass}}{\cite{secsep-proof}}.
}

Note that although $\CompilerCallee$ helps avoid unnecessary delays on speculation, it inserts extra instructions into the program and may cause performance overhead.
We will evaluate this tradeoff by measuring the performance with and without $\CompilerCallee$ in Section~\ref{sec:performance-overhead}.

\begin{figure}
\centering
\begin{subfigure}{0.27\linewidth}
\centering
    \inputminted{asm}{code/trans-callee-example/fchild.s}
    \caption{\reg{r12} $\rightarrow$ tainted}
    \label{fig:trans-callee-child}
\end{subfigure}%
\begin{subfigure}{0.35\linewidth}
\centering
    \inputminted[linenos=false]{asm}{code/trans-callee-example/fparent-sec.s}
    \caption{Secret \reg{r12}}
    \label{fig:trans-callee-parent-sec}
\end{subfigure}
\begin{subfigure}{0.37\linewidth}
\centering
    \inputminted[linenos=false,highlightlines={4,6}]{asm}{code/trans-callee-example/fparent-pub.s}
    \caption{Public \reg{r12}}
    \label{fig:trans-callee-parent-pub}
\end{subfigure}
\caption{Restore callee-saved registers' taint}
\label{fig:trans-callee}
\end{figure}

\subsection{Transformation Soundness}
\pgheading{Functional Correctness}
First, we define a simulation relation $\hasVal{P, \PType}{S'\prec S}$, which correlates abstract machine states running the transformed program and the original program $P$.
Intuitively, the simulation relation maps the relocated memory data objects in the transformed program's state to those in the original program's.
It also requires that paired objects and registers in the two states have matching values as long as they are not pointers that might be changed by \transPtr.
\shixin{I eventually decided also to mention registers here. Please check!}

Denoting our overall transformation as $\Compiler$, we can formalize the functional correctness of our transformation using the following theorem.
\begin{theorem}[Functional Correctness]
If $\typeChecked{P, \PType}{S}$ and $\hasVal{P, \PType}{S'\prec S}$, then there exists $S_1$ and $S'_1$ such that $S\xrightarrow{\instVar} S_1$, $S'\xrightarrow{\Compiler(\instVar)}^* S_1'$, and $\hasVal{P, \PType}{S_1'\prec S_1}$; or $S$ and $S'$ are termination states.
\end{theorem}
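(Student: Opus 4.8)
The plan is to establish this as a forward simulation: one step of the original program $P$ is matched by zero or more steps of the transformed program $\Compiler(P)$ while preserving the relation $\prec$. I would first apply Type Safety (Theorem~\ref{thm:type-safety}) to the hypothesis $\typeChecked{P, \PType}{S}$. This yields a dichotomy: either $S$ is a termination state, in which case the relation $\prec$ requires $S'$ to be at a corresponding terminal instruction and hence also a termination state, discharging the second disjunct; or $S \xrightarrow{\instVar} S_1$. In the step case I proceed by case analysis on $\instVar$, which by well-typedness matches exactly one instruction-sequence typing rule. For each case I must show (a) that $\Compiler(\instVar)$ can run from $S'$ without getting stuck — which follows because the transformation only shifts addresses within relocated slots, so the untaintedness and in-bounds conditions established by the typing rule are preserved — and (b) that the resulting $S_1'$ satisfies $\hasVal{P, \PType}{S_1' \prec S_1}$.

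The memory-access cases carry the main content. For a load or store to slot $s$, the operand's dependent type fixes which slot is touched, and the premise $\hasVal{\Delta}{\text{isNonChangeExp}(e_\text{addr} - \text{getPtr}(s))}$ of \typingMovqMR and \typingMovqRM guarantees that the in-slot offset is identical in both runs. I would then split on $\omega(s)$ and $\tau$: when $\omega(s) = \transOp$ and $\tau \neq 0$, pass $\CompilerOp$ has added $\delta$ to the displacement, so the transformed access lands precisely on the relocated object that $\prec$ pairs with the original; when $\tau = 0$ the object is unmoved in both programs; and when $\omega(s) = \transPtr$ the base pointer was already shifted by the caller, so the access is again correct. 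In every sub-case the transferred value agrees under $\prec$: for ordinary data by definition of the relation, and for pointers touched by \transPtr precisely because $\prec$ relates them up to the $\delta$ shift. The control-flow cases are lighter: for $\instr{cmpq}$ followed by $\instr{jne}$, the premise that the branch condition $e$ be untainted and $\text{isNonChangeExp}$ (from \typingJne) ensures both runs compute the same flag and follow the same edge, so a single left step is matched by the single, syntactically unchanged branch on the right.

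The main obstacle is $\instr{callq}$, where all three passes interact and the right-hand side genuinely takes several steps (justifying the $\to^*$ in the statement). Here $\CompilerPtr$ may insert $\delta$-adjustments for pointer arguments whose slots use $\transOp$ on the caller side but $\transPtr$ on the callee side, and $\CompilerCallee$ may insert saves and restores of public callee-saved registers into the reused public slot $s$ while the secret copy occupies $s + \delta$. I would chain the simulation through this block in stages: first show each inserted argument-adjustment and register-save step preserves $\prec$ (each only rewrites a quantity $\prec$ already tracks up to $\delta$, or uses stack space $\prec$ leaves unoccupied); then invoke the subtyping premises of \typingCallq — that the pre-call state is a subtype of the callee entry type under $\sigmaCall$, and that the post-return state is reconstructed via $(\sigmaCall \cup \sigmaRet)$ and $\text{updateMem}$ — to align the callee frame and the return state on both sides; and finally show the post-call restores re-establish $\prec$ on the callee-saved registers. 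The delicate points, where I expect to spend the most effort, are relating the two stacks at their constant separation $\delta$ across the nested frame and proving that $\CompilerCallee$'s reuse of the otherwise-unused public slot $s$ never clobbers live data.
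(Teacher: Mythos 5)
Your overall strategy coincides with the paper's: a forward simulation proved by case analysis on $\instVar$, invoking Type Safety to obtain well-formedness of $S_1$, using the $\text{isNonChangeExp}(e_\text{addr}-\text{getPtr}(s))$ premises to show both runs use the same in-slot offset, and splitting memory cases on $\omega(s)$ and the taint type exactly as the paper does via its $\text{getOpShift}$/$\text{getPtrShift}$ decomposition. However, two concrete gaps separate your plan from a proof that would go through. First, you treat the control-flow cases as ``lighter,'' but in the paper's formalization the simulation relation is indexed by per-frame type-variable substitutions $(\sigma_f,\sigma_b)$ (rules \textsc{Sim-Var-Map}, \textsc{Sim-Val}, \textsc{Simulation-Relation-Helper}), and re-establishing $\prec$ after $\instr{jne}$, $\instr{callq}$, and $\instr{retq}$ is dominated by substitution bookkeeping: the target block's fresh type variables are instantiated through the annotation $\sigma$, and one must show the original and transformed runs' substitutions still agree on all non-change expressions. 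This is exactly what Lemmas~\ref{lemma:br-substitute}, \ref{lemma:call-ret-substitute}, \ref{lemma:sim-val-substitute-general}, and \ref{lemma:sim-mem-substitute-general} provide, and the $\instr{retq}$ case (which your proposal never names) is the longest in the paper, requiring $\text{updateMem}$, the frame conditions of \textsc{Simulation-Relation-Other}, and Lemma~\ref{lemma:sim-mem-same-shift}. Without some analogue of this machinery your plan cannot even state what it means for $S_1'\prec S_1$ to hold at a branch target.

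Second, your store case silently assumes the transformed write ``lands precisely on the relocated object'' without clobbering anything else; the paper must prove this, via the disjointness lemmas (Lemmas~\ref{lemma:shifted-slot}, \ref{lemma:possible-slot-non-overlap}, \ref{lemma:shifted-slot-non-overlap}) showing that images of distinct slots under the shift maps remain disjoint, and that neither run overwrites the return-address slot (Lemma~\ref{lemma:no-ret-slot}). Finally, a scoping difference: you fold $\CompilerCallee$ into the induction, but the paper's proof deliberately covers only the base pass $\Compiler$ (i.e., $\CompilerOp$ and $\CompilerPtr$), handling $\CompilerCallee$ separately as a simpler simulation between $\Compiler(P)$ and $\CompilerCallee(\Compiler(P))$. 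Your choice is more ambitious but makes the $\instr{callq}$ case strictly harder than what the paper establishes, and your sketch of that interaction (saves into the unused public slot never clobbering live data) is precisely the part you would need the non-overlap and frame lemmas to discharge.
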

Detailed formalization of the simulation relation and proof for the theorem can be found in \bothver{Appendix~\ref{sec:sim-rel}-\ref{sec:func-correctness}}{\cite{secsep-proof}}, where we focus on the proof for major passes $\CompilerOp$ and $\CompilerPtr$ and omit details for the optional pass $\CompilerCallee$ whose functional correctness is relatively more straightforward.

\pgheading{Public Noninterference}
Next, we briefly justify that the transformed program satisfies software public noninterference. The detailed proof can be found in \bothver{Appendix~\ref{sec:public-noninterference}}{\cite{secsep-proof}}.
In our transformation, we pick the address shift $\delta$ so that $\abs{\delta}$ is larger than the input program's maximum stack size. Then, we can denote the original stack region as $s_\text{pub}=[\spVar_\text{init}+\delta, \spVar_\text{init})$ and the new secret stack region as $s_\text{sec}=[\spVar_\text{init}+2\delta, \spVar_\text{init}+\delta)$, where $\spVar_\text{init}$ is the stack base.
Intuitively, $\Compiler$ will apply either \transOp or \transPtr to ensure that every instruction that accesses the original stack $s_\text{pub}$ and operates on tainted data will have its target address shifted by $\delta$ and access the secret stack $s_\text{sec}$ instead.
Hence, $\Compiler$ successfully ensures that the transformed program never stores secrets to the public stack region, thereby satisfying software public noninterference.
\shixin{I omit the discussion on non-stack region here. Is it fine?}

\section{Evaluation}

\subsection{Implementation and Experiment Setup}
\label{sec:eval-setup}
\pgheading{\sys{} Toolchain}
We implement a prototype toolchain in OCaml, using Z3~\cite{z3} as the SMT solver.
It includes
\revision{
(1) a parser for \sys{}'s C source code annotations,
(2) a parser for compiled \sysisa{} assembly programs,
(3) type-inference rules and algorithms (\autoref{sec:infer}),
(4) a checker that validates inferred types against typing rules (\autoref{sec:tal-typing-rules}), and
(5) transformation based on inferred types (\autoref{sec:trans}).
}
This prototype only covers the instructions in the benchmarks used for evaluation.
The toolchain incorporates LLVM/Clang to compile both the original and the transformed benchmark.

\pgheading{Hardware Defense}
We implement our hardware-defense part in gem5 simulator v22.1~\cite{Binkert:2011:gem5,Lowe-Power:2020:gem5-20}
replicating the defense idea from ProSpeCT~\cite{daniel2023prospect}.
Specifically, modules including ROB, register file, scheduler (\texttt{InstructionQueue}),
load/store queue, and branch squash logic (\texttt{Commit}, \texttt{IEW})
are modified to support taint tracking and to delay transmitter instructions that leak secrets.
We apply a microarchitecture configuration similar to that used in prior work on secure speculation~\cite{choudhary2021speculative}.
We model an 8-issue out-of-order superscalar processor with 32 load-queue entries, 32 store-queue entries, and 192 ROB entries.
We use a tournament branch-prediction policy with 4096 BTB entries and 16 RAS entries.
The memory system models a $\SI{32}{KB}$ $4$-way L1 I-cache, a $\SI{64}{KB}$ $8$-way L1 D-cache, and a $\SI{2}{MB}$ $16$-way L2 cache, with $\SI{64}{B}$ cache lines.

\pgheading{Experiment Setup}
We evaluate \sys{} on six cryptographic benchmarks:
our own implementation of \texttt{salsa20}
and five other benchmarks from BoringSSL~\cite{boringssl} (\texttt{sha512}, \texttt{chacha20}, \texttt{poly1305}, \texttt{x25519} and \texttt{ed25519\_sign}).
\texttt{ed25519\_verify} is excluded due to the current lack of declassification support, which can be implemented with minor extensions (see Section~\ref{sec:limitation-future}).
To minimize the instability due to cold caches,
each benchmark is modified to repeat its main routine 100 times.
In addition, we apply slight changes to some of the benchmarks, the details of which can be found in \bothver{Appendix~\ref{sec:eval:benchmark-changes}}{\cite{secsep-proof}}.

We conduct our experiments on a test platform equipped with
an Intel® Core™ i9-14900K CPU.
Benchmarks are transformed using $\delta=\SI{-8}{MB}$ and simulated in gem5 under syscall-emulation mode.


\subsection{Performance of Transformed Programs}
\label{sec:performance-overhead}
We evaluate and compare the performance overhead of the following four transformation schemes:
\shixin{@roger: ProSpeCT annotates stack variables instead of function variables, right? (I saw function variables in the old version.)}
\begin{enumerate}
[leftmargin=*]
    \item ProSpeCT (public stack): Manually annotate and relocate secret \revision{stack} variables,
    while treating the original stack as public.
    \revision{Note that it cannot relocate secret stack spills and thus is insecure.}
    \item ProSpeCT (secret stack): Manually \revision{relocate} public \revision{stack} variables
    while treating the original stack as secret.
    This approach conservatively protects any register spills and thus is secure.
    \item \sys{} (no $\CompilerCallee$): Perform transformation passes $\CompilerOp$ and $\CompilerPtr$.
    \item \sys{}: Perform all transformation passes, $\CompilerOp$, $\CompilerPtr$, and $\CompilerCallee$.
\end{enumerate}

\begin{figure*}
    \centering
    \includegraphics[width=\textwidth]{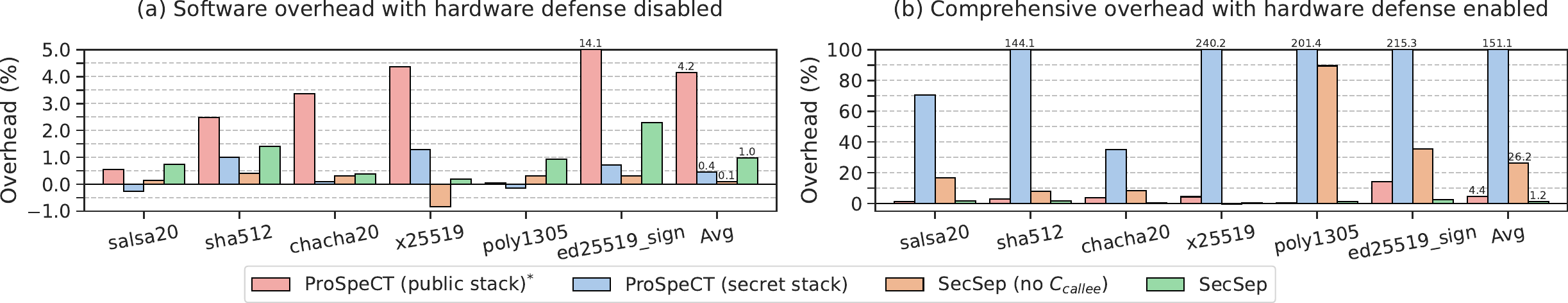}
    \captionsetup{width=\textwidth}
    \caption{Execution-time overhead of transformed programs relative to original programs.
    * means the scheme is not secure.
    }
    \label{fig:overhead}
\end{figure*}


\pgheading{Software Overhead}
We compare the execution time of transformed programs running on unmodified hardware, scaled to the execution time of the original program before transformation, shown in Figure~\ref{fig:overhead}(a).
The goal is to understand the software overhead introduced by the additional or transformed instructions and their microarchitectural impacts.

On average, all the schemes have relatively low overhead below 4.2\%, with ProSpeCT (public stack) having the highest overhead and \sys{} (no $\CompilerCallee$) having a close-to-zero overhead. 
The performance difference mainly comes from the number of extra instructions introduced during transformation
and whether the transformed program accesses regions far from the stack, which can result in worse cache performance.

\pgheading{Comprehensive Overhead}
We now examine the execution time of the transformed programs on hardware equipped with security defenses, shown in Figure~\ref{fig:overhead}(b).
\revision{The goal is to understand how precisely each transformation separates secret/public data and the combined overhead introduced by software and hardware.}

On average, \sys{} achieves the lowest overhead of $1.2\%$ among all schemes, while ProSpeCT (secret stack) can incur as high as $151.1\%$ overhead.
Compared to other secure schemes, \sys{} benefits from a more precise secret/public data separation,
thereby minimizing overtainting and enabling efficient execution when the hardware defense is enabled.
ProSpeCT (public stack) can also achieve relatively low overhead.
However, its performance gains stem from undertainting, which compromises security.
Nevertheless, it remains slower than \sys{}, likely due to the notable software overhead illustrated in \ref{fig:overhead}(a).

\pgheading{Effect of $\CompilerCallee$}
To assess the role of $\CompilerCallee$, we compare \sys{} (no $\CompilerCallee$) with \sys{}. 
When $\CompilerCallee$ is absent, the software overhead is lowered by $0.9\%$.
However, a significant overhead increase of $25\%$ occurs when the hardware defense is enabled,
highlighting the severity of overtainting caused by called routines.
Therefore, despite adding extra instructions, $\CompilerCallee$ is essential for efficient secure speculation.
Hence, we include it as a key component in \sys{}'s standard transformation scheme.

\subsection{Manual Effort}
\label{sec:eval:manual}
\newcommand{\allfuncs}{\textbf{F}}
\newcommand{\funcs}{\textbf{F'}}
\newcommand{\numvars}{\textbf{\#Var}}
\newcommand{\numargs}{\textbf{\#Arg}}

We also compare the manual effort required by ProSpeCT and \sys{} to annotate the source code for transformation as shown in \autoref{table:benchmark-metrics}.
We focus on ProSpeCT's public-stack scheme, as it exhibits efficient execution and \revision{can be secure for certain benchmarks~\cite{daniel2023prospect}.}
\shixin{I do not remember the patch stuff is for this purpose, removed.}
ProSpeCT requires examining stack variables in all functions (denoted as \allfuncs{}),
while \sys{} only requires examining arguments of functions in the binary's call graph (denoted as \funcs{}).
For fair comparison, we also count the stack variables and ProSpeCT annotations when examining only \funcs{},
with these numbers enclosed in parentheses.
\shixin{@roger: check whether my change from ``local'' variables to ``stack'' variables make sense to you.}

\begin{table}[!htbp]
\centering
\small
\resizebox{\linewidth}{!}{
\begin{tabular}{|lccc|cc|cc|}
\hline
\multicolumn{4}{|c|}{\textbf{Benchmark}} & \multicolumn{2}{c|}{\textbf{ProSpeCT (pub stack)}} & \multicolumn{2}{c|}{\textbf{\sys{}}} \\ \hline
\textbf{Name} & \textbf{LOC} & \textbf{\#F} & \textbf{\#F'} & \textbf{\#Var} & \textbf{\#Anno} & \textbf{\#Arg} & \textbf{\#Anno} \\
\hline


\texttt{salsa20        } &   72 &  5 &  3 &   9 (5) &   4 (2) &   6 &  5 \\
\texttt{sha512         } &  290 & 16 &  3 &  24 (2) &  14 (1) &   6 &  5 \\
\texttt{chacha20       } &  100 &  7 &  3 &   8 (7) &   4 (3) &   8 &  7 \\
\texttt{x25519         } & 1034 & 41 &  5 & 361 (11) & 335 (4) &  10 &  7 \\
\texttt{poly1305       } &  314 & 11 &  7 &  33 (32) &  26 (25) &  11 & 74 \\
\texttt{ed25519\_sign  } & 2314 & 72 & 11 & 512 (77) & 476 (69) &  25 & 24 \\

\hline
\end{tabular}
}

\captionsetup{width=\columnwidth}
\caption{\revision{Comparison of annotation efforts between ProSpeCT and \sys{}. \#F denotes \#functions present in the benchmark, while \#F' denotes \#functions called during the execution of the benchmark. 
\#Var denotes \#stack variables that need to be examined for correct annotation in ProSpeCT, and 
\#Arg denotes \#function arguments that need to be examined for correct annotation in \sys{}.
\#Anno reports \#lines of annotation in the corresponding technique.}}
\label{table:benchmark-metrics}
\end{table}

Note the number of \sys{} annotations grows linearly with the number of function arguments,
\revision{because \sys{} only requires identifying}
the memory layout and taint types of all function arguments for functions in \funcs{}.
An annotation burst is incurred at \texttt{poly1305} due to the frequent use of a 17-field structure as function arguments, while 14 of them share identical attributes and thus identical annotations.
Writing \sys{} annotations takes low effort
since cryptographic functions usually have clear interfaces and seldom use complex data structures with unpredictable sizes (e.g., linked lists).

We also observe that
the number of variables to examine using ProSpeCT (\numvars{})
is generally higher than the number of function arguments to examine using \sys{} (\numargs{}),
even when the scope is restricted to \funcs{} when counting \numvars{}.
This pattern indicates that \sys{} places less burden on the user by requesting only function-interface-level annotations to harden cryptographic programs.

\ifextendedversion
\subsection{Benchmark Changes}
\label{sec:eval:benchmark-changes}

In \texttt{salsa20}, due to the limited heuristic rules in our valid-region-inference prototype, we explicitly initialize one stack-allocated array to zero.
In \texttt{chacha20}, we add \sys{} annotations derived from program semantics to facilitate inference by
(1) identifying the trivial outcome of a complex initialization procedure and
(2) encoding a memory nonoverlap assumption presented in the source code.
We also modify \texttt{poly1305} to (1) eliminate the unsupported \texttt{goto}-based irreducible control flow and (2) avoid casting a structure to and from a byte array, which obscures the type
and hinders type inference.

One can expect that as more heuristic rules are added and the type system becomes fully implemented, these manual efforts will also be eliminated.

\subsection{Toolchain Efficiency}
\label{sec:eval:efficiency}
\revision{
We profile our \sys{} prototype by running the toolchain on all \sys{}-annotated benchmarks using the test platform.
The corresponding statistics are presented in Figure~\ref{fig:efficiency}.

The total time required for type inference and checking ranges from a few seconds to around 30 minutes.
The program transformation phase -- omitted from the figure --
relies exclusively on the inferred types and completes trivially fast,
costing under one second for all benchmarks.
While the toolchain may exhibit slower performance on large benchmarks,
it yields a valuable and reliable model of binary-level information flow,
enabling efficient execution and secure speculation even for complex cryptographic functions.

Both inference and checking spend over 90\% of their time making numerous SMT solver calls,
indicating a heavy reliance on the solver to guide heuristics and enforce type checking.
\sys{}'s performance can be further improved through optimizations such as tailoring SMT-solver invocations.
}

\begin{figure}
    \centering
    \includegraphics[width=0.9\linewidth]{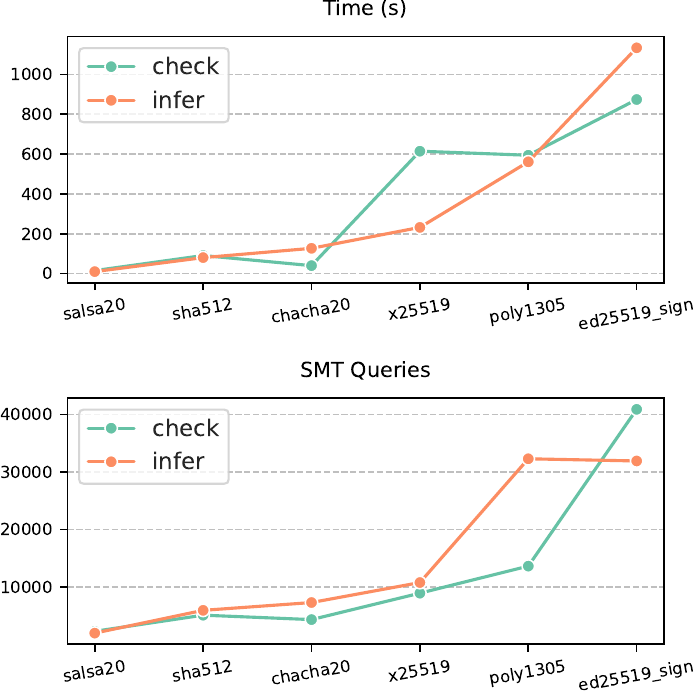}
    \caption{\revision{Efficiency of \sys{} toolchain.
    }}
    \label{fig:efficiency}
\end{figure}

\fi

\section{Limitations and Future Work}
\label{sec:limitation-future}
While \sys offers appealing features to rewrite assembly programs and separate secret/public data automatically, we acknowledge that it has several limitations worth discussing.
First, we use heuristic-based type inference to transform assembly programs compiled by the off-the-shelf compiler LLVM, where the heuristics are developed through an empirical review of these assembly programs.
The limitation is that it is not guaranteed to handle all possible assembly code patterns, and we need new heuristics for new compiler optimizations. This limitation can be alleviated by more engineering effort to derive better heuristics.

Second, our inference algorithm relies on extra information (e.g., memory layout, valid regions, and taint) provided by source-code annotations to generate types.
On the one hand, it is relatively straightforward for the programmer to provide these annotations since they only need to emphasize the high-level meanings of function arguments.
On the other hand, we acknowledge that extra manual effort is required to go through each function argument, thereby increasing the barrier to using our tools (evaluated in Section~\ref{sec:eval:manual}).
Future work might be conducted to improve the inference algorithm to lift the need for these manual annotations.

Third, we provided a prototype to demonstrate the overall idea, while more features could be supported to improve the usability of our tool.
For example, declassification is an essential notion in cryptographic programs supported by prior works such as ProSpeCT~\cite{daniel2023prospect}
but not included in our type system.
To extend our prototype to support declassification, one can define a special function that takes a secret input and writes it to a given address in the public region (passed as a parameter of the function). The function is excluded from type inference and checking, so the program can call this function to write secrets to public regions for declassification.
\todo{Double check whether this is correct.}
\sys can also be improved to be compatible with more programs by supporting dynamically linked libraries and handling analysis with dynamically allocated heap data and pointer type casting.

\section{Related Work}
We first discuss prior works on typed assembly language to justify the novelty and contribution of \systal.
Next, we discuss prior mitigations, including software and hardware approaches, that aim to protect cryptographic programs against speculative-execution attacks.
We also discuss prior work that transforms programs to separate secret and public data via a compiler approach.

\pgheading{Typed assembly language}
Prior works~\cite{morrisett1999system,grossman2000scalable,crary1999talx86,glew1999type,morrisett1998stack} propose typed assembly language (TAL) and type-preserving compilation from high-level programs to TAL, where the types help guarantee the security of assembly programs.
Instead of compiling high-level programs to generate assembly programs, \sys rewrites assembly programs generated by an off-the-shelf compiler (LLVM) while using inferred types to guarantee security and functional correctness. Hence, the transformed programs still benefit from the optimizations of realistic compilers.

Jiang et al.~\cite{jiang2022cache} also propose a type system and corresponding type-inference algorithm for assembly programs.
Their type system introduces more accurate information-flow tracking at bit granularity and helps detect side-channel vulnerabilities in cryptographic libraries.
Note that the soundness of their type system relies on an assumption of memory safety, while our type system accurately tracks possible address ranges of each memory access and guarantees memory safety.
\shixin{I am very confused about their typing rules. They have a vague assumption on ``memory safety''.
My current understanding is that they assume there is no out of bound access and the load data's taint can be inferred from the based pointer used by the instruction.
}

Other prior works~\cite{barthe2021high,shivakumar2023typing,arranz2025protecting} design information-flow type systems to guarantee that well-typed programs satisfy speculative constant time and implement their approaches in the Jasmin framework~\cite{almeida2017jasmin}.
In this framework, the developers directly program in Jasmin, an assembly-like programming language, which requires more manual effort compared with programming in higher-level languages such as C and is not compatible with some off-the-shelf cryptographic libraries such as BoringSSL~\cite{boringssl}.

\pgheading{Software mitigations against Spectre}
Several prior works~\cite{mosier2024serberus,vassena2021automatically,narayan2021swivel,choudhary2023declassiflow,patrignani2021exorcising,zhang2023ultimate,barthe2021high,shivakumar2023typing,shivakumar2023spectre,arranz2025protecting} harden cryptographic programs against Spectre attacks by analyzing the programs' speculative control flow and blocking insecure speculation at the software level, e.g., by memory-fence insertion or speculative load hardening (SLH)~\cite{specloadhardening}.
Many of these approaches~\cite{mosier2024serberus,vassena2021automatically,narayan2021swivel,choudhary2023declassiflow,patrignani2021exorcising,zhang2023ultimate} introduce large performance overhead since they unavoidably block safe speculation conservatively when blocking insecure speculation.
Other works~\cite{barthe2021high,shivakumar2023typing,shivakumar2023spectre,arranz2025protecting} managed to achieve speculative noninterference with marginal overhead by applying SLH intelligently, but they are built upon research-prototype source languages such as FaCT~\cite{cauligi2019fact} or Jasmin~\cite{almeida2017jasmin,almeida2020last}, thereby not compatible with off-the-shelf cryptographic libraries such as BoringSSL~\cite{boringssl}.
Furthermore, many of them~\cite{vassena2021automatically,choudhary2023declassiflow,patrignani2021exorcising,zhang2023ultimate,barthe2021high,shivakumar2023typing,shivakumar2023spectre} only consider speculation at conditional branches in their speculative control-flow analysis, so they cannot prevent leakage introduced by other speculation primitives~\cite{Kocher2018spectre,koruyeh2018spectre,horn2018speculative,cauligi2022sok,guanciale2020inspectre,ponce2022cats}.
\todo{Another limitation shared by these software mitigations is that they must be applied to full programs. However, cryptographic libraries are co-located with non-cryptographic code, while making non-cryptographic code satisfy speculative constant time will likely introduce large performance overhead.}
\todo{TODO for camera ready: the above statement is too vague. We may discuss more about it.}
\shixin{Should I mention the motivation that sw-only approach can only work for whole programs here? We also only support whole program analysis in our prototype though...}

\pgheading{Hardware mitigations against Spectre}
Prior works~\cite{yu2019speculative,weisse2019nda,barber2019specshield,yu2020speculative,choudhary2021speculative,loughlin2021dolma} 
adopt hardware taint tracking and delay speculative operations that transmit secrets.
These pure hardware solutions require no software changes but face challenges in identifying secret data in memory.
STT~\cite{yu2019speculative} and others~\cite{weisse2019nda,barber2019specshield,yu2020speculative} only consider speculatively loaded data as secrets, while leaving nonspeculatively loaded data unprotected, thereby not guaranteeing constant time~\cite{guarnieri2021hardware}.
SPT~\cite{choudhary2021speculative} solves this problem by considering all data loaded from memory as tainted and only marking data as public if it is transmitted by the program nonspeculatively, but it introduces complex hardware changes to achieve good performance.

The authors of ProSpeCT~\cite{daniel2023prospect} and others~\cite{schwarz2020context,fustos2019spectreguard} propose to make the program separate secret and public data into coarse-grained regions so that the hardware can easily identify secrets.
This paper complements their solution by providing an automatic approach to generate a program satisfying this requirement.

\hyphenation{Conf-LLVM}

\pgheading{Compiler approach to separating secret and public data}
ConfLLVM~\cite{brahmakshatriya2019confllvm} also uses a transformation technique that separates secret/public stack data into different regions. 
However, its static analysis cannot guarantee a program never writes secret data to the public region, so it relies on inserted run-time checks to ensure the correctness of the separation, which introduces extra overhead.

\section{Conclusion}
This paper proposed \systal, a new variant typed assembly language that helps rewrite cryptographic programs so that they split their secret and public data across coarse-grained memory regions.
We provide a heuristic inference algorithm to infer the types of off-the-shelf cryptographic programs and automate the transformation process.
The transformed programs enable hardware with fine-grained taint tracking at the register level and coarse-grained taint tracking at the memory level to achieve secure speculation with low performance overhead.

\begin{acks}
The authors thank the Matcha Group (MIT) for their help and the anonymous CCS reviewers for their feedback.
This work was supported in part by a gift from Amazon; by the Air Force Office of Scientific Research (AFOSR) under grant FA9550-22-1-0511;  by ACE, one of the seven centers in JUMP 2.0, a Semiconductor Research Corporation (SRC) program sponsored by DARPA.
\todo{@mengjia @adam: please check whether any funding sources are missing or incorrect here.}
\end{acks}

\bibliographystyle{ACM-Reference-Format}
\bibliography{main}


\ifextendedversion
\appendix
\section{Type Safety}
\subsection{Supplementary Notes on \systal}
\label{sec:supp-tal}
\shixin{@adam @mengjia: please suggest name for this section}
\shixin{@adam: maybe check this section again :)}
In this section, we explain some details about \systal's typing rules omitted in the main sections, and provide extra assumptions on well-typed programs.
\todo{Consider also explaining isNonChangeExp in this section. For constraints omitted in the main text typing rules, maybe mark them as gray and explain here.}
\todo{In the paper, I said we only supported non-overlapped memory slots. However, in our implementation, we support it following Andres's paper's idea. Add that to this section too!!!}

\subsubsection{Input and Block Type Variables}
In \systal, we call the type variables of each function's input block state type ``input variables'' (since they serve as the function's arguments) and other variables only used in each non-input block's state types ``block variables.'' 
Input variables are shared among all state type contexts within the same function.
Hence, \textsc{Typing-Jne} requires that for branch instruction $\inst{jne}{\ell\opMode{\sigma}}$, its branch annotation satisfies $\text{getInputVar}(\dom{\sigma})=\emptyset$, so that when $\sigma$ is used to instantiate type variables in the target block's context, it only instantiates block variables and leave input variables unchanged.

Furthermore, \systal also expects that all taint variables are input variables and shared among all blocks in the same function, as reflected by $\text{getTaintVar}(\dom{\sigma})=\emptyset$ in \textsc{Typing-Jne}.
This requirement is necessary to prove the functional correctness of our transformation.

\revision{
\subsubsection{Explanation of $\text{isSpill}$}
\label{sec:def-is-spill}
\shixin{@adam, @mengjia: please check this!}
In Section~\ref{sec:tal-typing-rules}, we introduce a predicate $\text{isSpill}$ to check whether a given memory slot is a stack spill and invoke different typing rules (e.g., \textsc{Typing-StoreOp-Spill} or \textsc{Typing-StoreOp-Non-Spill}) accordingly.
Specifically, we rely on debug tables generated by Clang to retrieve this information.

However, there is one special case worth discussing.
Typing rules in \systal (\textsc{Typing-Callq} and \textsc{State-Subtype}) require that when calling a callee function, each of its memory slots (including its local stack slots) must belong to one of the caller's memory slots.
Since functions usually do not directly access their callee's stack \todo{find some more convincing words for this}, we expect each function to have a special slot (denoted as $\sCalleeStack$) that marks the memory region reserved for its callee's local stack so that it satisfies the above constraint.

Note that $\sCalleeStack$ is not a typical stack spill slot, but we still make $\text{isSpill}(\sCalleeStack)=\mathit{true}$ for the simplicity of typing constraints.
Recall that we impose uniformity on the taint type of other non-spill slots so that the unified taint can be used as a hint for transformation.
On the other hand, $\sCalleeStack$ is not used in the caller's scope, so we do not need to assign it a uniform taint for transformation purposes.
Imposing $\text{isSpill}(\sCalleeStack)=\mathit{true}$ is a trick to avoid this constraint while still making the type system sound and the transformation functionally correct.


\subsubsection{Definition of $\text{updateMem}$}
\label{sec:def-update-mem}
When calling a function, part of the caller's memory slots will be accessed and modified by the callee, including the memory slots passed to the callee through pointers, and the special slot $\sCalleeStack$ reserved for the callee's local stack.
\textsc{Typing-Callq} invokes a predicate $\text{updateMem}$ to generate the caller's memory type after the function call, which is defined in Algorithm~\ref{alg:update-mem}.
Note that $\text{updateMem}$ has several extra assertions which further constrain well-typed programs.
\todo{Do I need to explain each assertion line-by-line?}

First, if a memory slot $s$ is accessed by the callee, $s$ must either be $\sCalleeStack$, the slot reserved for the callee's local stack slot, or a non-spill slot.
This constraint is reasonable since stack spill slots are used to hold register values temporarily and do not refer to any specific data objects that might be passed to a callee function.
\todo{Maybe say that it is satisfied by all 6 benchmarks.}

Second, all local stack slots of the callee (including its spill slots) should belong to $\sCalleeStack$, and we expect them to be uninitialized (i.e., have empty valid regions) before and after the call.
\todo{Maybe just say ``invalid'' to unify the term I use to the next sec.}

Third, for ease of illustrating the proof idea, we assume that each memory slot (except for $\sCalleeStack$) is updated by at most one slot of the callee's function, while the general case can be defined and proved similarly.
Furthermore, the callee slot's train type should be the same as the caller slot's, so that the slot is relocated by the same offset in the caller and callee during transformation.



\begin{algorithm}
\begin{algorithmic}
\Require Caller's memory type before call $\MType_0$, callee's memory type at return (represented under caller's type context) $\MType_1$
\State $S\gets \emptyset$
\State $S[s\mapsto \qty{}]$ for each $s\in \dom{\MType_0}$
\For{$s_1\in \dom{\MType_1}$}
    \State Find $s\in\dom{\MType_0}$ such that $s_1\subseteq s$
    \State $S[s\mapsto S[s]\cup \qty{s_1}]$
\EndFor
\State $\MType\gets \MType_0$
\For{$s\in\dom{\MType}$}
    \State $(\sval_0, (e_0, \tau_0)) \gets \MType_0[s]$
    \If{$S[s]=\emptyset$}
        \State \textbf{continue}
    \ElsIf{$\text{isSpill}(s)$}
        \Comment{$s$ should be $\sCalleeStack$, the slot reserved for callee's local stack slots, so $s$ and all slots in $S[s]$ should be uninitialized.}
        \State \textbf{assert} $\sval_0=\emptyset$
        \For{$s_1\in S[s]$}
            \State \textbf{assert} $\MType_1[s_1]=(\emptyset, \_)$
        \EndFor
    \Else
        \State \textbf{assert} $S[s]=\qty{s_1}$ for some $s_1$
        \Comment{We only discuss the case where each slot is updated by one callee's slot}
        \State $(\sval_1, (e_1, \tau_1))\gets \MType_1[s_1]$
        \State \textbf{assert} $\neg \text{isSpill}(s_1) \land \tau_0=\tau_1$
        \Comment{We expect $s_1$ is not a spill slot from the callee's view.
        \todo{Double check this claim against our typing rules,}}
        \State $\sval \gets (\sval_0 \backslash s_1) \cup \sval_1$
        \If{$\sval=\sval_1$}
            \State $\MType[s \mapsto (\sval, (e_1, \tau_0))]$
        \Else
            \State \textbf{assert} $\text{isNonChangeExp}(e_0)\land \text{isNonChangeExp}(e_1)$
            \State $\MType[s \mapsto (\sval, (\top, \tau_0))]$
        \EndIf
    \EndIf
\EndFor
\end{algorithmic}
\caption{Update memory type after function call ($\text{updateMem}$)}
\label{alg:update-mem}
\end{algorithm}
}

\subsubsection{Assumptions on Well-typed Programs}
We provide extra assumptions on well-typed programs by constraining function types in Figure~\ref{fig:tal-prog-typing-revision}.
Specifically, rule \textsc{Typing-Func} states that for a well-typed function:
\begin{itemize}
    \item The main function directly halts, and other functions only return at the $f_\text{ret}$ block.
    \item The dependent type of $\reg{rsp}$ at the beginning of the function is always represented by a special type variable $\spVar$.
    \item The memory slot used to store the return address does not belong to the function's memory type, which implies that the function will never overwrite the return address during its lifetime.
    \item Memory slots on the function's local stack are marked as invalid (i.e., uninitialized) at the beginning and end of the function.
    \item Callee-saved registers (including $\reg{rsp}$) are restored to their initial values when entering the function before return.
    \item Values of caller-saved registers and memory slots that are valid in the return state should not depend on pointer values, so they will not be affected by our transformation on pointers.
    \item Type constraints in the type context at the beginning of the function must also be constrained by the type context at other basic blocks.
\end{itemize}

Furthermore, we assume that each function only jumps to other functions through function calls and returns.
\todo{Consider to move all other constraints here!!!}

\begin{figure}
    \centering
\begin{mathpar}
\inferrule[Typing-Prog]{
\forall f \in \dom{P}.\; \hasType{P, \PType}{P(f)}{\PType(f)}\\
}{
\hasType{}{P}{\PType}
}\and
\inferrule[Typing-Func]{
P(f)=F\\
\forall \ell\in \dom{F}. \;
\hasType{\PType, \Gamma}{F(\ell)}{\Gamma(\ell)}\\
\forall \ell \in \dom{F}, \ell \neq f_\text{ret}. \; \instr{halt}, \instr{retq} \not \in F(\ell)\\
F(f_\text{ret})= (f=\mathit{main}) \;?\; \instr{halt} : \instr{retq}\\
\Gamma(f)=(\Delta_0, \RType_0, \MType_0)\\
\Gamma(f_\text{ret})=(\Delta_1, \RType_1, \MType_1)\\
\RType_0[r_\texttt{rsp}]=\RType_1[r_\texttt{rsp}]=(\mathit{sp}, 0)\\
\forall s \in\dom{\MType_{0,1}}.\; \hasVal{\Delta_{0, 1}}{s\cap [\mathit{sp}, \mathit{sp}+8)=\emptyset}\\
\forall s, \text{isLocalStack}(s).\; \MType_0[s]=(\emptyset,\_) \land \MType_1[s]=(\emptyset, \_)\\
\forall r, \text{isCalleeSaved}(r).\; (\exists x\neq \top.\; \RType_0[r]=\RType_1[r]=(x, \_))\\
\forall s, \MType_1[s]=(\sval,\beta).\; \sval=\emptyset \lor \text{isNonChangeExp}(\beta)\\
\forall r\in \dom{\RType_1}, \neg \text{isCalleeSaved}(r).\; \text{isNonChangeExp}(\RType_1[r])\\
\revision{\forall \ell \in \dom{F}, (\Delta, \RType, \MType)=\Gamma(\ell).\; \Delta_0\subseteq \Delta}\\
}{
\hasType{P, \PType}{F}{\Gamma}
}
\end{mathpar}
    \caption{Program typing}
    \todo{I removed constraints on getFuncInterface. I decide to make it a simple function that directly retrieves $\Gamma(f)\rightarrow \Gamma(f_\text{ret})$. Need to implement this in ocaml}
    \todo{I need to modify the implementation of infer tool to remove the return address slot from $\MType$!}
    \todo{Check other changes in implementation!!!}
    \label{fig:tal-prog-typing-revision}
\end{figure}

\subsubsection{Memory Configuration}
For ease of illustration, we provide a fixed configuration for public/secret and stack/non-stack memory regions as follows:
\begin{itemize}
    \item Public stack: $\stackPub=[\spVar_\text{init}+\delta, \spVar_\text{init})$
    \item Secret stack: $\stackSec=[\spVar_\text{init} + 2\delta, \spVar_\text{init}+\delta)$
    \item Public non-stack: $\otherPub$
    \item Secret non-stack: $\otherSec$
\end{itemize}
With this configuration, we provide extra constraints on the memory layout of the well-typed program (to be transformed later):
First, the program should only put its stack in $\stackPub$ and keep $\stackSec$ untouched so that our transformation can relocate data objects to this region without overwriting other valuable data.
Second, as we focus on separating secret/public data in the stack region, we expect the original program to separate non-stack data properly, so we require that all data in $\otherPub$ be untainted.

\shixin{Self note: IMPROTANT!!! Whenever we try to compare with the four concrete memory regions, we have to use concrete addresses or memory slots (with all variables instantiated).}

\subsection{\systal's Abstract Machine}
\label{sec:tal-machine}
In Section~\ref{sec:tal-syntax}, we defined \systal's abstract machine state as $(R, M, \pcVar)$, which only covers the register file, memory, and the current PC.
For the convenience of describing the relation between states and state types, we extend the state to $(R, M, \Phi)$, where $\Phi$ records PC and type substitution to instantiate type variables for the current state and all call sites on the call stack using constants:
\begin{equation*}
\begin{array}{rcll}
    S & \Coloneqq & (R, M, \Phi) & \emph{State}\\
    R & \Coloneqq & \qty{r_1: (v_1, t_1), \dots, r_n: (v_n, t_n)} & \emph{Register File}\\
    M & \Coloneqq & \qty{\mathit{addr}_1: (v_1, t_1), \dots, \mathit{addr}_n: (v_n, t_n)} & \emph{Memory}\\
    \Phi & \Coloneqq & [] | (\pcVar, (\sigma_f, \sigma_b)); \Phi  & \emph{PC/type}\\
    & & & \emph{substitution stack}\\
    \sigma & \Coloneqq & \overrightarrow{x} \rightarrow \overrightarrow{e} & \emph{Type subsitution}\\
\end{array}
\end{equation*}
Specifically, $\sigma_f$ is used to instantiate type variables in state types of all instructions in the current function, and $\sigma_b$ is for type variables in state types of all instructions in the current block.

We then define the operational semantics for this extended abstract machine in Figure~\ref{fig:dyn-operatinoal-semantics-revision} and constrain well-formedness in Figure~\ref{fig:well-formed-states-revision}.
Specifically, we use the helper function $\text{getStateType}$ to denote the operation of applying typing rules in Figure~\ref{fig:tal-inst-typing} and deriving the state type at each instruction from the state type of its basic block (recorded in $\PType$).
We also define the helper function $\text{getDom}$ as
\begin{align*}
    \text{getDom}(\MType, \sigma) & = \bigcup_{s\in\dom{\MType}}\sigma(s).
\end{align*}

\begin{figure}
    \centering
\begin{mathpar}
\inferrule[Dyn-Movq-m-r]{
\text{getInst}(P, \pcVar) = \inst{movq}{i_d(r_b, r_i, i_s)\opMode{s, \tau}, r}\\
\pcVar' = \text{nextPc}(P, \pcVar)\\
R[r_b]=(v_b, 0)\\
R[r_i]=(v_i, 0)\\
e = i_d+v_b+v_i\times i_s\\
M[e, e+8]=(v_m, t_m)\\
t_m \Rightarrow (\sigma_f\cup \sigma_b)(\tau)\\
R'=R[r \mapsto (v_m, t_m)]
}{
\dynNext{P}{(R, M, (\pcVar, (\sigma_f, \sigma_b));\Phi)}{(R', M, (\pcVar', (\sigma_f, \sigma_b));\Phi)}{}
}\and
\inferrule[Dyn-Movq-r-m]{
\text{getInst}(P, \pcVar) = \inst{movq}{r, i_d(r_b, r_i, i_s)\opMode{s, \tau}}\\
\pcVar' = \text{nextPc}(P, \pcVar)\\
R[r_b]=(v_b, 0)\\
R[r_i]=(v_i, 0)\\
e = i_d+v_b+v_i\times i_s\\
R[r]=(v_r, t_r)\\
t_r \Rightarrow (\sigma_f\cup \sigma_b)(\tau)\\
M'=M[[e, e+8)\mapsto (v_r, t_r)]
}{
\dynNext{P}{(R, M, (\pcVar, (\sigma_f, \sigma_b));\Phi)}{(R, M', (\pcVar', (\sigma_f, \sigma_b));\Phi)}{}
}\and
\inferrule[Dyn-Jne]{
\text{getInst}(P, \pcVar) = \inst{jne}{\ell\opMode{\sigma}}\\
\text{getFunc}(P, \pcVar)=\text{getFunc}(P, \pcVar')\\
R[\reg{ZF}]=(b, 0)\\
\pcVar' = b\;?\; \text{nextPc}(P, \pcVar) : \text{getBlockPc}(P, \ell)\\
\revision{\sigma_b' = b\;?\; \sigma_b : (\sigma_f\cup \sigma_b)\circ \sigma}\\
}{
\dynNext{P}{(R, M, (\pcVar, (\sigma_f, \sigma_b));\Phi)}{(R, M, (\pcVar', (\sigma_f, \sigma_b'));\Phi)}{}
}\and
\inferrule[Dyn-Callq]{
\text{getInst}(P, \pcVar)=\inst{callq}{f\opMode{\sigmaCall,\sigmaRet}}\\
\text{getBlockPc}(P, f)=\pcVar'\\
R[r_\texttt{rsp}]=(v, 0)\\
R'=R[r_\texttt{rsp}\mapsto(v-8, 0)]\\
M'=M[[v-8, v)\mapsto (\text{nextPc}(P, \pcVar), 0)]\\
\Phi'=(\pcVar', ((\sigma_f\cup\sigma_b)\circ\sigmaCall , []\rightarrow[])); (\pcVar, (\sigma_f, \sigma_b));\Phi
}{
\dynNext{P}{(R, M, (\pcVar, (\sigma_f, \sigma_b));\Phi)}{(R', M', \Phi')}{}
}\and
\inferrule[Dyn-Retq]{
\text{getInst}(P, \pcVar)=\inst{retq}{}\\
\text{getInst}(P, \pcVar_p) = \inst{callq}{f\opMode{\sigmaCall, \sigmaRet}}\\
R[r_\texttt{rsp}]=(v, 0)\\
R'=R[r_\texttt{rsp}\mapsto(v+8, 0)]\\
\pcVar' = \text{nextPc}(P, \pcVar_p)\\
M[v, v+8]=(\pcVar', 0)\\
\Phi'=(\pcVar',(\sigmaFp, (\sigma_{b}\circ \sigmaRet^{-1})\cup\sigmaBp));\Phi
}{
\dynNext{P}{(R, M, (\pcVar, (\sigma_{f}, \sigma_{b})); (\pcVar_p, (\sigmaFp, \sigmaBp));\Phi)}{(R', M, \Phi')}{}
}
\end{mathpar}
    \caption{Dynamic operational semantics for ISA machine with taint tracking and type substitution}
    \label{fig:dyn-operatinoal-semantics-revision}
\end{figure}

\begin{figure}
    \centering
\begin{mathpar}
\inferrule[Value-Type]{
v=e\lor e=\top\\\\
t\Rightarrow\tau\\
}{
\hasType{}{(v, t)}{(e, \tau)}
}\and
\inferrule[Reg-Type]{
\forall r\in\dom{\RType}.\;\hasType{}{R[r]}{\RType[r]}
}{
\hasType{}{R}{\RType}
}\and
\inferrule[Mem-Slot-Range-Type]{
\hasType{}{M[v_1, v_2]}{\beta}\\
}{
\hasType{}{M}{([v_1, v_2), \beta)}
}\and
\inferrule[Mem-Slot-Empty-Type]{
}{
\hasType{}{M}{(\emptyset, \beta)}
}\and
\inferrule[Mem-Slot-Top-Type]{
\forall x\in \sval. \hasType{}{M[x]}{(\top, \tau)}
}{
\hasType{}{M}{(\sval, (\top, \tau))}
}\and
\inferrule[Mem-Type]{
\forall s, \MType[s]=(\sval, (\_, \tau)).\;\\
((s\subseteq \stackPub \lor s \subseteq \otherPub \lor s\subseteq \otherSec)\\
\land (\text{isSpill}(s) \Rightarrow s\subseteq \stackPub)
\land (s\subseteq \otherPub \Rightarrow \tau = 0)\\
\land (\sval \subseteq s)\land (\hasType{}{M}{\MType[s]}))
}{
\hasType{}{M}{\MType}
}\and
\inferrule[Reg-Mem-Type]{
\revision{\text{getInputVar}(\dom{\sigma_f})=\dom{\sigma_f}}\\
\revision{\text{getInputVar}(\dom{\sigma_b})=\emptyset}\\
\revision{\sigma=\sigma_f\cup \sigma_b}\\
\sigma(\Delta)=\qty{\mathit{true}}\\
\hasType{}{R}{\sigma(\RType)}\\
\hasType{}{M}{\sigma(\MType)}\\
\revision{\forall s\in \dom{\MType}.\; \text{getVars}(s)= \sigma_f(s)}\\
}{
\revision{}{\hasType{}{(R, M, (\sigma_f, \sigma_b))}{(\Delta, \RType, \MType)}}
}\and
\inferrule[Well-Formed-State-Base]{
\hasType{}{P}{\PType}\\
\text{getFunc}(P, \pcVar)=\mathit{main}\\
\revision{\hasType{}{(R, M, (\sigma_f, \sigma_b))}{\text{getStateType}(\PType, \pcVar)}}\\
}{
\typeChecked{P, \PType}{(R, M, (\pcVar, (\sigma_f, \sigma_b)))}
}\and
\inferrule[Well-Formed-State-Inductive]{
\hasType{}{P}{\PType}\\
\text{getFunc}(P, \pcVar) = f \neq \mathit{main}\\
(\Delta, \RType, \MType)=\text{getStateType}(\PType, \pcVar)\\
\revision{\hasType{}{(R, M, (\sigma_f, \sigma_b))}{(\Delta, \RType, \MType)}}\\
\sRet=[\sigma_f(\spVar), \sigma_f(\spVar)+8)\\
M[\sRet]=(\text{nextPc}(P, \pcVar_p), 0)\\
\revision{\forall x\not\in \text{getDom}(\MType, \sigma_f)\cup \sRet.\; M[x]=M_p[x]}\\
\text{getInst}(P, \pcVar_p)= \inst{callq}{f\opMode{\sigmaCall,\sigmaRet}}\\
\sigma_f=(\sigmaFp\cup \sigmaBp)\circ \sigmaCall\\
\typeChecked{P, \PType}{(R_p, M_p, (\pcVar_p, (\sigmaFp, \sigmaBp)); \Phi)}
}{
\typeChecked{P, \PType}{(R, M, (\pcVar, (\sigma_f, \sigma_b)); (\pcVar_p, (\sigmaFp, \sigmaBp)); \Phi)}
}
\end{mathpar}
    \caption{Well-formed machine states, where $\spVar$ represents \reg{rsp}'s dependent type at the beginning of a function and appears in $\sigma_f$}
    \todo{Explain that the last constraint is only to constrain memory state (that's why we can use ``exist'' instead of forall.}
    \todo{Explain the meaning of $\spVar$ somewhere else}
    \label{fig:well-formed-states-revision}
\end{figure}

\subsection{Type Safety Proof}
\label{sec:type-safety-proof}
\begin{theorem}[Type Safety]
If $\typeChecked{P, \PType}{S}$, then there exists $S_1$ such that $S\rightarrow S_1$ and $\typeChecked{P, \PType}{S_1}$, or $S$ is a termination state.
\label{thm:type-safety-revision}
\end{theorem}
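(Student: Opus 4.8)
# Proof Proposal for Theorem (Type Safety)

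The plan is to prove type safety by the standard progress-and-preservation method, adapted to the dependent-typed, taint-tracking setting of \systal. The theorem combines both parts into a single statement: from a well-formed state $S$ (with $\typeChecked{P, \PType}{S}$), either $S$ is a termination state, or there exists a successor $S_1$ with $S \rightarrow S_1$ (progress) and that successor is again well-formed (preservation). I would proceed by case analysis on the instruction at the current program counter, $\text{getInst}(P, \pcVar)$, since the well-formedness judgment (Figures~\ref{fig:well-formed-states-revision}) ties the machine state to the statically derived state type $\text{getStateType}(\PType, \pcVar)$, and each dynamic rule in Figure~\ref{fig:dyn-operatinoal-semantics-revision} corresponds to a typing rule in Figure~\ref{fig:tal-inst-typing}.

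For \textbf{progress}, the key observation is that the well-formedness invariant supplies exactly the facts each operational rule needs to fire. Concretely, I would show for each instruction form that its dynamic side-conditions are entailed by the state type. For a load (\textsc{Dyn-Movq-m-r}), I must establish that the base/index registers are untainted and that the computed address $e$ lies in initialized memory: these follow from \textsc{Typing-Load}, which forces the address taint to be $0$ and requires $s_\text{addr} \subseteq \sval$, combined with the fact that the concrete register/memory values satisfy the substituted types via \textsc{Reg-Mem-Type}. The entailments $\Delta \vdash \ldots$ discharged by the SMT solver at type-checking time become true facts under the concrete substitution $\sigma$, because \textsc{Reg-Mem-Type} asserts $\sigma(\Delta) = \{\mathit{true}\}$. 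Analogous arguments handle stores, \texttt{cmpq}, branches (where \textsc{Typing-Jne} guarantees the flag is untainted), calls, and returns; the termination case is \texttt{halt}.

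For \textbf{preservation}, I would show that the successor $S_1$ satisfies $\typeChecked{P, \PType}{S_1}$. The heart of this is that each typing rule derives the state type of the \emph{next} instruction sequence from the current one, so I must verify that the concrete post-state realizes that derived type under an appropriately updated substitution $(\sigma_f, \sigma_b)$. For straight-line instructions this is a direct computation: e.g., after a load, $R' = R[r \mapsto (v_m, t_m)]$ matches $\RType' = \RType[r_0 \mapsto \beta]$ by \textsc{Value-Type}. The subtle cases are the control-flow instructions. For \textsc{Dyn-Jne} taken, I use the subtype premise $(\Delta \cup \{e \neq 0\}, \RType, \MType) \sqsubseteq \sigma(\Gamma(\ell))$ of \textsc{Typing-Jne} together with the soundness of \textsc{State-Subtype} (any machine state satisfying the subtype satisfies the supertype) to conclude the target block's precondition holds, updating $\sigma_b$ to $(\sigma_f \cup \sigma_b) \circ \sigma$. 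For \textsc{Dyn-Callq} and \textsc{Dyn-Retq}, I must manage the $\Phi$ substitution stack and show the inductive clause \textsc{Well-Formed-State-Inductive} is re-established — pushing a new frame on call (with the callee precondition obtained from the subtype check against $\PType(f)(f)$) and, on return, reconstructing the caller's state type from $\text{updateMem}(\MType, \MType_{p_1})$ and the recorded $\sigma_{f_p}, \sigma_{b_p}$.

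I expect the main obstacle to be the \textbf{call/return preservation case}, specifically reconciling the caller's memory type with the callee's modifications through $\text{updateMem}$ (Algorithm~\ref{alg:update-mem}). I must prove that the frame condition holds — memory slots outside $\text{getDom}(\MType, \sigma_f) \cup \sRet$ are untouched across the call (the $\forall x \notin \ldots.\ M[x] = M_p[x]$ clause of \textsc{Well-Formed-State-Inductive}) — and that the callee's local-stack slot $\sCalleeStack$ together with the uniform-taint and $\text{isSpill}$ bookkeeping all line up so that the reconstructed caller state genuinely satisfies its derived type after return. This requires carefully tracking how $\sigma_\text{call}$ and $\sigma_\text{ret}$ compose to translate the callee's type variables back into the caller's context, and verifying that the assertions embedded in $\text{updateMem}$ (non-spill status, matching taint, $\text{isNonChangeExp}$ conditions for partial updates) are consistent with the well-formedness of the pre-call state. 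A well-chosen induction on the height of $\Phi$, with \textsc{Well-Formed-State-Inductive} as the inductive hypothesis, should carry this through.
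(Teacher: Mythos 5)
Your proposal is correct and follows essentially the same route as the paper's proof: a case analysis on the instruction at the current PC, establishing progress by instantiating the SMT-discharged entailments $\Delta \vdash \ldots$ under the concrete substitution (the paper's Lemma~\ref{lemma:stmt-instance}), and preservation via the soundness of \textsc{State-Subtype} for branches (updating $\sigma_b$ to $(\sigma_f \cup \sigma_b)\circ\sigma$), the substitution-composition facts for calls/returns (the paper's Lemmas~\ref{lemma:br-substitute} and~\ref{lemma:call-ret-substitute}), and the frame conditions plus $\text{updateMem}$ bookkeeping when unfolding \textsc{Well-Formed-State-Inductive}. You also correctly single out the call/return case and the reconciliation through Algorithm~\ref{alg:update-mem} as the hardest part, which matches where the paper's proof expends most of its effort.
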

\begin{proof}
Denote $S=(R, M, (\pcVar, (\sigma_f, \sigma_b)); \Phi)$.
According to \textsc{Well-Formed-State-Base} and \textsc{Well-Formed-State-Inductive}, $\pcVar$ is a valid PC in $P$.
Let $(f, \instVar)=\text{getFuncInst}(P, \pcVar)$.
We also denote $(\Delta, \RType, \MType)=\text{getStateType}(\PType, \pcVar)$ as the state type for $S$, and $\sigma=\sigma_f\cup\sigma_b$ as the full type substitute to instantiate all type variables in the state type using constants.
Consider the following cases for $\instVar$:

\textbf{Case $\inst{movq}{r_1, \memOp{r_b}{r_i}{i_s}{i_d}\opMode{\sOp, \tauOp}}$.}
First, according to \textsc{Typing-Movq-r-m}, registers $r_b$, $r_i$ are untainted, and we can denote their type as $\RType[r_b]=(e_b, 0)$, $\RType[r_i]=(e_i, 0)$.
Then, according to \textsc{Reg-Type}, there should be $R[r_b]=(v_b, 0)$ and $R[r_i]=(v_i, 0)$, where $v_b=\sigma(e_b)$ and $v_i=\sigma(e_i)$.
We further denote $\RType[r_1]=(e_r, \tau_r)$ and $R[r_1]=(v_r, t_r)$, where $v_r=\sigma(e_r)$ and $t_r=\sigma(\tau_r)$.
According to both \textsc{Typing-StoreOp-Spill} and \textsc{Typing-StoreOp-Non-Spill}, there should always be $\hasVal{\Delta}{\tau_r \Rightarrow \tauOp}$, so $t_r \Rightarrow \sigma(\tauOp)$.
Therefore, $S$ satisfies constraints in \textsc{Dyn-Movq-r-m} and is allowed to execute $\instVar$.

Second, we construct the next state $S_1$ as follows:
\begin{align*}
    e_a & =i_d+e_b+e_i\times i_s \quad v_a=\sigma(e_a)=i_d+v_b+v_i\times i_s+i_d\\
    M_1 & =M[[v, v+8)\mapsto (v_r, t_r)] \quad \pcVar_1=\text{nextPc}(P, \pcVar)\\
    S_1 & = (R, M_1, (\text{nextPc}(P, \pcVar), (\sigma_f, \sigma_b));\Phi).
\end{align*}

Third, we prove that $S_1$ is well-formed, i.e., $\typeChecked{P, \PType}{S_1}$.
According to \textsc{Typing-Movq-r-m}, there should exists $\sval_o$, $e_o$, and $\tau_o$ such that
\begin{equation*}
    \hasType{\Delta, \RType, \MType}{\text{store}(\memOp{r_b}{r_i}{i_s}{i_d}\opMode{\sOp, \tauOp}, 8, \RType[r_1])}{(\sval_o, (e_o, \tau_o))},
\end{equation*}
and for $\MType_1\coloneq \MType[\sOp \mapsto (\sval_o, (e_o, \tau_o))]$, there should be $(\Delta, \RType, \MType_1)=\text{getStateType}(\PType, \pcVar_1)$.

We aim to prove that $\hasType{}{(R, M_1, (\sigma_f, \sigma_b))}{(\Delta, \RType, \MType_1)}$.
According to \textsc{Reg-Mem-Type}, as all other requirements can be easily derived by unfolding $\hasType{}{(R, M, (\sigma_f, \sigma_b))}{(\Delta, \RType, \MType)}$ (implied by $\typeChecked{P, \PType}{S}$),
we just need to prove $\hasType{}{M_1}{\sigma(\MType_1)}$.

For each $s\in \dom{\MType_1}$, denote $\MType_1[s]=(\sval_1, (e_1, \tau_1))$.
According to \textsc{Mem-Type}, we need to prove the following statements
\begin{equation}
\begin{aligned}
& (s\subseteq \stackPub \lor s \subseteq \otherPub \lor s\subseteq \otherSec)\\
& (\text{isSpill}(s) \Rightarrow s\subseteq \stackPub)
\quad (s\subseteq \otherPub \Rightarrow \tau_1 = 0)\\
& (\sval_1 \subseteq s)\quad (\hasType{}{M}{\MType[s]})
\end{aligned}
\label{eq:safety-store-mem-goal}
\end{equation}
If $s\neq \sOp$, then $\MType_1[s]=\MType[s]$ and
$\hasVal{\Delta}{s\cap \sOp=\emptyset}$ since all memory slots are non-overlapped. 
According to both \textsc{Typing-StoreOp-Spill} and \textsc{Typing-StoreOp-Non-Spill}, there should be $\hasVal{\Delta}{[e_a, e_a+8)\subseteq \sOp}$. 
Thus, there should be $\hasVal{\Delta}{[e_a, e_a+8)\cap s=\emptyset}$, so $[v_a, v_a+8)\cap \sigma(s)=\emptyset$.
This implies that for all $x\in \sigma(s)$, $M_1[x]=M[x]$.
Therefore, (\ref{eq:safety-store-mem-goal}) holds.

If $s=\sOp$, then $\sval_1=\sval_o$, $e_1=e_o$, $\tau_1=\tau_o$, and
we need to derive $\sval_o$, $e_o$, $\tau_o$ to prove (\ref{eq:safety-store-mem-goal}). 
Denote $\MType[\sOp]=(\sval_{o0}, (\_, \tau_{o0}))$.
We consider the following cases:
\begin{itemize}
    \item $\text{isSpill}(\sOp)$: According to \textsc{Tying-StoreOp-Spill}, there should be $\sval_o=[e_a, e_a+8)$, $e_o=e_r$, and $\tau_o=\tauOp$.
    \item $\neg \text{isSpill}(\sOp)$ and $\sval_{o0}\subseteq [e_a, e_a+8)$:
    According to \textsc{Typing-StoreOp-Non-Spill}, there should be $\sval_o=[e_a, e_a+8)$, $e_o=e_r$, and $\tau_o=\tauOp$.
    \item $\neg \text{isSpill}(\sOp)$ and $\sval_{o0}\not\subseteq [e_a, e_a+8)$:
    According to \textsc{Typing-StoreOp-Non-Spill}, there should be $\sval_o=[e_a, e_a+8)\cup \sval_{o0}$, $e_o=\top$, and $\tau_o=\tauOp$.
\end{itemize}
According to the following discussion, we can know that if $\sOp \subseteq \otherPub$, then $\tau_o=0$:
\begin{itemize}
    \item $\text{isSpill}(\sOp)$: according to \textsc{Mem-Type} and $\typeChecked{P, \PType}{S}$, we know that $\text{isSpill}(\sOp)\Rightarrow \sOp\subseteq \stackPub$. So $\sOp \not \subseteq \otherPub$ and we can ignore this case.
    \item $\neg \text{isSpill}(\sOp)$: According to \textsc{Typing-StoreOp-Spill}, $\tau_{o0}=\tauOp=\tau_o$.
    Furthermore, by unfolding $\typeChecked{P, \PType}{S}$ we can know that $\sOp \subseteq \otherPub \Rightarrow \tau_{o0}=0$. Therefore, $\tau_o=0$.
\end{itemize}
Furthermore, other statements in (\ref{eq:safety-store-mem-goal}) can be proved by unfolding $\typeChecked{P, \PType}{S}$ and performing a discussion on whether $\text{isSpill}(\sOp)$ is true or not.
Therefore, $\hasType{}{(R, M_1, (\sigma_f, \sigma_b))}{(\Delta, \RType, \MType_1)}$.

If $f\coloneq \text{getFunc}(P, \pcVar)=\mathit{main}$, \textsc{Well-Formed-State-Base} implies that $\typeChecked{P, \PType}{S_1}$.

We consider the case where $f\neq \mathit{main}$.
There should exists $\pcVar_p$, $\sigmaFp$, $\sigmaBp$, and $\Phi_0$ such that $\Phi=(\pcVar_p, (\sigmaFp, \sigmaBp));\Phi_0$.
Furthermore, $\typeChecked{P, \PType}{S}$ implies that there exists $R_p$, $M_p$ such that
$\typeChecked{P, \PType}{(R_p, M_p, \Phi)}$.

Note that $[v_a, v_a+8)\subseteq \sigma(\sOp)\subseteq \text{getDom}(\MType, \sigma_f)$, and $\sRet\cap \text{getDom}(\MType, \sigma_f)=\emptyset$ (implied by Lemma~\ref{lemma:no-ret-slot}), and $M_1[x]=M[x]$ for all $x\not\in [v_a, v_a+8)$, there should be 
\begin{align*}
    & M_1[\sRet] = M[\sRet] = (\text{nextPc}(P, \pcVar_p), 0)\\
    & \forall x \not \in \text{getDom}(\MType, \sigma_f) \cup \sRet.\; M_1[x]=M[x]=M_p[x].
\end{align*}
Therefore, the statement also holds for the case when $f\neq \mathit{main}$.

\todo{\textbf{Case $\inst{movq}{\memOp{r_d}{r_i}{i_s}{i_d}\opMode{\sOp, \tauOp}, r_0}$.}}

\textbf{Case $\inst{jne}{\ell}\opMode{\sigmaOp}$.}
First, according to \textsc{Typing-Jne}, flag $\reg{ZF}$ is untainted and we can denote its type as $\RType[\reg{ZF}]=(e=0, 0)$.
Then, according to \textsc{Reg-Type}, \textsc{Value-Type}, and well-formedness of $S$, there should be $R[\reg{ZF}]=(v, 0)$ where $v=\sigma(e=0)$.
Thus, $S$ satisfies constraints in \textsc{Dyn-Jne} and is allowed to execute $\instVar$.
\shixin{Note: I did not prove/constrain that the next PCs for both cases are in the same function.}

Second, we construct the next state $S_1$ as follows:
\begin{equation*}
    S_1 =
    \begin{cases}
        (R, M, (\text{nextPc}(P, \pcVar), (\sigma_f, \sigma_b); \Phi) & v = \mathit{true}\\
        (R, M, (\text{getBlockPc}(P, \ell), (\sigma_f, (\sigma_f\cup \sigma_b)\circ \sigmaOp); \Phi) & v = \mathit{false}.
    \end{cases}
\end{equation*}

Third, we prove that $S_1$ is well-formed under both cases (when the branch is taken and not taken).
When $v=\mathit{true}$, i.e., the branch is not taken, denote the next PC as $\pcVar_1=\text{nextPc}(P, \pcVar)$.
According to \textsc{Typing-Jne}, $\text{getStateType}(\PType, \pcVar_1)=(\Delta \cup \qty{e=0}, \RType, \MType)$.
Note that $\sigma(e=0)=v=\mathit{true}$ and $\hasType{}{(R, M, (\sigma_f, \sigma_b))}{(\Delta, \RType, \MType)}$, so there should also be $\hasType{}{(R, M, (\sigma_f, \sigma_b))}{(\Delta \cup \qty{e=0}, \RType, \MType)}$.
Thus, we can derive that $\typeChecked{P, \PType}{S_1}$ according to \textsc{Well-Formed-State-Base} and \textsc{Well-Formed-State-Inductive}.

When $v=\mathit{false}$, i.e., the branch is taken, denote next PC as $\pcVar_1=\text{getBlockPc}(P, \ell)$.
Similarly to the not-taken case, we can prove that
\begin{equation}
    \hasType{}{(R, M, (\sigma_f, \sigma_b))}{(\Delta\cup \qty{e \neq 0}, \RType, \MType)}.
    \label{eq:jne-taken-before}
\end{equation}
According to \textsc{Typing-Jne}, we can also know that
\begin{equation}
    (\Delta\cup \qty{e \neq 0}, \RType, \MType) \sqsubseteq \sigmaOp(\PType(f)(\ell)).
    \label{eq:jne-taken-subtype}
\end{equation}
Since $\pcVar_1$ is the PC of block $\ell$ in function $f$, the state type for $\pcVar_1$ should be $\text{getStateType}(\PType, \pcVar_1)=\PType(f)(\ell)$.
We aim to prove that $\hasType{}{(R, M, (\sigma_f, (\sigma_f\cup \sigma_b)\circ \sigmaOp))}{\PType(f)(\ell)}$.
Denote $(\Delta_1, \RType_1, \MType_1)=\PType(f)(\ell)$ and $\sigma_1=\sigma_f\cup ((\sigma_f\cup \sigma_b)\circ \sigmaOp)$. According to \textsc{Reg-Mem-Type}, we just need to prove the following statements:
\todo{Consider to replace $\sigma_f\cup \sigma_b$ with $\sigma$.}
\begin{itemize}
    \item $\sigma_1(\Delta_1)=\mathit{true}$:
    According to Lemma~\ref{lemma:br-substitute}, $\sigma_1(\Delta_1)=((\sigma_f\cup \sigma_b)\circ \sigmaOp)(\Delta_1)=(\sigma_f\cup \sigma_b)(\sigmaOp(\Delta_1))$.
    According to (\ref{eq:jne-taken-subtype}) and \textsc{State-Subtype}, $\hasVal{\Delta\cup \qty{e \neq 0}}{\sigmaOp(\Delta_1)}$.
    According to (\ref{eq:jne-taken-before}) and Lemma~\ref{lemma:stmt-instance}, we can know that $(\sigma_f\cup \sigma_b)(\sigmaOp(\Delta_1))=\mathit{true}$.
    Thus, the statement is true.
    \item $\hasType{}{R}{\sigma_1(\RType_1)}$:
    For each $r\in \dom{\RType_1}$, denote $R[r]=(v, t)$ and $\RType_1[r]=(e_1, \tau_1)$.
    According to (\ref{eq:jne-taken-subtype}) and \textsc{Reg-Subtype}, $r\in \RType[r]$, and when denoting $\RType[r]=(e, \tau)$, there should be
    \begin{align*}
        & \hasVal{\Delta}{e=\sigmaOp(e_1) \lor (\text{isNonChangeExp}(e)\land \sigmaOp(e_1)=\top)}\\
        & \hasVal{\Delta}{\tau \Rightarrow \sigmaOp(\tau_1)}.
    \end{align*}
    According to (\ref{eq:jne-taken-before}) and Lemma~\ref{lemma:stmt-instance}, the above statements still hold when we instantiate their type variables with $\sigma=(\sigma_f\cup \sigma_b)$ (and also simplify with Lemma~\ref{lemma:br-substitute}), so we can get
    \begin{align*}
        & \sigma(e)=\sigma_1(e_1) \lor (\text{isNonChangeExp)}(\sigma(e)) \land \sigma_1(e_1)=\top)\\
        & \sigma(\tau) \Rightarrow \sigma_1(\tau).
    \end{align*}
    Furthermore, (\ref{eq:jne-taken-before}) also implies that
    \begin{equation*}
        v=\sigma(e) \lor \sigma(e)=\top \qquad t\Rightarrow \sigma(\tau).
    \end{equation*}
    By combining the above statements and performing a simple case discussion on whether $\sigma_1(e_1)=\top$, we can finally get
    \begin{equation*}
        v=\sigma_1(e_1) \lor \sigma_1(e_1)=\top \qquad t\Rightarrow \sigma_1(\tau_1).
    \end{equation*}
    Thus, the statement is true.
    \item $\hasType{}{M}{\sigma_1(\MType_1)}$:
    For each $s_1\in \dom{\MType_1}$, denote $\MType_1[s_1]=(\sval_1, (e_1, \tau_1))$.
    According to (\ref{eq:jne-taken-subtype}) and \textsc{State-Subtype}, there exists $s\in \MType$ such that
    \begin{equation*}
        \hasVal{\Delta}{\sigmaOp(s_1)\subseteq s \land \MType[s] \sqsubseteq \sigmaOp(\MType_1[s_1])}.
    \end{equation*}
    Denote $\MType[s]=(\sval, (e, \tau))$. Then, according to \textsc{Mem-Slot-Subtype}, there should be
    \begin{align*}
        & \hasVal{\Delta}{\sigmaOp(\sval_1)\subseteq \sval}\\
        & \hasVal{\Delta}{\text{isSpill}(\sigmaOp(\sval_1)) \Rightarrow \text{isSpill}(\sval)}\\
        & \hasVal{\Delta}{e=\sigmaOp(e_1) \lor (\text{isNonChangeExp}(e) \land \sigmaOp(e_1)=\top) \lor \sigmaOp(\sval_1)=\emptyset}\\
        & \hasVal{\Delta}{\tau = \sigmaOp(\tau_1) \lor (\text{isSpill}(\sval)\land \sigmaOp(\sval_1)=\emptyset)}.
    \end{align*}
    According to (\ref{eq:jne-taken-before}), the above statements still hold when we instantiate their type variables with $\sigma=(\sigma_f \cup \sigma_b$) (and also simplify with Lemma~\ref{lemma:br-substitute}), so we can get 
    \begin{align*}
        & \sigma_1(s_1) \subseteq \sigma(s)\\
        & \sigma_1(\sval_1)\subseteq \sigma(\sval)\\
        & \text{isSpill}(\sigma_1(\sval_1)) \Rightarrow \sigma(\sval)\\
        & \sigma(e)=\sigma_1(e_1) \lor (\text{isNonChangeExp}(\sigma(e)) \land \sigma_1(e_1)=\top) \lor \sigma_1(\sval_1)=\emptyset\\
        & \sigma(\tau)=\sigma_1(\tau_1) \lor (\text{isSpill}(\sigma(\sval)) \land \sigma_1(\sval_1)=\emptyset).
    \end{align*}
    (\ref{eq:jne-taken-before}) implies that $\hasType{}{M}{\sigma(\MType)}$, so there should be
    \begin{align*}
        & \sigma(s) \subseteq \stackPub \lor \sigma(s)\subseteq \otherPub \lor \sigma(s) \subseteq \otherSec\\
        & \text{isSpill}(\sigma(s)) \Rightarrow \sigma(s)\subseteq \stackPub\\
        & \sigma(s) \subseteq \otherPub \Rightarrow \sigma(\tau)=0
        \quad\quad \sigma(\sval)\subseteq s\\
        & \hasType{}{M}{\sigma(\MType[s])}.
    \end{align*}
    \shixin{For now, I omit details here.}
    By combining the above statements and performing a simple discussion on $\sigma_1(\sval_1)$ and $\sigma_1(\MType_1[s_1])$, we can know that the statement is true.
    \item $\forall s\in \dom{\MType_1}.\; \text{getVars}(s)=\dom{\sigma_f}$:
    According to Lemma~\ref{lemma:same-dom-mtype}, $\dom{\MType_1}=\dom{\MType}$.
    According to (\ref{eq:jne-taken-before}), for each $s$ in $\dom{\MType}$, $\text{getVars}(s)=\dom{\sigma_f}$. Thus, the statement is true.
\end{itemize}

So far, we have successfully proved that $\hasType{}{(R, M, (\sigma_f, (\sigma_f\cup \sigma_b)\circ \sigmaOp))}{\PType(f)(\ell)}$.
Therefore, it is straightforward to derive that $\typeChecked{P, \PType}{S_1}$ from this statement and $\typeChecked{P, \PType}{S}$.

\textbf{Case $\inst{callq}{f_c}\opMode{\sigmaCall,\sigmaRet}$.}
First, according to \textsc{Typing-Callq}, $r_\reg{rsp}$ is untainted, and we can denote its type as $\RType[r_\reg{rsp}]=(e, 0)$.
Then, according to \textsc{Reg-Type}, \textsc{Value-Type}, and well-formedness of $S$, there should be $R[r_\reg{rsp}]=(v, 0)$ where $v=\sigma(e)$.
Thus, $S$ satisfies constraints in \textsc{Dyn-Callq} and is allowed to execute $\instVar$.

Second, we construct the next state $S_1$ as follows:
\begin{align*}
    \pcVar_c & = \text{getBlockPc}(P, f_c)\\
    R_1 & = R[r_\reg{rsp} \mapsto (v-8, 0)]\\
    M_1 & = M[[v-8, v) \mapsto (\text{nextPc}(P, \pcVar), 0)]\\
    \Phi_1 & = (\pcVar_c, (\sigma \circ \sigmaCall, []\rightarrow[]));(\pcVar, (\sigma_f, \sigma_b)); \Phi\\
    S_1 & = (R_1, M_1, \Phi_1).
\end{align*}
Furthermore, let $\RType_1=\RType[r_\reg{rsp}\mapsto (e-8, 0)]$.
$\typeChecked{P, \PType}{S}$ implies $\hasType{}{(R, M, (\sigma_f, \sigma_b))}{(\Delta, \RType, \MType)}$.
Thus, we can derive $\hasType{}{(R_1, M_1, (\sigma_f, \sigma_b))}{(\Delta, \RType_1, \MType)}$ by proving constraints in \textsc{Reg-Mem-Type}.
Specifically, the constraint $\hasType{}{M_1}{\sigma(\MType)}$ can be implied from $\hasType{}{M}{\sigma(\MType)}$, $\MType[e-8, e]=(\emptyset, \_)$, and $M_1[x]=M[x]$ for all $x\not\in[v-8, v]$.
According to \textsc{Typing-Callq}, there should also be $(\Delta, \RType_1, \MType)\sqsubseteq \sigmaCall(\PType(f_c)(f_c))$.

Third, we prove that $S_1$ is well-formed.
Since $\pcVar_c$ is the PC of $f_c$, the state type for $\pcVar_c$ should be $\text{getStateType}(\PType, \pcVar_c)=\PType(f_c)(f_c)$.
We aim to prove that $\hasType{}{(R_1, M_1, (\sigma \circ \sigmaCall, []\rightarrow[]))}{\PType(f_c)(f_c)}$.
Denote $\sigma_c=\sigma \circ \sigmaCall$ and $(\Delta_c, \RType_c, \MType_c)=\PType(f_c)(f_c)$.
According to \textsc{Reg-Mem-Type}, we just need to prove the following statements:
\begin{itemize}
    \item $\sigma_c(\Delta_c)=\mathit{true}$, $\hasType{}{R_1}{\sigma_c(\RType_c)}$, and $\hasType{}{M_1}{\sigma_c(\MType_c)}$:
    Note that $\hasType{}{(R_1, M_1, (\sigma_f, \sigma_b))}{(\Delta, \RType_1, \MType)}$ and $(\Delta, \RType_1, \MType)\sqsubseteq \sigmaCall(\PType(f_c)(f_c))$, we can prove the three statements following a similar strategy used to prove for the case of $\instr{jne}$.
    \todo{Convert this to a lemma for both cases.}
    \item $\forall s\in \dom{\MType_c}.\; \text{getVars}(s)=\dom{\sigma_c}$: this is true since $\sigmaCall$ is a substitute that instantiate all variables in $\PType(f_c)(f_c)$.
\end{itemize}
Therefore, $\hasType{}{(R_1, M_1, (\sigma \circ \sigmaCall, []\rightarrow[]))}{\PType(f_c)(f_c)}$.

Let $s_\text{retc}=[\sigma_c(\spVar), \sigma_c(\spVar)+8)$.
According to $(\Delta, \RType_1, \MType)\sqsubseteq \sigmaCall(\PType(f_c)(f_c))$ and \textsc{Reg-Subtype}, there should be $e-8=\sigmaCall(\spVar)$, so
$\sigma_c(\spVar)=\sigma(\sigmaCall(\spVar))=\sigma(e-8)=v-8$.
Thus, we have $M_1[s_\text{retc}]=M_1[v-8, v]=(\text{nextPc}(P, \pcVar), 0)$.

To prove $\typeChecked{P, \PType}{S_1}$, we still need to prove the following statements (here we intend to use $R$ and $M$ to prove the existence of $R_p$ and $M_p$ in \textsc{Well-Formed-State-Inductive}):
\begin{itemize}
    \item $\forall x\not\in \text{getDom}(\MType_c, \sigma_c)\cup s_\text{retc}.\; M_1[x]=M[x]$: this hold since $M_1[x]=M[x]$ for all $x\not\in s_\text{retc}$.
    \item $\text{getInst}(P,\pcVar)=\inst{callq}{f_c\opMode{\sigmaCall,\sigmaRet}}$: this is our assumption.
    \item $\sigma_c=(\sigma_f\cup \sigma_b)\circ\sigmaCall$: this is the definition of $\sigma_c$.
    \item $\typeChecked{P, \PType}{(R, M, (\pcVar, (\sigma_f, \sigma_b));\Phi}$: this is our assumption.
\end{itemize}

\textbf{Case $\inst{retq}{}$.}
First, according to \textsc{Typing-Func}, $f\neq \mathit{main}$.
Then, there should exists $\pcVar_p$, $\sigmaFp$, $\sigmaBp$ and $\Phi_0$ such that $\Phi = (\pcVar_p, (\sigmaFp, \sigmaBp)); \Phi_0$.

According to \textsc{Typing-Func}, the state type of $\pcVar$ is $(\Delta, \RType, \MType)\coloneq \text{getStateType}(f)(f_\text{ret})$.
It also implies that $\RType[r_\reg{rsp}]=(\spVar, 0)$.
Then, according to \textsc{Reg-Type}, \textsc{Value-Type}, and well-formedness of $S$, there should be $R[r_\reg{rsp}]=(v, 0)$, where $v=\sigma(\spVar)=\sigma_f(\spVar)$.
Furthermore, according to $\typeChecked{P, \PType}{S}$ and \textsc{Well-Formed-State-Inductive}, there should exist $R_p$ and $M_p$ such that
\begin{align*}
    & \sRet =[\sigma_f(\spVar), \sigma(\spVar)+8)=[v, v+8)\\
    & M[\sRet] = (\text{nextPc}(P, \pcVar_p), 0)\\
    & \text{getInst}(P, \pcVar_p)=\inst{callq}{f\opMode{\sigmaCall,\sigmaRet}}\\
    & \forall x\not\in \text{getDom}(\MType, \sigma_f)\cup \sRet.\; M[x]=M_p[x]\\
    & \sigma_f=(\sigmaFp\cup \sigmaBp)\circ \sigmaCall\\
    & \typeChecked{P, \PType}{(R_p, M_p, \Phi)}.
\end{align*}
Thus, $S$ satisfies all constraints in \textsc{Dyn-Retq} and is allowed to execute $\instVar$.

Second, we construct the next state $S_1$ as follows:
\begin{align*}
    R_1 & = R[r_\reg{rsp}\mapsto (v+8, 0)]\\
    \Phi_1 & = (\text{nextPc}(P, \pcVar_p), (\sigmaFp, (\sigma_b \circ \sigmaRet^{-1})\cup \sigmaBp);\Phi_0\\
    S_1 & = (R_1, M, \Phi_1).
\end{align*}

Third, we prove that $S_1$ is well-formed.
Let
\begin{align*}
    (\Delta_0, \RType_0, \MType_0) & = \text{getStateType}(\PType, \pcVar_p)\\
    (\Delta_1, \RType_1, \MType_1) & = (\sigmaCall\cup \sigmaRet)(\Delta, \RType, \MType)\\
    (\Delta_2, \RType_2, \MType_2) & =
    (\Delta_0\cup \Delta_1, \RType_1[r_\reg{rsp}\mapsto \RType_0[r_\reg{rsp}]], \\
    & \qquad \text{updateMem}(\MType_0, \MType_1)).
\end{align*}
According to \textsc{Well-Formed-State-Inductive}, there should be $\typeChecked{P, \PType}{(R_p, M_p, \Phi)}$, so 
\begin{equation}
    \hasType{}{(R_p, M_p, (\sigmaFp, \sigmaBp))}{(\Delta_0, \RType_0, \MType_0)}.
    \label{eq:call-state-type}
\end{equation}

According to rule \textsc{Typing-Callq}, there should be $(\Delta_2, \RType_2, \MType_2)=\text{getStateType}(\PType, \text{nextPc}(P, \pcVar_p))$.
We aim to prove
\begin{equation*}
\hasType{}{(R_1, M, (\sigmaFp, (\sigma_b\circ \sigmaRet^{-1})\cup \sigmaBp))}{(\Delta_2, \RType_2, \MType_2)}.
\end{equation*}
Denote $\sigma_1=\sigmaFp\cup (\sigma_b\circ \sigmaRet^{-1})\cup \sigmaBp$.
According to \textsc{Reg-Mem-Type}, we just need to prove the following:
\begin{itemize}
    \item $\sigma_1(\Delta_2)=\qty{\mathit{true}}$:
    For each statement $e$ in $\Delta_2$, there should be $e\in \Delta_0$ or $e\in \Delta_1$.
    If $e\in \Delta_0$, then $\sigma_1(e)=(\sigmaFp\cup \sigmaBp)(e)=\mathit{true}$ (implied by (\ref{eq:call-state-type})).
    If $e\in \Delta_1$, then there exists $e'\in \Delta$ such that $e=(\sigmaCall\cup \sigmaRet)(e')$ and $(\sigma_f\cup \sigma_b)(e')=\mathit{true}$. 
    According to \textsc{Well-Formed-State-Inductive}, we have $\sigma_f=(\sigmaFp\cup \sigmaBp)\circ \sigmaCall$.
    Then, according to Lemma~\ref{lemma:call-ret-substitute},
    \begin{align*}
        \sigma_1(e) & = (\sigmaFp\cup (\sigma_b\circ\sigmaRet^{-1})\cup \sigmaBp)(e)\\
        & = ((\sigmaFp\cup (\sigma_b\circ\sigmaRet^{-1})\cup \sigmaBp)\circ (\sigmaCall\cup \sigmaRet))(e')\\
        & = (((\sigmaFp\cup \sigmaBp)\circ \sigmaCall) \cup \sigma_b) (e')\\
        & = (\sigma_f\cup \sigma_b)(e') = \mathit{true}.
    \end{align*}
    Therefore, the statement holds.
    \item $\hasType{}{R_1[r_\reg{rsp}]}{\sigma_1(\RType_2[r_\reg{rsp}])}$:
    According to the definition of $(\Delta_0, \RType_0, \MType_0)$ and \textsc{Typing-Callq}, there should exists $e_0$ such that $\RType_0[r_\reg{rsp}]=(e_0, 0)$.
    Let $\RType_{p0}=\RType_0[r_\reg{rsp}\mapsto (e-8, 0)]$, we can also get $(\Delta_0, \RType_{p0}, \MType_0)\sqsubseteq \sigma_\text{call}(\PType(f)(f))$.
    Thus, we can derive that $e_0-8=\sigmaCall(\spVar)$.
    Then, according to Lemma~\ref{lemma:call-ret-substitute},
    \begin{align*}
        \sigma_1(\RType_2[r]) & = (\sigma_1(\sigmaCall(\spVar+8)), 0)\\
        & = ((\sigma_1\circ (\sigmaCall\cup \sigmaRet))(\spVar+8), 0)\\
        & = (\sigma(\spVar+8), 0) = (v+8, 0)=R_1[r_\reg{rsp}].
    \end{align*}
    Therefore, the statement holds.
    \item $\forall r\in \dom{\RType_2}, r\neq r_\reg{rsp}.\; \hasType{}{R'[r]}{\sigma_1(\RType_2[r])}$: According to Lemma~\ref{lemma:call-ret-substitute}, we have $\sigma_1(\RType_2[r])=(\sigma_1 \circ (\sigmaCall\cup \sigmaRet))(\RType[r])=(\sigma_f\cup \sigma_b)(\RType[r])$ and $R'[r]=R[r]$. Furthermore, we can derive $\hasType{}{R[r]}{(\sigma_f\cup \sigma_b)(\RType[r])}$ from the well-formedness of $S$. Thus, the original statement holds.
    \item $\hasType{}{M}{\sigma_1(\MType_2)}$: 
    For each $s\in \dom{\MType_2}$, denote $\MType_2[s]=(\sval_2, (e_2, \tau_2))$.
    According to \textsc{Mem-Type}, as other requirements of well-formedness are straightforward to prove, we focusing on proving that $\hasType{}{M}{\sigma_1(\MType_2[s])}$.

    According to Algorithm~\ref{alg:update-mem}, when executing $\text{updateMem}$ to generate $\MType_2$, we generate a map $S$ that maps each slot in $\dom{\MType_2}$ to slots in $\dom{\MType_1}$ that belong to $s$. Consider the following cases:
    \begin{itemize}
        \item $s=\sigmaCall([\spVar, \spVar+8))=[\spVar+c-8, \spVar+c)$ (where $\sigmaCall(\spVar)=\spVar+c-8$ and $\RType_2[r_\reg{rsp}]=\RType_0[r_\reg{rsp}]=(\spVar+c, \_)$): according to \textsc{Typing-Callq}, $\MType_0[s]=(\emptyset, \_)$. According to \textsc{Typing-Func}, for all $s_1\in\dom{\MType_1}$, $s_1\cap s= \emptyset$, so $S[s]=\emptyset$.
        These two facts imply that $\sval_2=\emptyset$, so $\sigma_1(\sval_2)=\emptyset$ and the statement holds according to \textsc{Mem-Slot-Empty-Type}.
        \item $S[s]=\emptyset$ and $s\neq [\spVar+c-8, \spVar+c)$: in this case, we have $\sval_2=\sval_0$ and $e_2=e_0$.
        According to the well-formedness of $R_p$, $M_p$, there should be $\hasType{}{M_p}{\sigma_1(\MType_0[s])}$.
        Recall that $\forall x\not\in \text{getDom}(\MType, \sigma_f)\cup \sRet.\; M[x]=M_p[x]$.
        By applying Lemma~\ref{lemma:possible-slot-non-overlap}, it is also straightforward to derive that $\sigma_1(\sval_2)\cap (\text{getDom}(\MType, \sigma_f)\cup \sRet) = \emptyset$,
        so $M[\sigma_1(\sval_2)]=M_p[\sigma_1(\sval_0)]$.
        Thus, we can derive $\hasType{}{M}{\sigma_1(\MType_2[s])}$ from $\hasType{}{M_p}{\sigma_1(\MType_0[s])}$.
        \item $\text{cardinal}(S[s])>1$: according to $\text{updateMem}$, $\sval_2=\sval_1=\emptyset$, so $\sigma_1(\sval_2)=\emptyset$ and the statement holds according to \textsc{Mem-Slot-Empty-Type}.
        \item $\text{cardinal}(S[s])=1$: denote $S[s]=\qty{s_1}$ and $\MType_1[s_1]=(\sval_1, (e_1, \tau_1))$.
        Then, $\sval_2=(\sval_0\backslash s_1)\cup \sval_1$ and $\tau_2=\tau_1$ according to $\text{updateMem}$.
        According to the definition of $\MType_1$, there exists $s_c\in \MType$ such that $s_c=(\sigmaCall \cup \sigmaRet)(s_1)$.
        Then, we have $\MType_1[s_1]=(\sigmaCall\cup \sigmaRet)(\MType_c[s_c])$.

        By unfolding the assumption $\typeChecked{P, \PType}{S}$, we can get $\hasType{}{M}{\sigma(\MType[s_c])}$.
        According to Lemma~\ref{lemma:call-ret-substitute}, for all $e$, $\sigma(e)=\sigma_1((\sigmaCall\cup \sigmaRet)(e))$.
        These two facts imply that $\hasType{}{M}{\sigma_1(\MType_1[s_1])}$.

        We then consider the following cases.

        If $\sval_2=\sval_1$, then $e_2=e_1$. We can derive $\hasType{}{M}{\sigma_1(\MType_2[s])}$ from $\hasType{}{M}{\sigma_1(\MType_1[s_1])}$.

        If $\sval_2\subsetneqq \sval_1$, then $e_2=\top$. According to \textsc{Mem-Slot-Top-Type}, $\hasType{}{M}{\sigma_1(\MType_2[s])}$ holds.
    \end{itemize}
\end{itemize}
\end{proof}

\begin{lemma}
For all $P$, $\PType$, $\pcVar_0$, $\pcVar_1$,
and for $i\in \qty{0, 1}$, $(\Delta_i, \RType_i, \MType_i)=\text{getStateType}(\PType, \pcVar_i)$,
if $\text{getFunc}(P, \pcVar_0)=\text{getFunc}(P, \pcVar_1)$,
then there should be $\dom{\MType_0}=\dom{\MType_1}$.
\label{lemma:same-dom-mtype}
\end{lemma}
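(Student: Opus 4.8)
The plan is to reduce the statement to a purely block-level fact and then establish that fact by tracing the control-flow graph of the function. The key observation is that $\text{getStateType}(\PType, \pcVar)$ computes the state type at $\pcVar$ by starting from the recorded block-level state type $\Gamma(\ell)$ for the block $\ell$ containing $\pcVar$, and applying the instruction-sequence typing rules of Figure~\ref{fig:tal-inst-typing} one instruction at a time. So I would first show that each such step preserves $\dom{\MType}$: for $\instr{movq}$ loads and for $\instr{cmpq}$ the memory type is passed through unchanged; for $\instr{movq}$ stores, the rules \textsc{Typing-StoreOp-Spill} and \textsc{Typing-StoreOp-Non-Spill} both require $s \in \dom{\MType}$ and update $\MType$ only at that existing slot, so the domain is unchanged; and for $\instr{callq}$ the next memory type is $\text{updateMem}(\MType, \MType_{p_1})$, which by Algorithm~\ref{alg:update-mem} is initialized to $\MType$ and only ever reassigns keys already present in $\dom{\MType}$. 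A straightforward induction on the number of instructions between the block head and $\pcVar$ then gives that the memory-type domain at any $\pcVar$ equals $\dom{\MType_\ell}$, where $\Gamma(\ell)=(\Delta_\ell, \RType_\ell, \MType_\ell)$.

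Given this, the lemma reduces to showing that all blocks of a single function carry the same memory-type domain in $\Gamma$. I would establish this along control-flow edges. For an edge induced by $\instr{jne}$, the rule \textsc{Typing-Jne} contains the explicit side condition $\dom{\MType'}=\dom{\MType}$ relating the branch's memory type to the target block's $\Gamma(\ell)$, and the fall-through successor keeps $\MType$ unchanged; the analogous constraint holds for $\instr{jmp}$. For the edge induced by $\instr{callq}$, the successor's domain is again $\dom{\text{updateMem}(\MType,\MType_{p_1})}=\dom{\MType}$ by the argument above. Combining with the first stage, every control-flow edge within the function connects PCs whose derived memory-type domains coincide.

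Finally I would conclude by connectivity: since $\text{getFunc}(P,\pcVar_0)=\text{getFunc}(P,\pcVar_1)$, both PCs lie in blocks of the same function, and every block is reachable from the function's entry block along the control-flow edges just discussed, so all block domains agree with the entry block's and hence with each other. I expect the main obstacle to be exactly this last connectivity step: the typing rules only equate domains across edges that are actually present in the control-flow graph, so making the ``all pairs in the same function'' claim precise requires either the (standard) well-formedness assumption that every block is reachable from the entry, or restricting attention to the reachable component — which suffices for the only use of this lemma, the $\instr{jne}$ case of Theorem~\ref{thm:type-safety-revision}, where $\pcVar_0$ and $\pcVar_1$ are directly linked by the branch being taken. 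A secondary fussy point is verifying that $\text{updateMem}$ truly neither enlarges nor shrinks the domain despite its assertions and conditional branches, which I would discharge by inspecting each branch of Algorithm~\ref{alg:update-mem}.
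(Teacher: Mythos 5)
Your proposal takes essentially the same route as the paper's proof: within a basic block the instruction-typing rules preserve $\dom{\MType}$, across branch edges \textsc{Typing-Jne}'s explicit side condition $\dom{\MType'}=\dom{\MType}$ gives equality, and the two facts are combined to cover the whole function. Even the reachability caveat you flag is shared rather than a divergence --- the paper glosses over it (a TODO in the adjacent lemma's proof acknowledges the missing no-dead-block assumption), so your write-up is, if anything, slightly more careful on the same argument.
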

\begin{proof}
This lemma means that memory state types of instructions in the same function have the same memory slots.

According to \textsc{Typing-Func} and \textsc{Typing-Prog}, all basic blocks in $P$ are well-typed.
By unfolding typing rules for each basic block, we know that $\dom{\MType_0}=\dom{\MType_1}$ holds if $\pcVar_0$ and $\pcVar_1$ are in the same basic block.
According to \textsc{Typing-Jne} (and similar typing rules for other branch instructions omitted here), $\dom{\MType_0}=\dom{\MType_1}$ holds if $\pcVar_0$ and $\pcVar_1$ refer to a branch instruction and its target.
Therefore, we can derive that $\dom{\MType_0}=\dom{\MType_1}$ holds as long as $\pcVar_0$ and $\pcVar_1$ are in the same function.
\end{proof}

\begin{lemma}
For all $P$, $\PType$, $\pcVar$, and $(\Delta, \RType, \MType)=\text{getStateType}(\PType, \pcVar)$,
there should be
\begin{equation*}
    \forall s\in \dom{\MType}.\; \hasVal{\Delta}{s \cap [\spVar, \spVar+8)=\emptyset}.
\end{equation*}
\label{lemma:no-ret-slot}
\end{lemma}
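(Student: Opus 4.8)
The plan is to prove the invariant by \emph{propagating} it from the function's entry block to the instruction at $\pcVar$, relying on two facts already available: (i) the entry and return blocks satisfy the non-overlap condition by the last memory premise of \textsc{Typing-Func}, and (ii) the slot domain is constant throughout a function (Lemma~\ref{lemma:same-dom-mtype}). Since every memory slot that can appear at $\pcVar$ already appears at the entry, and since the constraint context at $\pcVar$ is a superset of the entry's, the entailment $\hasVal{\Delta}{s\cap[\spVar,\spVar+8)=\emptyset}$ will follow from monotonicity of entailment in the constraint set.

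Concretely, I would first fix $\pcVar$, set $f=\text{getFunc}(P,\pcVar)$, and let $\pcVar_f=\text{getBlockPc}(P,f)$ be the entry of $f$, with $(\Delta_0,\RType_0,\MType_0)=\text{getStateType}(\PType,\pcVar_f)=\PType(f)(f)$. By \textsc{Typing-Func} we have $\forall s\in\dom{\MType_0}.\;\hasVal{\Delta_0}{s\cap[\spVar,\spVar+8)=\emptyset}$. Writing $(\Delta,\RType,\MType)=\text{getStateType}(\PType,\pcVar)$, Lemma~\ref{lemma:same-dom-mtype} (applied to $\pcVar_f$ and $\pcVar$, which lie in the same function) gives $\dom{\MType}=\dom{\MType_0}$ as identical sets of slot expressions. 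It then remains to show $\Delta_0\subseteq\Delta$: the final premise of \textsc{Typing-Func} yields $\Delta_0\subseteq\Delta_\ell$ at the head $\ell$ of the block containing $\pcVar$, and inspecting the instruction-sequence rules shows the $\Delta$-component is never shrunk as $\text{getStateType}$ steps from the block head to $\pcVar$ (e.g. \textsc{Typing-Jne} only adjoins $\{e=0\}$ and \textsc{Typing-Callq} only adjoins $\Delta_{p_1}$), so $\Delta\supseteq\Delta_\ell\supseteq\Delta_0$. Then for each $s\in\dom{\MType}=\dom{\MType_0}$ we have $\hasVal{\Delta_0}{s\cap[\spVar,\spVar+8)=\emptyset}$, and monotonicity of $\vdash$ in the hypothesis set gives $\hasVal{\Delta}{s\cap[\spVar,\spVar+8)=\emptyset}$, which is the goal.

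Two points need care. First, the within-block $\Delta$-monotonicity is really a small induction over the instruction sequence, discharged by a uniform inspection of the rules in Figure~\ref{fig:tal-inst-typing}; I would state it as an auxiliary claim so it can be reused. Second --- and this is the genuine subtlety --- the argument requires that the \emph{same} symbolic objects are compared across blocks: both the slot keys $s$ and the distinguished variable $\spVar$ must denote the same expressions at $\pcVar_f$ and at $\pcVar$. This is exactly where I would lean on the design invariants that $\spVar$ is an input variable shared by all of $f$'s contexts (hence never touched by any branch substitution $\sigma$, since $\text{getInputVar}(\dom{\sigma})=\emptyset$ in \textsc{Typing-Jne}) and that slot keys are fixed per data object, so that $\dom{\MType}=\dom{\MType_0}$ holds as literal equality of expressions, not merely up to renaming.

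The main obstacle is therefore not the arithmetic but justifying this representational stability: I would verify against \textsc{Typing-Jne} and \textsc{Typing-Callq}, together with the branch-substitution constraints, that neither the slot address-range expressions nor $\spVar$ are rewritten along any intraprocedural path, so that the constraint $s\cap[\spVar,\spVar+8)=\emptyset$ is syntactically the same formula in $\Delta_0$ and in $\Delta$ and can be transported across the inclusion $\Delta_0\subseteq\Delta$.
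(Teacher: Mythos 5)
Your proposal is correct and follows essentially the same route as the paper's proof: both derive the non-overlap fact for the entry block from \textsc{Typing-Func}, transfer the slot domain via Lemma~\ref{lemma:same-dom-mtype}, establish $\Delta_0\subseteq\Delta$ via the last premise of \textsc{Typing-Func} plus inspection of the instruction-sequence rules, and conclude by monotonicity of entailment. If anything, you are more careful than the paper (which tersely asserts $\Delta_1=\Delta$ where intra-block contexts can in fact grow, and which leaves the representational stability of $\spVar$ and slot expressions implicit), so your auxiliary claims are refinements rather than a departure.
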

\begin{proof}
Denote $f=\text{getFunc}(P, \pcVar)$, $\ell=\text{getFunc}(P, \pcVar)$, $(\Delta_0, \RType_0, \MType_0)=\PType(f)(f)$, and $(\Delta_1, \_, \_)=\PType(f)(\ell)$.
\textsc{Typing-Func} and typing rules for instructions sequences implies that
\begin{align*}
    & \forall s\in \dom{\MType_0}.\; \hasVal{\Delta_0}{s \cap [\spVar, \spVar+8)=\emptyset},\\
    & \Delta_0 \subseteq \Delta_1, \qquad \Delta_1=\Delta.
\end{align*}
Furthermore, Lemma~\ref{lemma:same-dom-mtype} implies that $\dom{\MType}=\dom{\MType_0}$.
Therefore, the statement holds.
\end{proof}

\begin{lemma}
Suppose $\hasType{}{(R, M, (\sigma_f, \sigma_b))}{(\Delta, \RType, \MType)}$ where $\sigma_f$ and $\sigma_b$ are substitutes that instantiate type variables using constants.
For any arithmetic statement $e$, if $\hasVal{\Delta}{e=\mathit{true}}$, then $(\sigma_f\cup\sigma_b)(e)=\mathit{true}$.
\label{lemma:stmt-instance}
\end{lemma}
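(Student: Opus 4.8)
The plan is to read this as a soundness statement for the entailment relation $\hasVal{\Delta}{\cdot}$: the concrete substitution $\sigma = \sigma_f \cup \sigma_b$ is one particular valuation of the type variables, and the well-formedness premise guarantees that this valuation already satisfies every constraint in $\Delta$. Since $\hasVal{\Delta}{e=\mathit{true}}$ asserts that $e$ holds under \emph{every} valuation satisfying $\Delta$, instantiating that universally quantified statement at $\sigma$ immediately yields $\sigma(e)=\mathit{true}$. No real computation is involved; the work is entirely in lining up the semantics.

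First I would invert the hypothesis $\hasType{}{(R, M, (\sigma_f, \sigma_b))}{(\Delta, \RType, \MType)}$ using the only applicable rule, \textsc{Reg-Mem-Type} (Figure~\ref{fig:well-formed-states-revision}). Among its premises, the relevant one is $\sigma(\Delta)=\qty{\mathit{true}}$ with $\sigma=\sigma_f\cup\sigma_b$; that is, applying $\sigma$ to each constraint in $\Delta$ produces a closed formula that evaluates to true. The same inversion also gives $\text{getInputVar}(\dom{\sigma_f})=\dom{\sigma_f}$ and $\text{getInputVar}(\dom{\sigma_b})=\emptyset$, so $\dom{\sigma_f}$ and $\dom{\sigma_b}$ are disjoint and $\sigma$ is a well-defined total substitution assigning a constant to every input and block variable of the current state type.

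Next I would recall the semantics of the entailment judgment $\hasVal{\Delta}{e=\mathit{true}}$: it holds exactly when, for every valuation $\nu$ of the type variables that makes all constraints in $\Delta$ true, the formula $\nu(e)$ is true (this matches the SMT-backed reading of $\Delta\vdash\cdots$ described in Section~\ref{sec:tal-typing-rules}). I would then instantiate $\nu := \sigma$. By the inversion step, $\sigma$ satisfies every constraint in $\Delta$, so it qualifies as such a valuation; hence $(\sigma_f\cup\sigma_b)(e)=\sigma(e)=\mathit{true}$, as required.

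The only subtlety, and the one point I would be careful about, is that instantiating $\nu := \sigma$ is legitimate only if $\sigma$ is total over the free variables of both $\Delta$ and $e$. Coverage of $\Delta$'s variables follows directly from the well-formedness premise, which forces $\dom{\sigma}$ to be precisely the input variables together with the current block's block variables. For $e$ I would rely on the standing convention that a formula appearing in an entailment $\hasVal{\Delta}{\cdot}$ ranges only over the variables of the ambient type context; this is exactly how every use of the lemma instantiates it (for instance with $e=\sigmaOp(e_1)$ in the $\instr{jne}$ case of the type-safety proof), so $e$ introduces no variable outside $\dom{\sigma}$ and $\sigma(e)$ is closed. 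Given that coverage, the conclusion is just the semantic soundness of entailment applied to the distinguished satisfying assignment $\sigma$.
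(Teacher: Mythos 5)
Your proposal is correct and follows essentially the same route as the paper's proof: invert \textsc{Reg-Mem-Type} to obtain $(\sigma_f\cup\sigma_b)(\Delta)=\qty{\mathit{true}}$, observe that $\sigma_f\cup\sigma_b$ is therefore a satisfying valuation of $\Delta$, and conclude by the semantics of the entailment judgment. Your additional attention to the totality of the substitution over the free variables of $\Delta$ and $e$ is a reasonable elaboration of a point the paper leaves implicit, but it does not change the argument.
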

\begin{proof}
According to \textsc{Reg-Mem-Type}, $(\sigma_f\cup\sigma_b)(\Delta)=\qty{\mathit{true}}$, which implies that $(\sigma_f\cup\sigma_b)$ is a valid substitute map that instantiates all type variables in $\Delta$.
Therefore, the statement holds.
\todo{@adam: do you think this is too naive to be a lemma? Should I keep this?}
\end{proof}

\begin{lemma}
For all type substitute $\sigma_f$, $\sigma_b$, and $\sigma_0$, if $\dom{\sigma_f}\cap \dom{\sigma_b}=\emptyset$ and $\dom{\sigma_f}\cap \dom{\sigma_0}=\emptyset$, then for all $e$,
\begin{equation*}
    (\sigma_f\cup ((\sigma_f\cup \sigma_b) \circ \sigma_0))(e)=((\sigma_f\cup \sigma_b)\circ \sigma_0)(e).
\end{equation*}
\label{lemma:br-substitute}
\end{lemma}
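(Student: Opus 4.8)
The plan is to reduce the claimed identity to a pointwise statement about how the two substitutions act on individual variables, and then observe that applying $\sigma_f$ "on the outside" is redundant because $(\sigma_f\cup \sigma_b)\circ \sigma_0$ already restricts to $\sigma_f$ on $\dom{\sigma_f}$. Write $\sigma'=(\sigma_f\cup \sigma_b)\circ \sigma_0$ for the substitution appearing on the right-hand side. Since applying a substitution to an expression $e$ merely rewrites each variable occurrence in $e$, it suffices to show that $\sigma_f\cup \sigma'$ and $\sigma'$ agree on every variable $x$; the statement for an arbitrary $e$ then follows by a trivial structural induction over the syntax of $e$.

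First I would unfold $\sigma'$ on the domain of $\sigma_f$. Take any $x\in \dom{\sigma_f}$. By the hypothesis $\dom{\sigma_f}\cap \dom{\sigma_0}=\emptyset$ we have $x\notin \dom{\sigma_0}$, so $\sigma_0(x)=x$ and hence $\sigma'(x)=(\sigma_f\cup \sigma_b)(x)$. By the hypothesis $\dom{\sigma_f}\cap \dom{\sigma_b}=\emptyset$ we have $x\notin \dom{\sigma_b}$, so $(\sigma_f\cup \sigma_b)(x)=\sigma_f(x)$. Thus $\sigma'(x)=\sigma_f(x)$ for every $x\in \dom{\sigma_f}$; in other words, the two components of the union $\sigma_f\cup \sigma'$ already agree wherever their domains overlap.

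With this agreement in hand, I would finish by a three-way case split on an arbitrary variable $x$. If $x\in \dom{\sigma_f}$, then both $(\sigma_f\cup \sigma')(x)$ and $\sigma'(x)$ equal $\sigma_f(x)$, using the agreement just established. If $x\in \dom{\sigma'}\setminus \dom{\sigma_f}$, the binding contributed by $\sigma_f$ is vacuous and both sides equal $\sigma'(x)$. If $x$ lies in neither domain, both sides equal $x$. Hence $\sigma_f\cup \sigma'$ and $\sigma'$ coincide as substitutions, and therefore $(\sigma_f\cup ((\sigma_f\cup \sigma_b)\circ \sigma_0))(e)=((\sigma_f\cup \sigma_b)\circ \sigma_0)(e)$ for all $e$, as required.

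I do not expect a genuine obstacle here: the lemma is essentially bookkeeping about disjoint-domain unions and composition of substitutions, and the two disjointness hypotheses are exactly what makes the outer copy of $\sigma_f$ redundant. The only point that warrants care is the convention for the union $\sigma_f\cup \sigma'$ when the domains overlap on $\dom{\sigma_f}$; the argument above sidesteps any ambiguity by proving outright that $\sigma_f$ and $\sigma'$ agree on that overlap, so the conclusion holds under any reasonable reading of the union operator.
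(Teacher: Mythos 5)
Your proof is correct and follows essentially the same route as the paper's: reduce the identity to agreement on individual variables, use $\dom{\sigma_f}\cap\dom{\sigma_0}=\emptyset$ to show $((\sigma_f\cup\sigma_b)\circ\sigma_0)(x)=\sigma_f(x)$ on $\dom{\sigma_f}$, handle variables outside $\dom{\sigma_f}$ trivially, and lift to arbitrary expressions by structural induction. Your explicit use of $\dom{\sigma_f}\cap\dom{\sigma_b}=\emptyset$ to resolve any ambiguity in the union is a minor tightening of a step the paper leaves implicit, but it is not a different argument.
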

\begin{proof}
    First, we prove that the statement holds when $e=x$ is a type variable.
    If $x\in \dom{\sigma_f}$, then $(\sigma_f\cup ((\sigma_f\cup \sigma_b) \circ \sigma_0))(x) = \sigma_f(x)$.
    Since $\dom{\sigma_f}\cap \dom{\sigma_0}=\emptyset$, then $x\not\in \dom{\sigma_0}$.
    In this case, applying $\sigma_0$ to instantiate $x$ will still get $x$, i.e., $\sigma_0(x)=x$. Then,
    \begin{equation*}
        ((\sigma_f\cup \sigma_b) \circ \sigma_0)(x)= (\sigma_f\cup \sigma_b)(\sigma_0(x))=(\sigma_f\cup \sigma_b)(x)=\sigma_f(x).
    \end{equation*}
    Thus, the statement holds for this case.

    If $x\not \in \dom{\sigma_f}$, then 
    \begin{equation*}
        (\sigma_f\cup ((\sigma_f\cup \sigma_b) \circ \sigma_0))(e) = ((\sigma_f\cup \sigma_b) \circ \sigma_0)(e).
    \end{equation*}
    Therefore, the statement holds when $e$ is a type variable.
    
    For a general type expression $e$, the statement can be proved by applying the statement to each type variable in $e$.
\end{proof}

\begin{lemma}
For all type substitute $\sigmaFp$, $\sigmaBp$, $\sigma_b$, $\sigmaCall$, and $\sigmaRet$, if 
\begin{itemize}
    \item $\sigmaCall$ is a substitute that instnatiates each type variable in its domain using expressions over variables in $\dom{\sigmaFp}\cup \dom{\sigmaBp}$.
    \item $\sigmaRet=\overrightarrow{x_1}\rightarrow \overrightarrow{x_2}$ is a substitute that instantiates each type variable in its domain using another unique type variable;
    \item $\dom{\sigmaRet}=\dom{\sigma_b}$ and $\dom{\sigmaRet}\cap \dom{\sigmaCall}=\emptyset$;
    \item $\dom{\sigmaRet^{-1}}\cap \dom{\sigmaFp}=\dom{\sigmaRet^{-1}}\cap \dom{\sigmaBp}=\dom{\sigmaFp}\cap \dom{\sigmaBp}=\emptyset$
\end{itemize}
then for all $e$,
\begin{equation*}
    ((\sigmaFp \cup (\sigma_b\circ \sigmaRet^{-1})\cup \sigmaBp)\circ (\sigmaCall\cup \sigmaRet))(e)=(((\sigmaFp\cup \sigmaBp)\circ \sigmaCall)\cup \sigma_b)(e).
\end{equation*}
\label{lemma:call-ret-substitute}
\end{lemma}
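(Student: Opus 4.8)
The plan is to prove the identity variable-by-variable. Every map involved ($\sigmaFp$, $\sigmaBp$, $\sigma_b$, $\sigmaCall$, $\sigmaRet$, and therefore the two composite maps on each side) is a substitution extended homomorphically over the expression grammar $e \Coloneqq x \mid v \mid \top \mid e_1\oplus e_2 \mid \ominus e$, so both sides commute with $\oplus$ and $\ominus$ and fix constants. By a routine structural induction on $e$ it thus suffices to establish the equation when $e = x$ is a single type variable. Before the case split I would record two auxiliary observations used throughout: (i) on the variable set $\dom{\sigmaFp}\cup\dom{\sigmaBp}$ the outer substitution $\sigmaFp\cup(\sigma_b\circ\sigmaRet^{-1})\cup\sigmaBp$ agrees with $\sigmaFp\cup\sigmaBp$, since its only extra component $\sigma_b\circ\sigmaRet^{-1}$ has domain $\dom{\sigmaRet^{-1}}$, which the hypotheses make disjoint from both $\dom{\sigmaFp}$ and $\dom{\sigmaBp}$; and (ii) $\dom{\sigma_b}=\dom{\sigmaRet}$ is disjoint from $\dom{\sigmaCall}$.

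Next I would split on membership of $x$. If $x\in\dom{\sigmaCall}$, the inner map $\sigmaCall\cup\sigmaRet$ sends $x\mapsto\sigmaCall(x)$, whose variables all lie in $\dom{\sigmaFp}\cup\dom{\sigmaBp}$ by hypothesis, so by (i) the left side equals $(\sigmaFp\cup\sigmaBp)(\sigmaCall(x))$; on the right, $x\notin\dom{\sigma_b}$ by (ii), so the right side is also $(\sigmaFp\cup\sigmaBp)(\sigmaCall(x))$. If $x\in\dom{\sigmaRet}=\dom{\sigma_b}$, the inner map gives $\sigmaRet(x)\in\dom{\sigmaRet^{-1}}$, and applying the outer map yields $(\sigma_b\circ\sigmaRet^{-1})(\sigmaRet(x))=\sigma_b(x)$ (using $\sigmaRet^{-1}\circ\sigmaRet=\mathrm{id}$ on $\dom{\sigmaRet}$ and that $\sigmaRet(x)\notin\dom{\sigmaFp}\cup\dom{\sigmaBp}$); on the right, $x\notin\dom{\sigmaCall}$, so the right side is $\sigma_b(x)$. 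Finally, if $x\notin\dom{\sigmaCall}\cup\dom{\sigmaRet}$, the inner map fixes $x$ and the right side collapses (via $x\notin\dom{\sigmaCall}$ and $x\notin\dom{\sigma_b}$) to $(\sigmaFp\cup\sigmaBp)(x)$, which matches the outer map applied to $x$ whenever $x\notin\dom{\sigmaRet^{-1}}$.

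The hard part will be precisely this last sub-case: when $x\notin\dom{\sigmaCall}\cup\dom{\sigmaRet}$ but $x\in\dom{\sigmaRet^{-1}}=\range{\sigmaRet}$, the outer map would produce $\sigma_b(\sigmaRet^{-1}(x))$ while the right-hand side produces $x$, and these differ in general. I would resolve it by exploiting the freshness built into $\sigmaRet=\overrightarrow{x_1}\rightarrow\overrightarrow{x_2}$: its targets $\overrightarrow{x_2}$ are, by construction, unique variables introduced solely to rename the callee's new return-block variables, so they never occur in the expressions $e$ to which the lemma is actually applied. In the $\instr{retq}$ case of Theorem~\ref{thm:type-safety-revision}, for instance, $e$ ranges over the callee's return-context constraints, written over $\dom{\sigmaRet}$ and $\dom{\sigmaCall}$ but never over $\range{\sigmaRet}$. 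Concretely I would either fold $\range{\sigmaRet}\cap(\dom{\sigmaCall}\cup\dom{\sigmaRet})=\emptyset$ and the freshness-of-targets condition into the hypotheses on $\sigmaRet$ (restricting the statement to $e$ avoiding $\range{\sigmaRet}$), or bake this directly into the definition of the return substitution. The remaining content is purely the domain-disjointness bookkeeping above, which is mechanical once the invariants are stated carefully.
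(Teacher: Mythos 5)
Your proposal follows the same route as the paper's own proof: reduce to the case of a single type variable by structural induction over the expression grammar, then split on whether $x\in\dom{\sigmaCall}$ or $x\in\dom{\sigmaRet}$; your computations in those two cases coincide with the paper's almost line for line (including the use of $(\sigma_b\circ\sigmaRet^{-1})\circ\sigmaRet=\sigma_b$ on $\dom{\sigmaRet}$ and the disjointness of $\dom{\sigmaRet^{-1}}$ from $\dom{\sigmaFp}$ and $\dom{\sigmaBp}$). The difference is that the paper stops there: its proof never treats variables outside $\dom{\sigmaCall}\cup\dom{\sigmaRet}$, even though the lemma quantifies over all $e$. You are right that this omitted case is exactly where the statement breaks: for $x\in\range{\sigmaRet}$ with $x\notin\dom{\sigmaCall}\cup\dom{\sigmaRet}$, the hypotheses force the right-hand side to fix $x$ (it lies outside $\dom{\sigmaCall}$, $\dom{\sigma_b}$, $\dom{\sigmaFp}$, and $\dom{\sigmaBp}$), while the left-hand side rewrites it to $\sigma_b(\sigmaRet^{-1}(x))$, so the identity fails whenever $\sigma_b$ is nontrivial there. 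Your proposed repair --- demanding that the targets $\overrightarrow{x_2}$ of $\sigmaRet$ be fresh for $e$, equivalently restricting the lemma to expressions over $\dom{\sigmaCall}\cup\dom{\sigmaRet}$ --- is precisely the condition under which the paper actually invokes the lemma: in the $\instr{retq}$ and $\instr{callq}$ cases of Theorems~\ref{thm:type-safety-revision} and~\ref{thm:functional-correctness-revision}, it is applied only to expressions drawn from the callee's return-state type $\PType(f)(f_\text{ret})$, whose variables are the callee's input variables (instantiated by $\sigmaCall$) and return-block variables (instantiated by $\sigmaRet$). So your proof is correct on the domain where the lemma is used, and your third-case analysis exposes a genuine imprecision in the paper's statement and proof; the freshness hypothesis should be made explicit, either in the lemma or in the definition of $\sigmaRet$.
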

\begin{proof}
First, we prove that the statement holds when $e=x$ is a type variable.
If $x\in \dom{\sigmaCall}$, then $x\not\in \sigma_b$ and $\sigmaCall(x)\in \dom{\sigmaFp}\cup \dom{\sigmaBp}$. Then, we have
\begin{align*}
    & ((\sigmaFp \cup (\sigma_b\circ \sigmaRet^{-1})\cup \sigmaBp)\circ (\sigmaCall\cup \sigmaRet))(x)\\
    =\; & ((\sigmaFp \cup (\sigma_b\circ \sigmaRet^{-1})\cup \sigmaBp)\circ \sigmaCall)(x)\\
    =\; & ((\sigmaFp \cup \sigmaBp)\circ \sigmaCall)(x)\\
    =\; & (((\sigmaFp \cup \sigmaBp)\circ \sigmaCall)\cup \sigma_b) (x).
\end{align*}
If $x\in \dom{\sigmaRet}$, then $x\in\dom{\sigma_b}$. Then, we have
\begin{align*}
    & ((\sigmaFp \cup (\sigma_b\circ \sigmaRet^{-1})\cup \sigmaBp)\circ (\sigmaCall\cup \sigmaRet))(x)\\
    =\; & ((\sigmaFp \cup (\sigma_b\circ \sigmaRet^{-1})\cup \sigmaBp)\circ \sigmaRet)(x)\\
    =\; & ((\sigma_b\circ \sigmaRet^{-1})\circ \sigmaRet)(x)
    = \sigma_b(x) \\
    =\; & (((\sigmaFp \cup \sigmaBp)\circ \sigmaCall)\cup \sigma_b) (x).
\end{align*}
Therefore, the statement holds when $e$ is a type variable.

For a general type expression $e$, the statement can be proved by applying the statement to each type variable in $e$.
\end{proof}

\section{Transformation}
\subsection{Base Transformation Pass}
\label{sec:base-trans-pass}
We begin with formalizing our base transformation pass $\Compiler$, which takes a well-typed program and generates a program that satisfies public noninterference.
For this pass, besides well-formedness, we additionally assume that for all functions of the input program, pointer arguments are passed via registers for ease of illustration.

In Algorithm~\ref{alg:compiler-base}, we define how $\Compiler$ transforms each instruction. In the following part, we will overload the same symbol to represent transforming instruction sequences or the whole program by applying $\Compiler$ to each instruction (e.g., $\Compiler(P)$).

In this algorithm, the transformation strategy map $\omega_p$ is generated as described in Algorithm~\ref{alg:trans-strategy}. The two passes $\CompilerOp$ and $\CompilerPtr$ are formalized in Algorithm~\ref{alg:compiler-op}-\ref{alg:compiler-ptr}.
\revision{
When generating the transformation strategy for each slot $s$, we additionally require that if $s$ is not a local stack slot (i.e., both the current function and its caller will access it), and its transformation strategy is $\transOp$ (i.e., $\omega(s)=\transOp$), then its corresponding taint type must be constant (either $0$ or $1$) (see the assertion in Algorithm~\ref{alg:trans-strategy}).
}
This restriction is crucial to guarantee functional correctness since the function and its caller must agree on where the slot should be relocated and whether to add $\delta$ to the memory operand when accessing the slot.
\todo{Prove that all PC in the same func share the same $\omega$}
\todo{One thing I may need to discuss somewhere: if callee's slot is \transOp, then its corresponding slot in caller must also be \transOp.}

\begin{algorithm}
\caption{Transform Instruction ($\Compiler$)}
\begin{algorithmic}
\Require Instruction to transform $\instVar$ and its $\pcVar$, original program $P$ and program type $\PType$
\Ensure Transformed instructions $\iota$
\State $F_p \gets \text{getFunc}(P, \pcVar)$
\State $(\Delta_p, \RType_p, \MType_p) \gets \text{getStateType}(\PType, \pcVar)$
\State $\omega_p \gets \text{getTransStrategy}(\MType_p)$
\Switch{$\instVar$}
    \Case{$\inst{movq/addq/cmpq}{\opVar_1, \opVar_0}$}
        \State $\iota \gets \inst{movq/addq/cmpq}{\CompilerOp(\omega_p, \opVar_1), \CompilerOp(\omega_p, \opVar_0)}$
    \EndCase
    \Case{$\inst{pushq/popq}{r}$}
        \State $\iota \gets \inst{pushsecq/popsecq}{r}$
    \EndCase
    \Case{$\inst{callq}{f_c\opMode{\sigmaCall, \sigmaRet}}$}
        \State $\SType_c \gets \text{getStateType}(\PType, \text{getBlockPc}(P, f_c))$
        \State $\iota_\text{call}, \sigmaCall' \gets \CompilerPtr((\Delta_p, \RType_p, \MType_p), \sigmaCall(\SType_c),\sigmaCall)$
        \State $\iota \gets \iota_\text{call}; \inst{callq}{f_c\opMode{\sigmaCall', \sigmaRet}}$
    \EndCase
    \Default
        \State $\iota \gets \instVar$
    \EndDefault
\EndSwitch
\end{algorithmic}
\label{alg:compiler-base}
\end{algorithm}

\begin{algorithm}
\caption{Determine Transformation Strategy}
\begin{algorithmic}
\Require Memory type $\MType$
\Ensure Transformation strategy $\omega$
\State $T \gets \qty{}$
\State \Comment{$T$ is a map from each base pointers to a set of taint types of slots referenced by the pointer}
\For{$s\in \dom{\MType}$}
    \State $\ptrVar \gets \text{getPtr}(s)$
    \State $(\_, (\_, \tau)) \gets \MType[s]$
    \State $T[\ptrVar \mapsto T(\ptrVar)\cup \qty{\tau}]$
\EndFor
\State $\omega \gets \qty{}$
\For{$s\in \dom{\MType}$}
    \State $\ptrVar \gets \text{getPtr}(s)$
    \State $(\_, (\_, \tau)) \gets \MType[s]$
    \If{$\ptrVar \neq \spVar$ and $\text{cardinal}(T[\ptrVar])=1$}
        \State $\omega[s \mapsto \transPtr]$
    \Else
        \State $\omega[s \mapsto \transOp]$
        \State \textbf{assert} $\text{isLocalStack}(s) \lor \tau\in\qty{0, 1}$
    \EndIf
\EndFor
\end{algorithmic}
\label{alg:trans-strategy}
\end{algorithm}

\begin{algorithm}
\caption{Transform Operand ($\CompilerOp$)}
\begin{algorithmic}
\Require Operand to transform $\opVar$, transformation strategy map $\omega$
\Ensure Transformed operand $\opVar'$
\State $\opVar' \gets \opVar$
\Switch{$\opVar$}
    \Case{$\memOp{r_b}{r_i}{i_s}{i_d}\opMode{s, \tau}$}
        \If{$\omega(s)=\transOp \land \tau \neq 0$}
            \State $\opVar' \gets \memOp{r_b}{r_i}{i_s}{\delta + i_d}\opMode{s+\delta, \tau}$
        \EndIf
    \EndCase
\EndSwitch
\end{algorithmic}
\label{alg:compiler-op}
\end{algorithm}

\begin{algorithm}
\caption{Transform Pointer Arguments for Function Call ($\CompilerPtr$)}
\begin{algorithmic}
\Require State type at the call site $(\Delta_p, \RType_p, \MType_p)$, state type at the beginning of the callee function $(\Delta_c, \RType_c, \MType_c)$, original type substitution $\sigmaCall$ of the function call
\Ensure Instructions to transform pointer arguments $\iota$, transformed type substitution $\sigmaCall'$
\State $\omega_p \gets \text{getTransStrategy}(\MType_p)$
\State $\omega_c \gets \text{getTransStrategy}(\MType_c)$
\State $S_\transPtr \gets \qty{}$
\For{$s_c\in \MType_c$}
    \State Find $s_p\in \MType_p$ such that $\sigmaCall(s_c)\subseteq s_p$
    \State $(\sval_c, (\_, \tau_c)) \gets \MType_c[s_c]$
    \State \textbf{assert} $\text{isSpill}(s_p) \Rightarrow \sval_c=\emptyset$
    \todo{Explain this when explaining spill}
    \todo{Check why do I need this???}
    \State \textbf{assert} $\omega_c(s_c)=\transOp \Rightarrow \omega_p(s_p)=\transOp$
    \shixin{I assert this for the convenience of proof. Maybe I need to explain it somewhere.}
    \If{$\omega_c(s_c)=\transPtr \land \omega_p(s_p) = \transOp \land \sigmaCall(\tau_c) \neq 0$}
        \State \Comment{If the pointer is not transformed in the caller, and the slot is tainted (needs to be relocated), then add $\delta$ to the pointer before passing to the callee}
        \State \textbf{assert} $\neg \text{isNonChangeExp}(\text{getPtr}(s_c))$
        \State $S_\transPtr \gets S_\transPtr \cup \qty{\text{getPtr}(s_c)}$
    \EndIf
\EndFor
\State $\iota \gets []$
\State $\sigmaCall' \gets \sigmaCall$
\For{$\ptrVar \in S_\transPtr$}
    \State Find unique $r$ such that $\RType_c[r]=(\ptrVar, 0)$
    \State $\iota \gets \inst{addq}{\delta, r}; \iota$
    \State $\sigmaCall'[\ptrVar \mapsto \sigmaCall'(\ptrVar)+\delta]$
\EndFor
\State \textbf{assert} $\forall r, \RType_c[r]=(e, \_).\; e\in S_\transPtr \lor \sigmaCall'(e)=\sigmaCall(e)$
\State \textbf{assert} $\forall s, \MType_c[s]=(\_, (e, \_)).\; \sigmaCall'(e)=\sigmaCall(e)$
\State \Comment{Sanity check: each transformed pointer argument is passed through a unique register and is independent from other registers and memory slots. \todo{Consider adding constraints for $\Delta$.}
}
\end{algorithmic}
\label{alg:compiler-ptr}
\end{algorithm}

\subsection{Optimization Pass}
\label{sec:opt-trans-pass}
We then formalize our second transformation pass $\CompilerCallee$, which restores the callee-saved registers' taint after function calls.
Intuitively, $\CompilerCallee$ ensures that public callee-saved registers will not be conservatively marked as tainted after finishing the callee function,
which prevents the hardware defense from introducing extra pipeline delays.

For the ease of implementing and describing this pass, we additionally assume that each function of the input program will use $\instr{pushq}$ to push all callee-saved registers it needs to change to the stack in the beginning, and restore them before returning to the call site.
With this assumption, we can check the first block of the function to find in which slot the function saves the callee-saved registers.
In Algorithm~\ref{alg:compiler-callee}, we define how $\CompilerCallee$ transforms each instruction (i.e., $\text{getSavedMemSlot}(F_p, r)$).
Similarly, we can also use $\CompilerCallee(P)$ to represent the output program generated by applying $\CompilerCallee$ to each instruction.
\todo{Consider moving this to the end of the appendix, sitting after the proof.}

\begin{algorithm}
\caption{Restore Callee-saved Register Taint ($\CompilerCallee$)}
\begin{algorithmic}
\Require Instruction to transform $\instVar$ and its $\pcVar$, original program $P$, and program type $\PType$
\Ensure Transformed instructions $\iota$
\If{$\instVar = \inst{callq}{f_c\opMode{\sigmaCall, \sigmaRet}}$}
\State $F_p \gets \text{getFunc}(P, \pcVar)$
\State $(\Delta_p, \RType_p, \MType_p) \gets \text{getStateType}(\PType, \pcVar)$
\State $\iota_\text{before}, \iota_\text{after} \gets [], []$
\State Let $S$ be the set of callee-saved registers used in $F_p$
\For{$r \in S$}
    \If{$\RType[r] = (\_, 0)$}
        \Comment{$r$ is untainted}
        \State $x \gets \text{getSavedMemSlot}(F_p, r) - \RType_p[r_\reg{rsp}]$
        \State $\iota_\text{before} \gets \inst{movq}{r, x(r_\reg{rsp})}; \iota_\text{before}$
        \State $\iota_\text{after} \gets \inst{movq}{x(r_\reg{rsp}), r}; \iota_\text{after}$
    \EndIf
\EndFor
\State $\iota \gets \iota_\text{before};\instVar;\iota_\text{after}$
\Else
\State $\iota \gets \instVar$
\EndIf
\end{algorithmic}
\label{alg:compiler-callee}
\end{algorithm}

Note that we omit the discussion for proving functional correctness and public noninterference of the program generated by $\CompilerCallee$.
The overall idea is to build the simulation relation between states of running $\Compiler(P)$ and $\CompilerCallee(\Compiler(P))$.

\subsection{Simulation Relation}
\label{sec:sim-rel}
\shixin{Taint vars are also in $\text{getNoChangeVar}(\Delta)$.}
\begin{figure}
    \centering
\begin{mathpar}
\inferrule[Sim-PC]{
\text{getFuncInstSeq}(P, \pcVar) = (f, I)\\
\text{getFuncInstSeq}(\Compiler(P), \pcVar') = (f, \Compiler(I))\\
}{
\hasVal{P, \PType}{\pcVar' \prec \pcVar}
}\and
\inferrule[Sim-Var-Map]{
\dom{\sigma}=\dom{\sigma'}\\
\forall x\in \dom{\sigma}.\; \text{isNonChangeExp}(x) \Rightarrow \sigma'(x)=\sigma(x)
}{
\hasVal{\Delta}{\sigma' \prec \sigma}
}\and
\inferrule[Sim-Val]{
(e = \top \land v'=v) \lor
(e\neq \top \land v'=\sigma'(e) \land v=\sigma (e))
}{
\hasVal{}{((v', t), \sigma') \prec_{e, \tau} ((v, t), \sigma)}
}\and
\inferrule[Sim-Reg]{
\forall r\in \RType.\;
(R'[r], \sigma') \prec_{\RType[r]} (R[r], \sigma)
}{
\hasVal{}{(R', \sigma') \prec_\RType (R, \sigma)}
}\and
\inferrule[Sim-Mem]{
\omega = \text{getTransStrategy}(\MType)\\\\
\forall s, \MType[s]=(\sval, (e, \tau)), x=\text{getPtr}(s).\;\\
((\omega(s)=\transOp \Rightarrow \sigma(s)\subseteq \stackPub \land \\
(\text{isLocalStack}(s) \lor \tau\in \qty{0, 1})
)\land
\deltaOp=\text{getOpShift}(\MType, \omega, s)\land\\
\deltaPtr=\sigma'(x)-\sigma(x)\land \deltaPtr=\text{getPtrShift}(\MType, \omega, \sigma, s)\land\\
(\sigma(\sval)=\emptyset \lor (M'[\sigma(\sval)+\deltaPtr+\deltaOp], \sigma')\prec_{e, \tau} (M[\sigma(\sval)], \sigma)))\\
}{
\hasVal{}{(M', \sigma') \prec_{\MType} (M, \sigma)}
}\and
\inferrule[Simulation-Relation-Helper]{
\hasVal{P, \PType}{\pcVar' \prec \pcVar}\\
(\Delta, \RType, \MType) = \text{getStateType}(\PType, \pcVar)\\
\hasType{P, \PType}{(R, M, (\sigma_f, \sigma_b))}{(\Delta, \RType, \MType)}\\
\hasVal{\Delta}{\sigma_f' \prec \sigma_f}\\
\hasVal{\Delta}{\sigma_b' \prec \sigma_b}\\
\sigma=\sigma_f \cup \sigma_b\\
\sigma'=\sigma_f'\cup \sigma_b'\\
\hasVal{}{(R', \sigma') \prec_\RType (R, \sigma)}\\
\hasVal{}{(M', \sigma') \prec_{\MType} (M, \sigma)}
}{
\hasVal{P, \PType}{(R', M', (\pcVar', (\sigma_f', \sigma_b')))\prec_\text{helper} (R, M, (\pcVar, (\sigma_f, \sigma_b)))}
}\and
\inferrule[Simulation-Relation-Base]{
\hasVal{P, \PType}{(R', M', (\pcVar', (\sigma_f', \sigma_b')))\prec_\text{helper} (R, M, (\pcVar, (\sigma_f, \sigma_b)))}\\
\typeChecked{P, \PType}{(R, M, (\pcVar, (\sigma_f, \sigma_b)))}
}{
\hasVal{P, \PType}{(R', M', (\pcVar', (\sigma_f', \sigma_b')))\prec (R, M, (\pcVar, (\sigma_f, \sigma_b)))}
}\and
\inferrule[Simulation-Relation-Other]{
\hasVal{P, \PType}{(R', M', (\pcVar', (\sigma_f', \sigma_b')))\prec_\text{helper} (R, M, (\pcVar, (\sigma_f, \sigma_b)))}\\
\typeChecked{P, \PType}{(R, M, (\pcVar, (\sigma_f, \sigma_b));\Phi)}\\
\Phi = (\pcVar_p, (\sigmaFp, \sigmaBp));\Phi_0\\
\Phi' = (\pcVar_{p}', (\sigmaFp', \sigmaBp')); \Phi_0'\\
\text{getInst}(\Compiler(P), \pcVar_p')=\inst{callq}{f\opMode{\sigmaCall', \sigmaRet'}}\\
\sigma_f'=(\sigmaFp'\cup \sigmaBp')\circ \sigmaCall'\\
\pcVar_{p0}'=\text{getCallPrefixPc}(\Compiler(P), \pcVar_p')\\
\hasVal{P, \PType}{(R_p', M_p', (\pcVar_{p0}', (\sigmaFp', \sigmaBp')); \Phi_0')\prec (R_p, M_p, (\pcVar_p, (\sigmaFp, \sigmaBp));\Phi_0)}\\
(\Delta, \RType, \MType)=\text{getStateType}(\PType,\pcVar)\\
\omega = \text{getTransStrategy}(P, \pcVar)\\
\sRet=[\sigma_f(\spVar), \sigma_f(\spVar)+8)\\
M[\sRet]=(\text{nextPc}(P, \pcVar_p), 0)\\
M'[\sRet]=(\text{nextPc}(\Compiler(P), \pcVar_p'), 0)\\
\forall x\not\in \text{getDom}(\MType, \sigma_f)\cup \sRet.\; M[x]=M_p[x]\\
\forall x\not\in \text{getShiftedDom}(\MType, \omega, \sigma_f, \sigma_f')\cup \sRet.\; M'[x]=M_p'[x]\\
}{
\hasVal{P, \PType}{(R', M', (\pcVar', (\sigma_f', \sigma_b'));\Phi')\prec (R, M, (\pcVar, (\sigma_f, \sigma_b));\Phi)}
}
\end{mathpar}
    \caption{Simulation Relation}
    \todo{When defining getNonChangeVar, do not forget to add $\spVar$ to it.}
    \label{fig:sim-rel-revision}
\end{figure}

In this section, we define the simulation relation between states of running the original program and the transformed program $\Compiler(P)$ as shown in Figure~\ref{fig:sim-rel-revision}.

At the high level, $\Compiler$ relocates memory objects in memory to separate secret and public data and modifies pointer values accordingly.
To justify functional correctness, we constrain the non-pointer values of registers and memory slots to be the same as those in the original state.
\todo{I am still unsure whether I should mention and explain isNonChangeExp here.}
Another key point of the simulation relation is to build a memory address (slot) map between the original and the transformed memory.
Specifically, we define helper mapping functions used in the simulation relations as follows:
\begin{multline*}
\text{getOpShift}(\MType, \omega, s)\\
=
\begin{cases}
    0 & \omega(s)=\transPtr \lor \MType[s]=(\_, (\_, 0))\\
    \delta & \text{otherwise}\\
\end{cases}
\end{multline*}
\begin{multline*}
\text{getPtrShift}(\MType, \omega, \sigma, s) \\
=
\begin{cases}
    0 & \omega(s)=\transOp \lor \sigma(s)\not\subseteq \stackPub \cup \stackSec \\ 
    \delta & \omega(s)=\transPtr \land \sigma(\MType[s])=(\_, (\_, 1))\\ 
    0, \delta & \omega(s)=\transPtr \land \sigma(\MType[s])=(\_, (\_, 0))\\ 
\end{cases}
\end{multline*}
\revision{
\begin{multline*}
\text{getShift}(\MType, \omega, \sigma, \sigma', s) \\
=
\text{getOpShift}(\MType, \omega, s)+
(\sigma'(\text{getPtr}(s))-\sigma(\text{getPtr}(s))).
\end{multline*}
\begin{equation*}
\text{getShiftedSlot}(\MType, \omega, \sigma, \sigma', s)
= \sigma(s)+ \text{getShift}(\MType, \omega, \sigma, \sigma', s)
\end{equation*}
}
\begin{multline*}
\text{getPossibleSlot}(\MType, \omega, \sigma, \sigma', s)\\
=
\begin{cases}
    \sigma(s)\cup (\sigma(s)+\delta) & \omega(s)=\transOp\\
    \sigma(s)+\sigma'(\text{getPtr}(s))-\sigma(\text{getPtr}(s)) & \omega(s)=\transPtr.
\end{cases}
\end{multline*}
\begin{align*}
    \text{getDom}(\MType, \sigma) & = \bigcup_{s\in\dom{\MType}}\sigma(s)\\
    \text{getShiftedDom}(\MType, \omega, \sigma, \sigma') & =
    \bigcup_{s\in\dom{\MType}}\text{getPossibleSlot}(\MType, \omega, \sigma, \sigma', s)
\end{align*}
\shixin{Note that getPtrShift is not fully deterministic. Since this is used to constrain $\sigma'$ which is constructed by us during proof, so it should be fine.}
\todo{Explain it a bit.}

\todo{
New thing introduced in this version: getNonChangeVar.
I need to explain this in the type system part.
}

\subsection{Functional Correctness}
\label{sec:func-correctness}
With the simulation defined in the last section, we formalize and prove the functional correctness of the transformed program generated by the base pass $\Compiler$ as shown in Theorem~\ref{thm:functional-correctness-revision}.

\begin{theorem}
If $\typeChecked{P, \PType}{S}$ and $\hasVal{P, \PType}{S'\prec S}$, then $S\xrightarrow{\instVar} S_1$, $S'\xrightarrow{\Compiler(\instVar)}^* S_1'$, and $\hasVal{P, \PType}{S_1'\prec S_1}$, or $S$ and $S'$ are termination states.
\label{thm:functional-correctness-revision}
\end{theorem}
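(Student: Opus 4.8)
The plan is to prove this forward-simulation statement by case analysis on the instruction $\instVar$ at the current program counter, reusing the machinery developed for Type Safety (Theorem~\ref{thm:type-safety-revision}). First I would unfold $\hasVal{P, \PType}{S'\prec S}$ via \textsc{Simulation-Relation-Base}/\textsc{Simulation-Relation-Other} to extract: the PC correspondence $\hasVal{P, \PType}{\pcVar' \prec \pcVar}$ (which guarantees $S'$ is about to run $\Compiler(I)$ where $S$ runs $I$), the well-typedness witness $\typeChecked{P, \PType}{S}$ with state type $(\Delta, \RType, \MType)$, the substitution relations $\hasVal{\Delta}{\sigma_f' \prec \sigma_f}$ and $\hasVal{\Delta}{\sigma_b' \prec \sigma_b}$, and the register/memory relations $\prec_{\RType}$ and $\prec_{\MType}$. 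By Type Safety, either $S$ is a termination state --- in which case PC correspondence forces $S'$ to be a termination state too and we are done --- or $S \xrightarrow{\instVar} S_1$ with $S_1$ well-typed, and I proceed to construct the matching $S_1'$.

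For each instruction form I follow the same three sub-steps: read off $\Compiler(\instVar)$ from Algorithm~\ref{alg:compiler-base}; show the transformed state can take the corresponding step(s); and re-establish $\prec$. For register-only instructions (e.g.\ \textbf{addq}/\textbf{cmpq} on registers, and the substitution bookkeeping for \textbf{jne}/\textbf{jmp}), the step is lockstep and $\prec_{\RType}$ is preserved because untainted non-pointer values coincide (via \textsc{Sim-Val}) and the taint components are identical; here the only care needed is to check that $\hasVal{\Delta}{\sigma_b' \prec \sigma_b}$ is maintained after the substitution update in \textsc{Dyn-Jne}. For load/store, the heart of the argument is the memory relation \textsc{Sim-Mem}: I would show that $\CompilerOp$ shifts the operand displacement by exactly $\text{getOpShift}$, so that the transformed address equals the original address plus $\text{getOpShift} + \text{getPtrShift}$, precisely the offset at which \textsc{Sim-Mem} stores the related value. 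For a store this reduces to checking that updating one related slot preserves the slot-wise invariant, using the fact (established in Type Safety) that distinct slots are non-overlapping and that $\omega$ agrees with the taint type.

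The two genuinely multi-step cases are \textbf{callq} and \textbf{retq}. For \textbf{callq}, $\Compiler$ emits the prefix $\iota_\text{call}$ produced by $\CompilerPtr$ (Algorithm~\ref{alg:compiler-ptr}), a block of $\inst{addq}{\delta, r}$ instructions, followed by the call; I would first show each $\inst{addq}{\delta, r}$ is a stutter step that shifts the designated pointer register by $\delta$ and updates $\sigmaCall$ to $\sigmaCall'$ while keeping all other components related, and then show that at the callee entry the relation $\prec$ holds again. Re-establishing the callee-side memory relation is where the assertions inside $\CompilerPtr$ are used: that $\omega_c(s_c)=\transOp \Rightarrow \omega_p(s_p)=\transOp$ and that each transformed pointer is carried in a unique register independent of the others, so that the per-slot shifts in the caller and callee are consistent. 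For \textbf{retq}, I would invoke \textsc{Simulation-Relation-Other} to recover the caller frame's relation stored on the $\Phi$-stack, then check that popping the return address and restoring the substitution via $\sigmaRet^{-1}$ (as in \textsc{Dyn-Retq}) re-yields $\prec$; Lemma~\ref{lemma:call-ret-substitute} handles the substitution algebra, exactly as in Type Safety.

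I expect the \textbf{callq} case to be the main obstacle. The difficulty is that it is the only place where the memory-relocation discipline crosses a function boundary: caller and callee maintain independent $\omega$ maps and independent pointer shifts, and the simulation invariant must translate faithfully through $\sigmaCall$. Concretely, the challenge is to prove that the per-slot shift $\text{getShift}$ computed in the caller's view matches the one computed in the callee's view for every shared slot --- which hinges on the $\CompilerPtr$ assertions and on the non-determinism of $\text{getPtrShift}$ being resolvable by a suitable choice of $\sigma'$ (which is ours to construct during the proof). Managing this while simultaneously tracking the newly pushed return address and the untouched-memory frame conditions of \textsc{Simulation-Relation-Other} is the most delicate bookkeeping in the argument.
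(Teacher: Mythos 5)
Your proposal is correct and follows essentially the same route as the paper's proof: case analysis on $\instVar$, invoking Type Safety for well-typedness of the successor state, lockstep/stutter step matching, address-shift accounting via $\text{getOpShift}+\text{getPtrShift}$ for loads/stores, and the substitution algebra ($\sigmaCall$, $\sigmaRet^{-1}$, Lemma~\ref{lemma:call-ret-substitute}) together with the $\CompilerPtr$ assertions for calls and returns. Your diagnosis of the cross-function shift-consistency difficulty, including the observation that the nondeterminism of $\text{getPtrShift}$ is resolved by the prover's freedom in constructing $\sigma'$, matches exactly what the paper isolates in Lemmas~\ref{lemma:sim-mem-same-shift} and~\ref{lemma:sim-mem-substitute-general} (though in the paper that delicacy surfaces as much in the $\instr{retq}$ case, via $\text{updateMem}$, as in $\instr{callq}$).
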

\begin{proof}
Denote
\begin{align*}
    S & =(R, M, (\pcVar, (\sigma_f, \sigma_b); \Phi)) & \sigma=\sigma_f\cup \sigma_b\\
    S' & =(R', M', (\pcVar', (\sigma_f', \sigma_b')); \Phi') & \sigma'=\sigma_f'\cup \sigma_b.
\end{align*}
According to the definitions of well-formedness and the simulation relation, $\pcVar$ and $\pcVar'$ are valid PCs in $P$ and $\Compiler(P)$, respectively. Then, there should exist $f$, $\instVar$, $I$ such that $\text{getFuncInstSeq}(P, \pcVar)=(f, \instVar;I)$ and $\text{getFuncInstSeq}(\Compiler(P), \pcVar')=(f, \Compiler(\instVar); \Compiler(I))$.
We also denote 
\begin{equation*}
    (\Delta, \RType, \MType)=\text{getStateType}(\PType, \pcVar) \qquad \omega=\text{getTransStrategy}(\MType).
\end{equation*}
Consider the following cases for $\instVar$:

\textbf{Case $\inst{movq}{r_1, \memOp{r_b}{r_i}{i_s}{i_d}\opMode{\sOp, \tauOp}}$.}
In this case, there should be $\Compiler(\instVar) =\inst{movq}{r_1, \memOp{r_b}{r_i}{i_s}{\delta_0+i_d}\opMode{\sOp+\delta_0, \tauOp}}$
where if $\omega(\sOp)=\transOp\land \tauOp\neq 0$, then $\delta_0=\delta$, else $\delta_0=0$.

First, according to \textsc{Typing-Movq-R-M}, $r_b$ and $r_i$ are untainted and we can denote their type as $\RType[r_b]=(e_b, 0)$, $\RType[r_i]=(e_i, 0)$.
Then, there should be
\begin{align*}
    R[r_b] & = (\sigma(e_b), 0) & R[r_i] & = (\sigma(e_i), 0)\\
    R'[r_b] & = (\sigma'(e_b), 0) & R'[r_i] & = (\sigma'(e_i), 0).
\end{align*}
We further denote $\RType[r_1]=(e_r, \tau_r)$, $R[r_1]=(v_r, t_r)$, and $R'[r_1]=(v_r', t_r')$, where $v_r=\sigma(e_r)$ and $v_r'=\sigma'(e_r)$. According to the simulation relation between $S$ and $S'$, there should be $t_r'=t_r$.
According to both \textsc{Typing-StoreOp-Spill} and \textsc{Typing-StoreOp-Non-Spill}, there should always be $\hasVal{\Delta}{\tau_r\Rightarrow \tauOp}$, so $t_r\Rightarrow \sigma(\tauOp)$.
Therefore, both $S$ and $S'$ satisfy constraints in \textsc{Dyn-Movq-r-m} and can execute the next store instruction.

Second, denote $e_a=i_d+e_b+e_i\times i_s$, $v_a=\sigma(e_a)$, and $v_a'=\sigma'(e_a)+\delta_0$.
We construct the next states $S_1$ and $S_1'$ as follows:
\begin{align*}
    \pcVar_1 & =\text{nextPc}(P, \pcVar) & \pcVar_1' & =\text{nextPc}(\Compiler(P), \pcVar')\\
    M_1 & = M[[v_a, v_a+8) \mapsto R[r_1]] & 
    M_1' & = M'[[v_a', v_a'+8) \mapsto R'[r_1]]\\
    \Phi_1 & = (\pcVar_1, (\sigma_f, \sigma_b)) &
    \Phi_1' & = (\pcVar_1', (\sigma_f', \sigma_b'))\\
    S_1 & = (R, M_1, \Phi_1) & 
    S_1' & = (R', M_1', \Phi_1').
\end{align*}
According to Theorem~\ref{thm:type-safety-revision}, $\typeChecked{P, \PType}{S_1}$.

Third, we prove that $\hasVal{P, \PType}{S_1'\prec S_1}$.
We begin with proving $\hasVal{P, \PType}{S_1'\prec_\text{helper}S_1}$.
According to \textsc{Typing-Movq-r-m}, there should exists $\sval_o$, $e_o$, and $\tau_o$ such that
\begin{equation*}
    \hasType{\Delta, \RType, \MType}{\text{store}(\memOp{r_b}{r_i}{i_s}{i_d}\opMode{\sOp, \tauOp}, 8, \RType[r_1])}{(\sval_o, (e_o, \tau_o))},
\end{equation*}
and for $\MType_1\coloneq \MType[\sOp \mapsto (\sval_o, (e_o, \tau_o))]$, we have $(\Delta, \RType, \MType_1)=\text{getStateType}(\PType, \pcVar_1)$.
Then, according to Lemma~\ref{lemma:same-trans-strategy-mtype}, there should be $\text{getTransStrategy}(\MType_1)=\text{getTransStrategy}(\MType)=\omega$.
Furthermore, both \textsc{Typing-StoreOp-Spill} and \textsc{Typing-StoreOp-Non-Spill} implies that $\tauOp=\tau_o$.
This implies that $\delta_0=\text{getOpShift}(\MType_1, \omega, \sOp)$.
According to \textsc{Simulation-Relation-Helper}, as all other requirements can be easily derived from $\hasVal{P, \PType}{S'\prec S}$, we just need to prove $\hasVal{}{(M_1', \sigma')\prec_{\MType_1}(M_1, \sigma)}$.
\todo{Fix here!}

Denote $\ptrVar=\text{getPtr}(\sOp)$, then according to the store typing rules, $\hasVal{\Delta}{\text{isNonChangeExp}(e_a-\ptrVar)}$.
Thus, there should be $\sigma'(e_a-\ptrVar)=\sigma(e_a-\ptrVar)$. Denote $\delta_p=\sigma'(\ptrVar)-\sigma(\ptrVar)$, then we have $v_a'=v_a+\delta_p+\delta_0$.
Furthermore, the store typing rules also imply that $\hasVal{\Delta}{[e_a, e_a+8)\subseteq \sOp}$, so $[v_a, v_a+8)\subseteq \sigma(\sOp)$.
Note that
\begin{align*}
    & \text{getShiftedSlot}(\MType_1, \omega, \sigma, \sigma', \sOp) \\
    =\; &  \sigma(\sOp)+\text{getOpShift}(\MType_1, \omega, \sOp) + \sigma'(\sOp)-\sigma(\sOp)\\
    =\; & \sigma(\sOp)+\delta_0+\delta_p,
\end{align*}
so $[v_a', v_a'+8) \subseteq \text{getShiftedSlot}(\MType_1, \omega, \sigma, \sigma', \sOp)$.

For each $s\in \dom{\MType_1}$, denote
\begin{align*}
    \MType_1[s] & =(\sval_1, (e_1, \tau_1)) & x & =\text{getPtr}(s)\\
    \deltaOp[1] & = \text{getOpShift}(\MType_1, \omega, s) & \deltaPtr & = \sigma'(x)-\sigma(x).
\end{align*}
According to \textsc{Sim-Mem}, we need to prove the following statements:
\begin{align*}
    & \omega(s)=\transOp \Rightarrow \sigma(s)\subseteq \stackPub \land (\text{isLocalStack}(s)\lor \tau_1\in \qty{0, 1})\\
    & \deltaPtr = \text{getPtrShift}(\MType_1, \omega, \sigma, s)\\
    & \sval_1=\emptyset \lor (M'[\sigma(\sval_1)+\deltaPtr+\deltaOp[1]], \sigma')\prec_{e_1, \tau_1}(M[\sigma(\sval_1)], \sigma).
\end{align*}

If $s\neq \sOp$, then $\MType_1[s]=\MType[s]$ and $\hasVal{\Delta}{s\cap \sOp}=\emptyset$.
According to Lemma~\ref{lemma:shifted-slot-non-overlap}, there should be
\begin{equation*}
\text{getShiftedSlot}(\MType_1, \omega, \sigma, \sigma', s)\cap \text{getShiftedSlot}(\MType_1, \omega, \sigma, \sigma', \sOp) =\emptyset
\end{equation*}
Note that $(\sigma(\sval)+\deltaPtr+\deltaOp[1])\subseteq \text{getShiftedSlot}(\MType_1, \omega, \sigma, \sigma', s)$.
Hence, we have
\begin{align*}
    M_1[\sigma(\sval)] & = M[\sigma(\sval)]\\
    M_1'[\sigma(\sval)+\deltaPtr+\deltaOp[1]] & = M[\sigma(\sval)+\deltaPtr+\deltaOp[1]].
\end{align*}
We can then prove that the statement is true by using the above relation and the assumption $\hasVal{}{(M', \sigma')\prec_{\MType}(M, \sigma)}$ (derived from $\hasVal{P, \PType}{S'\prec S}$).

If $s=\sOp$, we have
\begin{equation*}
    \MType_1[s]=\MType_1[\sOp]=(\sval_o, (e_o, \tauOp))=(\sval_1, (e_1. \tau_1)).
\end{equation*}
The statement can be derived by discussing whether $\text{isSpill}(\sOp)$ and whether $[v_a, v_a+8)\subseteq \sval_o$.
We omit details here since the reasoning is relatively straightforward.
\todo{Finish this.}

If $f=\mathit{main}$, then $\hasVal{P, \PType}{S_1'\prec S_1}$ according to \textsc{Simulation-Relation-Base}.

We consider the case where $f\neq \mathit{main}$.
The key idea is to prove that writing to $[v_a, v_a+8)$ (or $[v_a', v_a'+8)$) does not affect memory content not in $\text{getDom}(\MType, \sigma_f)$ (or $\text{getShiftedDom}(\MType, \omega, \sigma_f, \sigma_f')$).

There should exist $\pcVar_p$, $\sigmaFp$, $\sigmaBp$, $\Phi_0$, $\pcVar_p'$, $\sigmaFp'$, $\sigmaBp'$, and $\Phi_0'$ such that
\begin{equation*}
    \Phi = (\pcVar_p, (\sigmaFp, \sigmaBp));\Phi_0 \qquad
    \Phi' = (\pcVar_{p}', (\sigmaFp', \sigmaBp')); \Phi_0'.
\end{equation*}
Furthermore, $\hasVal{P, \PType}{S'\prec S}$ implies that there exists $R_p$, $M_p$, $R_p'$, $M_p'$ such that $\hasVal{P, \PType}{(R_p', M_p', \Phi')\prec (R_p, M_p, \Phi)}$.
According to Lemma~\ref{lemma:shifted-slot}, we have
\begin{align*}
    [v_a, v_a+8) & \subseteq \sigmaOp(\sOp) \subseteq \text{getDom}(\MType_1, \sigma_f)\\
    [v_a', v_a'+8) & \subseteq \text{getShiftedSlot}(\MType_1, \omega, \sigma, \sigma', \sOp) \\
    & \subseteq \text{getShiftedDom}(\MType_1, \omega, \sigma_f, \sigma_f')
\end{align*}
From Lemma~\ref{lemma:no-ret-slot}, we can also derive that $\sRet\cap \text{getDom}(\MType_1, \sigma_f)=\emptyset$, and $\sRet\cap \text{getShiftedDom}(\MType_1, \omega, \sigma_f, \sigma_f')=\emptyset$.
\todo{This needs to be further justified.}
Thus, there should be 
\begin{align*}
    & M_1[\sRet] = M[\sRet] = (\text{nextPc}(P, \pcVar_p), 0)\\
    & M_1'[\sRet] = M'[\sRet] = (\text{nextPc}(\Compiler(P), \pcVar_p'), 0)\\
    & \forall x \not \in \text{getDom}(\MType, \sigma_f) \cup \sRet.\; M_1[x]=M[x]=M_p[x]\\
    & \forall x \not \in \text{getShiftedDom}(\MType, \omega, \sigma_f, \sigma_f') \cup \sRet.\; M_1'[x]=M'[x]=M_p'[x].
\end{align*}
Therefore, the statement also holds for the case when $f\neq \mathit{main}$.

\todo{\textbf{Case $\inst{movq}{\memOp{r_b}{r_i}{i_s}{i_d}\opMode{\sOp, \tauOp}, r_0}$.}}

\textbf{Case $\inst{jne}{\ell\opMode{\sigmaOp}}$}
In this case, $\Compiler(\instVar)=\instVar$.
First, according to \textsc{Typing-Jne}, flag $\reg{ZF}$ is untainted and we can denote its type as $\RType[\reg{ZF}]=(e=0, 0)$, where $\hasVal{\Delta}{\text{isNonChangeExp}(e)}$.
Then, according to well-formedness of $S$ and simulation relation between $S'$ and $S$, there should be $R[\reg{ZF}]=(v, 0)$, $R'[\reg{ZF}]=(v', 0)$, and $\hasVal{}{((v', 0), \sigma')\prec_{e=0, 0}((v, 0), \sigma)}$.
So both $S$ and $S'$ satisfy constraints in \textsc{Dyn-Jne} and can execute the next instruction.
According to \textsc{Sim-Val}, it is straightforward to derive $v=v'$ by discussing on whether $e=\top$ and applying $\hasVal{\Delta}{\text{isNonChangeExp}(e)}$.
Thus, $\instVar$ and $\Compiler(\instVar)$ have the same branch direction (both taken or not taken).

Second, we construct the next states $S_1$ and $S_1'$ as follows:
\begin{align*}
    \pcVar_n & =\text{nextPc}(P, \pcVar) &
    \pcVar_t & =\text{getBlockPc}(P, \ell) \\
    \pcVar_n' & =\text{nextPc}(\Compiler(P), \pcVar') &
    \pcVar_t'& =\text{getBlockPc}(\Compiler(P), \ell).
\end{align*}
\begin{align*}
S_1 & =
\begin{cases}
    (R, M, (\pcVar_n, (\sigma_f, \sigma_b); \Phi) & v = \mathit{true}\\
    (R, M, (\pcVar_t, (\sigma_f, \sigma\circ \sigmaOp); \Phi) & v = \mathit{false}.
\end{cases}\\
S_1' & =
\begin{cases}
    (R', M', (\pcVar_n', (\sigma_f', \sigma_b'); \Phi') & v = \mathit{true}\\
    (R', M', (\pcVar_t', (\sigma_f', \sigma'\circ \sigmaOp); \Phi') & v = \mathit{false}.
\end{cases}
\end{align*}
According to Theorem~\ref{thm:type-safety-revision}, there should be $\typeChecked{P, \PType}{S_1}$.

Third, we prove that $\hasVal{P, \PType}{S_1'\prec S_1}$ under both cases (when the branch is taken and not taken).
Denote $(\Delta, \RType, \MType)=\text{getStateType}(\PType, \pcVar)$.
When $v=\mathit{true}$, i.e., the branch is not taken, all fields in $S_1$ (or $S_1'$) are the same as $S$ (or $S'$) except for its current PC.
The next state type is $\text{getStateType}(\PType, \pcVar_n)=(\Delta\cup \qty{e= 0}, \RType, \MType)$ according to \textsc{Typing-Jne}, where the extra 
type context constraint is satisfied in both machines, i.e., $\sigma(e=0)=\sigma'(e=0)=\mathit{true}$.
Furthermore, since $\Compiler(\instVar)=\instVar$, there should be $\hasVal{P, \PType}{\pcVar_n'\prec \pcVar_n}$.
Therefore, we can easily derive $\hasVal{P, \PType}{S_1'\prec S_1}$ by unfolding $\hasVal{P, \PType}{S'\prec S}$.

When $v=\mathit{false}$, i.e., the branch is taken, according to \textsc{Type-Jne}, the next state type is $(\Delta_1, \RType_1, \MType_1)\coloneq \PType(f)(\ell)$, where $\dom{\MType_1}=\dom{\MType}$, and $(\Delta\cup \qty{e\neq 0}, \RType, \MType)\sqsubseteq \sigmaOp(\Delta_1, \RType_1, \MType_1)$.
According to Lemma~\ref{lemma:same-trans-strategy-mtype},  $\text{getTransStrategy}(\MType_1)=\text{getTransStrategy}(\MType)=\omega$.

We focus on proving $\hasVal{P, \PType}{S_1'\prec_\text{helper} S_1}$, as the other requirements of $\hasVal{P, \PType}{S_1'\prec S_1}$ can be easily derived by unfolding $\hasVal{P, \PType}{S\prec S}$ and applying what we have illustrated above.
Denote 
\begin{equation*}
    \sigma_1=\sigma_f \cup (\sigma\circ \sigmaOp) \qquad
    \sigma_1'=\sigma_f'\cup (\sigma'\circ \sigmaOp).
\end{equation*}
According to \textsc{Simulation-Relation-Helper}, we just need to prove the following requirements:
\begin{itemize}
    \item $\hasVal{P, \PType}{\pcVar_t'\prec \pcVar_t}$: this hold since both $\pcVar_t$ and $\pcVar_t'$ are PCs for block $\ell$ in $P$ and $\Compiler(P)$.
    \item $\hasType{P, \PType}{(R, M, (\sigma_f, \sigma\circ \sigmaOp))}{(\Delta_1, \RType_1, \MType_1)}$: this is implied by $\typeChecked{P, \PType}{S_1}$.
    \item $\hasVal{\Delta_1}{\sigma_f'\prec \sigma_f}$: this is implied by $\hasVal{P, \PType}{S'\prec S}$.
    \item $\hasVal{\Delta_1}{\sigma'\circ \sigmaOp \prec \sigma \circ \sigmaOp}$:
    in this case, $\dom{\sigma'\circ\sigmaOp}=\dom{\sigma'\circ\sigmaOp}=\dom{\sigmaOp}$.
    According to \textsc{State-Subtype}, we have $\hasVal{\Delta}{\sigmaOp(\Delta_1)}$.
    Hence, for all $x\in \dom{\sigmaOp}$, if $\hasVal{\Delta_1}{\text{isNonChangeOp}(x)}$, then $\hasVal{\Delta}{\text{isNonChangeOp}(\sigmaOp(x))}$.
    Furthermore, according to the definition, it is easy to derive $\hasVal{\Delta}{\sigma'\prec \sigma}$ from $\hasVal{\Delta}{\sigma_f'\prec \sigma_f}$ and $\hasVal{\Delta}{\sigma_b'\prec \sigma_b}$.
    Thus, $\sigma'(\sigmaOp(x))=\sigma(\sigmaOp(x))$, and the original statement holds.
    \item $\hasVal{}{(R', \sigma_1')\prec_{\RType_1} (R, \sigma_1)}$:
    for all $r\in \RType$, according to \textsc{Sim-Reg}, there should be $(R'[r], \sigma')\prec_{\RType[r]}(R[r], \sigma)$.
    According to \textsc{Reg-Subtype}, we have $\hasVal{}{\RType[r]\sqsubseteq \sigmaOp(\RType_1[r])}$.
    Applying Lemma~\ref{lemma:sim-val-substitute-special}, we can get $\hasVal{}{(R'[r], \sigma'\circ\sigmaOp)\prec_{\RType_1[r]}(R[r], \sigma\circ \sigmaOp)}$.
    Note that $\sigma_1=\sigma\circ \sigmaOp$ and $\sigma_1'=\sigma'\circ \sigmaOp$ according to Lemma~\ref{lemma:br-substitute}. Then, there should be $\hasVal{}{(R'[r], \sigma_1')\prec_{\RType_1[r]}(R[r], \sigma_1)}$.
    Thus, the statement is true.
    \item $\hasVal{}{(M', \sigma_1')\prec_{\MType_1} (M, \sigma_1)}$:
    \todo{Check whether I can apply Lemma~\ref{lemma:sim-mem-substitute-general} here to simplify the proof (and avoid missing too many details).}
    for each $s_1\in \dom{\MType_1}$, denote $\MType_1[s_1]=(\sval_1, (e_1, \tau_1))$ and $x_1=\text{getPtr}(s_1)$.
    According to \textsc{Typing-Jne}, $\text{getTaintVar}(\dom{\sigmaOp})=\emptyset$, so $\sigmaOp(\tau_1)=\tau_1$.
    According to \textsc{State-Subtype} and \textsc{Mem-Slot-Subtype}, there must exists $s\in \MType_1$ and $\MType[s]=(\sval, (e, \tau))$ such that
    \begin{align*}
        & \hasVal{\Delta}{\sigmaOp(s_1)\subseteq s} \quad \hasVal{\Delta}{\sigmaOp(\sval_1)\subseteq \sval}\\
        & \hasVal{\Delta}{\text{isSpill}(\sigmaOp(\sval_1))\Rightarrow \text{isSpill}(\sval)}\\
        & \hasVal{\Delta}{e=\sigmaOp(e_1)\lor (\text{isNonChangeExp}(e)\land \sigmaOp(e_1)=\top)}\\
        & \qquad \lor \sigmaOp(\sval_1)=\emptyset\\
        & \hasVal{\Delta}{\tau=\tau_1\lor (\text{isSpill}(\sval)\land \sigmaOp(\sval_1)=\emptyset)}.
    \end{align*}
    Note that $\dom{\MType_1}=\dom{\MType}$, so there should be $s_1=s$ and $x_1=x$.
    \textsc{Reg-Mem-Type} implies that $\text{getVars}(s)=\sigma_f(s)$, 
    so $\sigma(s)=\sigma_f(s)=\sigma_1(s)$ and $\sigma(x)=\sigma_f(x)=\sigma_1(x)$.
    Denote $\deltaOp=\text{getOpShift}(\MType_1, \omega, s)$ and $\deltaPtr=\sigma_1'(x)-\sigma_1(x)$.
    Hence, we just need to prove the following statements:
    \begin{itemize}
        \item $\omega(s)=\transOp \Rightarrow \sigma_1(s)\subseteq \stackPub \land (\text{isLocalStack}(s) \lor \tau_1\in\qty{0, 1})$:
        Suppose $\omega(s)=\transOp$. Then $\sigma(s)\subseteq \stackPub$. \textsc{Reg-Mem-Type} implies that $\text{getVars}(s)=\sigma_f(s)$, so $\sigma(s)=\sigma_f(s)=\sigma_1(s)$.
        Thus, $\sigma_1(s)\subseteq \stackPub$.
        
        Note that if $\neg \text{isLocalStack}(s)$, then $\tau\in \qty{0, 1}$ and $\neg \text{isSpill}(s)$, which is equivalent to $\neg \text{isSpill}(\sval)$. 
        This together implies that $\tau=\tau_1$, so the statement is true.
        \item $\deltaPtr=\text{getPtrShift}(\MType_1, \omega, \sigma_1, s)$:
        Note that $\deltaPtr=\sigma_1'(x)-\sigma_1(x)=\sigma'(x)-\sigma(x)=\text{getPtrShift}(\MType, \omega, \sigma, s)$.
        Then, this can be proved by discussing whether $\omega(s)=\transPtr$. The key point is to note that when $\omega(s)=\transPtr$, there should be $\neg \text{isSpill}(s)$ and thereby $\tau=\tau_1$.
        \item $\sigma_1(\sval_1)=\emptyset$ or 
        $(M'[\sigma_1(\sval_1)+\deltaPtr+\deltaOp], \sigma_1')\prec_{e_1, \tau_1}(M[\sigma_1(\sval_1)], \sigma_1)$:
        We just need to prove that the latter statement holds when $\sigma_1(\sval_1)\neq\emptyset$.
        This implies that $\tau=\tau_1$, so we have
        \begin{equation*}
            \deltaOp =\text{getOpShift}(\MType_1, \omega, s)=\text{getOpShift}(\MType, \omega, s).
        \end{equation*}
        Then, from $\hasVal{}{(M', \sigma')\prec_{\MType}(M, \sigma)}$, we can derive that 
        \begin{equation*}
            (M'[\sigma(\sval)+\deltaPtr+\deltaOp], \sigma')\prec_{e, \tau}(M[\sigma(\sval)], \sigma).
        \end{equation*}
        We can then prove the original statement by discussing the format of $e$ and applying Lemma~\ref{lemma:sim-val-substitute-special}, similar to the process of proving the simulation relation for registers.
        \shixin{I omit many details here.}
    \end{itemize}
\end{itemize}

\textbf{Case $\inst{callq}{f_c\opMode{\sigmaCall, \sigmaRet}}$.} 
\revision{
Denote
\begin{align*}
    & \pcVar_c=\text{getBlockPc}(P, f_c)\\
    & (\Delta_c, \RType_c, \MType_c)=\text{getStateType}(\PType, \pcVar_c)\\
    & \omega_c=\text{getTransStrategy}(\MType_c).
\end{align*}
According to the definition of $\Compiler$, there should be
\begin{align*}
    & \Compiler(\instVar)=\iota_1; \iota_2;\dots;\iota_n;\inst{callq}{f_c\opMode{\sigmaCall', \sigmaRet}}\\
    & ((\iota_1; \iota_2;\dots;\iota_n), \sigmaCall')=\CompilerPtr((\Delta, \RType, \MType), \sigmaCall(\Delta_c, \RType_c, \MType_c), \sigmaCall),
\end{align*}
and for $i=1,\dots, n$,
\begin{align*}
    & \iota_i=\inst{addq}{\delta, r_i} && r_i\neq r_\reg{rsp}\\
    & \RType_c[r_i]=(\ptrVar_i, 0) &&
    \sigmaCall'(\ptrVar_i)=\sigmaCall(\ptrVar_i)+\delta
\end{align*}
}

First, according to \textsc{Typing-Callq}, there exists a constant $c$ such that $\RType[r_\reg{rsp}]=(\spVar+c, 0)$.
Then, according \textsc{Reg-Type}, \textsc{Value-Type}, and well-formedness of $S$, there should be $R[r_\reg{rsp}]=(v, 0)$ where $v=\sigma(\spVar)+c$. Thus, $S$ satisfies constraints in \textsc{Dyn-Callq} and is allowed to execute $\instVar$.
Furthermore, according to the simulation relation and the fact that $\spVar$ is not affected by our transformation, i.e., $\text{isNonChangeExp}(\spVar)$, there should also be $R'[r_\reg{rsp}]=(v,0)$. Hence, $S'$ can execute all instructions in $\Compiler(\instVar)$ without getting stuck.

Second, we construct the next states $S_1$ and $S_1'$ as follows:
\begin{align*}
    R_1 & = R[r_\reg{rsp}\mapsto (v-8, 0)]\\
    M_1 & = M[[v-8, v) \mapsto (\text{nextPc}(P, \pcVar), 0)]\\
    \Phi_1 & = (\pcVar_c, (\sigma\circ \sigmaCall, []\rightarrow[]));(\pcVar, (\sigma_f, \sigma_b)); \Phi\\
    \pcVar_c' & = \text{getBlockPc}(\Compiler(P), f_c) \quad \pcVar'' = \text{getInstPc}(\Compiler(P), \inst{callq}{f_c\opMode{\sigmaCall', \sigmaRet}})\\
    R_1' & = R'[r_\reg{rsp}\mapsto (v-8, 0), r_1\mapsto (v_1+\delta, 0), \dots, r_n\mapsto (v_n+\delta, 0)]\\
    M_1' & = M'[[v-8, v) \mapsto (\text{nextPc}(\Compiler(P), \pcVar''), 0)]\\
    \Phi_1' & = (\pcVar_c', (\sigma'\circ\sigmaCall', []\rightarrow[]));(\pcVar', (\sigma_f', \sigma_b')); \Phi'\\
    S_1 & = (R_1, M_1, \Phi_1) \qquad S_1'=(R_1', M_1', \Phi_1').
\end{align*}

Third, we prove that $\hasVal{P, \PType}{S_1'\prec S_1}$ by proving all requirements in \textsc{Simulation-Relation-Other} are satisfied.
We start with proving $\hasVal{P, \PType}{S_1'\prec_\text{helper}S_1}$.
According to \textsc{Simulation-Relation-Helper}, as the other requirements are straightforward, we only prove the following statements:
\begin{itemize}
    \item $\hasVal{\Delta_c}{\sigma_c'\prec\sigma_c}$ where $\sigma_c'=\sigma'\circ\sigmaCall'$ and $\sigma_c=\sigma\circ\sigmaCall$:
    for all $x\in \dom{\sigma_c}$ and $\hasVal{\Delta_c}{\text{isNonChangeExp}(x)}$, there must be $x\neq \ptrVar_i$ for $i=1, \dots, n$.
    Furthemore, according to \textsc{State-Subtype} and the subtype relation at call, there should be $\hasVal{\Delta}{\sigmaCall(\Delta_c)}$, which implies that $\hasVal{\Delta}{\text{isNonChangeExp}(\sigmaCall(x))}$.
    Thus, we have
    \begin{equation*}
        \sigma_c'(x)=\sigma'(\sigmaCall'(x)) = \sigma'(\sigmaCall(x))=\sigma(\sigmaCall(x))=\sigma_c(x),
    \end{equation*}
    so the original statement holds.
    \item $\hasVal{}{(R_1', \sigma_c')\prec_{\RType_c}(R_1, \sigma_c)}$: this can be proved similarly to the case for $\instr{jne}$, where the major difference is that we will apply Lemma~\ref{lemma:sim-val-substitute-general} instead of Lemma~\ref{lemma:sim-val-substitute-special}.
    \todo{Consider to complete the reasoning here since it will not be too long.}
    \item $\hasVal{}{(M_1', \sigma_c')\prec_{\MType_c}(M_1, \sigma_c)}$: 
    this can be proved by applying Lemma~\ref{lemma:sim-mem-substitute-general}.
    \todo{Double check whether all assumptions in the lemma are satisfied.}
\end{itemize}

\textbf{Case $\inst{retq}{}$.}
In this case, there should be $\Compiler(\instVar)=\inst{retq}{}$.
According to our assumption on the program, $f$ must not be the top-level function, i.e., $f \neq \mathit{main}$. 
Then, we can denote 
\begin{equation*}
    \Phi = (\pcVar_p, (\sigmaFp, \sigmaBp));\Phi_0 \quad
    \Phi' = (\pcVar_{p}', (\sigmaFp', \sigmaBp')); \Phi_0'.
\end{equation*}
According to \textsc{Typing-Func}, the state type of $\pcVar$ is $(\Delta, \RType, \MType)\coloneq \text{getStateType}(f)(f_\text{ret})$.
It also implies that $\RType[r_\reg{rsp}]=(\spVar, 0)$.
Then, according to \textsc{Reg-Type}, \textsc{Value-Type}, well-formedness of $S$, and simulation relation between $S'$ and $S$, there should be $R[r_\reg{rsp}]=R'[r_\reg{rsp}]=(v, 0)$, where $v=\sigma(\spVar)=\sigma_f(\spVar)$.
Furthermore, according to $\hasVal{P, \PType}{S'\prec S}$ and \textsc{Simulation-Relation-Other}, there should exist $R_p$, $M_p$, $R_p'$, $M_p'$ such that
\begin{align*}
    & \sRet =[\sigma_f(\spVar), \sigma(\spVar)+8)=[v, v+8)\\
    & M[\sRet] = (\text{nextPc}(P, \pcVar_p), 0) \quad
    M'[\sRet] =(\text{nextPc}(\Compiler(P), \pcVar_p'), 0) \\
    & \text{getInst}(P, \pcVar_p)=\inst{callq}{f\opMode{\sigmaCall,\sigmaRet}} \quad
    \text{getInst}(\Compiler(P), \pcVar_p')=\inst{callq}{f\opMode{\sigmaCall',\sigmaRet}} \\
    & \pcVar_{p0}'=\text{getCallPrefixPc}(\Compiler(P), \pcVar_p')\\
    & \forall x\not\in \text{getDom}(\MType, \sigma_f)\cup \sRet.\; M[x]=M_p[x]\\
     & \forall x\not\in \text{getShiftedDom}(\MType, \omega, \sigma_f, \sigma_f')\cup \sRet.\; M'[x]=M_p'[x]\\
    & \sigma_f=(\sigmaFp\cup \sigmaBp)\circ \sigmaCall \quad
    \sigma_f'=(\sigmaFp'\cup \sigmaBp')\circ \sigmaCall'\\
    & \hasVal{P, \PType}{(R_p', M_p', (\pcVar_{p0}', (\sigmaFp', \sigmaBp')); \Phi_0')\prec (R_p, M_p, \Phi)}.
\end{align*}
Thus, $S$ and $S'$ satisfy all constraints in \textsc{Dyn-Retq} and can execute $\instr{retq}$.

Second, we construct the next states $S_1$ and $S_1'$ as follows:
\begin{align*}
    R_1 & = R[r_\reg{rsp}\mapsto (v+8, 0)] \quad
    R_1' = R'[r_\reg{rsp}\mapsto (v+8, 0)]\\
    \Phi_1 & = (\text{nextPc}(P, \pcVar_p), (\sigmaFp, (\sigma_b \circ \sigmaRet^{-1})\cup \sigmaBp);\Phi_0\\
    \Phi_1' & = (\text{nextPc}(\Compiler(P), \pcVar_p'), (\sigmaFp', (\sigma_b' \circ \sigmaRet^{-1})\cup \sigmaBp');\Phi_0\\
    S_1 & = (R_1, M, \Phi_1) \quad\quad  S_1'=(R_1', M', \Phi_1').
\end{align*}

Third, we prove that $\hasVal{P, \PType}{S_1'\prec S_1}$.
As other requirements are relatively straightforward to prove, we only illustrate how to prove $\hasVal{P, \PType}{S_1'\prec_\text{helper}S_1}$.
Denote 
\begin{align*}
    (\Delta_0, \RType_0, \MType_0) & = \text{getStateType}(\PType, \pcVar_p)\\
    (\Delta_1, \RType_1, \MType_1) & = (\sigmaCall\cup \sigmaRet)(\Delta, \RType, \MType)\\
    (\Delta_2, \RType_2, \MType_2) & =
    (\Delta_0\cup \Delta_1, \RType_1[r_\reg{rsp}\mapsto \RType_0[r_\reg{rsp}]], \text{updateMem}(\MType_0, \MType_1))\\
    \sigma_1 & = \sigmaFp\cup (\sigma_b\circ \sigmaRet^{-1})\cup \sigmaBp \\
    \sigma_1' & = \sigmaFp'\cup (\sigma_b'\circ \sigmaRet^{-1})\cup \sigmaBp'.
\end{align*}

According to \textsc{Simulation-Relation-Helper}, this holds as long as the following statements hold:
\begin{itemize}
    \item $\hasVal{P, \PType}{\text{nextPc}(\Compiler(P), \pcVar_p') \prec \text{nextPc}(P, \pcVar_p)}$: $\hasVal{P, \PType}{S'\prec S}$ implies that $\hasVal{P, \PType}{\pcVar_{p0}'\prec \pcVar_p}$.
    According to the definition of $\pcVar_{p0}'$, the statement should hold since both PCs are referring to the next instruction after two matched calls.
    \item $(\Delta_2, \RType_2, \MType_2) =\text{getStateType}(\PType, \text{nextPc}(P, \pcVar_p))$ and $\hasType{P, \PType}{(R_1, M, \sigma_1)}{(\Delta_2, \RType_2, \MType_2)}$: this is implied from Theorem~\ref{thm:type-safety-revision}.
    \item $\hasVal{\Delta_2}{\sigmaFp'\prec \sigmaFp}$ and
    $\hasVal{\Delta_2}{((\sigma_b'\circ \sigmaRet^{-1})\cup \sigmaBp')\prec ((\sigma_b\circ \sigmaRet^{-1})\cup \sigmaBp)}$:
    The first one holds since $\dom{\sigmaFp}\subseteq\text{getVars}(\Delta_0)$ and $\hasVal{\Delta}{\sigmaFp\prec \sigmaFp'}$ implied from $\hasVal{P, \PType}{S'\prec S}$.
    The second one can be proved by discussing about whether the variables belong to $\dom{\sigmaBp}$ or $\dom{\sigmaRet^{-1}}$.
    \item $\hasVal{}{(R_1', \sigma_1')\prec_{\RType_2} (R_1, \sigma_1)}$:
    We need to prove that for each $r\in \RType_2$, $(R_1'[r], \sigma_1')\prec_{\RType_2[r]}(R_1[r], \sigma_1)$.
    We consider the following three cases and briefly explain the idea to prove the statement:
    \begin{itemize}
        \item $r=r_\reg{rsp}$: according to \textsc{Typing-Callq},
        there should exist $c$ such that $\RType_2[r_\reg{rsp}]=\RType_0[r_\reg{rsp}]=(\spVar+c,0)$.
        Then, from \textsc{Typing-Callq}, there should be $\sigmaCall(\spVar)=\spVar+c-8$.
        The relation can be proved by applying this relation and the assumption that $\text{isNonChangeExp}(\spVar)$.
        \item $r\neq r_\reg{rsp}$, and $r$ is a callee-saved register: 
        \textsc{Typing-Func} implies that in the callee function's type context, the dependent type of $r$ is unchanged before and after the function call and is a non-top type variable. This implies that the value of $r$ is also unchanged before and after executing $f$.
        We can then prove the statement by applying the above relation in the caller's type context.
        \item $r\neq r_\reg{rsp}$, and $r$ is not a callee-saved register:
        \textsc{Typing-Func} implies that $\text{isNonChangeExp}(\RType[r])$. We can then prove the statement by discussing whether the dependent type of $r$ is $\top$ or not.
        \todo{Finish this!}
    \end{itemize}
    \item $\hasVal{}{(M', \sigma_1')\prec_{\MType_2} (M, \sigma_1)}$:
    According to Lemma~\ref{lemma:same-trans-strategy-mtype},  
    \begin{equation*}
        \omega_0 \coloneqq \text{getTransStrategy}(\MType_2)=\text{getTransStrategy}(\MType),
    \end{equation*}
    so all assertions in $\text{getTransStrategy}$ are also satisfied by $\MType_2$.

    For each $s\in \dom{\MType_2}$, denote
    \begin{align*}
        & \MType_2[s]=(\sval_2, (e_2, \tau_2)) \qquad \MType_0[s]=(\sval_0, (e_0, \tau_0))\\
        & \delta_s = \text{getShift}(\MType_2, \omega_0, \sigma_1, \sigma_1', s).
    \end{align*}
    According to the assertions in $\text{updateMem}$, $\tau_2=\tau_0$, so
    \begin{equation*}
        \delta_s = \text{getShift}(\MType_2, \omega_0, \sigma_1, \sigma_1', s)=\text{getShift}(\MType_0, \omega_0, \sigma_1, \sigma_1', s).
    \end{equation*}

    According to Sim-Mem, as the other requirements of the simulation relation can be easily proved by unfolding $\hasVal{P, \PType}{(R_p', M_p', (\pcVar_{p0}', (\sigmaFp', \sigmaBp')); \Phi_0')\prec (R_p, M_p, \Phi)}$, we focus on proving that if $\sigma_1(\sval_2)\neq \emptyset$, then
    \begin{equation}
      (M'[\sigma_1(\sval_2)+\delta_s], \sigma_1')\prec_{e_2, \tau_2} (M[\sigma_1(\sval_2)], \sigma_1).
      \label{eq:sim-ret-goal}
    \end{equation}

    According to Algorithm~\ref{alg:update-mem}, when executing $\text{updateMem}$ to generate $\MType_2$, we generate a map $S$ that maps each slot in $\dom{\MType_2}$ to slots in $\dom{\MType_1}$ that belong to $s$.
    Consider the following cases:
    \begin{itemize}
        \item $s=\sigmaCall([\spVar, \spVar+8))=[\spVar+c-8, \spVar+c)$ (where $\sigmaCall(\spVar)=\spVar+c-8$ and $\RType_2[r_\reg{rsp}]=\RType_0[r_\reg{rsp}]=(\spVar+c, \_)$): according to \textsc{Typing-Callq}, $\MType_0[s]=(\emptyset, \_)$. According to \textsc{Typing-Func}, for all $s_1\in\dom{\MType_1}$, $s_1\cap s=\emptyset$, so $S[s]=\emptyset$. These two facts imply that $\sval_2=\emptyset$, so $\sigma_1(\sval_2)=\emptyset$ and we do not need to worry about this case.
        \item $S[s]=\emptyset$ and $s\neq [\spVar+c-8, \spVar+c)$: in this case, we have $\sval_2=\sval_0$ and $e_2=e_0$.
        Since $\sigma_1(\sval_2)\neq \emptyset$, then $\sigma_1(\sval_0)\neq \emptyset$.
        According to the simulation relation between $R_p$, $M_p$ and $R_p'$, $M_p'$,
        there should be
        \begin{equation}
            (M_p'[\sigma_1(\sval_0)+\delta_s], \sigma_1')\prec_{e_0, \tau_0} (M_p[\sigma_1(\sval_0)], \sigma_1).
            \label{eq:sim-ret-before-call}
        \end{equation}
        Recall that
        \begin{align*}
            & \forall x\not\in \text{getDom}(\MType, \sigma_f)\cup \sRet.\; M[x]=M_p[x]\\
            & \forall x\not\in \text{getShiftedDom}(\MType, \omega, \sigma_f, \sigma_f')\cup \sRet.\; M'[x]=M_p'[x].
        \end{align*}
        It is also straightforward to derive that
        \begin{align*}
            & \sigma_1(\sval_2) \cap (\text{getDom}(\MType, \sigma_f)\cup \sRet) = \emptyset\\
            & (\sigma_1(\sval_2)+\delta_s) \cap (\text{getShiftedDom}(\MType, \omega, \sigma_f, \sigma_f') \cup \sRet) = \emptyset.
        \end{align*}
        by applying Lemma~\ref{lemma:possible-slot-non-overlap},
        so $M'[\sigma_1(\sval_2)+\delta_s]=M_p'[\sigma_1(\sval_0)+\delta_s]$ and $M[\sigma_1(\sval_2)]=M_p[\sigma_1(\sval_0)]$.
        Thus, we can derive (\ref{eq:sim-ret-goal}) from (\ref{eq:sim-ret-before-call}).

        \item $\text{cardinal}(S[s])>1$: according to $\text{updateMem}$, $\sval_2=\sval_0=\emptyset$, so $\sigma_1(\sval_2)=\emptyset$ and we do not need to worry about this case.
        
        \item $\text{cardinal}(S[s])=1$: denote $S[s]=\qty{s_1}$ and $\MType_1[s_1]=(\sval_1, (e_1, \tau_1))$.
        Then, $\sval_2=(\sval_0\backslash s_1)\cup \sval_1$.
        According to the definition of $\MType_1$, there exists $s_c\in \MType$ such that $s_c=(\sigmaCall \cup \sigmaRet)(s_1)$.
        Denote $\MType[s_c]=(\sval_c, (e_c, \tau_c))$.
        Then, there should be $\MType_1[s_1]=(\sigmaCall \cup \sigmaRet)(\sval_c, (e_c, \tau_c))$.

        Denote $(\Delta_c, \RType_c, \MType_c)=\text{getStateType}(\PType, \text{getFuncPc}(P, f))$.
        According to assertions in $\text{updateMem}$ (Algorithm~\ref{alg:update-mem}), there should be $\neg \text{isSpill}(s_c)$. 
        Then, by applying Lemma~\ref{lemma:same-taint-mtype}, \ref{lemma:same-trans-strategy-mtype}, and \ref{lemma:sim-mem-same-shift}, we can get
        \begin{align*}
            \delta_s
            & = \text{getShift}(\MType_2, \omega_0, \sigma_1, \sigma_1', s)
            = \text{getShift}(\MType_0, \omega_0, \sigma_1, \sigma_1', s)\\
            & = \text{getShift}(\MType_c, \omega, \sigma, \sigma', s_c)
            = \text{getShift}(\MType, \omega, .\sigma, \sigma', s_c).
        \end{align*}

        By unfolding the assumption $\hasVal{P, \PType}{S' \prec S}$, we can get
        \begin{equation}
            \sigma(\sval_c) = \emptyset \lor (M'[\sigma(\sval_c)+\delta_s], \sigma')\prec_{e_c, \tau_c} (M[\sigma(\sval_c)], \sigma).
            \label{eq:sim-ret-rel}
        \end{equation}

        According to Lemma~\ref{lemma:call-ret-substitute}, for all $e$, $\sigma(e) = \sigma_1((\sigmaCall\cup \sigmaRet)(e))$.
        Similarly, for all $e$ such that $\sigmaCall(e)=\sigmaCall'(e)$,
        $\sigma'(e)= \sigma_1'((\sigmaCall'\cup \sigmaRet)(e)) = \sigma_1'((\sigmaCall\cup \sigmaRet)(e))$.

        According to \textsc{Typing-Func}, we also have $\sval = \emptyset \lor \text{isNonChangeExp}(e_c, \tau_c)$.
        Then, according to the definition of $\CompilerPtr$ (Algorithm~\ref{alg:compiler-ptr}), $\sval=\emptyset \lor \sigmaCall(e_c)=\sigmaCall'(e_c)$.
        Then, we can derive this from (\ref{eq:sim-ret-rel}) by discussing whether $e_c$ is $\top$ or not:
        \begin{equation}
            \sigma_1(\sval_1) = \emptyset \lor (M'[\sigma_1(\sval_1)+\delta_s], \sigma_1')\prec_{e_1, \tau_1} (M[\sigma_1(\sval_1)], \sigma_1).
            \label{eq:sim-ret-sub-rel}
        \end{equation}

        We then consider the following cases.

        If $\sval_2=\sval_1$, then $e_2=e_1$. The statement can be proved by applying (\ref{eq:sim-ret-sub-rel}) and discussing whether $\sigma_1(\sval_1)=\emptyset$ and whether $e_1=\top$.

        If $\sval_1\subsetneqq \sval_2$, then $\hasVal{\Delta}{\text{isNonChangeExp}(e_0)}$, $\hasVal{\Delta}{\text{isNonChangeExp}(e_1)}$, and $e_2=\top$.
        We only need to prove that for all $x\in \sigma_1(\sval_2)$, $M'[x+\delta_s]=M[x]$.
        If $x\in \sval_0\backslash s_1$, we can prove this by applying (\ref{eq:sim-ret-before-call}) (which is derived from the simulation relation between $R_p$, $M_p$, and $R_p'$, $M_p'$) and $\hasVal{\Delta}{\text{isNonChangeExp}(e_0)}$.
        If $x\in \sval_1$, we can prove this by applying (\ref{eq:sim-ret-sub-rel}) and $\hasVal{\Delta}{\text{isNonChangeExp}(e_1)}$.
    \end{itemize}
    \todo{Double check this after finishing a raw version! especially notations}
\end{itemize}
\end{proof}


\begin{lemma}
For all $P$, $\PType$, $\pcVar_0$, $\pcVar_1$, and for $i\in\qty{0,1}$, $(\Delta_i, \RType_i, \MType_i)=\text{getStateType}(\PType, \pcVar_i)$,
if $\text{getFunc}(P, \pcVar_0)=\text{getFunc}(P, \pcVar_1)$, then for all $s\in\dom{\MType_0}$ such that 
$\neg \text{isSpill}(s)$
and $\MType_0[s]=(\_, (\_, \tau_0))$, $\MType_1[s]=(\_, (\_, \tau_1))$, there must be $\tau_0=\tau_1$.
\label{lemma:same-taint-mtype}
\end{lemma}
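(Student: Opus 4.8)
The plan is to prove a stronger invariant than the statement: within a single function, the taint type of every non-spill slot is constant across all program points. Fix the function $f = \text{getFunc}(P, \pcVar_0) = \text{getFunc}(P, \pcVar_1)$ and let $(\Delta_e, \RType_e, \MType_e) = \PType(f)(f)$ be its entry-block type. I would show that for every $\pcVar$ in $f$ with $\text{getStateType}(\PType, \pcVar) = (\Delta, \RType, \MType)$ and every $s \in \dom{\MType}$ with $\neg\text{isSpill}(s)$, the taint component of $\MType[s]$ equals that of $\MType_e[s]$. Applying this to both $\pcVar_0$ and $\pcVar_1$, and using Lemma~\ref{lemma:same-dom-mtype} (which gives $\dom{\MType_0} = \dom{\MType_1}$, so $s$ names the same slot at both points), immediately yields $\tau_0 = \tau_1$. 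The overall shape mirrors the proof of Lemma~\ref{lemma:same-dom-mtype}: establish preservation along intra-block sequencing and across branch edges, then conclude by connectivity of the function's control-flow graph (every block reachable from $f$).

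For intra-block preservation, I would inspect each instruction-sequence typing rule and check that the taint of a non-spill slot is never altered. Loads (\textsc{Typing-Movq-m-r}), $\instr{cmpq}$, $\instr{addq}$, and $\instr{leaq}$ leave $\MType$ untouched. A store to a spill slot (\textsc{Typing-StoreOp-Spill}) only rewrites that spill slot, so non-spill slots are unaffected. The crucial case is a store to a non-spill slot (\textsc{Typing-StoreOp-Non-Spill}): its output slot type is $(s_\text{addr}\cup\sval, (e', \tau))$, which keeps exactly the original taint $\tau$, so the taint is preserved. For $\instr{callq}$ (\textsc{Typing-Callq}) the memory type is recomputed by $\text{updateMem}$; here I would cite the assertions in Algorithm~\ref{alg:update-mem}, which for a non-spill caller slot $s$ either leave it unchanged (when $S[s] = \emptyset$) or update it from a single callee slot $s_1$ under the assertion $\tau_0 = \tau_1$ while emitting taint $\tau_0$, again preserving the caller's taint. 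Since $\sCalleeStack$ is marked as a spill slot, it is excluded from these non-spill cases.

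For branch edges, consider $\inst{jne}{\ell\opMode{\sigma}}$ (unconditional jumps are handled exactly like its taken side). The fall-through side keeps $\MType$ (only $\Delta$ grows), so taint is preserved. On the taken side, \textsc{Typing-Jne} gives $(\Delta\cup\{e\neq 0\}, \RType, \MType) \sqsubseteq \sigma(\Gamma(\ell))$ and requires $\text{getTaintVar}(\dom{\sigma}) = \emptyset$, so $\sigma$ does not touch taint variables and $\sigma(\MType')$ agrees slotwise in taint with $\MType' = \MType_\ell$. For a non-spill target slot $s$, \textsc{State-Subtype} supplies a source slot $s_1 \in \dom{\MType}$ with $s \subseteq s_1$, $\text{getPtr}(s_1) = \text{getPtr}(s)$, and $\MType[s_1] \sqsubseteq \sigma(\MType'[s])$; using Lemma~\ref{lemma:same-dom-mtype} (equal domains) together with slot non-overlap, $s \subseteq s_1$ forces $s_1 = s$, and then the taint clause of \textsc{Mem-Slot-Subtype}, namely $\tau_1 = \tau_2 \lor (\text{isSpill}(s_1)\land s = \emptyset)$, reduces to $\tau_1 = \tau_2$ because $s_1 = s$ is non-spill. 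Hence the taint at the branch equals the taint at $\ell$'s entry, and composing these steps along any control-flow path from $f$ establishes the invariant.

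The main obstacle I anticipate is the branch case, specifically pinning down the slot matching induced by \textsc{State-Subtype} across the substitution $\sigma$: one must argue that the subtype witness $s_1$ for a non-spill target slot $s$ is genuinely the same slot, which relies on combining the equal-domain conclusion of Lemma~\ref{lemma:same-dom-mtype}, slot non-overlap, and $\text{getPtr}$ agreement, together with the fact that taint variables are never instantiated by branch (or call) annotations. Once that matching is fixed, the taint equality follows directly from the disjunction in \textsc{Mem-Slot-Subtype}, and the remaining steps are the routine rule-by-rule checks sketched above.
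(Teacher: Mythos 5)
Your proposal is correct and follows essentially the same route as the paper's proof: a case analysis over instruction typing rules for intra-block preservation (with stores to non-spill slots preserving $\tau$ via \textsc{Typing-StoreOp-Non-Spill} and calls handled by the assertions in Algorithm~\ref{alg:update-mem}), the combination of \textsc{State-Subtype}, \textsc{Mem-Slot-Subtype}, and $\text{getTaintVar}(\dom{\sigma})=\emptyset$ for branch edges, and connectivity of the control-flow graph to conclude. Your explicit argument that the subtype witness $s_1$ must equal $s$ (via equal domains and slot non-overlap) is a point the paper's proof glosses over by writing $\MType_0[s]\sqsubseteq \sigmaOp(\MType_1[s])$ directly, but it is a refinement of the same argument rather than a different approach.
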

\begin{proof}
This lemma means that each non-spill memory slot has the same taint type at all instructions in the same function.

First, we prove that the statement holds when $\pcVar_0$ and $\pcVar_1$ refer to two consecutive instructions in the same basic block, i.e., $\pcVar_1=\text{nextPc}(P, \pcVar_0)$.
Denote $\instVar=\text{getInst}(P, \pcVar_0)$, then according to our assumption, $\instVar$ is not $\instr{jmp}$ or $\instr{retq}$. Consider the following cases for $\instVar$ (other cases can be proved similarly):

\textbf{Case $\inst{movq}{r, \memOp{r_d}{r_i}{i_s}{i_d}\opMode{\sOp,\tauOp}}$.}
For $s\neq \sOp$, \textsc{Typing-Movq-r-m} implies that $\MType_0[s]=\MType_1[s]$, so $\tau_0=\tau_1$.

For $s=\sOp$, since $\neg \text{isSpill}(s)$, then $\tau_0=\tauOp=\tau_1$ according to \textsc{Typing-Movq-r-m} and \textsc{Typing-StoreOp-Non-Spill}.

\textbf{Case $\inst{movq}{\memOp{r_d}{r_i}{i_s}{i_d}\opMode{\sOp,\tauOp}, r}$.}
\textsc{Typing-Movq-m-r} implies that $\MType_0=\MType_1$, so the statement is true.

\textbf{Case $\inst{jne}{\ell\opMode{\sigmaOp}}$.}
Since $\pcVar_1=\text{nextPc}(P, \pcVar_0)$, $\pcVar_1$ refers to the next instruction when the branch is not taken.
Then, according to \textsc{Typing-Jne}, there should be $\MType_0=\MType_1$, so the statement is true.

\textbf{Case $\inst{callq}{f_c\opMode{\sigmaCall,\sigmaRet}}$.}
Denote $(\Delta_c, \RType_c, \MType_c)=\PType(f_c)(f_{c\text{ret}})$.
Then, $\MType_1 = \text{updateMem}(\MType_0, (\sigmaCall\cup \sigmaRet)(\MType_c))$.
According to the definition of $\text{updateMem}$ (Algorithm~\ref{alg:update-mem}), there should be $\tau_0=\tau_1$.

Second, we prove that the statement holds when $\pcVar_0$ refers to a branch instruction, and $\pcVar_1$ is its branch target.
We only illustrate the case where $\pcVar_0$ refers to the branch instruction $\inst{jne}{\ell\opMode{\sigmaOp}}$, since other cases can be proved similarly.
\textsc{Typing-Jne} and \textsc{State-Subtype} imply that $\dom{\MType_0}=\dom{\MType_1}$ and $\MType_0[s]\sqsubseteq \sigmaOp(\MType_1[s])$.
Since $\neg\text{isSpill}(s)$,
then \textsc{Mem-Slot-Subtype} implies that $\tau_0=\sigmaOp(\tau_1)$.
According to \textsc{Typing-Jne}, $\text{getTaintVar}(\dom{\sigmaOp})=\emptyset$, so $\tau_0=\sigmaOp(\tau_1)=\tau_1$.

By combining the above two cases ($\pcVar_0$ and $\pcVar_1$ refer to consecutive instruction or branch/target instructions), we can further derive that the statement holds for all $\pcVar_0$ and $\pcVar_1$ that belong to the same function.
\todo{I may need to explain that we assume there is no dead block somewhere.}
\end{proof}

\begin{lemma}
For all $P$, $\PType$, $\pcVar_0$, $\pcVar_1$, and for $i\in\qty{0,1}$, $(\Delta_i, \RType_i, \MType_i)=\text{getStateType}(\PType, \pcVar_i)$,
if $\text{getFunc}(P, \pcVar_0)=\text{getFunc}(P, \pcVar_1)$, and $\MType_0$ satisfies the assertion in getTransStrategy (Algorithm~\ref{alg:trans-strategy}), then there should be $\text{getTransStrategy}(\MType_0)=\text{getTransStrategy}(\MType_1)$.
\label{lemma:same-trans-strategy-mtype}
\end{lemma}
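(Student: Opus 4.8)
The plan is to observe that $\text{getTransStrategy}$ (Algorithm~\ref{alg:trans-strategy}) assigns $\omega(s)=\transPtr$ to a slot $s$ exactly when $\text{getPtr}(s)\neq\spVar$ and the number of distinct taint types among all slots sharing the base pointer $\text{getPtr}(s)$ equals one, and $\omega(s)=\transOp$ otherwise. Hence $\omega$ depends on the memory type only through (i) the set of slots, (ii) each slot's base pointer, and (iii) for each base pointer $\ptrVar$, the set $T[\ptrVar]$ of taints carried by slots based at $\ptrVar$. I would prove each of these is preserved across the instructions of one function, so that the two runs of $\text{getTransStrategy}$ make identical decisions.

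First I would invoke Lemma~\ref{lemma:same-dom-mtype} to get $\dom{\MType_0}=\dom{\MType_1}$; since $\text{getPtr}$ is a function of the slot alone, corresponding slots have identical base pointers. Next, the crucial structural observation is that every spill slot (register spills and the reserved callee-stack slot $\sCalleeStack$) lives on the local stack and is based at $\spVar$; consequently every slot with $\text{getPtr}(s)\neq\spVar$ is a \emph{non-spill} slot. For such slots Lemma~\ref{lemma:same-taint-mtype} applies and yields equal taint types in $\MType_0$ and $\MType_1$. Therefore, for each $\ptrVar\neq\spVar$ the set $T_0[\ptrVar]$ of taints equals $T_1[\ptrVar]$, so $\text{cardinal}(T_0[\ptrVar])=\text{cardinal}(T_1[\ptrVar])$. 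Slots based at $\spVar$ are uniformly assigned $\transOp$ in either run (the guard $\ptrVar\neq\spVar$ fails), so no reasoning about $T[\spVar]$ is needed. Combining these, for every slot $s$ the predicate ``$\text{getPtr}(s)\neq\spVar\land\text{cardinal}(T[\text{getPtr}(s)])=1$'' has the same truth value under $\MType_0$ and $\MType_1$, whence $\omega_0(s)=\omega_1(s)$.

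It remains to discharge the assertion $\text{isLocalStack}(s)\lor\tau\in\qty{0,1}$ that each branch with $\omega(s)=\transOp$ emits, so that the run on $\MType_1$ also succeeds. If $\omega_1(s)=\transOp$ then $\omega_0(s)=\transOp$, so by the hypothesis that $\MType_0$ satisfies the assertion we have $\text{isLocalStack}(s)\lor\tau_0\in\qty{0,1}$, where $\tau_0$ is $s$'s taint in $\MType_0$. When $\text{isLocalStack}(s)$ the assertion for $\MType_1$ holds immediately; otherwise $s$ is not a local-stack slot, hence non-spill, so Lemma~\ref{lemma:same-taint-mtype} gives $\tau_1=\tau_0\in\qty{0,1}$ and the assertion again holds. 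Thus both runs are defined and produce the same map.

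The main obstacle I anticipate is justifying the structural fact that every spill slot --- including the somewhat atypical $\sCalleeStack$ --- is based at $\spVar$ and lies on the local stack (so that $\neg\text{isLocalStack}(s)\Rightarrow\neg\text{isSpill}(s)$), since this is precisely what licenses applying Lemma~\ref{lemma:same-taint-mtype} to exactly the slots that influence both the $\transPtr/\transOp$ decision and the emitted assertion. Everything else reduces to a routine case split over the base pointer of each slot.
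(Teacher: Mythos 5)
Your proposal is correct and follows essentially the same route as the paper's proof: use the fact that slots with base pointer other than $\spVar$ are non-local-stack and hence non-spill, apply Lemma~\ref{lemma:same-taint-mtype} to equate their taints (which both discharges the assertion for $\MType_1$ and gives $T_0[\ptrVar]=T_1[\ptrVar]$), and finish with the same case split on whether $\text{getPtr}(s)=\spVar$ and on $\text{cardinal}(T[\ptrVar])$. The only cosmetic difference is that you cite Lemma~\ref{lemma:same-dom-mtype} explicitly for $\dom{\MType_0}=\dom{\MType_1}$, which the paper uses implicitly.
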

\begin{proof}
This lemma means that memory state types of instructions in the same function have the same transformation strategy map.
For each non-local-stack slot $s$, $s$ is not a spill slot either, so $\MType_0[s]$ and $\MType_1[s]$ have the same taint type according to Lemma~\ref{lemma:same-taint-mtype}.
Thus, $\MType_1$ also satisfies the assertion in \text{getTransStrategy} (Algorithm~\ref{alg:trans-strategy}).

Then, we can denote $\omega_i=\text{getTransStrategy}(\MType_i)$, $i\in\qty{0, 1}$. 
We aim to prove that $\omega_0=\omega_1$.

Recall that in getTransStrategy (Algorithm~\ref{alg:trans-strategy}), we first build a map from each base pointer to a set of taint types of slots referenced by the pointer. Denote the map for $\MType_i$ as $T_i$, $i\in \qty{0, 1}$.
Note that if the base pointer of a slot is not $\spVar$, then the slot must not be a local stack slot.
Hence, for all $\ptrVar\in \dom{T_1}$, $\ptrVar\neq \spVar$, $T_0[\ptrVar]=T_1[\ptrVar]$.

For each $s\in\dom{\MType_0}$ with its base pointer denoted as $\ptrVar=\text{getPtr}(s)$, we consider the following cases:
\begin{itemize}
    \item $\ptrVar\neq \spVar$ and $\text{cardinal}(T_0[\ptrVar])=1$:
    $\ptrVar\neq \spVar$ implies that $T_0[\ptrVar]=T_1[\ptrVar]$, so $\text{cardinal}(T_0[\ptrVar])=\text{cardinal}(T_1[\ptrVar])=1$. Thus, $\omega_0(s)=\omega_1(s)=\transPtr$.
    \item $\ptrVar\neq \spVar$ and $\text{cardinal}(T_0[\ptrVar])\neq 1$:
    $\ptrVar\neq \spVar$ implies that $T_0[\ptrVar]=T_1[\ptrVar]$, so $\text{cardinal}(T_0[\ptrVar])=\text{cardinal}(T_1[\ptrVar])\neq 1$. Thus, $\omega_0(s)=\omega_1(s)=\transOp$.
    \item $\ptrVar=\spVar$: in this case, there should be $\omega_0(s)=\omega_1(s)=\transOp$.
\end{itemize}
Therefore, $\omega_0=\omega_1$.
\end{proof}

\begin{lemma}
    For all $\sigma$, $\sigma'$, $\MType$ and $\omega=\text{getTransStrategy}(\MType)$, if there exists $M$, $M'$ such that $\hasType{}{M}{\sigma(\MType)}$ and $\hasVal{}{(M', \sigma')\prec_{\MType}(M, \sigma)}$, then
    for all $s\in \dom{\MType}$, 
    \begin{align*}
        \text{getShiftedSlot}(\MType, \omega, \sigma, \sigma', s)
        & =
        \begin{cases}
            \sigma(s)\text{ or }\sigma(s)+\delta & \sigma(s)\subseteq \stackPub\\
            \sigma(s) & \text{otherwise},
        \end{cases}\\
        \text{getPossibleSlot}(\MType, \omega, \sigma, \sigma', s)
        & \subseteq 
        \begin{cases}
            \sigma(s)\cup (\sigma(s)+\delta) & \sigma(s)\subseteq \stackPub\\
            \sigma(s) & \text{otherwise},
        \end{cases}
    \end{align*}
    and
    \begin{equation*}
        \text{getShiftedSlot}(\MType, \omega, \sigma, \sigma', s) \subseteq
        \text{getPossibleSlot}(\MType, \omega, \sigma, \sigma', s).
    \end{equation*}
    \label{lemma:shifted-slot}
\end{lemma}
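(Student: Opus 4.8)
The plan is to reduce the statement to a finite case analysis driven by the two hypotheses, after first simplifying $\text{getShift}$. The decisive observation is that the simulation hypothesis $\hasVal{}{(M', \sigma') \prec_{\MType} (M, \sigma)}$, when unfolded through \textsc{Sim-Mem} at the slot $s$, yields precisely the equation $\sigma'(\text{getPtr}(s)) - \sigma(\text{getPtr}(s)) = \text{getPtrShift}(\MType, \omega, \sigma, s)$. Substituting this into the definition of $\text{getShift}$ collapses it to $\text{getShift}(\MType, \omega, \sigma, \sigma', s) = \text{getOpShift}(\MType, \omega, s) + \text{getPtrShift}(\MType, \omega, \sigma, s)$, so that both $\text{getShiftedSlot}$ and $\text{getPossibleSlot}$ become expressible purely in terms of $\text{getOpShift}$ and $\text{getPtrShift}$. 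In particular, when $\omega(s) = \transPtr$ one reads off immediately that $\text{getShiftedSlot}(\MType, \omega, \sigma, \sigma', s) = \sigma(s) + \text{getPtrShift}(\MType, \omega, \sigma, s) = \text{getPossibleSlot}(\MType, \omega, \sigma, \sigma', s)$, which already discharges the third conclusion with equality throughout that branch.

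First I would extract from the typing hypothesis $\hasType{}{M}{\sigma(\MType)}$, via \textsc{Mem-Type}, the region-placement fact that $\sigma(s) \subseteq \stackPub$, or $\sigma(s) \subseteq \otherPub$, or $\sigma(s) \subseteq \otherSec$. Since the memory regions fixed in the configuration assumptions are pairwise disjoint and $\stackSec$ appears in none of these three alternatives, this means that whenever $\sigma(s) \not\subseteq \stackPub$ we must have $\sigma(s) \subseteq \otherPub \cup \otherSec$, which is disjoint from $\stackPub \cup \stackSec$. This is exactly what the first clause of $\text{getPtrShift}$ needs in order to force a zero pointer shift, and it is the reason the ``otherwise'' branch of the claimed bound (no shift at all) is correct.

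With these two facts in hand I would carry out a case split on $\omega(s)$ and on whether $\sigma(s) \subseteq \stackPub$. If $\sigma(s) \subseteq \stackPub$, then in each subcase at most one of the two summands of $\text{getShift}$ is nonzero: when $\omega(s) = \transOp$ the definition forces $\text{getPtrShift} = 0$, and when $\omega(s) = \transPtr$ it forces $\text{getOpShift} = 0$. In either subcase each surviving summand lies in $\{0, \delta\}$, so $\text{getShift} \in \{0, \delta\}$ and $\text{getShiftedSlot} \in \{\sigma(s), \sigma(s) + \delta\}$, matching the first branch of the bound; the corresponding $\text{getPossibleSlot}$ is then contained in $\sigma(s) \cup (\sigma(s) + \delta)$ and contains $\text{getShiftedSlot}$. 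If instead $\sigma(s) \not\subseteq \stackPub$, the \textsc{Sim-Mem} implication $\omega(s) = \transOp \Rightarrow \sigma(s) \subseteq \stackPub$ rules out $\omega(s) = \transOp$, so $\omega(s) = \transPtr$ and $\text{getOpShift} = 0$; combined with the disjointness fact of the previous paragraph, $\text{getPtrShift} = 0$, whence $\text{getShift} = 0$ and $\text{getShiftedSlot} = \text{getPossibleSlot} = \sigma(s)$, matching the ``otherwise'' branch.

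The argument is essentially bookkeeping once the two simplifications are made, and I expect the main obstacle to be the $\omega(s) = \transPtr$ subcase with taint type $0$, where $\text{getPtrShift}$ is deliberately nondeterministic (it may be $0$ or $\delta$): here one must verify that \emph{each} permissible value is consistent with the set-valued bounds rather than with a single equality, and that this nondeterminism is exactly the nondeterminism already baked into $\sigma'$ through the \textsc{Sim-Mem} constraint. The only genuinely conceptual step, namely excluding $\stackSec$ from the placement of a well-typed slot, rests entirely on \textsc{Mem-Type} together with the pairwise disjointness of the four fixed memory regions.
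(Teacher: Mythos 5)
Your proposal is correct and follows essentially the same route as the paper's proof: unfold \textsc{Sim-Mem} to identify $\sigma'(\text{getPtr}(s))-\sigma(\text{getPtr}(s))$ with $\text{getPtrShift}$, use \textsc{Mem-Type} plus region disjointness to force a zero shift when $\sigma(s)\not\subseteq \stackPub$, and finish by the same case split on $\omega(s)$ and on whether $\sigma(s)\subseteq \stackPub$. The observations you flag as potential obstacles (the nondeterminism of $\text{getPtrShift}$ for taint $0$, and excluding $\stackSec$ via \textsc{Mem-Type}) are handled exactly as you anticipate in the paper's argument.
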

\begin{proof}
\textsc{Sim-Mem} implies that
\begin{align*}
    s' & \coloneq \text{getShiftedSlot}(\MType, \omega, \sigma, \sigma', s)\\
    & = \sigma(s)+\text{getOpShift}(\MType, \omega, s)+
    (\sigma'(\text{getPtr}(s))-\sigma(\text{getPtr}(s)))\\
    & = \sigma(s)+\text{getOpShift}(\MType, \omega, s)+\text{getPtrShift}(\MType, \omega, \sigma, s).
\end{align*}
We also denote
\begin{align*}
    s'' & \coloneq \text{getPossibleSlot}(\MType, \omega, \sigma, \sigma', s) \\
    & =
    \begin{cases}
        \sigma(s)\cup (\sigma(s)+\delta) & \omega(s)=\transOp\\
        \sigma(s)+(\sigma'(\text{getPtr}(s))-\sigma(\text{getPtr}(s))) & \omega(s)=\transPtr.
    \end{cases}
\end{align*}
We consider the following three cases:
\begin{itemize}
    \item $\sigma(s)\subseteq \stackPub$ and $\omega(s)=\transOp$: 
    \begin{align*}
        s'
        & = \sigma(s)+\text{getOpShift}(\MType, \omega, s)+0\\
        & = \sigma(s) \text{ or } \sigma(s)+\delta,\\
        s''& =\sigma(s)\cup (\sigma(s)+\delta).
    \end{align*}
    This implies that $s'\subseteq s''$.
    \item $\sigma(s) \subseteq \stackPub$ and $\omega(s)=\transPtr$:
    \begin{align*}
        s'
        & = \sigma(s)+0+(\sigma'(\text{getPtr}(s))-\sigma(\text{getPtr}(s)))\\
        & = \sigma(s)+ \text{getPtrShift}(\MType, \omega, \sigma, s)\\
        & = \sigma(s) \text{ or } \sigma(s)+\delta\\
        s'' & = \sigma(s)+ (\sigma'(\text{getPtr}(s))-\sigma(\text{getPtr}(s)))= s'\\
        & =\sigma(s)+\text{getPtrShift}(\MType, \omega, \sigma, s) \\ 
        & \subseteq \sigma(s)\cup (\sigma(s)+\delta).
    \end{align*}
    \item $\sigma(s)\not\subseteq \stackPub$:
    this implies that 
    \begin{equation*}
        \sigma'(\text{getPtr}(s))-\sigma(\text{getPtr}(s))=\text{getPtrShift}(\MType, \omega, \sigma, s)=0.
    \end{equation*}
    \textsc{Sim-Mem} requires $\omega(s)=\transOp \Rightarrow \sigma(s)\subseteq \stackPub$, so in this case $\omega(s)\neq \transOp$, i.e., $\omega(s)=\transPtr$.
    Therefore,
    \begin{align*}
        s' & = \sigma(s)+0+0 =\sigma(s)\\
        s'' & =\sigma(s)+0 =\sigma(s) = s' \subseteq \sigma(s)\cup (\sigma(s)+\delta).
    \end{align*}
\end{itemize}
\end{proof}

\begin{lemma}
    For all $\sigma$, $\sigma'$, $\MType$ and $\omega=\text{getTransStrategy}(\MType)$, if there exists $M$, $M'$ such that $\hasType{}{M}{\sigma(\MType)}$ and $\hasVal{}{(M', \sigma')\prec_{\MType}(M, \sigma)}$, then
    for all $s_1, s_2\in \dom{\MType}$, $\sigma(s_1)\cap \sigma(s_2)=\emptyset$, denoting    
    \begin{align*}
        s_1' & =\text{getPossibleSlot}(\MType, \omega, \sigma, \sigma', s_1)\\
        s_2' & = \text{getPossibleSlot}(\MType, \omega, \sigma, \sigma', s_2),
    \end{align*}
    there should be $s_1'\cap s_2'=\emptyset$.
    \label{lemma:possible-slot-non-overlap}
\end{lemma}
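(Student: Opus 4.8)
The plan is to collapse the somewhat intricate definition of $\text{getPossibleSlot}$ into simple bounding supersets using the preceding lemma, and then finish with a short case analysis on which memory region each slot occupies. First I would fix witnesses $M,M'$ for the existential hypotheses (so that $\hasType{}{M}{\sigma(\MType)}$ and $\hasVal{}{(M', \sigma')\prec_{\MType}(M, \sigma)}$ hold), and apply Lemma~\ref{lemma:shifted-slot} to each of $s_1$ and $s_2$; its hypotheses are exactly those of the present lemma. This yields
\begin{equation*}
s_i' \subseteq T_i \coloneqq
\begin{cases}
\sigma(s_i)\cup(\sigma(s_i)+\delta) & \sigma(s_i)\subseteq \stackPub,\\
\sigma(s_i) & \text{otherwise,}
\end{cases}
\end{equation*}
so it suffices to prove $T_1\cap T_2=\emptyset$.

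The argument then rests on three elementary facts. (i) The four regions $\stackPub$, $\stackSec$, $\otherPub$, $\otherSec$ are pairwise disjoint. (ii) By their definitions, $\stackPub+\delta=\stackSec$ exactly, so $\sigma(s)\subseteq\stackPub$ implies $\sigma(s)+\delta\subseteq\stackSec$. (iii) A common shift preserves disjointness: if $A\cap B=\emptyset$ then $(A+\delta)\cap(B+\delta)=\emptyset$, since any common point of the shifts would have its $\delta$-preimage in $A\cap B$. I also need that each instantiated slot lies wholly inside a non-secret-stack region: unfolding \textsc{Reg-Mem-Type} and \textsc{Mem-Type} on $\hasType{}{M}{\sigma(\MType)}$ gives, for each $s\in\dom{\MType}$, that $\sigma(s)\subseteq\stackPub$, $\sigma(s)\subseteq\otherPub$, or $\sigma(s)\subseteq\otherSec$.

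With these in hand the proof splits on whether $\sigma(s_1)$ and $\sigma(s_2)$ are subsets of $\stackPub$. If neither is, then $T_i=\sigma(s_i)$ and disjointness is just the hypothesis $\sigma(s_1)\cap\sigma(s_2)=\emptyset$. If exactly one is, say $\sigma(s_1)\subseteq\stackPub$ and $\sigma(s_2)\not\subseteq\stackPub$, then by (ii) we have $T_1\subseteq\stackPub\cup\stackSec$ while the region membership above forces $T_2=\sigma(s_2)\subseteq\otherPub\cup\otherSec$; these are disjoint by (i). The remaining case, $\sigma(s_1),\sigma(s_2)\subseteq\stackPub$, is the one that needs care: here each $T_i$ has a public-stack part $\sigma(s_i)$ and a secret-stack part $\sigma(s_i)+\delta$. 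The two public parts are disjoint by hypothesis; the two secret parts are disjoint by (iii); and the cross terms pair a subset of $\stackPub$ with a subset of $\stackSec$, hence are disjoint by (i) together with (ii). Thus $T_1\cap T_2=\emptyset$ in every case, and therefore $s_1'\cap s_2'\subseteq T_1\cap T_2=\emptyset$.

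I expect the only mildly delicate step to be this last case, where the intersection $T_1\cap T_2$ must be checked term by term across the public/secret split; everything else is immediate once Lemma~\ref{lemma:shifted-slot} has reduced $\text{getPossibleSlot}$ to the bounding sets $T_i$ and the region-membership invariant rules out any slot living in $\stackSec$ before transformation.
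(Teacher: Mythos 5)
Your proposal is correct and follows essentially the same route as the paper: bound each $\text{getPossibleSlot}$ set by $\sigma(s_i)\cup(\sigma(s_i)+\delta)$ when $\sigma(s_i)\subseteq \stackPub$ and by $\sigma(s_i)$ otherwise (the paper states these containments directly, where you invoke Lemma~\ref{lemma:shifted-slot} explicitly, which is a fine — arguably cleaner — justification), use \textsc{Mem-Type} for region membership, and then split on which of $\sigma(s_1),\sigma(s_2)$ lie in $\stackPub$, expanding the both-public-stack case into the four pairwise-empty intersections. No gaps; your explicit facts (region disjointness, $\stackPub+\delta=\stackSec$, shifts preserving disjointness) are exactly the ingredients the paper uses implicitly.
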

\begin{proof}
\textsc{Mem-Type} implies that for $i\in \qty{1, 2}$, $\sigma(s_i)\subseteq \stackPub\lor \sigma(s_i) \subseteq \otherPub\cup \otherSec$.
We then consider the following four cases about $s_1$ and $s_2$:
\begin{itemize}
    \item $\sigma(s_1)\subseteq \stackPub$ and $\sigma(s_2)\subseteq \stackPub$:
    this implies that $(\sigma(s_1)+\delta) \subseteq \stackSec$ and $(\sigma(s_2)+\delta) \subseteq \stackSec$. Since $\stackPub \cap \stackSec = \emptyset$ and $\sigma(s_1) \cap \sigma(s_2)=\emptyset$, then
    \begin{align*}
        s_1' \cap s_2'
        & \subseteq (\sigma(s_1)\cup (\sigma(s_1)+\delta)) \cap (\sigma(s_2)\cup (\sigma(s_2)+\delta))\\
        & = (\sigma(s_1) \cap \sigma(s_2)) \cup (\sigma(s_1) \cap (\sigma(s_2)+\delta)) \\
        & \quad \cup ((\sigma(s_1)+\delta) \cap \sigma(s_2)) \cup ((\sigma(s_1)+\delta) \cap (\sigma(s_2)+\delta))\\
        & = \emptyset \cup \emptyset \cup \emptyset \cup \emptyset = \emptyset.
    \end{align*}
    \item $\sigma(s_1)\subseteq \stackPub$ and $\sigma(s_2)\not\subseteq \stackPub$:
    \begin{align*}
        s_1' \cap s_2'
        & \subseteq (\sigma(s_1)\cup (\sigma(s_1)+\delta)) \cap \sigma(s_2)\\
        & \subseteq (\stackPub \cup \stackSec) \cap (\otherPub \cup \otherSec)\\
        & = \emptyset.
    \end{align*}
    \item $\sigma(s_1)\not\subseteq \stackPub$ and $\sigma(s_2)\subseteq \stackPub$: the proof is similar to the last case.
    \item $\sigma(s_1)\not\subseteq \stackPub$ and $\sigma(s_2)\not\subseteq \stackPub$:
    \begin{equation*}
        s_1' \cap s_2' \subseteq \sigma(s_1) \cap \sigma(s_2) = \emptyset.
    \end{equation*}
\end{itemize}
\end{proof}

\begin{lemma}
    For all $\sigma$, $\sigma'$, $\MType$ and $\omega=\text{getTransStrategy}(\MType)$, if there exists $M$, $M'$ such that $\hasType{}{M}{\sigma(\MType)}$ and $\hasVal{}{(M', \sigma')\prec_{\MType}(M, \sigma)}$, then
    for all $s_1, s_2\in \dom{\MType}$, $\sigma(s_1)\cap \sigma(s_2)=\emptyset$, denoting    
    \begin{align*}
        s_1' & =\text{getShiftedSlot}(\MType, \omega, \sigma, \sigma', s_1)\\
        s_2' & = \text{getShiftedSlot}(\MType, \omega, \sigma, \sigma', s_2),
    \end{align*}
    there should be
    \begin{equation*}
        s_1'\cap s_2' = s_1'\cap \sigma(s_2) = \sigma(s_1)\cap s_2' = \emptyset.
    \end{equation*}
    \label{lemma:shifted-slot-non-overlap}
\end{lemma}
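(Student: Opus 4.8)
The plan is to reduce this statement to the two results already established for this fixed memory configuration: Lemma~\ref{lemma:shifted-slot}, which characterizes $\text{getShiftedSlot}$, and Lemma~\ref{lemma:possible-slot-non-overlap}, which gives disjointness of $\text{getPossibleSlot}$. Fix $s_1, s_2 \in \dom{\MType}$ with $\sigma(s_1) \cap \sigma(s_2) = \emptyset$, and write $s_i' = \text{getShiftedSlot}(\MType, \omega, \sigma, \sigma', s_i)$. By Lemma~\ref{lemma:shifted-slot} each $s_i'$ satisfies $s_i' \subseteq \text{getPossibleSlot}(\MType, \omega, \sigma, \sigma', s_i)$, and more sharply
\[
s_i' \subseteq
\begin{cases}
\sigma(s_i) \cup (\sigma(s_i)+\delta) & \sigma(s_i) \subseteq \stackPub,\\
\sigma(s_i) & \text{otherwise},
\end{cases}
\]
where in the first branch $\sigma(s_i)+\delta \subseteq \stackSec$ (shifting $\stackPub$ by $\delta$ yields exactly $\stackSec$ under the memory configuration). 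Note also that the hypotheses $\hasType{}{M}{\sigma(\MType)}$ and $\hasVal{}{(M',\sigma')\prec_{\MType}(M,\sigma)}$ assumed here are precisely those required by both prior lemmas.

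First I would dispatch $s_1' \cap s_2' = \emptyset$ immediately: since $s_i' \subseteq \text{getPossibleSlot}(\MType, \omega, \sigma, \sigma', s_i)$ by Lemma~\ref{lemma:shifted-slot}, and Lemma~\ref{lemma:possible-slot-non-overlap} gives that the two possible-slot sets are disjoint under the given $\sigma(s_1)\cap\sigma(s_2)=\emptyset$, the intersection of the subsets is empty as well.

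The two mixed terms $s_1' \cap \sigma(s_2)$ and $\sigma(s_1) \cap s_2'$ need a little more care, because $\sigma(s_2)$ is \emph{not} in general contained in $\text{getPossibleSlot}(\MType, \omega, \sigma, \sigma', s_2)$: when $\omega(s_2)=\transPtr$ with a nonzero pointer shift, the possible slot is a translate of $\sigma(s_2)$ that need not cover $\sigma(s_2)$ itself, so Lemma~\ref{lemma:possible-slot-non-overlap} cannot be invoked directly on these terms. Instead I would argue by the region partition. Consider $s_1' \cap \sigma(s_2)$ and split on whether $\sigma(s_1) \subseteq \stackPub$. If $\sigma(s_1) \not\subseteq \stackPub$, then $s_1' = \sigma(s_1)$ by the sharp characterization above, so $s_1' \cap \sigma(s_2) = \sigma(s_1) \cap \sigma(s_2) = \emptyset$ by hypothesis. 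If $\sigma(s_1) \subseteq \stackPub$, then $s_1' \subseteq \sigma(s_1) \cup (\sigma(s_1)+\delta)$ with $\sigma(s_1)+\delta \subseteq \stackSec$; applying \textsc{Mem-Type} (which forces $\sigma(s_2) \subseteq \stackPub$ or $\sigma(s_2) \subseteq \otherPub \cup \otherSec$), the part of $\sigma(s_2)$ lying in $\otherPub \cup \otherSec$ is disjoint from both $\stackPub$ and $\stackSec$, while the $\stackPub$-part of $\sigma(s_2)$ is disjoint from $\sigma(s_1)$ by hypothesis and disjoint from $\sigma(s_1)+\delta$ because $\stackPub \cap \stackSec = \emptyset$. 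Hence $s_1' \cap \sigma(s_2) = \emptyset$, and the term $\sigma(s_1) \cap s_2'$ follows symmetrically.

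The work is entirely the region bookkeeping, mirroring the four-way case split of Lemma~\ref{lemma:possible-slot-non-overlap}; the only genuine subtlety — and the one I expect to be the main obstacle — is recognizing that the mixed terms cannot be pushed through $\text{getPossibleSlot}$ and must instead be settled using the sharper $\stackPub/\stackSec/(\otherPub\cup\otherSec)$ containments from Lemma~\ref{lemma:shifted-slot} together with the given slot-disjointness $\sigma(s_1)\cap\sigma(s_2)=\emptyset$.
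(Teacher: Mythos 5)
Your proof is correct, but it does not follow the paper's own route; in fact it repairs it. The paper's proof is a one-liner: citing Lemma~\ref{lemma:shifted-slot}, it asserts both $s_i'\subseteq s_i''$ and $\sigma(s_i)\subseteq s_i''$ for $s_i''=\text{getPossibleSlot}(\MType,\omega,\sigma,\sigma',s_i)$, so that all three intersections land inside $s_1''\cap s_2''$, which Lemma~\ref{lemma:possible-slot-non-overlap} makes empty. The second containment is precisely the one you flag as unavailable, and you are right to reject it: Lemma~\ref{lemma:shifted-slot} only provides $s_i'\subseteq s_i''$, and the containment $\sigma(s_i)\subseteq s_i''$ genuinely fails when $\omega(s_i)=\transPtr$, $\sigma(s_i)\subseteq\stackPub$, and the slot's taint type under $\sigma$ is $1$, since then \textsc{Sim-Mem} forces the pointer shift to equal $\delta$, making $s_i''=\sigma(s_i)+\delta\subseteq\stackSec$ disjoint from $\sigma(s_i)$ rather than a superset of it. Your two-track treatment --- invoking the possible-slot lemma only for $s_1'\cap s_2'$, and settling the mixed terms $s_1'\cap\sigma(s_2)$ and $\sigma(s_1)\cap s_2'$ by direct region bookkeeping (Lemma~\ref{lemma:shifted-slot}'s exact characterization of $s_i'$, \textsc{Mem-Type}'s placement of $\sigma(s_2)$ inside $\stackPub\cup\otherPub\cup\otherSec$, the identity $\stackSec=\stackPub+\delta$, and the hypothesis $\sigma(s_1)\cap\sigma(s_2)=\emptyset$) --- is sound at every step. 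What the paper's shortcut buys is brevity; what yours buys is an argument that actually goes through, at the cost of one extra case split.
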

\begin{proof}
    Denote $s_i''=\text{getPossibleSlot}(\MType, \omega, \sigma, \sigma', s_i)$, $i\in \qty{1, 2}$.
    According to Lemma~\ref{lemma:shifted-slot}, there should be $\sigma(s_i)\subseteq s_i''$ and $s_i' \subseteq s_i''$.
    Thus, the statement can be proved by applying Lemma~\ref{lemma:possible-slot-non-overlap}.
\end{proof}

\begin{lemma}
For all $e$, $\tau$, $v$, $t$, $v'$, $t'$, $\sigma$, $\sigma'$, $\Delta$,
if 
\begin{equation*}
    \hasVal{}{((v', t'), \sigma')\prec_{e, \tau} ((v, t), \sigma)} \quad
    \hasVal{\Delta}{\sigma'\prec \sigma},
\end{equation*}
then for all $\sigma_m$, $\sigma_m'$, $e_1$, $\tau_1$, $\Delta_1$ where 
\begin{align*}
    & \hasVal{\Delta}{e=\sigma_m(e_1)\lor (\text{isNonChangeExp}(e)\land \sigma_m(e_1)=\top)} \\
    & \hasVal{\Delta}{\sigma_m(\Delta_1)} \qquad
    \forall x\in\dom{\sigma_m}.\; \sigma_m(x)\neq \top 
\end{align*}
the following statements hold:
\begin{itemize}
    \item if $\sigma_m'(e_1)=\sigma_m(e_1)=\top$, then $\hasVal{}{((v', t'), \sigma'\circ \sigma_m')\prec_{e_1, \tau_1} ((v, t), \sigma\circ \sigma_m)}$;
    \item if $\sigma_m'(e_1)=\sigma_m(e_1)+c \neq \top$ for some constant number $c$, then $\hasVal{}{((v'+c, t'), \sigma'\circ \sigma_m')\prec_{e_1, \tau_1} ((v, t), \sigma\circ \sigma_m)}$.
\end{itemize}
\label{lemma:sim-val-substitute-general}
\end{lemma}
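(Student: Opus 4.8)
The plan is to prove the two asserted conclusions separately, after first establishing a bridging fact that lines up the case split on $\sigma_m(e_1)$ (which selects between the two statements) with the \emph{syntactic} test $e_1=\top$ that the rule \textsc{Sim-Val} actually performs. Since $\top$ occurs only as a complete dependent type (never as a proper subterm) and the hypothesis $\forall x\in\dom{\sigma_m}.\;\sigma_m(x)\neq\top$ guarantees that $\sigma_m$ sends variables to $\top$-free expressions, I would first observe that $\sigma_m(e_1)=\top$ iff $e_1=\top$. Thus the first statement lands in the $e_1=\top$ branch of \textsc{Sim-Val} and the second in the $e_1\neq\top$ branch. Throughout I rely on the fact that the concrete maps $\sigma,\sigma'$ instantiate their context to true (available from the ambient well-formed state in the calling context, cf.\ Lemma~\ref{lemma:stmt-instance}), so the entailments $\hasVal{\Delta}{\cdots}$ in the hypotheses can be read as concrete equalities after applying $\sigma$.

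For the first statement ($\sigma_m(e_1)=\sigma_m'(e_1)=\top$, hence $e_1=\top$), \textsc{Sim-Val} reduces the goal to $v'=v$. I would unfold the premise $\prec_{e,\tau}$ and split on whether $e=\top$. If $e=\top$, \textsc{Sim-Val} gives $v'=v$ directly. Otherwise the disjunction hypothesis forces its right disjunct (its left disjunct would read $e=\top$), yielding $\text{isNonChangeExp}(e)$; then \textsc{Sim-Var-Map} ($\sigma'\prec\sigma$), which equates $\sigma'$ and $\sigma$ on non-change variables, gives $\sigma'(e)=\sigma(e)$, and since $v'=\sigma'(e)$ and $v=\sigma(e)$ we again conclude $v'=v$.

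For the second statement ($\sigma_m'(e_1)=\sigma_m(e_1)+c\neq\top$, hence $e_1\neq\top$), I must establish the two \textsc{Sim-Val} equalities $v=(\sigma\circ\sigma_m)(e_1)$ and $v'+c=(\sigma'\circ\sigma_m')(e_1)$. Because $\sigma_m(e_1)\neq\top$, the disjunction hypothesis reduces to the entailment $e=\sigma_m(e_1)$, which rules out $e=\top$ in the premise; so \textsc{Sim-Val} gives $v=\sigma(e)$ and $v'=\sigma'(e)$. The original-side equality follows by applying $\sigma$ to $e=\sigma_m(e_1)$: $v=\sigma(e)=\sigma(\sigma_m(e_1))=(\sigma\circ\sigma_m)(e_1)$. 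For the transformed side I distribute the shift, $\sigma_m'(e_1)=\sigma_m(e_1)+c$, so $(\sigma'\circ\sigma_m')(e_1)=\sigma'(\sigma_m(e_1))+c$, reducing the goal to $\sigma'(e)=\sigma'(\sigma_m(e_1))$.

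This last equality is the main obstacle: $e=\sigma_m(e_1)$ is known only under $\Delta$, i.e.\ verified by $\sigma$, whereas $\sigma'$ is the shifted substitution and need not validate $\Delta$ verbatim. The plan is to exploit that, by \textsc{Sim-Var-Map}, $\sigma'$ differs from $\sigma$ only by constant offsets on the change (pointer) variables, so $\sigma'$ acts on any expression as $\sigma$ plus the sum of the shifts of the change variables occurring in it; using the hypothesis $\Delta\vdash\sigma_m(\Delta_1)$ to transport the non-change discipline of the target context through $\sigma_m$, I would argue that $e$ and $\sigma_m(e_1)$ carry the same change-variable content, so these shift-corrections coincide and $\sigma(e)=\sigma(\sigma_m(e_1))$ lifts to $\sigma'(e)=\sigma'(\sigma_m(e_1))$. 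Adding $c$ then yields $v'+c=(\sigma'\circ\sigma_m')(e_1)$. I expect the delicate points to be (i) justifying that equality under $\Delta$ preserves change-variable \emph{structure} rather than merely numeric value, and (ii) checking that the distribution of $\sigma'$ over the constant shift $c$ and over $\sigma_m'=\sigma_m[\,\cdot\mapsto\,\cdot+c]$ matches exactly the pointer-argument offset introduced by $\CompilerPtr$ (Algorithm~\ref{alg:compiler-ptr}), which is where the $+c$ in the conclusion originates.
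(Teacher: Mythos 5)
Your proposal follows the paper's proof almost step for step: the same top-level split on $\sigma_m(e_1)$ (equal to $\top$ versus shifted by $c$), the same observation that $\forall x\in\dom{\sigma_m}.\;\sigma_m(x)\neq\top$ forces $e_1=\top$ in the first case, and the identical sub-case analysis on $e=\top$ versus $e\neq\top$, using $\text{isNonChangeExp}(e)$ together with \textsc{Sim-Var-Map} to get $\sigma'(e)=\sigma(e)$ and hence $v'=v$. (One point you leave implicit but which is harmless: $t=t'$ is forced by the very shape of \textsc{Sim-Val}, whose conclusion carries the same taint value on both sides.)

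The only divergence is in the second case. The paper simply reads the hypothesis as $e=\sigma_m(e_1)$ and computes $\sigma_1(e_1)=\sigma(e)=v$ and $\sigma_1'(e_1)=\sigma'(\sigma_m(e_1)+c)=\sigma'(e)+c=v'+c$; that is, it applies the equality under $\sigma'$ as well as under $\sigma$. You correctly notice that, taken as a $\Delta$-entailment, the equality $e=\sigma_m(e_1)$ is only guaranteed under substitutions validating $\Delta$, which $\sigma$ does (via the ambient well-formed state) but $\sigma'$ need not, and you try to close this with a ``same change-variable content'' argument. That patch is where your proposal is weakest: $\Delta$-equality does not preserve pointer-variable structure in general (for instance, $\Delta$ could relate two distinct pointer variables that $\sigma'$ shifts by different amounts), so the transport of $\sigma(e)=\sigma(\sigma_m(e_1))$ to $\sigma'(e)=\sigma'(\sigma_m(e_1))$ cannot be derived from the stated hypotheses alone. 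The paper does not prove this step either --- it treats the hypothesis equality as substitutive --- and in the lemma's actual uses the step is underwritten by the side conditions imported from \textsc{Typing-Callq} and $\CompilerPtr$ (pointer arguments decompose as a pointer variable plus a non-change offset, and $\sigma_m'$ agrees with $\sigma_m$ except on the designated shifted pointer variables), not by any structure-preservation property of $\Delta$-equality. So your instinct about the obstacle is sound and goes beyond what the paper writes down, but your proposed workaround does not actually close it; the paper's own proof silently assumes exactly the fact you were trying to establish.
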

\begin{proof}
Denote $\sigma_1=\sigma\circ \sigma_m$, $\sigma_1'=\sigma'\circ \sigma_m'$.
Consider the following cases:
\begin{itemize}
    \item $\sigma_m'(e_1)=\sigma_m(e_1)=\top$: according to \textsc{Sim-Val} and our assumption, there should be $t=t'$, so we just need to prove that
    \begin{equation*}
        (e_1=\top \land v'=v)\lor (e_1\neq \top \land v'=\sigma_1'(e_1)\land v=\sigma_1(e_1)).
    \end{equation*}
    Since for all $x\in\dom{\sigma_m}$, $\sigma_m(x)\neq \top$, then $\sigma_m(e_1)=\top$ implies that $e_1=\top$.
    Consider the following cases:
    \begin{itemize}
        \item $e=\top$: according to \textsc{Sim-Val}, $\hasVal{}{((v', t'), \sigma')\prec_{e, \tau} ((v, t), \sigma)}$ implies that $v=v'$, so the statement is true.
        \item $e\neq \top$: in this case, $\sigma_m(e_1)=\top \neq e$. Then, according to our assumption, there must be $\text{isNonChangeExp}(e)$, so $v'=\sigma'(e)=\sigma(e)=v$ and the statement is true.
    \end{itemize}
    \item $\sigma_m'(e_1)=\sigma_m(e_1)+c \neq \top$: according to \textsc{Sim-Val} and our assumption, there should be $t=t'$, so we just need to prove that
    \begin{equation*}
        (e_1=\top \land v'+c=v)\lor (e_1\neq \top \land v'+c=\sigma_1'(e_1)\land v=\sigma_1(e_1)).
    \end{equation*}
    According to our assumption, there must be $\sigma_m(e_1)=e$ and $\sigma_m'(e_1)=\sigma_m(e_1)+c=e+c$.
    Then, 
    \begin{align*}
        \sigma_1(e_1) & = \sigma(e) = v &
        \sigma_1'(e_1) & = \sigma'(e)+c = v'+c.
    \end{align*}
    Thus, the statement is true.
\end{itemize}
\end{proof}

\begin{lemma}
For all $e$, $\tau$, $v$, $t$, $v'$, $t'$, $\sigma$, $\sigma'$, $\Delta$,
if 
\begin{equation*}
    \hasVal{}{((v', t'), \sigma')\prec_{e, \tau} ((v, t), \sigma)} \quad
    \hasVal{\Delta}{\sigma'\prec \sigma},
\end{equation*}
then for all $\sigma_m$, $e_1$, $\tau_1$, $\Delta_1$ where 
\begin{align*}
    & \hasVal{\Delta}{e=\sigma_m(e_1)\lor (\text{isNonChangeExp}(e)\land \sigma_m(e_1)=\top)} \\
    & \hasVal{\Delta}{\sigma_m(\Delta_1)} \qquad
    \forall x\in\dom{\sigma_m}.\; \sigma_m(x)\neq \top,
\end{align*}
then there should be
\begin{equation*}
    \hasVal{}{((v', t'), \sigma'\circ \sigma_m)\prec_{e_1, \tau_1} ((v, t), \sigma\circ \sigma_m)}.
\end{equation*}
\label{lemma:sim-val-substitute-special}
\end{lemma}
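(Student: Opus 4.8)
The plan is to derive this lemma directly as a specialization of the preceding Lemma~\ref{lemma:sim-val-substitute-general}, instantiating its secondary substitution $\sigma_m'$ with the single substitution $\sigma_m$ appearing here. First I would observe that every hypothesis of the present statement is already a hypothesis of the general lemma: the value-level simulation $\hasVal{}{((v', t'), \sigma')\prec_{e, \tau} ((v, t), \sigma)}$, the substitution-level relation $\hasVal{\Delta}{\sigma'\prec \sigma}$, the disjunction relating $e$ to $\sigma_m(e_1)$ (namely $\hasVal{\Delta}{e=\sigma_m(e_1)\lor (\text{isNonChangeExp}(e)\land \sigma_m(e_1)=\top)}$), the entailment $\hasVal{\Delta}{\sigma_m(\Delta_1)}$, and the non-topness condition $\forall x\in\dom{\sigma_m}.\;\sigma_m(x)\neq\top$ all transfer verbatim. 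Hence Lemma~\ref{lemma:sim-val-substitute-general} becomes applicable as soon as we supply a choice for $\sigma_m'$, and the natural choice is $\sigma_m' = \sigma_m$.

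Next I would perform a case split on whether $\sigma_m(e_1) = \top$. In the case $\sigma_m(e_1) = \top$, taking $\sigma_m' = \sigma_m$ makes $\sigma_m'(e_1)=\sigma_m(e_1)=\top$, which is exactly the premise of the first conclusion of Lemma~\ref{lemma:sim-val-substitute-general}; that conclusion yields $\hasVal{}{((v', t'), \sigma'\circ \sigma_m)\prec_{e_1, \tau_1} ((v, t), \sigma\circ \sigma_m)}$, which is precisely the goal. In the complementary case $\sigma_m(e_1) \neq \top$, again taking $\sigma_m' = \sigma_m$ gives $\sigma_m'(e_1) = \sigma_m(e_1) = \sigma_m(e_1) + 0 \neq \top$, so the second conclusion applies with the constant $c = 0$, producing $\hasVal{}{((v'+0, t'), \sigma'\circ \sigma_m)\prec_{e_1, \tau_1} ((v, t), \sigma\circ \sigma_m)}$; since $v' + 0 = v'$ this again collapses to the goal.

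Since the two cases are exhaustive, combining them closes the proof. There is no genuine obstacle here: the statement is a corollary whose only subtlety is recognizing that the $c=0$ instance of the second branch of the general lemma collapses the shifted value $v'+c$ back to $v'$, so that both branches of the case split land on the same conclusion. The only care required is the bookkeeping of verifying that the instantiation $\sigma_m'=\sigma_m$ satisfies the branch-specific premises ($\sigma_m'(e_1)=\top$ versus $\sigma_m'(e_1)=\sigma_m(e_1)+c\neq\top$) that Lemma~\ref{lemma:sim-val-substitute-general} demands in each of its two conclusions.
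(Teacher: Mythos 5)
Your proposal is correct and matches the paper's own proof, which likewise derives this lemma by applying Lemma~\ref{lemma:sim-val-substitute-general} with $\sigma_m'=\sigma_m$ and $c=0$. Your explicit case split on whether $\sigma_m(e_1)=\top$ merely spells out the bookkeeping the paper leaves implicit.
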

\begin{proof}
This can be proved by applying Lemma~\ref{lemma:sim-val-substitute-general} with $\sigma_m'=\sigma_m$ and $c=0$.
\end{proof}

\begin{lemma}
For all $(\Delta, \RType, \MType)$, $(\Delta_1, \RType_1, \MType_1)$, $\sigma$, $\sigma'$, $\sigma_1$, $\sigma_1'$, $\sigma_m$, $\sigma_m'$, if
\begin{align*}
    & \sigma(\Delta)=\mathit{true} \quad \hasVal{\Delta}{\sigma'\prec \sigma} \quad \sigma_1=\sigma \circ \sigma_m \quad \sigma_1'=\sigma'\circ \sigma_m'\\
    & \hasVal{\Delta}{\sigma_m(\Delta_1)} \quad \forall x\in \dom{\sigma_m}.\; \sigma_m(x)\neq \top \quad \sigma_m(\spVar)= \spVar+c\\
    & \forall x,\sigma_m(x)=e,\text{isPtr}(x).\; \hasVal{\Delta}{\text{isNonChangeExp}(e-\text{getPtr}(e))}\\
    & (\_, \sigma_m')=\CompilerPtr((\Delta, \RType, \MType), \sigma_m(\Delta_1, \RType_1, \MType_1), \sigma_m)\\
    & \omega=\text{getTransStrategy}(\MType) \quad \omega_1=\text{getTransStrategy}(\MType_1),
\end{align*}
then for all $s$, $s_1$ where
\begin{equation*}
    \hasVal{\Delta}{\sigma_m(s_1)\subseteq s \land \MType[s]\sqsubseteq \sigma_m(\MType_1)} \quad
    \neg \text{isSpill}(s),
\end{equation*}
then there should be
\begin{equation*}
    \text{getShift}(\MType_1, \omega_1, \sigma_1, \sigma_1', s_1)=\text{getShift}(\MType, \omega, \sigma, \sigma', s).
\end{equation*}
\label{lemma:sim-mem-same-shift}
\end{lemma}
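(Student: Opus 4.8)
The plan is to unfold $\text{getShift}$ into its two summands, $\text{getOpShift}(\cdot,\cdot,\cdot)$ and the pointer displacement $\sigma'(\text{getPtr}(s))-\sigma(\text{getPtr}(s))$, and to match the caller side $(\MType,\omega,\sigma,\sigma',s)$ against the callee side $(\MType_1,\omega_1,\sigma_1,\sigma_1',s_1)$ summand by summand. Writing $\ptrVar_1=\text{getPtr}(s_1)$, $\ptrVar=\text{getPtr}(s)$, $\MType[s]=(\_,(\_,\tau))$ and $\MType_1[s_1]=(\_,(\_,\tau_1))$, I would first feed $\hasVal{\Delta}{\MType[s]\sqsubseteq \sigma_m(\MType_1[s_1])}$ together with $\neg\text{isSpill}(s)$ into \textsc{Mem-Slot-Subtype}. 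This yields $\text{getPtr}(s)=\text{getPtr}(\sigma_m(s_1))=\sigma_m(\ptrVar_1)$ and the taint equality $\tau=\sigma_m(\tau_1)$; since the substitutions fix taint variables this is $\tau=\tau_1$ syntactically, so $\MType[s]$ has literal taint $0$ iff $\MType_1[s_1]$ does, which is exactly the predicate that $\text{getOpShift}$ inspects.

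Next I would compute the two displacements. Because $\sigma_1=\sigma\circ\sigma_m$ and $\ptrVar=\sigma_m(\ptrVar_1)$, the callee anchor collapses to the caller one: $\sigma_1(\ptrVar_1)=\sigma(\ptrVar)$. For the primed side I would use the only way in which $\sigma_m'$ differs from $\sigma_m$: by Algorithm~\ref{alg:compiler-ptr}, $\sigma_m'(\ptrVar_1)=\sigma_m(\ptrVar_1)+\delta$ exactly when $\ptrVar_1\in S_\transPtr$ and $\sigma_m'(\ptrVar_1)=\sigma_m(\ptrVar_1)$ otherwise, so (as $\delta$ is a constant and commutes with value substitution) $\sigma_1'(\ptrVar_1)=\sigma'(\ptrVar)+\delta$ on that branch and $\sigma'(\ptrVar)$ otherwise. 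Hence the callee displacement equals the caller displacement plus an extra $\delta$ precisely when $\ptrVar_1\in S_\transPtr$, and the whole lemma reduces to the single identity
\begin{equation*}
\text{getOpShift}(\MType,\omega,s)=\text{getOpShift}(\MType_1,\omega_1,s_1)+(\ptrVar_1\in S_\transPtr\,?\,\delta:0).
\end{equation*}

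I would then establish this identity by case analysis on the common taint and on the two strategies. If $\tau_1=0$ both op-shifts are $0$, and I would show $\ptrVar_1\notin S_\transPtr$ (entering $S_\transPtr$ forces a $\transPtr$ slot sharing $\ptrVar_1$, whose per-base-pointer uniformity in Algorithm~\ref{alg:trans-strategy} would force taint $0$ on the entering slot, contradicting its $\sigma_m(\tau)\neq 0$ guard). If the taint is nonzero I would split on $\omega_1(s_1)$: when $\omega_1(s_1)=\transPtr$ the callee op-shift is $0$ while the caller op-shift is $0$ or $\delta$ according to $\omega(s)$, and I match this against membership in $S_\transPtr$ using that $\CompilerPtr$ inserts $\ptrVar_1$ iff $\omega_1(s_1)=\transPtr$, $\omega(s)=\transOp$ and $\sigma_m(\tau_1)\neq 0$; when $\omega_1(s_1)=\transOp$ the callee op-shift is $\delta$, the assertion $\omega_c(s_c)=\transOp\Rightarrow\omega_p(s_p)=\transOp$ in Algorithm~\ref{alg:compiler-ptr} forces $\omega(s)=\transOp$ (caller op-shift $\delta$), and $\ptrVar_1\notin S_\transPtr$ again by strategy-uniformity of the base pointer. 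The recurring glue is that $\text{getTransStrategy}$ assigns a single strategy to all slots sharing a base pointer, and that $\sigma_m$ carries the pointer class of $\ptrVar_1$ onto that of $\ptrVar$.

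The main obstacle I anticipate is the bookkeeping around $S_\transPtr$: establishing that membership of $\ptrVar_1$ is governed entirely by the strategy and taint of the slots sharing that base pointer. This rests on (i) the per-base-pointer uniformity of $\text{getTransStrategy}$ (Algorithm~\ref{alg:trans-strategy}), (ii) the structural fact $\text{getPtr}(\sigma_m(s'))=\sigma_m(\text{getPtr}(s'))$, so that $\sigma_m$ sends the pointer class of $\ptrVar_1$ into that of $\ptrVar$, and (iii) invariance of taint under $\sigma_m$. These are the facts that make the abstract callee-side strategy and the concrete $\delta$-offsets produced by $\CompilerPtr$ line up; stating them precisely is where the care lies, since the case analysis itself is mechanical once they are in hand.
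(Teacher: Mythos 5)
Your overall route is the same as the paper's: unfold $\text{getShift}$ into $\text{getOpShift}$ plus the pointer displacement, extract the pointer/taint correspondence from \textsc{Mem-Slot-Subtype} using $\neg\text{isSpill}(s)$, and case-split on $(\omega_1(s_1),\omega(s))$ and the taint following exactly the condition under which $\CompilerPtr$ puts the base pointer into $S_\transPtr$, with the assertion $\omega_c(s_c)=\transOp\Rightarrow\omega_p(s_p)=\transOp$ excluding the remaining case. The case analysis itself is sound. The genuine gap is in your displacement computation, concentrated in your ``structural fact (ii).'' You assume $\text{getPtr}(\sigma_m(s_1))=\sigma_m(\text{getPtr}(s_1))$, hence $\sigma_1(\ptrVar_1)=\sigma(\ptrVar)$, so that the callee's anchor collapses onto the caller's on the nose. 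But \textsc{Mem-Slot-Subtype} only gives $\text{getPtr}(s)=\text{getPtr}(\sigma_m(s_1))$; the expression $e_x\coloneq\sigma_m(\ptrVar_1)$ is in general $\text{getPtr}(s)$ \emph{plus a nonzero offset}. This is precisely why the lemma carries the hypothesis
\begin{equation*}
\forall x,\sigma_m(x)=e,\text{isPtr}(x).\;\hasVal{\Delta}{\text{isNonChangeExp}(e-\text{getPtr}(e))},
\end{equation*}
which would be vacuous if the offset were always zero. So your claimed equality $\sigma_1(\ptrVar_1)=\sigma(\ptrVar)$ is false in general. The displacement identity you need, $\sigma'(e_x)-\sigma(e_x)=\sigma'(\ptrVar)-\sigma(\ptrVar)$, holds not because the offset vanishes but because it \emph{cancels}: writing $e_x=\text{getPtr}(e_x)+(e_x-\text{getPtr}(e_x))$, the isNonChangeExp hypothesis combined with $\hasVal{\Delta}{\sigma'\prec\sigma}$ (\textsc{Sim-Var-Map}) gives $\sigma'(e_x-\text{getPtr}(e_x))=\sigma(e_x-\text{getPtr}(e_x))$. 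The paper performs exactly this cancellation; your proposal never invokes either of these two hypotheses, so as written the reduction to your single op-shift identity does not go through.

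A second, smaller inaccuracy: ``since the substitutions fix taint variables this is $\tau=\tau_1$ syntactically'' is wrong for call-site substitutions. Only branch annotations are constrained to leave taint variables alone ($\text{getTaintVar}(\dom{\sigma})=\emptyset$ in \textsc{Typing-Jne}); $\sigma_m$ plays the role of $\sigmaCall$ and does instantiate the callee's taint variables. What \textsc{Mem-Slot-Subtype} and $\neg\text{isSpill}(s)$ give you is $\hasVal{\Delta}{\tau=\sigma_m(\tau_1)}$, hence $\sigma(\tau)=\sigma_1(\tau_1)$, which is how the paper phrases it. Accordingly your case split should be on $\sigma_m(\tau_1)\neq 0$ versus $\sigma_m(\tau_1)=0$ (matching the guard in $\CompilerPtr$), not on $\tau_1$ literally; your $S_\transPtr$-membership argument via per-base-pointer uniformity in Algorithm~\ref{alg:trans-strategy} survives this correction unchanged.
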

\begin{proof}
According to the definition of $\text{getShift}$, we just need to prove that
\begin{multline*}
    \sigma_1'(\text{getPtr}(s_1))-\sigma_1(\text{getPtr}(s_1)) + \text{getOpShift}(\MType_1, \omega_1, s_1)\\
    = \sigma'(\text{getPtr}(s)) - \sigma(\text{getPtr}(s)) + \text{getOpShift}(\MType, \omega, s).
\end{multline*}

Denote $\text{getPtr}(s_1)=x_1$, $\sigma_m(x_1)=e_x$. According to our assumption, $\hasVal{\Delta}{\text{isNonChangeExp}(e_x-\text{getPtr}(e_x))}$, so $\sigma'(e_x-\text{getPtr}(e_x))=\sigma(e_x-\text{getPtr}(e_x))$.

Furthermore, according to \textsc{Mem-Slot-Subtype}, $\hasVal{\Delta}{\MType[s]\sqsubseteq \sigma_m(\MType_1)}$ and $\neg \text{isSpill}(s)$
imply that $\sigma_1(\tau_1)=\sigma(\tau)$.

Following the definition of $\CompilerPtr$ (Algorithm~\ref{alg:compiler-ptr}), we just need to consider the following cases:
\begin{itemize}
    \item $\omega_1(s_1)=\omega(s)$: according to the definition of $\CompilerPtr$ (Algorithm~\ref{alg:compiler-ptr}), $\sigma_m'(x_1)=\sigma_m(x_1)=e_x$, so
    \begin{align*}
        \sigma_1'(x_1)-\sigma_1(x_1) & = \sigma'(\sigma_m'(x_1)) - \sigma(\sigma_m(x_1))\\
        & = \sigma'(e_x) - \sigma(e_x)\\
        & = \sigma'(\text{getPtr}(e_x)) - \sigma(\text{getPtr}(e_x))\\
        & = \sigma'(\text{getPtr}(\sigma_m(s_1))) - \sigma(\text{getPtr}(\sigma_m(s_1)))\\
        & = \sigma'(\text{getPtr}(s)) - \sigma(\text{getPtr}(s)).
    \end{align*}
    According to the definition of $\text{getOpShift}$, $\sigma_1(\tau_1)=\sigma(\tau)$ and $\omega_1(s_1)=\omega(s)$ implies that
    \begin{equation*}
        \text{getOpShift}(\MType_1, \omega_1, s_1)=\text{getOpShift}(\MType, \omega, s).
    \end{equation*}
    Therefore, the statement holds. 
    \item $\omega_1(s_1)=\transPtr$, $\omega(s)=\transOp$:
    If $\sigma_m(\tau_1)\neq 0$, then $\tau\neq 0$.
    According to the definition of $\CompilerPtr$, $\sigma_m'(x_1)=\sigma_m(x_1)+\delta=e_x+\delta$. Then,
    \begin{align*}
        \sigma_1'(x_1)-\sigma_1(x_1) & = \sigma'(\sigma_m'(x_1)) - \sigma(\sigma_m(x_1))\\
        & = \sigma'(e_x+\delta) - \sigma(e_x)\\
        & = \delta+ \sigma'(\text{getPtr}(e_x)) - \sigma(\text{getPtr}(e_x))\\
        & = \delta+ \sigma'(\text{getPtr}(\sigma_m(s_1))) - \sigma(\text{getPtr}(\sigma_m(s_1)))\\
        & = \delta+ \sigma'(\text{getPtr}(s)) - \sigma(\text{getPtr}(s)).
    \end{align*}
    Furthermore, in this case there should be
    \begin{equation*}
        \text{getOpShift}(\MType_1, \omega_1, s_1) = 0 \quad \text{getOpShift}(\MType, \omega, s)=\delta.
    \end{equation*}
    Therefore, the statement holds.

    If $\sigma_m(\tau_1)=0$, then $\tau=0$.
    According to the definition of $\CompilerPtr$, $\sigma_m'(x_1)=\sigma(x_1)=e_x$. Similar to the first case, we have $\sigma_1'(x_1)-\sigma_1(x_1)=\sigma'(\text{getPtr}(s)) - \sigma(\text{getPtr}(s))$.
    Furthermore, there should also be
    \begin{equation*}
        \text{getOpShift}(\MType_1, \omega_1, s_1) = \text{getOpShift}(\MType, \omega, s)=0.
    \end{equation*}
    Therefore, the statement holds. 
\end{itemize}
\end{proof}

\begin{lemma}
For all $\MType$, $M$, $M'$, $\sigma$, $\sigma'$, $\Delta$, $\RType$, if 
\begin{equation*}
\sigma(\Delta)=\qty{\mathit{true}} \quad
\hasType{}{M}{\sigma(\MType)} \quad
\hasVal{}{(M', \sigma')\prec_{\MType} (M, \sigma)} \quad
\hasVal{\Delta}{\sigma'\prec \sigma},
\end{equation*}
then for all $\MType_1$, $\sigma_m$, $\sigma_m'$, $\Delta_1$, $\RType_1$, where $\MType_1$ satisfies the assertion in getTransStrategy (Algorithm~\ref{alg:trans-strategy}) and
\begin{align*}
& \forall s_1\in\dom{\MType_1}.\; \exists s.\; \hasVal{\Delta}{\sigma_m(s_1)\subseteq s\land \MType[s]\sqsubseteq \sigma_m(\MType_1[s_1])}\\
& \hasVal{\Delta}{\sigma_m(\Delta_1)}
\quad \forall x\in \dom{\sigma_m}.\; \sigma_m(x)\neq \top
\quad \sigma_m(\spVar)=\spVar+c,\\
& \forall x, \sigma_m(x)=e, \text{isPtr}(x).\; \hasVal{\Delta}{\text{isNonChangeExp}(e - \text{getPtr}(e))}\\
& (\_, \sigma_m') = \CompilerPtr((\Delta,\RType,\MType),\sigma_m(\Delta_1,\RType_1,\MType_1),\sigma_m),
\end{align*}
then there should be 
\begin{equation*}
\hasVal{}{(M', \sigma'\circ \sigma_m') \prec_{\MType_1} (M, \sigma \circ \sigma_m)}.
\end{equation*}
\label{lemma:sim-mem-substitute-general}
\end{lemma}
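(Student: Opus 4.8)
The goal is to verify the memory simulation relation $\hasVal{}{(M', \sigma'\circ\sigma_m') \prec_{\MType_1} (M, \sigma\circ\sigma_m)}$ clause-by-clause as prescribed by \textsc{Sim-Mem}, transporting the already-known relation $\prec_{\MType}$ across the substitutions $\sigma_m$ (original) and $\sigma_m'$ (transformed). Writing $\sigma_1 = \sigma\circ\sigma_m$ and $\sigma_1' = \sigma'\circ\sigma_m'$, I would fix an arbitrary callee slot $s_1\in\dom{\MType_1}$ with $\MType_1[s_1] = (\sval_1, (e_1, \tau_1))$ and $x_1 = \text{getPtr}(s_1)$, and use the subtype hypothesis to obtain a caller slot $s$ with $\sigma_m(s_1)\subseteq s$ and $\MType[s] \sqsubseteq \sigma_m(\MType_1[s_1])$. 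Unfolding \textsc{Mem-Slot-Subtype} yields the links between the caller's $(\sval, (e, \tau))$ and the substituted callee data: $\sigma_m(\sval_1)\subseteq\sval$, $\text{getPtr}(s)=\text{getPtr}(\sigma_m(s_1))$, a taint equality $\tau=\sigma_m(\tau_1)$ (up to the spill/empty-region corner case), and the dependent-type disjunction $e = \sigma_m(e_1) \lor (\text{isNonChangeExp}(e)\land \sigma_m(e_1)=\top) \lor \sigma_m(\sval_1)=\emptyset$. Instantiating the hypothesis $\hasVal{}{(M',\sigma')\prec_{\MType}(M,\sigma)}$ at this $s$ then supplies the caller-side shift identities and the value-level relation on $M[\sigma(\sval)]$ versus its shifted image in $M'$.

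First I would discharge the structural clauses. For the $\omega_1(s_1)=\transOp$ clause I appeal to the assertion in $\CompilerPtr$ (Algorithm~\ref{alg:compiler-ptr}) that $\omega_1(s_1)=\transOp \Rightarrow \omega(s)=\transOp$; the caller's relation then gives $\sigma(s)\subseteq \stackPub$, and since $\sigma_1(s_1) = \sigma(\sigma_m(s_1)) \subseteq \sigma(s)$ we obtain $\sigma_1(s_1)\subseteq\stackPub$, while the disjunct $\text{isLocalStack}(s_1)\lor\tau_1\in\{0,1\}$ comes directly from the assumption that $\MType_1$ satisfies the \text{getTransStrategy} assertion (Algorithm~\ref{alg:trans-strategy}). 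The taint match $\sigma_1(\tau_1) = \sigma(\sigma_m(\tau_1)) = \sigma(\tau)$ is immediate from the subtype taint equality.

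The technical core is the shift bookkeeping. I would prove the combined shift identity $\text{getShift}(\MType_1,\omega_1,\sigma_1,\sigma_1',s_1) = \text{getShift}(\MType,\omega,\sigma,\sigma',s)$ by invoking Lemma~\ref{lemma:sim-mem-same-shift}, whose hypotheses (the subtype relation, the $\CompilerPtr$ construction of $\sigma_m'$, $\text{isNonChangeExp}$ of pointer offsets, and $\sigma_m(\spVar)=\spVar+c$) are exactly the ones assumed here; for the $\text{isSpill}(s)$ case that the lemma excludes, the $\CompilerPtr$ assertion $\text{isSpill}(s)\Rightarrow\sigma_m(\sval_1)=\emptyset$ collapses the value clause to the trivial $\sigma_1(\sval_1)=\emptyset$ alternative. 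Separately I must check the pointer constraint $\deltaPtr[1] = \sigma_1'(x_1)-\sigma_1(x_1) = \text{getPtrShift}(\MType_1,\omega_1,\sigma_1,s_1)$ (with $\deltaOp[1]=\text{getOpShift}(\MType_1,\omega_1,s_1)$ fixed definitionally), reading off $\sigma_m'(x_1)\in\{\sigma_m(x_1),\sigma_m(x_1)+\delta\}$ from $\CompilerPtr$ and using $\text{isNonChangeExp}(\sigma_m(x_1)-\text{getPtr}(\sigma_m(x_1)))$ to reduce the pointer difference to $\sigma'(\text{getPtr}(s))-\sigma(\text{getPtr}(s))$, which the caller relation already pins down.

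Finally, for the value clause, if $\sigma_1(\sval_1)=\emptyset$ there is nothing to prove; otherwise $\sigma_1(\sval_1)\subseteq\sigma(\sval)$, so for each word of $\sigma_1(\sval_1)$ the caller relation gives $(M'[\,\cdot+\text{shift}\,],\sigma')\prec_{e,\tau}(M[\,\cdot\,],\sigma)$. I would convert this into $\prec_{e_1,\tau_1}$ under $\sigma_1,\sigma_1'$ by applying Lemma~\ref{lemma:sim-val-substitute-general} with substitution maps $\sigma_m,\sigma_m'$, using the dependent-type disjunction above as that lemma's precondition and letting its constant $c$ be the pointer shift $\sigma_m'(x_1)-\sigma_m(x_1)\in\{0,\delta\}$ introduced by $\CompilerPtr$. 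The main obstacle, I expect, is precisely this coupling: showing that the constant offset Lemma~\ref{lemma:sim-val-substitute-general} introduces on the transformed value agrees with the address shift appearing on the left of $\prec_{e_1,\tau_1}$, across the full case matrix of (callee strategy, caller strategy, taint, spill/empty region). Keeping the shift emitted by $\CompilerPtr$ synchronized between the pointer adjustment and the operand adjustment is the delicate part, and it is exactly what Lemma~\ref{lemma:sim-mem-same-shift} together with the $\CompilerPtr$ assertions are engineered to guarantee.
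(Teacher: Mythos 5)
Your overall skeleton matches the paper's proof: fix $s_1$, obtain the caller slot $s$ from the subtype hypothesis, discharge the $\transOp$/region clause via the $\CompilerPtr$ assertion $\omega_1(s_1)=\transOp\Rightarrow\omega(s)=\transOp$ together with $\sigma_1(s_1)\subseteq\sigma(s)\subseteq\stackPub$, settle the pointer-shift clause by case analysis on the strategy pairs, and use Lemma~\ref{lemma:sim-mem-same-shift} plus Lemma~\ref{lemma:sim-val-substitute-general} for the value clause. However, your instantiation of Lemma~\ref{lemma:sim-val-substitute-general} is wrong. Its constant $c$ is defined by $\sigma_m'(e_1)=\sigma_m(e_1)+c$, where $e_1$ is the dependent type of the slot's \emph{contents}; it is not the shift of the slot's base pointer $x_1$. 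You set $c=\sigma_m'(x_1)-\sigma_m(x_1)\in\qty{0,\delta}$. With $c=\delta$ the lemma's hypothesis $\sigma_m'(e_1)=\sigma_m(e_1)+\delta$ is in fact false here --- the sanity assertion at the end of $\CompilerPtr$ (Algorithm~\ref{alg:compiler-ptr}) guarantees $\sigma_m'(e_1)=\sigma_m(e_1)$, i.e., $c=0$ --- and even if it held, the conclusion would relate $(v'+\delta,t')$ to $(v,t)$, i.e., it would assert that the transformed memory \emph{contents} are numerically shifted by $\delta$, which is not what the goal $\prec_{e_1,\tau_1}$ states. The relocation of the slot is already fully absorbed into the address offset $\deltaPtr[1]+\deltaOp[1]$ on the left-hand side; the contents are transported with $c=0$, and any pointer-valued contents that genuinely differ by $\delta$ between the two runs are carried by $\sigma'$ versus $\sigma$ applied to $e=\sigma_m(e_1)$, not by $c$. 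The nonzero-$c$ case of that lemma exists for \emph{registers} at call sites, where $\CompilerPtr$ really emits $\inst{addq}{\delta, r}$; importing it into the memory clause breaks the argument.

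There is a second gap: your ``for each word of $\sigma_1(\sval_1)$'' step is not available when the caller's dependent type $e$ is not $\top$, because $\prec_{e,\tau}$ describes the contents of the \emph{whole} region $\sigma(\sval)$ as a single value and does not restrict to sub-regions. This is exactly why the paper splits on $\hasVal{\Delta}{\sigma_m(\sval_1)=\sval}$ versus $\hasVal{\Delta}{\sigma_m(\sval_1)\subsetneqq\sval}$: in the equal case the regions coincide and Lemma~\ref{lemma:sim-val-substitute-general} applies directly (with $c=0$); in the proper-subset case the size mismatch rules out $e=\sigma_m(e_1)\neq\top$, so the subtype disjunction forces $\text{isNonChangeExp}(e)\land\sigma_m(e_1)=\top$, which yields pointwise equality $M'[x+\deltaPtr+\deltaOp]=M[x]$ on all of $\sigma(\sval)$ and only then restricts to $\sigma_1(\sval_1)$. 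With the $c=0$ instantiation and this case split added, your proof coincides with the paper's.
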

\begin{proof}
Denote
\begin{align*}
    & \omega = \text{getTransStrategy}(\MType) && \omega_1 = \text{getTransStrategy}(\MType_1)\\
    & \sigma_1=\sigma\circ \sigma_m && \sigma_1'=\sigma'\circ \sigma_m'.
\end{align*}
To prove the lemma, we need to prove that for each $s_1\in \dom{\MType_1}$, where 
\begin{align*}
    & \MType_1[s_1]=(\sval_1, (e_1, \tau_1))
    && x_1 = \text{getPtr}(s_1) \\
    & \deltaOp[1] = \text{getOpShift}(\MType_1, \omega_1, s_1)
    && \deltaPtr[1] = \sigma_1'(x_1)-\sigma_1(x_1),
\end{align*}
the following statements hold:
\begin{itemize}
    \item $\omega_1(s_1) = \transOp \Rightarrow \sigma_1(s_1)\subseteq \stackPub \land (\text{isLocalStack}(s_1) \lor \tau_1 \in \qty{0, 1})$;
    \item $\deltaPtr[1] = \text{getPtrShift}(\MType_1, \omega_1, \sigma_1, s_1)$;
    \item $\sigma_1(\sval_1) = \emptyset \lor ((M'[\sigma_1(\sval_1)+\deltaPtr[1]+\deltaOp[1]], \sigma_1') \prec_{e_1, \tau_1} (M[\sigma_1(\sval_1)], \sigma_1))$.
\end{itemize}
According to our assumption, there must exists $s\in\dom{\MType}$ such that $\hasVal{\Delta}{\sigma_m(s_1)\subseteq s\land \MType[s]\sqsubseteq \sigma_m(\MType_1[s_1])}$.
Denote $\MType[s]=(\sval, (e, \tau))$.
We prove the three statements as follows:
\begin{itemize}
    \item $\omega_1(s_1) = \transOp \Rightarrow \sigma_1(s_1)\subseteq \stackPub \land (\text{isLocalStack}(s_1) \lor \tau_1 \in \qty{0, 1})$:
    according to our assumption, $\MType_1$ satisfies the assertion in Algorithm~\ref{alg:trans-strategy}, so when $\omega_1(s_1)=\transOp$, $\text{isLocalStack}(s_1)\lor \tau_1\in \qty{0,1}$.

    We just need to prove that when $\omega_1(s_1)=\transOp$, $\sigma_1(s_1)\subseteq \stackPub$.
    Since $\hasVal{\Delta}{\sigma_m(s_1)\subseteq s}$, then $\sigma_1(s_1)=(\sigma\circ\sigma_m)(s_1) \subseteq \sigma(s)$.

    Furthermore, according to the assertion in $\CompilerPtr$ (Algorithm~\ref{alg:compiler-ptr}), when $\omega_1(s_1)=\transOp$, there should be $\omega(s)=\transOp$.
    Then, $\hasVal{}{(M', \sigma')\prec_\MType(M, \sigma)}$ and \textsc{Sim-Mem} implies that $\sigma(s)\subseteq \stackPub$.
    Therefore, $\sigma_1(s_1)\subseteq \sigma(s)\subseteq \stackPub$.
    \item $\deltaPtr[1] = \text{getPtrShift}(\MType_1, \omega_1, \sigma_1, s_1)$:
    denote $\sigma_m(x_1)=e_x$. According to our assumption, $\hasVal{\Delta}{\text{isNonChangeExp}(e_x-\text{getPtr}(e_x))}$, so $\sigma'(e_x-\text{getPtr}(e_x))=\sigma(e_x-\text{getPtr}(e_x))$.
    According to \textsc{Sim-Mem}, $\hasVal{}{(M', \sigma')\prec_\MType(M, \sigma)}$ implies that $\sigma'(\text{getPtr}(s))-\sigma(\text{getPtr}(s))=\text{getPtrShift}(\MType, \omega, \sigma, s)$.
    Consider the following cases: 
    \begin{itemize}
        \item $\omega_1(s_1)=\omega(s)=\transOp$: according to the definition of $\CompilerPtr$ (Algorithm~\ref{alg:compiler-ptr}), $\sigma_m'(x_1)=\sigma_m(x_1)=e_x$. Then,
        \begin{align*}
            \deltaPtr[1]& = \sigma_1'(x_1) - \sigma_1(x_1)\\
            & = \sigma'(\text{getPtr}(s)) - \sigma(\text{getPtr}(s))\\
            & = \text{getPtrShift}(\MType, \omega, \sigma, s) = 0\\
            & = \text{getPtrShift}(\MType_1, \omega_1, \sigma_1, s_1),
        \end{align*}
        where the second step can be derived similarly to the proof of Lemma~\ref{lemma:sim-mem-same-shift}.
        \item $\omega_1(s_1)=\omega(s)=\transPtr$: according to the definition of $\CompilerPtr$ (Algorithm~\ref{alg:compiler-ptr}), $\sigma_m'(x_1)=\sigma_m(x_1)=e_x$.
        According to Algorithm~\ref{alg:trans-strategy}, $s_1$ and $s$ are not referenced by the stack pointer, so they are not spill slots. Then, $\hasVal{\Delta}{\MType[s]\sqsubseteq \sigma_m(\MType_1[s_1])}$ implies that $\hasVal{\Delta}{\tau=\sigma_m(\tau_1)}$, so $\sigma_1(\tau_1)=\sigma(\tau)$.
        Then, we can derive
        \begin{align*}
            \deltaPtr[1]& = \sigma_1'(x_1) - \sigma_1(x_1)\\
            & = \sigma'(\text{getPtr}(s)) - \sigma(\text{getPtr}(s))\\
            & = \text{getPtrShift}(\MType, \omega, \sigma, s)\\
            & = \text{getPtrShift}(\MType_1, \omega_1, \sigma_1, s_1),
        \end{align*}
        where 
        the second step can be derived similarly to the proof of Lemma~\ref{lemma:sim-mem-same-shift},
        and the last step holds because $\sigma_1(\tau_1)=\sigma(\tau)$ and $\sigma_1(s_1)\subseteq \sigma(s)$.
        \item $\omega_1(s_1)=\transPtr$, $\omega(s)=\transOp$: 
        if $\sigma_m(\tau_1)\neq 0$,
        according to the definition of $\CompilerPtr$ (Algorithm~\ref{alg:compiler-ptr}), $\sigma_m'(x_1)=\sigma_m(x_1)+\delta=e_x+\delta$.
        \begin{align*}
            \deltaPtr[1]& = \sigma_1'(x_1) - \sigma_1(x_1)\\
            & = \delta+ \sigma'(\text{getPtr}(s)) - \sigma(\text{getPtr}(s))\\
            & = \delta+\text{getPtrShift}(\MType, \omega, \sigma, s) = \delta\\
            & = \text{getPtrShift}(\MType_1, \omega_1, \sigma_1, s_1),
        \end{align*}
        where the second step can be derived similarly to the proof of Lemma~\ref{lemma:sim-mem-same-shift}.

        If $\sigma_m(\tau_1)=0$, then $\sigma_m'(x_1)=\sigma_m(x_1)=e_x$ and $\sigma_1(\tau_1)=\sigma(\sigma_m(\tau_1))=0$.
        So
        \begin{align*}
            \deltaPtr[1]& = \sigma_1'(x_1) - \sigma_1(x_1)\\
            & = \sigma'(\text{getPtr}(s)) - \sigma(\text{getPtr}(s))\\
            & = \text{getPtrShift}(\MType, \omega, \sigma, s)=0\\
            & = \text{getPtrShift}(\MType_1, \omega_1, \sigma_1, s_1),
        \end{align*}
        where the second step can be derived similarly to the proof of Lemma~\ref{lemma:sim-mem-same-shift}, and the last step holds since $\sigma_1(\tau_1)=0$ implies that $\text{getPtrShift}(\MType_1, \omega_1, \sigma_1, s_1)=0$ or $\delta$.
    \end{itemize}
    Therefore, the statement is true.
    \item $\sigma_1(\sval_1) = \emptyset \lor ((M'[\sigma_1(\sval_1)+\deltaPtr[1]+\deltaOp[1]], \sigma_1') \prec_{e_1, \tau_1} (M[\sigma_1(\sval_1)], \sigma_1))$:
    if $\sigma_1(\sval_1)= \emptyset$, the statement holds.

    If $\text{isSpill}(s)$, then according to the assertions in $\CompilerPtr$ (Algorithm~\ref{alg:compiler-ptr}), $\sval_1=\emptyset$, so $\sigma_1(\sval_1)=\emptyset$ and the statement holds.

    If $\neg \text{isSpill}(s)$ and $\sigma_1(\sval_1)\neq \emptyset$, we need to prove that
    \begin{equation*}
        (M'[\sigma_1(\sval_1)+\deltaPtr[1]+\deltaOp[1]], \sigma_1') \prec_{e_1, \tau_1} (M[\sigma_1(\sval_1)], \sigma_1).
    \end{equation*}
    Denote $\deltaOp=\text{getOpShift}(\MType, \omega, s)$, $\deltaPtr=\sigma'(\text{getPtr}(s))-\sigma(\text{getPtr}(s))$.
    In this case, we can apply Lemma~\ref{lemma:sim-mem-same-shift} and get that $\deltaPtr[1]+\deltaOp[1]=\deltaPtr+\deltaOp$.
    Since $\sigma(\sigma_m(\sval_1))=\sigma_1(\sval_1)\neq \emptyset$, then $\sigma_m(\sval_1)\neq \emptyset$.
    Then, $\hasVal{\Delta}{\MType[s]\sqsubseteq \sigma_m(\MType_1[s_1])}$ and \textsc{Mem-Slot-Subtype} imply that
    \begin{align*}
        & \hasVal{\Delta}{\sigma_m(\sval_1)\subseteq \sval}\\
        & \hasVal{\Delta}{e=\sigma_m(e_1) \lor (\text{isNonChangeExp}(e)\land \sigma_m(e_1)=\top)}.
    \end{align*}
    Then, we can get $\sigma_1(\sval_1)=\sigma(\sigma_m(\sval_1))\subseteq \sigma(\sval)$.
    Since $\sigma_1(\sval_1)\neq \emptyset$, then $\sigma(\sval)\neq \emptyset$.
    According to \textsc{Sim-Mem}, $\hasVal{}{(M', \sigma')\prec_\MType(M, \sigma)}$ implies that
    \begin{equation}
        (M'[\sigma(\sval)+\deltaPtr+\deltaOp], \sigma')\prec_{e, \tau} (M[\sigma(\sval)], \sigma).
        \label{eq:lemma-sub-sim-mem-before}
    \end{equation}
    
    Consider the following cases:
    \begin{itemize}
        \item $\hasVal{\Delta}{\sigma_m(\sval_1) = \sval}$: in this case, $\sigma_1(\sval_1)=\sigma(\sval)$, so $M[\sigma_1(\sval_1)]=M[\sigma(\sval)]$. Furthermore, since $\deltaPtr[1]+\deltaOp[1]=\deltaPtr+\deltaOp$, then $M'[\sigma_1(\sval_1)+\deltaPtr[1]+\deltaOp[1]]=M'[\sigma(\sval)+\deltaPtr+\deltaOp]$.
        According to the assertions in $\CompilerPtr$ (Algorithm~\ref{alg:compiler-ptr}), there should be $\sigma_m(e_1)=\sigma_m'(e_1)$.
        Thus, the statement can be proved by applying Lemma~\ref{lemma:sim-val-substitute-general}.
        \item $\hasVal{\Delta}{\sigma_m(\sval_1) \subsetneqq \sval}$: note that $e$ describes the value of memory data in region $\sval$, and $e_1$ describes the value of memory data in region $\sval_1$, while in this case, the size of $\sigma_m(\sval_1)$ is smaller than the size of $\sval$.
        The size difference implies that $e=\sigma_m(e_1)\neq \top$ cannot be true.
        Hence, there should be $\hasVal{\Delta}{\text{isNonChangeExp}(e)\land \sigma_m(e_1)=\top}$.
        By discussing whether $e=\top$ and applying \textsc{Sim-Val} and \textsc{Sim-Var-Map}, we can derive $M'[\sigma(\sval)+\deltaPtr+\deltaOp]=M[\sigma(\sval)]$ from (\ref{eq:lemma-sub-sim-mem-before}).
        In other words, for all $x\in \sigma(\sval)$, $M'[x+\deltaPtr+\deltaOp]=M[x]$.
        Since $\sigma_1(\sval_1)\subseteq \sigma(\sval)$ and $\deltaPtr[1]+\deltaOp[1]=\deltaPtr+\deltaOp$, then for all $x\in \sigma_1(\sval_1)$, $M'[x+\deltaPtr[1]+\deltaOp[1]]=M[x]$.
        Therefore, the statement is true.
    \end{itemize}
\end{itemize}
\end{proof}

\subsection{Public Noninterference}
\label{sec:public-noninterference}
In this section, we prove that $\Compiler$ transforms a well-formed program such that the output program satisfies public noninterference.


First, as shown in the following lemma, we prove that a well-formed program never writes secrets to the non-stack public region ($\otherPub$).
\begin{lemma}
If $\typeChecked{P, \PType}{S}$, $S\xrightarrow{\instVar} S_1$, and $\instVar$ stores $(v, t)$ to address $v_a\in \otherPub$, then $t=0$.
\label{lemma:orig-store-other-pub}
\end{lemma}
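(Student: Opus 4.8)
The plan is to prove this by case analysis on the storing instruction $\instVar$. Since only $\instr{movq}$ (register-to-memory form) and the push/pop family actually write to memory, and the operational semantics in Figure~\ref{fig:dyn-operatinoal-semantics-revision} dictate which memory cell each writes, I would focus on the case $\instVar=\inst{movq}{r_1,\memOp{r_b}{r_i}{i_s}{i_d}\opMode{\sOp,\tauOp}}$, where the store value $(v,t)=R[r_1]$ is written to address $v_a=\sigma(e_a)$ with $e_a=i_d+e_b+e_i\times i_s$ (the other store-bearing instructions are handled analogously, with $\instr{retq}/\instr{callq}$ writing only return addresses, which carry taint $0$ by construction).

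First I would unfold $\typeChecked{P,\PType}{S}$ to obtain the state type $(\Delta,\RType,\MType)=\text{getStateType}(\PType,\pcVar)$ together with the witness $\hasType{}{(R,M,(\sigma_f,\sigma_b))}{(\Delta,\RType,\MType)}$, and let $\sigma=\sigma_f\cup\sigma_b$. By \textsc{Typing-Movq-r-m}, the store operand is typed through either \textsc{Typing-StoreOp-Spill} or \textsc{Typing-StoreOp-Non-Spill}; in both cases the store address type is untainted and falls inside the target slot, i.e. $\hasVal{\Delta}{[e_a,e_a+8)\subseteq\sOp}$. Applying the substitution $\sigma$ (which satisfies $\Delta$ by \textsc{Reg-Mem-Type}) gives $[v_a,v_a+8)\subseteq\sigma(\sOp)$, so the concrete write address lands in the concrete slot $\sigma(\sOp)$.

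Next I would use the hypothesis $v_a\in\otherPub$ to pin down the slot's taint. Since $[v_a,v_a+8)\subseteq\sigma(\sOp)$ and the four memory regions $\stackPub,\stackSec,\otherPub,\otherSec$ are disjoint, the constraint in \textsc{Mem-Type} forces $\sigma(\sOp)\subseteq\otherPub$ (it is the only one of the three admissible region inclusions containing $v_a$), and the same rule then yields that the slot's taint type $\tau_\sOp$ satisfies $\sigma(\sOp)\subseteq\otherPub\Rightarrow\sigma(\tau_\sOp)=0$, hence $\sigma(\tau_\sOp)=0$. Both store-operand typing rules require $\hasVal{\Delta}{\tau_r\Rightarrow\tauOp}$ where $\RType[r_1]=(\_,\tau_r)$, and additionally constrain $\tauOp$ to coincide with (or be dominated by) the slot taint, so $\sigma(\tauOp)=0$ and therefore $\sigma(\tau_r)=0$. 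Finally, \textsc{Reg-Type}/\textsc{Value-Type} give $t\Rightarrow\sigma(\tau_r)=0$, so $t=0$, as required.

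The main obstacle I anticipate is the bookkeeping tying the dynamic store value's taint $t$ back to the static slot taint: \textsc{Typing-StoreOp-Non-Spill} updates the slot taint while \textsc{Typing-StoreOp-Spill} overwrites it, and the $\otherPub\Rightarrow\tau=0$ invariant in \textsc{Mem-Type} must be shown to be genuinely preserved by the store (not merely to hold before it). This is exactly the invariant already established in the store case of the Type Safety proof (Theorem~\ref{thm:type-safety-revision}), where $\sOp\subseteq\otherPub\Rightarrow\tau_o=0$ is derived; I would reuse that argument rather than re-proving it, so the remaining work is only to extract the single-step consequence $t=0$ from the well-formedness of the pre-state $S$.
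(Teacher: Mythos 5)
Your proposal is correct and follows essentially the same route as the paper's proof: case analysis on the storing instruction, using the store-typing rules plus pre-state well-formedness to place the concrete write inside $\sigma(\sOp)$, region disjointness in \textsc{Mem-Type} to force $\sigma(\sOp)\subseteq\otherPub$, the \textsc{Mem-Type} taint invariant to obtain a zero taint type, and finally a transfer to the dynamic taint bit $t$. The one real divergence is \emph{where} the taint invariant is read off. The paper reads it off the \emph{post}-state: both store-operand rules make the slot's taint in $\MType_1$ equal to the operand annotation $\tau$, the post-state is well-formed by the type-safety theorem (Theorem~\ref{thm:type-safety-revision}), so \textsc{Mem-Type} gives $\sigma(\tau)=0$, and the side condition $t\Rightarrow\sigma(\tau)$ of \textsc{Dyn-Movq-r-m} finishes immediately --- with no spill/non-spill case split at all. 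You instead work from the \emph{pre}-state slot taint, and here your claim that both store-operand typing rules ``constrain $\tauOp$ to coincide with (or be dominated by) the slot taint'' is not accurate: \textsc{Typing-StoreOp-Spill} places no constraint between $\tauOp$ and the pre-state slot taint (the annotation simply overwrites it). This does not break your proof, because \textsc{Mem-Type} also guarantees $\text{isSpill}(s)\Rightarrow s\subseteq\stackPub$, so a slot meeting $\otherPub$ can never be a spill slot and the non-spill uniformity (annotation $=$ slot taint) is the only case that arises; but you should state that vacuity explicitly rather than asserting the constraint for both rules, or simply adopt the paper's post-state reading, which sidesteps the distinction.
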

\begin{proof}
We prove for the case where $\instVar=\inst{movq}{r_1, \memOp{r_b}{r_i}{i_s}{i_d}\opMode{s, \tau}}$.
For other kinds of instructions that write to memory, it can be proved similarly.

Denote
$S=(R, M, (\pcVar, (\sigma_f, \sigma_b));\Phi)$ and $\sigma=\sigma_f\cup \sigma_b$.
According to store typing rules and well-formedness rules,
$[v_a, v_a+8)\subseteq \sigma(s)$.
\textsc{Mem-Type} implies that either $\sigma(s)\subseteq \otherPub$ or $\sigma(s)\cap \otherPub=\emptyset$.
Since $v_a\in \otherPub$ and $v_a\in \sigma(s)$, $\sigma(s)\subseteq \otherPub$.

Furthermore, denote $(\Delta_1, \RType_1, \MType_1)=\text{getStateType}(\PType, \text{nextPc}(\pcVar))$.
Then, $\MType_1$ is the memory type after the store and $\MType_1[s]=(\_, (\_, \tau))$ according to the store typing rules.
According to \textsc{Mem-Type}, there should be $\sigma(\tau)=0$.
\textsc{Dyn-Movq-r-m} also constrains that $t \Rightarrow \sigma(\tau)$, so $t=0$.
\end{proof}

Next, we prove the following lemma, which states that the transformed program never writes secrets to all public regions ($\stackPub\cup \otherPub$).
\begin{lemma}
If $\typeChecked{P, \PType}{S}$, $\hasVal{P, \PType}{S'\prec S}$,
$S'\xrightarrow{\Compiler(\instVar)}^* S_1'$, and $\Compiler(\instVar)$ stores $(v', t')$ to address $v_a'\in \stackPub\cup \otherPub$, then $t'=0$.
\label{lemma:trans-store-pub}
\end{lemma}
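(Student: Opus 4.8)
The plan is to prove the statement in contrapositive form inside a case analysis on the shape of $\instVar$: assuming the stored value is tainted, I would show that the address written by $\Compiler(\instVar)$ lands in $\stackSec\cup\otherSec$, which is disjoint from $\stackPub\cup\otherPub$, so the hypothesis $v_a'\in\stackPub\cup\otherPub$ forces $t'=0$. First I would record that $\Compiler$ preserves taint: for a data store $\Compiler(\instVar)=\inst{movq}{r_1,\memOp{r_b}{r_i}{i_s}{\delta_0+i_d}\opMode{\sOp+\delta_0,\tauOp}}$ the written value is $R'[r_1]$, and \textsc{Sim-Reg}/\textsc{Sim-Val} applied to the assumed $\hasVal{P,\PType}{S'\prec S}$ give that its taint component equals that of $R[r_1]$; hence $t'=t$. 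The remaining memory-writing instructions produced by $\Compiler$ are disposed of immediately: $\instr{pushsecq}$ (from $\instr{pushq}$) writes only into $\stackSec$, so the hypothesis is vacuous; the transformed $\instr{callq}$ pushes the return address, whose taint is $0$, into $\stackPub$, giving $t'=0$ directly; and the $\inst{addq}{\delta,r}$ prefix emitted by $\CompilerPtr$ performs no store.

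Then I would analyze the one interesting case, the data store. Whichever of \textsc{Typing-StoreOp-Spill} or \textsc{Typing-StoreOp-Non-Spill} applies yields $\hasVal{\Delta}{\tau_1\Rightarrow\tau}$, where $\tau_1$ is the taint of the stored datum and $\tau$ is the operand annotation (which equals the slot's resulting taint type). Instantiating with $\sigma=\sigma_f\cup\sigma_b$ and using $t'=t=\sigma(\tau_1)$ together with the fact that concrete taints lie in $\{0,1\}$, I obtain the dichotomy: either $t'=0$, and we are done, or $t'\neq 0$ and $\sigma(\tau)=1$. In the latter case, \textsc{Mem-Type} from the well-formedness of $S$ excludes $\sigma(\sOp)\subseteq\otherPub$ (which would force $\sigma(\tau)=0$, subsuming the content of the companion Lemma~\ref{lemma:orig-store-other-pub}), leaving only $\sigma(\sOp)\subseteq\stackPub$ or $\sigma(\sOp)\subseteq\otherSec$.

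The core step is to locate $v_a'$. Reusing the computation from the store case of Theorem~\ref{thm:functional-correctness-revision}, I would establish $[v_a',v_a'+8)\subseteq\text{getShiftedSlot}(\MType_1,\omega,\sigma,\sigma',\sOp)$, with $\MType_1$ the post-store memory type and $\omega$ its transformation strategy. If $\sigma(\sOp)\subseteq\otherSec$, Lemma~\ref{lemma:shifted-slot} gives $\text{getShiftedSlot}=\sigma(\sOp)\subseteq\otherSec$, disjoint from $\stackPub\cup\otherPub$ --- contradiction. If $\sigma(\sOp)\subseteq\stackPub$, I would evaluate $\text{getShift}=\text{getOpShift}+\text{getPtrShift}$ straight from the definitions using $\sigma(\MType_1[\sOp])=(\_,(\_,1))$: when $\omega(\sOp)=\transOp$ the op-shift is $\delta$ and the ptr-shift is $0$; when $\omega(\sOp)=\transPtr$ the op-shift is $0$ and the ptr-shift is $\delta$. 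In both branches the total shift is $\delta$, so $\text{getShiftedSlot}=\sigma(\sOp)+\delta\subseteq\stackPub+\delta=\stackSec$, again disjoint from $\stackPub\cup\otherPub$ --- contradiction. Thus the tainted subcase is impossible and $t'=0$.

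The main obstacle I anticipate is the bookkeeping for spill slots and the choice of memory type feeding the shift helpers. For a spill store the slot's taint is rewritten by the operand annotation, so the shift must be taken against the post-store type $\MType_1$ (as in the functional-correctness proof) rather than the pre-store type appearing in \textsc{Sim-Mem}; aligning $\sigma(\tau)=1$ with $\sigma(\MType_1[\sOp])=(\_,(\_,1))$ in both the \transOp and \transPtr branches is the delicate part, and it is exactly where the restriction that \transOp slots carry a constant taint (the assertion in Algorithm~\ref{alg:trans-strategy}) and the uniform-taint discipline for non-spill slots come into play; I would also invoke Lemma~\ref{lemma:same-trans-strategy-mtype} to identify the strategy computed from $\MType$ with that from $\MType_1$.
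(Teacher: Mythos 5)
Your proposal is correct and follows essentially the same route as the paper's proof: taint preservation $t'=t$ via \textsc{Sim-Reg}, locating $v_a'$ inside $\text{getShiftedSlot}(\MType_1,\omega,\sigma,\sigma',\sOp)$ as computed in the functional-correctness proof, and a case split on whether $\sigma(\sOp)$ lies in $\stackPub$, $\otherPub$, or $\otherSec$ using \textsc{Mem-Type} and Lemma~\ref{lemma:shifted-slot}. Your contrapositive packaging (taint $1$ $\Rightarrow$ total shift $\delta$ $\Rightarrow$ address in $\stackSec\cup\otherSec$, contradiction) is the logical dual of the paper's direct case analysis (shift $0$ $\Rightarrow$ taint $0$; shift $\delta$ $\Rightarrow$ contradiction), and your explicit handling of $\instr{pushsecq}$, the transformed $\instr{callq}$, and the $\instr{addq}$ prefix merely fills in cases the paper dismisses as provable ``similarly.''
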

\begin{proof}
We prove for the case where $\instVar=\inst{movq}{r_1, \memOp{r_b}{r_i}{i_s}{i_d}\opMode{s, \tau}}$.
For other kinds of instructions that write to memory, it can be proved similarly.

Denote $S=(R, M, (\pcVar, (\sigma_f, \sigma_b));\Phi)$, $S'=(R', M', (\pcVar', (\sigma_f', \sigma_b'));\Phi')$, $\sigma=\sigma_f\cup \sigma_b$, and $\sigma'=\sigma_f'\cup \sigma_b'$.
Suppose $\instVar$ stores $(v, t)$ to address $v_a$.
Then, \textsc{Dyn-Movq-r-m} implies that $t\Rightarrow \sigma(\tau)$, $R[r_1]=(v, t)$ and $R'[r_1]=(v', t')$.
Since $\hasVal{P, \PType}{S'\prec S}$, then $t=t'$ according to \textsc{Sim-Reg}.

Furthermore, denote $(\Delta_1, \RType_1, \MType_1)=\text{getStateType}(\PType, \text{nextPc}(\pcVar))$ and $\omega=\text{getTransStrategy}(\PType, \text{nextPc}(\pcVar))$.
According to the proof on the simulation relation for this $\instr{movq}$ instruction, there should be $\MType_1[s]=(\_, (\_, \tau))$,
$[v_a, v_a+8)\subseteq \sigma(s)$ and $[v_a', v_a'+8)\subseteq \text{getShiftedSlot}(\MType_1, \omega, \sigma, \sigma', s)$.

\textsc{Mem-Type} implies that $\sigma(s)\subseteq \stackPub\lor \sigma(s)\subseteq \otherPub \lor \sigma(s)\subseteq \otherSec$ and
\begin{align*}
    s' & \coloneq \text{getShiftedSlot}(\MType_1, \omega, \sigma, \sigma', s)\\
    & = \sigma(s) + \text{getOpShift}(\MType_1, \omega, s) + (\sigma'(\text{getPtr}(s)) - \sigma(\text{getPtr}(s)))\\
    & = \sigma(s) + \text{getOpShift} (\MType_1, \omega, s) + \text{getPtrShift}(\MType_1, \omega, s).
\end{align*}
We consider the following cases:
\begin{itemize}
    \item $\sigma(s)\subseteq \stackPub$:
    According to Lemma~\ref{lemma:shifted-slot}, $s'=\sigma(s)$ or $\sigma(s)+\delta$.
    If $s'=\sigma(s)$ and $\omega(s)=\transPtr$, then
    \begin{equation*}
        s'=\sigma(s) \Rightarrow \text{getPtrShift}(\MType_1, \omega, \sigma, s)=0 \Rightarrow \sigma(\tau)=0 \Rightarrow t'=t = 0.
    \end{equation*}
    
    If $s'=\sigma(s)$ and $\omega(s)=\transOp$, then
    \begin{equation*}
        s'=\sigma(s) \Rightarrow \text{getOpShift}(\MType_1, \omega, s)=0 \Rightarrow \tau=0 \Rightarrow t'=t = 0.
    \end{equation*}
    
    If $s'=\sigma(s)+\delta$, then $s'\in \stackSec$, which contradicts our assumption where $v_a'\in s'$ and $v_a'\in \stackPub\cup \otherPub$. Thus, we can ignore this case.
    \item $\sigma(s)\subseteq \otherPub$:
    Suppose the store value of the original instruction $\instVar$ is $(v, t)$.
    According to Lemma~\ref{lemma:orig-store-other-pub}, $t=0$, so $t'=t=0$.
    \item $\sigma(s)\subseteq \otherSec$:
    According to Lemma~\ref{lemma:shifted-slot},
    $s'=\sigma(s)\in \otherSec$. Since $v_a'\in s'$, this contradicts our assumption that $v_a'\in \stackPub\cup \otherPub$, and we can ignore this case.
\end{itemize}
\end{proof}



We then define public equivalence between two abstract machine states, which constrains that their taint bits represent whether their registers/memory slots share the same value. They should also have the same values in the public memory regions.
\begin{definition}[Public Equivalence]
We denote $(v,t)\simeq_\text{taint}(v',t')$ if $t=t'=1 \lor (v=v'\land t=t'=0)$.
Then, states $S=(R, M, (\pcVar, \_);\Phi)$ and $S'=(R', M', (\pcVar', \_);\Phi)$ are publicly equivalent, i.e., $S\simeq_\text{pub}S'$ if
\begin{align*}
    &\pcVar=\pcVar' \qquad
    \forall r.\; R[r]\simeq_\text{taint} R'[r] \qquad
    \forall x.\; M[x]\simeq_\text{taint} M'[x] \\
    &\forall x\in \stackPub\cup \otherPub.\; M[x]=M'[x]=(\_, 0).
\end{align*}
\label{def:pub-eq}
\end{definition}

Finally, we can prove that the transformed program satisfies public noninterference, as shown in Lemma~\ref{lemma:pub-noninterference} and Theorem~\ref{thm:public-noninterference-revision}.
\shixin{Self note: when running $\Compiler(P)$, not every internal state can be considered as well-formed. For example, when finishing executing part of $\Compiler(\instVar)$, we cannot describe/constrain well-formedness of that state.}
\begin{lemma}
For all $S_0$, $S_1$, $S_0'$, $S_1'$
if $S_0'\simeq_\text{pub} S_1'$ and for $i\in \qty{0,1}$,
\begin{align*}
    & \typeChecked{P, \PType}{S_i}
    && \hasVal{P, \PType}{S_i'\prec S_i},
\end{align*}
then either $S_0'$, $S_1'$ are termination states or there exists $S_0''$, $S_1''$ such that
\begin{align*}
    & S_i'\xrightarrow{\Compiler(\instVar_i), o_i}^* S_i'', i\in \qty{0, 1}
    && o_1 = o_2 &&  S_0''\simeq_\text{pub}S_1''.
\end{align*}
where $o_i$ records all branch targets, load/store addresses, and store values to $\stackPub\cup \otherPub$ when executing $\Compiler(\instVar_i)$. 
\label{lemma:pub-noninterference}
\end{lemma}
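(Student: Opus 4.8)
The plan is to prove this as a single-step \emph{relational} (lockstep) argument over the transformed program, reusing functional correctness to supply the individual steps and reserving the genuine work for preserving public equivalence and matching the observations. First I would observe that, since $S_0' \simeq_\text{pub} S_1'$, Definition~\ref{def:pub-eq} forces the two transformed states to share a program counter; combined with \textsc{Sim-PC} drawn from the two hypotheses $\hasVal{P,\PType}{S_i' \prec S_i}$, both transformed executions are therefore poised to run the very same transformed instruction $\Compiler(\instVar)$, arising from the same original instruction $\instVar$. I would then apply Theorem~\ref{thm:functional-correctness-revision} separately to each pair $(S_i, S_i')$: either both sides are termination states (they halt together, since they sit at the same instruction), giving the first disjunct, or each yields $S_i \xrightarrow{\instVar} S_{i,1}$ and $S_i' \xrightarrow{\Compiler(\instVar)}{}^* S_i''$ with $\hasVal{P,\PType}{S_i'' \prec S_{i,1}}$. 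Taking the $S_i''$ of the conclusion to be exactly those states, the remaining obligations become purely relational: $o_0 = o_1$ and $S_0'' \simeq_\text{pub} S_1''$.

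The core is a case analysis on $\instVar$, unified by one principle: every quantity recorded in $o_i$ is computed from untainted data, and untainted data is forced to be equal across the two executions by the equal-value clause of $\simeq_\text{taint}$ inside $\simeq_\text{pub}$. Concretely, for branches the controlling flag is untainted by \textsc{Typing-Jne}, so both sides take the same direction and emit the same branch target; for loads and stores the address is built from untainted base/index registers plus the identical constant offset introduced by $\CompilerOp$, so the two addresses coincide; and the return-address push inside $\inst{callq}{}$ together with the pointer-shifting $\inst{addq}{\delta, r}$ prefixes of $\CompilerPtr$ act on untainted pointer registers, which again match. Preservation of $\simeq_\text{pub}$ then follows locally: a load copies $\simeq_\text{taint}$-related memory into the destination register, a register update copies matching values, and a store writes $\simeq_\text{taint}$-related values to a common address, so all three clauses of Definition~\ref{def:pub-eq} are maintained on the updated locations and trivially elsewhere.

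The delicate case, and the main obstacle, is the store landing in a public region, because there $o_i$ records the stored value and $\simeq_\text{pub}$ additionally demands that public memory stay untainted. Here I would invoke Lemma~\ref{lemma:trans-store-pub}: whenever $\Compiler(\instVar)$ writes to an address in $\stackPub \cup \otherPub$, the written value is untainted. Two consequences follow. First, an untainted value is equal across the two executions (apply $\simeq_\text{pub}$ to the source register), so the store-value component of $o_0$ and $o_1$ agrees. Second, the updated public location is again of the form $(\_,0)$, so the invariant $\forall x \in \stackPub \cup \otherPub.\ M[x] = (\_,0)$ survives the step, while writes landing in $\stackSec \cup \otherSec$ leave public memory untouched and are irrelevant to this clause. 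I expect the bookkeeping to be heaviest for the multi-step expansion of $\inst{callq}{}$, where $o_i$ accumulates observations across the $\CompilerPtr$ prefix and the call itself; but since the two executions remain in lockstep on identical transformed code and every emitted address derives from untainted, hence equal, pointer registers, the per-sub-step reasoning is the same as above. As in the functional-correctness development, I would present the load and store cases in full and dispatch $\inst{callq}{}$/$\inst{retq}{}$ by the same template, deferring the optional $\CompilerCallee$ pass.
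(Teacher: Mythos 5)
Your proposal is correct and follows essentially the same route as the paper's proof: same-PC from public equivalence, Theorem~\ref{thm:functional-correctness-revision} to supply the lockstep steps, a per-instruction case analysis in which untainted flags/addresses are forced equal by $\simeq_\text{taint}$, and Lemma~\ref{lemma:trans-store-pub} to handle stores landing in $\stackPub\cup\otherPub$. The treatment of the $\instr{callq}$ expansion (the $\inst{addq}{\delta, r}$ prefix touching only registers) also matches the paper's argument.
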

\begin{proof}
$S_0'\simeq_\text{pub}S_1'$ implies that they share the same PC, so we can denote them as 
\begin{align*}
    & S_0'=(R_0, M_0, (\pcVar, (\sigma_{f0}, \sigma_{b0}));\Phi_0)
    && S_1'=(R_1, M_1, (\pcVar, (\sigma_{f1}, \sigma_{b1}));\Phi_1).
\end{align*}

If one of them is a termination state, the other must also be a termination state.
If none of them is a termination state, then according to Theorem~\ref{thm:functional-correctness-revision},
there exist $S_0''$, $S_1''$ such that $S_i'\xrightarrow{\Compiler(\instVar), o_i}^* S_i''$, $i\in \qty{0, 1}$.
Note that they execute the same instruction(s) since they share the same PC.
Consider the following cases for $\instVar$:

\textbf{Case $\inst{movq}{r, \memOp{r_d}{r_i}{i_s}{i_d}\opMode{\sOp, \tauOp}}$.}
We have 
\begin{equation*}
    \Compiler(\instVar)=\inst{movq}{r, \memOp{r_d}{r_i}{i_s}{\delta_0+i_d}\opMode{\sOp+\delta_0, \tauOp}}
\end{equation*}
According to \textsc{Dyn-Movq-r-m} and public equivalence,
\begin{align*}
    R_0[r] & = (v_1, t_1) & R_1[r] & = (v_2, t_2) \quad (v_1, t_1)\simeq_\text{taint}(v_2, t_2) \\
    R_0[r_d] & =R_1[r_d]=(v_d, 0) & R_0[r_i] & =R_1[r_i]=(v_i, 0).
\end{align*}
So $S_0$ and $S_1$ store to the same address $v_a\coloneq \delta_0+v_d+v_i\times i_s+i_d$.

Furthermore, if $\Compiler(\instVar)$ stores to the public region, i.e.,
$v_a\in \stackPub\cup \otherPub$, then $t_1=t_2=0$ according to Lemma~\ref{lemma:trans-store-pub}.
Thus, $(v_1, t_1)\simeq_\text{taint}(v_2, t_2)$ implies that $v_1=v_2$.
Therefore, $o_1=o_2$.
We can also easily derive that $S_0''\simeq_\text{pub}S_1''$ by applying \textsc{Dyn-Movq-r-m}.

\textbf{Case $\inst{movq}{\memOp{r_d}{r_i}{i_s}{i_d}\opMode{\sOp, \tauOp}, r}$.}
We have  
\begin{equation*}
    \Compiler(\instVar)=\inst{movq}{\memOp{r_d}{r_i}{i_s}{\delta_0+i_d}\opMode{\sOp+\delta_0, \tauOp}, r}
\end{equation*}
where $\delta_0=0$ or $\delta$.
Similarly, \textsc{Dyn-Movq-m-r} and public equivalence imply that both machines load from the same untainted address, so $o_1=o_2$.
$S_0''\simeq_\text{pub}S_1''$ can also be derived by applying \textsc{Dyn-Movq-r-m}.

\textbf{Case $\inst{jne}{\ell\opMode{\sigma}}$.}
We have $\Compiler(\instVar)=\instVar$.
According to \textsc{Dyn-Jne} and public equivalence,
$R_0[\reg{ZF}]=R_1[\reg{ZF}]=(b, 0)$, which implies that both machines will jump to the same PC.
Therefore, $o_1=o_2$ and $S_0''\simeq_\text{pub}S_1''$.

\textbf{Case $\inst{callq}{f\opMode{\sigmaCall, \sigma_\text{retq}}}$.}
We have
\begin{equation*}
    \Compiler(\instVar)=\inst{addq}{\delta, r_1};\dots \inst{addq}{\delta, r_n}; \inst{callq}{f\opMode{\sigmaCall', \sigma_\text{retq}}}.
\end{equation*}
Note that the $\instr{addq}$ instructions do not access memory or affect taint equivalence ($\simeq_\text{taint}$ relation) of register values. They also only update the PC to the next instruction.
When executing $\inst{callq}{f\opMode{\sigmaCall', \sigma_\text{retq}}}$,
according to \textsc{Dyn-Callq}, both machines will jump to $f$, i.e., the same PC.
Furthermore, both machines store the same return address $\text{nextPC}(P, \pcVar)$ to the same untainted address.
Therefore, the statement also holds for this case.

\textbf{Case $\inst{retq}{}$.}
We have $\Compiler(\instVar)=\instVar$.
According to \textsc{Dyn-Retq}, both machines pop the return address from the same untainted address.
The return address is also untainted and therefore the same for both machines.
Therefore, the statement holds.

We omit other cases since the statement can be proved similarly.
\end{proof}

\begin{definition}[Well-Formed Transformed State]
$S'$ is a well-formed state for $\Compiler(P)$ if there exists $S$ such that
$\typeChecked{P, \PType}{S}$ and $\hasVal{P, \PType}{S' \prec S}$.
\end{definition}

\begin{theorem}[Public Noninterference]
If a program $P$ is well-typed, then $\Compiler(P)$ satisfies software public noninterference.
\label{thm:public-noninterference-revision}
\end{theorem}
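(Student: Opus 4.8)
The plan is to reduce the theorem to Definition~\ref{def:pub-noninterference} by proving the two-observation-trace equality ($\swcontract{\cdot}{ct}{}$ and $\swcontract{\cdot}{pub}{}$ agreement) via a lockstep simulation between two executions of $\Compiler(P)$ started from $\simeq_\text{pub}$-equivalent initial configurations. The key infrastructure is already in place: Lemma~\ref{lemma:pub-noninterference} gives exactly a single-step (more precisely, a single source-instruction-block $\Compiler(\instVar)$) version of what we need, stating that two executions that are well-formed (each dominated by some well-typed source state via $\prec$) and publicly equivalent produce identical observations $o_1=o_2$ and land in publicly equivalent states. So the high-level approach is a straightforward induction over execution length, using Lemma~\ref{lemma:pub-noninterference} as the inductive step and the definition of well-formed transformed states to thread the $\prec$ relation through.

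First I would fix two arbitrary initial configurations $S_0'$ and $S_1'$ of $\Compiler(P)$ with $S_0'\simeq_\text{pub}S_1'$, and argue that each is a well-formed transformed state: by the well-formedness assumptions on $P$ there exist source states $S_0,S_1$ with $\typeChecked{P, \PType}{S_i}$ and $\hasVal{P, \PType}{S_i'\prec S_i}$. The induction hypothesis maintained at each step is: both executions have produced the same observation prefix, the current paired states remain publicly equivalent, and each remains a well-formed transformed state (dominated by some well-typed source state under $\prec$). For the inductive step I would invoke Lemma~\ref{lemma:pub-noninterference} to obtain $o_1=o_2$ and $S_0''\simeq_\text{pub}S_1''$; simultaneously Theorem~\ref{thm:functional-correctness-revision} supplies the matching source steps $S_i\xrightarrow{\instVar_i}S_{i,1}$ with $\hasVal{P, \PType}{S_i''\prec S_{i,1}}$, and Theorem~\ref{thm:type-safety-revision} gives $\typeChecked{P, \PType}{S_{i,1}}$, re-establishing well-formedness of the transformed states and closing the induction. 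The termination case of Lemma~\ref{lemma:pub-noninterference} handles the base/exit condition: if one side terminates, both do, and the traces end identically.

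The main obstacle is bridging the observation granularity between $\swcontract{\cdot}{ct}{}$/$\swcontract{\cdot}{pub}{}$ as defined in Section~\ref{sec:sw-hw-contract} (per-architectural-step observations $o_1o_2\dots$) and the per-block observation label $o_i$ of Lemma~\ref{lemma:pub-noninterference}, which aggregates all branch targets, load/store addresses, and public-region store values produced while executing the possibly-multi-instruction $\Compiler(\instVar)$. I would handle this by defining the contract-level observation of $\Compiler(P)$ as the concatenation of the block-level labels and checking that this refines the raw per-step trace consistently for both the $ct$ and $pub$ projections: the $ct$ component is exactly the branch-target and load/store-address part of each $o_i$, and the $pub$ component is exactly the public-region-store part, so $o_1=o_2$ at every block yields equality of both full traces. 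A secondary subtlety is that intermediate states during a multi-instruction $\Compiler(\instVar)$ (e.g. after the inserted $\inst{addq}{\delta, r}$ prefix in a transformed $\instr{callq}$) are not themselves well-formed transformed states, so I cannot apply the induction at sub-block granularity; this is precisely why the lemma is phrased over whole $\Compiler(\instVar)$ blocks and why the induction must step one source instruction at a time rather than one machine instruction at a time.

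Finally I would note the bookkeeping that makes the $\simeq_\text{pub}$ invariant meaningful: Definition~\ref{def:pub-eq} requires equal PCs, taint-matched registers and memory, and genuinely equal public-region contents. The equal-PC and taint-match parts are preserved directly by Lemma~\ref{lemma:pub-noninterference}; the crucial ``no secret ever lands in $\stackPub\cup\otherPub$'' part rests on Lemma~\ref{lemma:trans-store-pub} (built in turn on Lemma~\ref{lemma:orig-store-other-pub} and the shifted-slot geometry of Lemma~\ref{lemma:shifted-slot}), which is already folded into the proof of Lemma~\ref{lemma:pub-noninterference}. Thus no new store-placement reasoning is needed at the top level; the theorem follows by assembling the induction and the trace-refinement argument, and I expect the write-up to be short, deferring all the real content to the cited lemmas.
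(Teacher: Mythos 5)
Your proposal is correct and takes essentially the same approach as the paper: the paper's proof likewise unfolds Definition~\ref{def:pub-noninterference} to reduce the claim to trace equality for publicly equivalent well-formed initial states, and then concludes by ``applying Lemma~\ref{lemma:pub-noninterference} repeatedly,'' which is precisely your induction. The extra bookkeeping you spell out---re-establishing well-formedness of the transformed states via Theorem~\ref{thm:functional-correctness-revision} and Theorem~\ref{thm:type-safety-revision}, and concatenating the block-level observations $o_i$ to match the per-step contract traces---is exactly the detail the paper leaves implicit.
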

\begin{proof}
According to Definition~\ref{def:pub-noninterference}, we just need to prove that for all well-formed initial states $S_0$ and $S_1$, if $S_0\simeq_\text{pub}S_1$, then 
$\swcontract{\Compiler(P)}{pub}{}(S_0)=\swcontract{\Compiler(P)}{pub}{}(S_1)$
and
$\swcontract{\Compiler(P)}{ct}{}(S_0)=\swcontract{\Compiler(P)}{ct}{}(S_1)$.
This can be proved by applying Lemma~\ref{lemma:pub-noninterference} repeatedly.
\end{proof}

\todo{@adam: Do I need to further explain the relation between well-formed initial states of $\PType$ and $\Compiler(\PType)$?}
\todo{I feel that something is missing here: I did not explain that my type constraints still allow us to have tainted memory.}



\shixin{
Need to add the following to guarantee functional correctness:
\begin{itemize}
    \item Define $\text{getNonChangeVar}(\Delta)$: all vars except for pointers and callee's type at the beginning of the function should be added to this set. Branch conditions should only depends on these vars (after simplification).
    Also need to define isNonChangeExp
    \item For subtype relation $e\sqsubseteq \top$, there must be $\text{getVars}(e)\subseteq \text{getNonChangeVar}(\Delta)$.
    This has two implications:
    \begin{itemize}
        \item When updating state types with non-branch inst, we should be careful when make dest reg/mem slot's type to $\top$ - check source types to only contain non change var.
        \item When checking subtype relation of state types between blocks, we should also check this constraint.
    \end{itemize}
    \item Implementation:
    \begin{itemize}
        \item Add $\bot$ to indicate that reg is invalid.
        \item Check the above requirements.
    \end{itemize}
\end{itemize}
}



\fi

\end{document}
\endinput